\providecommand{\U}[1]{\protect\rule{.1in}{.1in}}
\newtheorem{theorem}{Theorem}
\newtheorem{condition}[theorem]{Condition}
\newtheorem{corollary}[theorem]{Corollary}
\newtheorem{definition}[theorem]{Definition}
\newtheorem{example}[theorem]{Example}
\newtheorem{lemma}[theorem]{Lemma}
\newtheorem{notation}[theorem]{Notation}
\newtheorem{proposition}[theorem]{Proposition}
\newtheorem{remark}[theorem]{Remark}
\newenvironment{proof}[1][Proof]{\textbf{#1.} }{\ \rule{0.5em}{0.5em}}
\begin{document}

\title{On overdamping phenomena in gyroscopic systems composed of high-loss and
lossless components }
\author{Alexander Figotin\\Department of Mathematics\\University of California at Irvine\\Irvine, CA 92697
\and Aaron Welters\\Department of Mathematical Sciences\\Florida Institute of Technology\\Melbourne, FL, 32901}
\date{}
\maketitle

\begin{abstract}
Using a Lagrangian framework, we study overdamping phenomena in gyroscopic
systems composed of two components, one of which is highly lossy and the other
is lossless. The losses are accounted by a Rayleigh dissipative function. As
we have shown previously, for such a composite system the modes split into two
distinct classes, high-loss and low-loss, according to their dissipative
behavior. A principal result of this paper is that for any such system a
rather universal phenomenon of \emph{selective overdamping} occurs. Namely,
first of all the high-loss modes are all overdamped, i.e., non-oscillatory, as
are an equal number of low-loss modes. Second of all, the rest of the low-loss
modes remain oscillatory (i.e., the \emph{underdamped modes}) each with an
extremely high quality factor (Q-factor) that actually increases as the loss
of the lossy component increases. We prove that selective overdamping is a
generic phenomenon in Lagrangian systems with gyroscopic forces and give an
analysis of the overdamping phenomena in such systems. Moreover, using
perturbation theory, we derive explicit formulas for upper bound estimates on
the amount of loss required in the lossy component of the composite system for
the selective overdamping to occur in the generic case, and give Q-factor
estimates for the underdamped modes. Central to the analysis is the
introduction of the notion of a \textquotedblleft dual\textquotedblright%
\ Lagrangian system and this yields significant improvements on some results
on modal dichotomy and overdamping. The effectiveness of the theory developed
here is demonstrated by applying it to an electric circuit with a gyrator
element and a high-loss resistor.

\end{abstract}

\section{Introduction\label{sintro}}

In this paper we use the Lagrangian framework introduced in \cite{FigWel2} to
study the dissipative properties and overdamping phenomena of two-component
composite systems composed of a high-loss and lossless components, when the
system also possesses gyroscopic properties. This study applies to any
finite-dimensional linear Lagrangian system, with gyroscopic and dissipative
forces, provided (i) it has a nonnegative Hamiltonian, and (ii) losses are
accounted by a Rayleigh dissipative function, \cite[Sec. 10.11, 10.12]{Pars},
\cite[Sec. 8, 9, 46]{Gantmacher}. Such physical systems include, in
particular, many different types of rotating damped mechanical systems such as
fly wheels \cite[\S 345]{Kelv88I}, MEMS vibratory gyroscopes \cite{AcSh08},
\cite{ApoTay05}, electric networks with gyrators \cite{Tell48},
\cite{CarGio64}, or, in electrodynamics, a moving point charge driven by the
Lorentz force due to a static electromagnetic field \cite{Goldstein}.

The rest of the paper is organized as follows. In the remainder of this
introduction, we will first introduce in Subsection \ref{sinmodel} a model for
a two-component composite system with a high-loss and a lossless component
based on our Lagrangian framework introduced in \cite{FigWel2}, which is
overviewed in Subsections \ref{sinmodel} and \ref{sinframe}. We introduce then
in Subsection \ref{sinsubod} the definition of overdamped and underdamped
modes which is followed by a brief discussion on examples illustrating some of
the subtleties of overdamping phenomena in gyroscopic-dissipative systems. We
motivate our approach to overdamping in Subsection \ref{sinmot} by indicating
its relevance in the development of a theory of broadband absorption
suppression in magnetic composites. We give then an overview of the selective
overdamping phenomenon, which was first introduced in \cite{FigWel2}, and
discuss its potential as a mechanism for significant broadband absorption
suppression in composites. Finally, in Subsection \ref{sinresults} we give a
brief summary of the main results of this paper on modal dichotomy and
overdamping phenomena in gyroscopic-dissipative systems.

In Section \ref{scircuits}, we illustrate our main results based on a simple
example of an electric circuit with a resistor (lossy element) and a gyrator
(gyroscopic element). Using this example we examine analytically and
numerically the modal dichotomy and overdamping phenomena. Next, in Section
\ref{smodel}, we introduce the notion of the \textquotedblleft
dual\textquotedblright\ of a Lagrangian system which plays a key role in the
study of the modal dichotomy and overdamping. Then we discuss the spectral
problems that arise in studying the dissipative properties of eigenmodes of
Lagrangian systems. Finally, Section \ref{smainr} is devoted to the precise
formulation of all significant results in this paper in the form of theorems,
propositions, etc. and their proofs.

\subsection{Overview of our model\label{sinmodel}}

The general Euler-Lagrange equations of motion of the gyroscopic-dissipative
(Lagrangian) systems considered in this paper are of the form%
\begin{gather}
\alpha\ddot{Q}+\left(  2\theta+\beta R\right)  \dot{Q}+\eta Q=0\qquad
\text{(evolution equations),}\label{sintro1}\\
0\leq\beta\qquad\text{(loss parameter),} \label{sintrobeta}%
\end{gather}
where $\dot{Q}=\partial_{t}Q$, $\ddot{Q}=\partial_{t}^{2}Q$, $\beta$ is a
scalar perturbation parameter ($\beta$ is a dimensionless \emph{loss
parameter} which we introduce to scale the intensity of dissipation), and the
$N\times N$ matrices $\alpha$, $\eta$, $\theta$, $R$ have the properties that
their \textit{matrix entries are real} and
\begin{equation}
\alpha^{\mathrm{T}}=\alpha>0,\text{ }\eta^{\mathrm{T}}=\eta\geq0,\text{
}\theta^{\mathrm{T}}=-\theta,\text{ \ }R^{\mathrm{T}}=R\geq0, \label{sintro1a}%
\end{equation}
($\mathrm{T}$ denotes the transpose of a matrix). We also assume the rank
$N_{R}$ of the matrix $R$ is positive:%
\begin{equation}
0<N_{R}=\operatorname{rank}R \label{sintro1b}%
\end{equation}
(i.e., the dimension $N_{R}$ of the range of $R$ is positive). We will refer
to this dissipative system with equations of motion (\ref{sintro1}) as
\textit{gyroscopic} if $\theta\not =0$ and \textit{non-gyroscopic} if
$\theta=0$.

Here the terms involving $\beta R$ and $\theta$ correspond respectively to
dissipative and gyroscopic forces of the Lagrangian system, in which the
Lagrangian $\mathcal{L}$ and the Rayleigh dissipation function $\mathcal{R}$
are the following quadratic forms
\begin{gather}
\mathcal{L}=\mathcal{L}\left(  Q,\dot{Q}\right)  =\frac{1}{2}\left[
\begin{array}
[c]{l}%
\dot{Q}\\
Q
\end{array}
\right]  ^{\mathrm{T}}\left[
\begin{array}
[c]{ll}%
\alpha & \theta\\
\theta^{\mathrm{T}} & -\eta
\end{array}
\right]  \left[
\begin{array}
[c]{l}%
\dot{Q}\\
Q
\end{array}
\right]  \qquad\text{(the Lagrangian),}\label{sintro2}\\
\mathcal{R}=\mathcal{R}\left(  \dot{Q}\right)  =\frac{1}{2}\dot{Q}%
^{\mathrm{T}}\beta R\dot{Q}\qquad\text{(the Rayleigh dissipation function).}
\label{sintro2a}%
\end{gather}
Eqs. (\ref{sintro1}) are the Euler-Lagrange (EL) equations with the
dissipative forces $\frac{\partial\mathcal{R}}{\partial\dot{Q}}$, namely,%
\begin{equation}
\frac{d}{dt}\left(  \frac{\partial\mathcal{L}}{\partial\dot{Q}}\right)
-\frac{\partial\mathcal{L}}{\partial Q}=-\frac{\partial\mathcal{R}}%
{\partial\dot{Q}}\qquad\text{(EL equations with dissipative forces),}
\label{sintro2b}%
\end{equation}
where the generalized coordinates $Q$ and velocities $\dot{Q}$ take values in
the Euclidean space $%
\mathbb{R}
^{N}$. The Hamiltonian $\mathcal{H}\geq0$ corresponding to the Lagrangian
$\mathcal{L}$ can be represented as a function $Q$ and $\dot{Q}$ in the
following form:%
\begin{equation}
\mathcal{L}=\mathcal{T}-\mathcal{V},\qquad0\leq\mathcal{H}=\mathcal{T}%
+\mathcal{V=}\frac{1}{2}\dot{Q}^{\mathrm{T}}\alpha\dot{Q}+\frac{1}%
{2}Q^{\mathrm{T}}\eta Q\qquad\text{ (Hamiltonian),} \label{sintro3a}%
\end{equation}
where $\mathcal{T}$ and $\mathcal{V}$ are respectively \emph{the kinetic and
the potential energies} of the form%
\begin{align}
\mathcal{T}  &  =\mathcal{T}(\dot{Q},Q)=\frac{1}{2}\dot{Q}^{\mathrm{T}}%
\alpha\dot{Q}+\frac{1}{2}\dot{Q}^{\mathrm{T}}\theta Q,\label{sintro3b}\\
\mathcal{V}  &  =\mathcal{V}(\dot{Q},Q)=\frac{1}{2}Q^{\mathrm{T}}\eta
Q-\frac{1}{2}\dot{Q}^{\mathrm{T}}\theta Q.\nonumber
\end{align}
The solutions of Eq. (\ref{sintro1}) satisfy the energy balance equation:
\begin{equation}
-\partial_{t}\mathcal{H}=2\mathcal{R}\geq0\qquad\text{(energy balance
equation),} \label{sintro3}%
\end{equation}
which expresses the energy lost per unit time, where the system energy (or
stored energy) is represented by the Hamiltonian $\mathcal{H}\geq0$, the
dissipated power is $2\mathcal{R}\geq0$.

The model of a two-component composite system (TCCS) made of a lossy and a
lossless components incorporates losses represented by the Rayleigh
dissipation matrix $R$ and the loss fraction parameter%
\begin{equation}
\delta_{R}=\frac{N_{R}}{N}\qquad\text{(loss fraction).} \label{sintro4}%
\end{equation}
The lossy component of system can be roughly characterized by the range
$\operatorname{Ran}R$ of the matrix $R$ with the lossless component being its
nullspace $\operatorname{Ker}R$. The loss fraction $\delta_{R}$ defined by
(\ref{sintro4}) is then interpreted as the ratio of the degrees of freedom
susceptible to losses (i.e., $N_{R}=\dim\operatorname{Ran}R$) to the degrees
of freedom of the entire system (i.e., $N$). When considering a TCCS model we
assume that the following condition is satisfied%
\begin{equation}
0<\delta_{R}<1\qquad\text{(loss fraction condition),} \label{sintro4a}%
\end{equation}
that is, the nonzero matrix $R$ does not have full rank (i.e., $R$ is rank deficient).

A function $Q=Q\left(  t\right)  =Q\left(  t,\beta\right)  $ is a solution of
Eq. (\ref{sintro1}) if $Q$, $\dot{Q}=\partial_{t}Q$, and $\ddot{Q}%
=\partial_{t}^{2}Q$ are continuous functions of the independent variable $t$
into $%
\mathbb{C}
^{N}$ and satisfy (\ref{sintro1}) for all $t\in%
\mathbb{R}
$. The eigenmodes of the Lagrangian system are solutions of Eq. (\ref{sintro1}%
) of the form
\begin{equation}
Q\left(  t\right)  =qe^{-\mathrm{i}\zeta t},\text{ }0\not =q\in%
\mathbb{C}
^{N}\qquad\text{(eigenmode).} \label{sintro5}%
\end{equation}
Its \emph{frequency} $\omega$ and \emph{damping factor} $\gamma$ are defined
in terms of the real and imaginary part of $\zeta$, i.e.,
\begin{equation}
\omega=\operatorname{Re}\zeta\quad\text{(frequency),}\qquad0\leq
\gamma=-\operatorname{Im}\zeta\quad\text{(damping factor).} \label{sintro5a}%
\end{equation}
The damping factor is nonnegative due to the fact that for such a mode the
energy balance equation (\ref{sintro3}) still holds, but now\ in the complex
inner product $\left(  a,b\right)  =a^{\ast}b$ for $a$,$b\in%
\mathbb{C}
^{N}$, where $\ast$ denotes the complex conjugate transpose of vectors or matrices.

An important figure-of-merit, which characterizes the performance of the
dissipative system (\ref{sintro1}), is the \emph{quality factor (Q-factor)}
that can be naturally introduced in a few not entirely equivalent ways (see,
for instance, \cite[pp. 47, 70, and 71]{Pain}). When the system is in the
time-harmonic state (\ref{sintro5}), with frequency and damping factor
(\ref{sintro5a}), the quality factor $Q_{\zeta}$ is most commonly defined as
the reciprocal of the relative rate of energy dissipation per temporal cycle,
that is,%
\begin{equation}
Q_{\zeta}=2\pi\frac{\text{energy stored}}{\text{energy lost per cycle}%
}=\left\vert \omega\right\vert \frac{\mathcal{H}}{-\partial_{t}\mathcal{H}%
}=\frac{1}{2}\frac{\left\vert \omega\right\vert }{\gamma}\text{\quad
(Q-factor),} \label{sintro6}%
\end{equation}
with the convention $Q_{\zeta}=+\infty$ if $\gamma=0$ and $\omega\not =0$ and
$Q_{\zeta}=0$ if $\zeta=0$.

\subsection{The subtleties of overdamping phenomena\label{sinsubod}}

For the purposes of this paper, the following definitions of an overdamped and
an underdamped mode will be sufficient.

\begin{definition}
[overdamped mode]\label{defodm}Any eigenmode (\ref{sintro5}) of the Lagrangian
system (\ref{sintro1}) with time-dependency $e^{-\mathrm{i}\zeta\left(
\beta\right)  t}$ for which there exists a $\beta^{\prime}\geq0$ such that its
frequency $\omega$ has the property
\begin{equation}
\omega=\operatorname{Re}\zeta\left(  \beta\right)  =0,\quad\text{for all
}\beta>\beta^{\prime}\qquad\text{(overdamped),} \label{sintro7}%
\end{equation}
or
\begin{equation}
\omega=\operatorname{Re}\zeta\left(  \beta\right)  \not =0,\quad\text{for all
}\beta>\beta^{\prime}\qquad\text{\ (underdamped),} \label{sintro7a}%
\end{equation}
will be called an overdamped mode (and is said to be overdamped) or
underdamped mode (and is said to be underdamped), respectively.
\end{definition}

In order to appreciate the subtleties of overdamping that we want to study in
this paper, we will give some simple examples and recall some previous results
on overdamping.

\begin{example}
[spring-mass-damper]\label{esmdod}For the simplest mechanical (non-gyroscopic)
system of a spring-mass-damper system with one degree-of-freedom ($N=1$), the
equations of motion of this Lagrangian system (\ref{sintro1}) in standard form
is%
\[
m\ddot{x}+\beta R\dot{x}+kx=0,
\]
where $\alpha=m>0$ is the mass, $\beta R$ is the damping (with $R>0$,
$\beta\geq0$, $\delta_{R}=N_{R}/N=1$), and $\eta=k>0$ is the spring constant,
$Q=x$ is the displacement from equilibrium at $x=0$, $\dot{Q}=\dot{x}$ is its
velocity, and $\theta=0$. This mechanical system has the Lagrangian,
Hamiltonian, and Rayleigh dissipation function:%
\[
\mathcal{L}=\mathcal{T}-\mathcal{V},\text{ }\mathcal{H}=\mathcal{T}%
+\mathcal{V},\text{ }\mathcal{R}=\frac{1}{2}\beta R\left\vert \dot
{x}\right\vert ^{2},\text{ }\mathcal{T}=\frac{1}{2}m\left\vert \dot
{x}\right\vert ^{2},\text{ }\mathcal{V}=\frac{1}{2}k\left\vert x\right\vert
^{2},
\]
where $\mathcal{T}$, $\mathcal{V}$\ are the kinetic and potential energy,
respectively. The eigenmodes of the system have time-dependency
$e^{-\mathrm{i}\zeta_{j}\left(  \beta\right)  t}$, $j=1,2$ with%
\[
\zeta_{j}\left(  \beta\right)  =-\mathrm{i}\frac{\beta R}{2m}+\left(
-1\right)  ^{j}\sqrt{\frac{k}{m}-\left(  \frac{\beta R}{2m}\right)  ^{2}%
},\quad j=1,2.
\]
Thus, all the modes of this system will be overdamped (according to our
definition \ref{defodm}) once
\[
\beta>\beta^{\prime}\text{, where }\beta^{\prime}=\frac{2\sqrt{mk}}{R}.
\]

\end{example}

The simple example above illustrates a general result on overdamping for
non-gyroscopic systems with only lossy components. The next theorem from
\cite[Theorem 17]{FigWel2} (see also \cite{Duff55}, \cite{BarLan92}) gives a
precise statement of the result.

\begin{theorem}
[complete overdamping]Suppose $\theta=0$ (i.e., a non-gyroscopic system) and
$\delta_{R}=1$ (i.e., $R$ has full rank). Then there exists a $\beta^{\prime
}>0$ such that if $\beta>\beta^{\prime}$ then all the eigenmodes of the
Lagrangian system with equations of motion (\ref{sintro1}) are overdamped. In
particular, we can take%
\[
\beta^{\prime}=2\frac{\omega_{\max}}{b_{\min}},
\]
where
\[
\omega_{\max}=\sqrt{\max\sigma\left(  \alpha^{-1}\eta\right)  },\text{
}b_{\min}=\min\sigma\left(  \alpha^{-1}R\right)
\]
and $\sigma\left(  M\right)  $ denotes spectrum of a square matrix $M$, i.e.,
the set of its eigenvalues.
\end{theorem}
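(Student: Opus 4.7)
The plan is to substitute the eigenmode ansatz into the equations of motion, collapse the resulting matrix pencil to a scalar quadratic in $\zeta$ by pairing with the eigenvector, and then force the discriminant of that quadratic to be non-positive uniformly in the eigenvector by applying generalized Rayleigh quotient bounds.

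First, I would plug the ansatz $Q(t)=qe^{-\mathrm{i}\zeta t}$ from (\ref{sintro5}) into (\ref{sintro1}); since $\theta=0$, this reduces to the matrix pencil equation $(-\zeta^{2}\alpha-\mathrm{i}\zeta\beta R+\eta)q=0$. Pairing with $q$ in the complex inner product and writing $A:=q^{\ast}\alpha q$, $B:=q^{\ast}Rq$, $C:=q^{\ast}\eta q$ gives the scalar quadratic $A\zeta^{2}+\mathrm{i}\beta B\zeta-C=0$, whose two roots are $\zeta=\bigl(-\mathrm{i}\beta B\pm\sqrt{4AC-\beta^{2}B^{2}}\bigr)/(2A)$. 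Because $\alpha>0$ and, under the hypothesis $\delta_{R}=1$, $R>0$, we have $A>0$ and $B>0$; while $\eta\geq 0$ gives $C\geq 0$.

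Next, I observe that both roots of this scalar quadratic are purely imaginary precisely when the discriminant is non-positive, that is, when $\beta^{2}B^{2}\geq 4AC$; otherwise the roots have equal, opposite-sign, nonzero real parts. Thus, to force $\operatorname{Re}\zeta=0$ for every eigenvalue $\zeta=\zeta(\beta)$, it suffices to exhibit a single threshold $\beta^{\prime}$ for which $\beta>\beta^{\prime}$ implies $\beta^{2}B^{2}\geq 4AC$ uniformly for all $q\neq 0$. I would obtain this via the Rayleigh--Ritz characterization of the extreme generalized eigenvalues of the pairs $(\eta,\alpha)$ and $(R,\alpha)$: since $\alpha>0$, these problems are similar to ordinary symmetric problems for $\alpha^{-1/2}\eta\alpha^{-1/2}$ and $\alpha^{-1/2}R\alpha^{-1/2}$, yielding the pointwise bounds $C=q^{\ast}\eta q\leq\omega_{\max}^{2}A$ and $B=q^{\ast}Rq\geq b_{\min}A$. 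Combining these gives $4AC/B^{2}\leq 4\omega_{\max}^{2}/b_{\min}^{2}=(\beta^{\prime})^{2}$, so $\beta>\beta^{\prime}$ forces $\beta^{2}B^{2}>4AC$, and hence $\operatorname{Re}\zeta(\beta)=0$ for all such $\beta$, matching the overdamping criterion of Definition \ref{defodm}.

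The main (minor) obstacle is making the uniformity over eigenvectors airtight: the eigenvector $q=q(\beta)$ itself depends on $\beta$, so one must avoid circular reasoning by using only bounds that hold pointwise in $q\neq 0$, which the Rayleigh quotients do. The critical step where the full-rank hypothesis $\delta_{R}=1$ enters is the strict positivity $b_{\min}>0$: without it the right-hand side of the bound blows up and the argument collapses, which is precisely why the general two-component composite case $\delta_{R}<1$ demands the more delicate \emph{selective} overdamping analysis developed later in the paper.
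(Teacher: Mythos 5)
Your proof is correct. The paper itself does not prove this theorem here --- it recalls it from \cite[Theorem 17]{FigWel2} --- so there is no in-paper proof to compare against; I can only assess your argument on its own merits, and it holds up. Your route is the classical Duffin-type scalar reduction: pair the quadratic pencil equation $C(\zeta,\beta)q=0$ (with $\theta=0$) against $q$ to get a scalar quadratic $A\zeta^{2}+\mathrm{i}\beta B\zeta-C=0$ with real coefficients $A=q^{\ast}\alpha q>0$, $B=q^{\ast}Rq>0$, $C=q^{\ast}\eta q\geq 0$, observe that the roots are purely imaginary exactly when $\beta^{2}B^{2}\geq 4AC$, and then use the Rayleigh--Ritz bounds $C\leq\omega_{\max}^{2}A$ and $B\geq b_{\min}A$ (via the similarity of $\alpha^{-1}\eta$ and $\alpha^{-1}R$ to the symmetric matrices $\sqrt{\alpha}^{-1}\eta\sqrt{\alpha}^{-1}$ and $\sqrt{\alpha}^{-1}R\sqrt{\alpha}^{-1}$, exactly as in Remark \ref{RemCompOD}) to make the discriminant condition uniform in $q$. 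All the details check out: the bounds are indeed pointwise in $q$, so there is no circularity even though the eigenvector depends on $\beta$; full rank of $R$ gives $B>0$ and $b_{\min}>0$; and $\beta>\beta^{\prime}=2\omega_{\max}/b_{\min}$ gives the strict inequality $\beta^{2}B^{2}>4AC$, forcing $\operatorname{Re}\zeta=0$ for every eigenvalue and hence overdamping of every branch in the sense of Definition \ref{defodm}. It is worth noting that this direct quadratic-form argument is much lighter than the machinery the present paper uses for its new overdamping results (the passage to $A(\beta)=\Omega-\mathrm{i}\beta B$ and Baumg\"artel's analytic perturbation theory, as in Theorem \ref{tbeta0}); that heavier apparatus is needed precisely because the rank-deficient, gyroscopic case cannot be collapsed to a scalar quadratic with purely imaginary linear term --- when $\theta\not=0$ the term $2q^{\ast}\theta q$ is imaginary (since $\theta^{\mathrm{T}}=-\theta$) and contributes a real shift inside the square root, which is exactly why the gyroscopic case is more delicate and your scalar criterion would no longer force $\operatorname{Re}\zeta=0$.
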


\begin{remark}
\label{RemCompOD}Although it may not be immediately obvious, the spectrums
$\sigma\left(  \alpha^{-1}\eta\right)  $ and $\sigma\left(  \alpha
^{-1}R\right)  $ are subsets of $[0,\infty)$ since $\alpha^{-1}\eta$ and
$\alpha^{-1}R$ are similar to positive semidefinite matrices:
\begin{gather*}
\alpha^{-1}\eta=\sqrt{\alpha}^{-1}\left(  \sqrt{\alpha}^{-1}\eta\sqrt{\alpha
}^{-1}\right)  \sqrt{\alpha},\text{ }\alpha^{-1}R=\sqrt{\alpha}^{-1}\left(
\sqrt{\alpha}^{-1}R\sqrt{\alpha}^{-1}\right)  \sqrt{\alpha},\\
\sqrt{\alpha}^{-1}\eta\sqrt{\alpha}^{-1}\geq0,\sqrt{\alpha}^{-1}R\sqrt{\alpha
}^{-1}\geq0.
\end{gather*}
In particular, this implies $\beta^{\prime}\geq0$ in the previous theorem.
\end{remark}

The next example, which we will discuss in more detail later in this paper
(see Example \ref{nodex}), shows that, unlike for non-gyroscopic systems, in
gyroscopic systems it is entirely possible that all the modes can be
underdamped when the loss fraction condition (\ref{sintro4a}) fails to be satisfied.

\begin{example}
[no overdamping]If $\alpha=\eta=R=\mathbf{1}$ (where $\mathbf{1}$ denotes the
$N\times N$ identity matrix and hence $\delta_{R}=1$) and $0\not \in
\sigma\left(  \theta\right)  $ then all the eigenmodes of the Lagrangian
system with equations of motion (\ref{sintro1}) are underdamped for $\beta>0$.
\end{example}

Notice that in the mentioned examples and results the loss fraction condition
(\ref{sintro4a}) is not satisfied, namely $\delta_{R}=1$, and hence the
dissipative (Lagrangian) system consists only of lossy components. But the
question we are most interested in is: what overdamping phenomena can occur
for a two-component composite system with a lossy and a lossless component
when the loss fraction condition (\ref{sintro4a}), that is, $0<\delta_{R}<1$,
is satisfied? The answer is that (generically) some of the modes of the system
will be overdamped and some will be underdamped, and we refer to this
phenomenon as \emph{selective overdamping}. In the next subsection, we will
give a brief description of this phenomenon along with our motivation for its
study, and in the subsection afterwards give an overview of our main results.

\subsection{Motivation\label{sinmot}}

An important motivation for our studies of two component dissipative
gyroscopic system is the development of a theory of broadband absorption
suppression in magnetic composites. Such a theory, we believe, can provide
guiding principles for the design of broadband low-loss magnetic composites
with functionality comparable to bulk magnetic materials. The development of
theory requires a deeper understanding of the interplay between losses and
magnetism manifested as gyroscopic effects. There are numerious applications
of low loss magnetic materials. For instance, they are crucial components in
many microwave, infrared, and optical devices \cite{FMW55}, \cite{Hogan52},
\cite{ILB13}, \cite{Pozer12}, \cite{ZveKot97}. Detrimental to the performance
of many such devices are the high losses associated with the magnetic
materials in frequency ranges of interest, \cite{Hogan52}, \cite{ZveKot97},
and this is a major problem with many natural and synthetic magnetic materials.

The discussion above raises a question if such broadband absorption
suppression in composites even possible? Quite remarkably, the answer is yes.
This result was firmly established in \cite{FigVit8}, \cite{FigVit10},
\cite{SmCh11}, \cite{SmCa13}. For instance, in \cite{FigVit8} an example was
given of a two-component dielectric medium composed of a high-loss and
lossless components, namely, a magnetophotonic crystal (MPCs) consisting of a
finite stack of alternating lossy magnetic and lossless dielectric layers.
They showed that the magnetic composite could reduce the absorption (losses)
by two orders of magnitude in the chosen frequency range compared to those of
the uniform bulk magnetic material while simultaneously enhancing one of its
desired magnetic properties, namely, nonreciprocal Faraday rotation. That
example demonstrated that it is possible to design a composite material/system
which can have a desired property comparable with a naturally occurring bulk
substance but with significantly reduced losses.

In addition to this, an interesting and rather counterintuitive idea arose,
which was first introduced in \cite{FigVit8}, and also recently noticed
independently in \cite{IOKS11} for MPCs. It is the idea that reduction of
losses in the magnetic composite and enhancement of the magnetic
properties/functionality might actually be more substantial when the lossy
magnetic component is replaced by another with even higher losses.

What is the origin of that seemingly counterintuitive behavior in composites?
In order to understand the general mechanism for this behavior, we developed
in \cite{FigWel1} a model, based on the linear response theory from
\cite{FigSch1} and \cite{FigSch2}, for two-component composite systems with a
high-loss and lossless component and introduced in \cite{FigWel2} a Lagrangian
framework, based on the Lagrangian-Hamiltonian formulation of classical
mechanics, in order to account for the physical properties of the composite
system. We showed that for such composite systems the losses of the entire
system become small provided that the lossy component is sufficiently lossy.
This behavior can be explained by two important phenomena, namely, the
\emph{modal dichotomy} and \emph{overdamping}.

As the focus of this paper is on the study of these two phenomena in
gyroscopic-dissipative systems, we will provide a brief explaination from our
studies in \cite{FigWel1}, \cite{FigWel2} on how these phenomena contribute to
the loss suppression. We introduce first a dimensionless loss parameter
$\beta\geq0$ which scales the dissipation in the lossy component of the
system. We consider then the system eigenmodes, i.e., the states of the system
in the absence of external forces with exponential time dependency of the form
$e^{-\mathrm{i}\zeta t}=e^{-\frac{t}{T}}e^{-\mathrm{i}\operatorname{Re}\zeta
t}$, where $\operatorname{Re}\zeta=\omega$ is the frequency,
$-\operatorname{Im}\zeta=\gamma\geq0$ is the damping factor, and $T_{\zeta
}=\frac{1}{-\operatorname{Im}\zeta}$ is the relaxation time. To any such mode
is associated its quality factor (Q-factor) $Q_{\zeta}=-\frac{1}{2}%
\frac{\left\vert \omega\right\vert }{\gamma}$, which is an important figure of
merit that helps to characterize the performance of the dissipative composite
system. Now as the losses in the lossy component of the composite system
become sufficiently large, i.e., $\beta\gg1$, the entire set of eigenmodes of
the composite system splits into two classes, \emph{high-loss} and
\emph{low-loss modes}, based on their dissipative properties. We refer to this
phenomenon as the \emph{modal dichotomy}. One important feature of this
dichotomy is that the high-loss modes decay exponentially in time with both an
extremely small relaxation time $T_{\zeta}$ and Q-factor $Q_{\zeta}$ that
decrease with $T_{\zeta}\rightarrow0$ and $Q_{\zeta}\rightarrow0$ as
$\beta\rightarrow\infty$. On the other hand, the low-loss modes have an
extremely large relaxation time $T_{\zeta}$ which increases with $T_{\zeta
}\rightarrow\infty$ as $\beta\rightarrow\infty$, whereas the Q-factor
$Q_{\zeta}$ either decreases or increases with $Q_{\zeta}\rightarrow0$ or
$Q_{\zeta}\rightarrow\infty$, respectively, as $\beta\rightarrow\infty$ (as to
this behavior of the Q-factor and whether such low-loss high-Q modes even
exist, we address this in the next paragraph). Moreover, in Lagrangian
systems, when the loss of the high-loss component exceeds a finite critical
value, i.e., $\beta>\beta_{0}$, the frequencies of the all the high-loss
eigenmodes become exactly zero, i.e., $\operatorname{Re}\zeta=0$ for
$\beta>\beta_{0}$, a phenomenon known as overdamping. Consequently, when the
composite is excited by external forces at frequencies ranges well separated
from zero, the high-loss modes hardly respond to these excitations because
they are overdamped with extremely small relaxation time, and hence do not
contribute much to the entire composite losses.

This analysis leads to the important question: do such high-Q modes even exist
in systems with a high-loss component? As discussed in \cite{FigWel1},
\cite{FigWel2} the answer is yes, but not always and composites with a
high-loss component $\beta\gg1$ are key to selectively suppressing low-Q modes
and enhancing high-Q modes. More precisely, one of the main result of our
studies in \cite{FigWel2} is that a rather universal phenomenon, called
\emph{selective overdamping}, occurs for non-gyroscopic composite systems
whenever the lossy component of the composite is sufficiently lossy $\beta
\gg1$. In fact, we proved in \cite[Theorems 25 and 26]{FigWel2} that for a
Lagrangian system governed by evolution equations (\ref{sintro1}), that it
will occur for $\beta\gg1$ whenever $\theta=0$, $0<\delta_{R}<1$, and
$\operatorname{Ker}\eta\cap\operatorname{Ker}R=\left\{  0\right\}  $. The term
\textquotedblleft\emph{selective}\textquotedblright\ was used to refer to the
fact that only a fraction, namely, the loss fraction $\delta_{R}>0$, of the
system's eigenmodes are overdamped, specifically, all the high-loss modes and
an equal number of low-loss modes, whereas the remaining \emph{positive
fraction, namely, }$1-$\emph{ }$\delta_{R}>0$\emph{, of modes are low-loss
oscillatory modes (i.e, the underdamped modes) with high quality factor that
actually becomes higher the more lossy the lossy component becomes in the
system}.

Since overdamping phenomenon has a potential to be a mechanism for significant
broadband absorption suppression in composites, we are motivated to analyze
and understand it better, especially in gyroscopic-dissipative systems. It
turns out that as the losses in the lossy component increase the overdamped
high-loss modes are more suppressed while all the low-loss oscillatory modes
are more enhanced with increasingly high quality factor. \emph{This provides a
mechanism for selective enhancement of these high quality factor, low-loss
oscillatory modes (the underdamped modes) and selective suppression of the
high-loss non-oscillatory modes}.

\subsection{Overview of results\label{sinresults}}

The main goal of this paper is to understand if the selective overdamping
phenomenon can occur in gyroscopic systems, and if so whether it as universal
of a phenomenon as for non-gyroscopic systems. One of the major achievements
of this paper, we think, is that we have found sufficient conditions for
overdamping to occur for the high-loss modes, have derived uppper bounds on
the amount of loss required, and have given estimates on the frequencies,
damping factors, and Q-factors for the underdamped modes. In addition to that,
a simple example is given in Section \ref{scircuits} of an electric circuit
with a resistor and a gyrator which illustrates our ideas, methods, and
results both analytically and numerically.

In this section, we will give an overview of the main results of this paper,
which are formulated precisely and proven in Section \ref{smainr}. In
particular, in Section \ref{sevmd} on the modal dichotomy we have Theorems
\ref{tmddI} and \ref{tllspmd} along with their corollaries \ref{cmddI} and
\ref{cllspmdd}. In Section \ref{smdhlr} on the asymptotics of the eigenmodes
in the high-loss regime (i.e., as $\beta\rightarrow\infty$) including the
asymptotics on the frequencies, damping factors, and quality factors, we have
Theorems \ref{tmddII} and \ref{tmqf} and Corollary \ref{cdasym} along with
Propositions \ref{pspredl} and \ref{pspredl2} from Section \ref{stvspen}. And
in Section \ref{sodgd} on overdamping phenomenon we have Theorem \ref{tsogc}
and Corollary \ref{csogc} on selective overdamping in the generic case (along
with Corollary \ref{cllspmdd} in Sec. \ref{sevmd}). In the nongeneric case, we
have an interesting example, Example \ref{sodngc}, which shows an extreme case
of what can happen for dissipative systems which are gyroscopic (i.e.,
$\theta\not =0$).

We will begin by introducing some notation. After this we will discuss the
modal dichtotomy in Section \ref{sbsbsecmd} and then, in Section
\ref{sbsbsecsod}, conclude with a description of the overdamping phenomenon in
terms of the modal dichotomy. Consider the Lagrangian system with equations of
motion (\ref{sintro1}) and recall the definitions of the frequency $\omega$
and damping factor $\gamma$ in (\ref{sintro5a}) of an eigenmode (\ref{sintro5}%
) of this system. Let $\omega_{\max}$ and $\omega_{\min}$ denote the maximum
and minimum positive frequencies, respectively, of the eigenmodes of system
(\ref{sintro1}) with $\beta=0$. For the system (\ref{sintro1}) with $\beta=1$,
$\theta=0$, and $\eta=0$, denote the smallest of the nonzero damping factors
of the eigenmodes by $b_{\min}$. As these terms play a key role in describing
the modal dichotomy and overdamping phenomena, we provide a way to calculate
them (as described in Sections \ref{sinframe} and \ref{stvspen}) using
spectral theory:%
\begin{gather}
\omega_{\max}=\max\left\{  \omega\in(0,\infty):\det\left(  \omega^{2}%
\alpha+2\omega\mathrm{i}\theta-\eta\right)  =0\right\}  ,\text{ }%
\label{sintmr00}\\
\omega_{\min}=\min\left\{  \omega\in(0,\infty):\det\left(  \omega^{2}%
\alpha+2\omega\mathrm{i}\theta-\eta\right)  =0\right\}  ,\label{sintmr01}\\
b_{\min}=\min\left[  \sigma\left(  \alpha^{-1}R\right)  \setminus\left\{
0\right\}  \right]  >0. \label{sintmr02}%
\end{gather}
Next, to describe our results we assume that the following condition holds:

\begin{condition}
The duality condition is the assumption that%
\begin{equation}
\eta>0\text{.} \label{cnddl}%
\end{equation}

\end{condition}

The reason this is called the duality condition is that under this condition
there is a \textquotedblleft dual\textquotedblright\ Lagrangian system to the
Lagrangian system with evolution equations (\ref{sintro1}), which has the same
evolution equations except $\alpha$ and $\eta$ are interchanged, i.e., the
equations of motion (\ref{dradis2a}).

\begin{remark}
[duality]This \textquotedblleft duality\textquotedblright\ is discussed in
more detail in Section \ref{subsDualSys}. Its importance lies in the fact that
it allows us to achieve more complete and sharper results in describing the
modal dichotomy (see Theorems \ref{tmddI}, \ref{tmqf} and Corollaries
\ref{cmddI}, \ref{cllspmdd}, and \ref{cdasym}) and overdamping (see
Corollaries \ref{csogc} and \ref{cbeta0}). This is a consequence of the
relationship between the eigenmodes (and their quality factors) of the
Lagrangian system and its dual [cf. (\ref{dradis4_2}) and (\ref{dradis4_3})].
Our main results on this relationship is contained in Propositions
\ref{ppfspdl}, \ref{pspredl} and \ref{pspredl2} which connects the spectral
theory associated with the eigenmodes of each system together.
\end{remark}

For this dual Lagrangian system (\ref{dradis2a}), we define $\omega_{\max
}^{\flat}$ and $b_{\min}^{\flat}$ similar to $\omega_{\max}$ and $b_{\min}$ as
follows: $\omega_{\max}^{\flat}$ is the maximum positive frequency of the
eigenmodes of (\ref{dradis2a}) with $\beta=0$ and $b_{\min}^{\flat}$ is the
smallest nonzero damping factor\ of the eigenmodes of (\ref{dradis2a}) with
$\beta=1$, $\theta=0$, and $\alpha=0$. In particular, it follows from
Proposition \ref{pspredl} that%
\begin{gather}
\omega_{\max}^{\flat}=\frac{1}{\omega_{\min}}=\max\left\{  \omega\in
(0,\infty):\det\left(  \omega^{2}\eta+2\omega\mathrm{i}\theta-\alpha\right)
=0\right\}  ,\label{sintmr0a}\\
b_{\min}^{\flat}=\min\left[  \sigma\left(  \eta^{-1}R\right)  \setminus
\left\{  0\right\}  \right]  >0. \label{sintmr0b}%
\end{gather}
We next define the decreasing functions, $y=c\left(  \beta\right)  $ and its
inverse $\beta=c^{-1}\left(  y\right)  $, by%
\begin{gather}
c\left(  \beta\right)  =\left(  \frac{2\omega_{\max}^{2}}{b_{\min}}\right)
\left[  \beta-\left(  2\frac{\omega_{\max}}{b_{\min}}\right)  \right]
^{-1},\ \ \text{for }\beta>\left(  2\frac{\omega_{\max}}{b_{\min}}\right)
,\label{sintmr1}\\
c^{-1}\left(  y\right)  =\left(  \frac{2\omega_{\max}^{2}}{b_{\min}}\right)
y^{-1}+\left(  2\frac{\omega_{\max}}{b_{\min}}\right)  ,\ \ \text{for }y>0
\label{sintmr1a}%
\end{gather}
and introduce the same functions for the dual Lagrangian system%
\begin{gather}
c^{\flat}\left(  \beta\right)  =\left[  2\frac{\left(  \omega_{\max}^{\flat
}\right)  ^{2}}{b_{\min}^{\flat}}\right]  \left[  \beta-\left(  2\frac
{\omega_{\max}^{\flat}}{b_{\min}^{\flat}}\right)  \right]  ^{-1},\ \ \text{for
}\beta>\left(  2\frac{\omega_{\max}^{\flat}}{b_{\min}^{\flat}}\right)
,\label{sintmrd1}\\
\left(  c^{\flat}\right)  ^{-1}\left(  y\right)  =\left[  2\frac{\left(
\omega_{\max}^{\flat}\right)  ^{2}}{b_{\min}^{\flat}}\right]  y^{-1}+\left(
2\frac{\omega_{\max}^{\flat}}{b_{\min}^{\flat}}\right)  ,\ \ \text{for }y>0.
\label{sintmrd1a}%
\end{gather}

Finally, the (nonzero) rank $N_{R}$ of the $N\times N$ matrix $R$, i.e.,
\begin{equation}
N_{R}=\operatorname{rank}R>0, \label{sintrmr1c}%
\end{equation}
plays a key role in the following description of our main results as does the
\emph{configuration space} $\mathbb{M}\left(  \beta\right)  $ and the
corresponding \emph{phase space} $\mathbb{V}\left(  \beta\right)  $ of
(\ref{sintro1}) for each $\beta\geq0$, i.e.,%
\begin{gather}
\mathbb{M}\left(  \beta\right)  =\left\{  Q:Q\text{ is a solution of
(\ref{sintro1})}\right\}  \text{\quad(configuration space),}\label{sintmr1d}\\
\mathbb{V}\left(  \beta\right)  =\left\{  \left[  Q,\dot{Q}\right]
^{\mathrm{T}}:Q\text{ is a solution of (\ref{sintro1}) and }\dot{Q}%
=\partial_{t}Q\right\}  \text{\quad(phase space),}\label{sintmr1e}\\
\text{where }Q=Q\left(  t,\beta\right) \nonumber
\end{gather}
along with the $\mathbb{M}\left(  \beta\right)  $-eigenmodes and the
$\mathbb{V}\left(  \beta\right)  $-eigenmodes, i.e.,%
\begin{align}
\mathbb{M}\left(  \beta\right)  \text{-eigenmode}  &  \text{: }Q\in
\mathbb{M}\left(  \beta\right)  \text{ of the form (\ref{sintro5}%
),}\label{sintmr1f}\\
\mathbb{V}\left(  \beta\right)  \text{-eigenmode}  &  \text{: }\left[
Q,\dot{Q}\right]  ^{\mathrm{T}}\in\mathbb{V}\left(  \beta\right)  \text{ with
}Q\text{ an }\mathbb{M}\left(  \beta\right)  \text{-eigenmode.}
\label{sintmr1g}%
\end{align}

\begin{remark}
[change-of-variables]Although it is simpler and most perspicuous to phrase our
main results in this overview in terms of the configuration space
$\mathbb{M}\left(  \beta\right)  $ and the phase space $\mathbb{V}\left(
\beta\right)  $ for the Lagrangian system with equations of motion
(\ref{sintro1}) (a system of linear second-order ODEs), it is actually better
(in terms of the analysis and precision in the statement of results in Section
\ref{smainr}) to first make a change-of-variables (see \ref{chvar}) from the
generalized coordinates and generalized velocities, i.e., $\left[  Q,\dot
{Q}\right]  ^{\mathrm{T}}$, to a new variable $v$ which satisfies the
canonical evolution equations (\ref{ceveqs}) (a system of linear first-order
ODEs). The evolution of this canonical system is governed by a contraction
semigroup $e^{-\mathrm{i}A\left(  \beta\right)  t}$ in which the (system)
operator $A\left(  \beta\right)  $ is an analytic matrix-valued function of
the loss parameter $\beta$ with the fundamental properties (\ref{ceveqs1}) for
$\beta\geq0$. The key advantage of this is it allows us to study the modal
dichotomy and the overdamping phenomenon using linear perturbation theory by
considering the standard eigenvalue problem (\ref{sevp}) of $A\left(
\beta\right)  $ and the splitting of its spectrum as a function of $\beta$. A
brief description of this framework that we use to study the modal dichotomy,
overdamping phenomena, and the associated spectral problems is discussed below
in Section \ref{sinframe}.
\end{remark}

\subsubsection{The modal dichotomy\label{sbsbsecmd}}

The\emph{ phenomenon of modal dichotomy} can be described, as we have done
below, as occurring in \emph{four stages (i)-(iv)} with increasing $\beta$. To
begin with, the phase space $\mathbb{V}\left(  \beta\right)  $ of
(\ref{sintro1}) is a $2N$-dimensional vector space over $%
\mathbb{C}
$ for each $\beta\geq0$. Moreover, $\mathbb{V}\left(  \beta\right)  $ is
spanned by a basis of $\mathbb{V}\left(  \beta\right)  $-eigenmodes for every
$\beta$ with only a finite number of exceptions (a consequence of Proposition
\ref{pdiagsop} and Corollary \ref{cpfsp}).

Now in the description of each stage (i)-(iii) we provide bounds on the
frequencies, damping factors, and quality factors (Q-factor) for the
eigenmodes of the Lagrangian system (\ref{sintro1}) with stage (iv) providing
a description of their asymptotics as the loss parameter $\beta\rightarrow
\infty$. The main point of these bounds is that it allows us at each of these
stages to give the following dissipative characterization of the splitting of
the phase space $\mathbb{V}\left(  \beta\right)  $: (i) into the direct sum of
a high-loss subspace $\mathbb{V}_{h\ell}\left(  \beta\right)  $, whose
$\mathbb{V}\left(  \beta\right)  $-eigenmodes in it will have large damping
factors and low Q-factors, and its complement $\mathbb{V}_{\ell\ell}\left(
\beta\right)  $; (ii) the splitting of $\mathbb{V}_{\ell\ell}\left(
\beta\right)  $ into the direct sum of a low-loss/low-Q subspace
$\mathbb{V}_{\ell\ell,0}\left(  \beta\right)  $, whose $\mathbb{V}\left(
\beta\right)  $-eigenmodes in it will have small damping factors and low
Q-factors, and its complement $\mathbb{V}_{\ell\ell,1}\left(  \beta\right)  $;
(iii) the low-loss/high-Q subspace $\mathbb{V}_{\ell\ell,1}\left(
\beta\right)  $, whose $\mathbb{V}\left(  \beta\right)  $-eigenmodes in it
will have small damping factors and high Q-factors; (iv) a basis of
$\mathbb{V}\left(  \beta\right)  $-eigenmodes in each of these subspaces and
the asymptotics for their frequencies, damping factors, and Q-factors as
$\beta\rightarrow\infty$.

Let us now describe these four stages of the modal dichotomy more precisely
using quantities defined in (\ref{sintmr00})-(\ref{sintmr02}), (\ref{sintmr0a}%
), (\ref{sintmr0b}), and (\ref{sintmr1})-(\ref{sintrmr1c}).

\textbf{(i)} In the first stage of modal dichotomy (Theorem \ref{tmdic}), if
$\beta>2\frac{\omega_{\max}}{b_{\min}}$ then the space $\mathbb{V}\left(
\beta\right)  $ splits into the direct sum of subspaces
\[
\mathbb{V}\left(  \beta\right)  =\mathbb{V}_{h\ell}\left(  \beta\right)
\oplus\mathbb{V}_{\ell\ell}\left(  \beta\right)
\]
which have dimensions%
\[
\dim\mathbb{V}_{h\ell}\left(  \beta\right)  =N_{R}\text{, }\dim\mathbb{V}%
_{\ell\ell}\left(  \beta\right)  =2N-N_{R}%
\]
and the properties that for any $\mathbb{M}\left(  \beta\right)  $-eigenmode
$Q=Q\left(  t\right)  =qe^{-\mathrm{i}\zeta t}$ of (\ref{sintro1}), the
$\mathbb{V}\left(  \beta\right)  $-eigenmode $\left[  Q,\dot{Q}\right]
^{\mathrm{T}}$ belongs to either $\mathbb{V}_{h\ell}\left(  \beta\right)  $ or
$\mathbb{V}_{\ell\ell}\left(  \beta\right)  $ with the following estimates
holding (by Corollary \ref{cmdic}):\bigskip

\textbf{a.} If $\left[  Q,\dot{Q}\right]  ^{\mathrm{T}}\in\mathbb{V}_{h\ell
}\left(  \beta\right)  $ then $-\operatorname{Im}\zeta\geq\beta b_{\min
}-\omega_{\max}$, $\left\vert \operatorname{Re}\zeta\right\vert \leq
\omega_{\max}$, and $0\leq Q_{\zeta}\leq\frac{1}{2}\frac{\omega_{\max}}{\beta
b_{\min}-\omega_{\max}}<\frac{1}{2}$.\bigskip

\textbf{b.} If $\left[  Q,\dot{Q}\right]  ^{\mathrm{T}}\in\mathbb{V}_{\ell
\ell}\left(  \beta\right)  $ then $0\leq-\operatorname{Im}\zeta\leq
\omega_{\max}$ and $\left\vert \operatorname{Re}\zeta\right\vert \leq
\omega_{\max}$.\bigskip

\textbf{(ii)} In the second stage (Theorem \ref{tmddI}), if $\beta
>\max\left\{  2\frac{\omega_{\max}}{b_{\min}},2\frac{\omega_{\max}^{\flat}%
}{b_{\min}^{\flat}}\right\}  $ then $\left(  2N-N_{R}\right)  $-dimensional
space $\mathbb{V}_{\ell\ell}\left(  \beta\right)  $ splits into the direct sum%
\[
\mathbb{V}_{\ell\ell}\left(  \beta\right)  =\mathbb{V}_{\ell\ell,0}\left(
\beta\right)  \oplus\mathbb{V}_{\ell\ell,1}\left(  \beta\right)
\]
with dimensions
\[
\dim\mathbb{V}_{\ell\ell,0}\left(  \beta\right)  =N_{R}\text{, }\dim
\mathbb{V}_{\ell\ell,1}\left(  \beta\right)  =2N-2N_{R}\text{.}%
\]
Furthermore, for any $\mathbb{M}\left(  \beta\right)  $-eigenmode $Q=Q\left(
t\right)  =qe^{-\mathrm{i}\zeta t}$ of (\ref{sintro1}), if the $\mathbb{V}%
\left(  \beta\right)  $-eigenmode $\left[  Q,\dot{Q}\right]  ^{\mathrm{T}}$
belongs to $\mathbb{V}_{\ell\ell}\left(  \beta\right)  $ then $\left[
Q,\dot{Q}\right]  ^{\mathrm{T}}\in\mathbb{V}_{\ell\ell,0}\left(  \beta\right)
$ or $\left[  Q,\dot{Q}\right]  ^{\mathrm{T}}\in\mathbb{V}_{\ell\ell,1}\left(
\beta\right)  $ with the following estimates holding (Theorem \ref{tmddI} and
Corollary \ref{cmddI}):\bigskip

\textbf{a.} If $\left[  Q,\dot{Q}\right]  ^{\mathrm{T}}\in\mathbb{V}_{\ell
\ell,0}\left(  \beta\right)  $ then $\left\vert \zeta\right\vert \leq\frac
{1}{\beta b_{\min}^{\flat}-\omega_{\max}^{\flat}}<\omega_{\min}$ and $0\leq
Q_{\zeta}\leq\frac{1}{2}\frac{\omega_{\max}^{\flat}}{\beta b_{\min}^{\flat
}-\omega_{\max}^{\flat}}<\frac{1}{2}$.\bigskip

\textbf{b.} If $\left[  Q,\dot{Q}\right]  ^{\mathrm{T}}\in\mathbb{V}_{\ell
\ell,1}\left(  \beta\right)  $ then $\omega_{\min}\leq\left\vert
\zeta\right\vert \leq\omega_{\max}$.\bigskip

\textbf{(iii)} In the third stage (Theorem \ref{tllspmd}), either $N_{R}=N$
(i.e., $R$ has full rank) and $\mathbb{V}_{\ell\ell,1}\left(  \beta\right)
=\left\{  0\right\}  $ or $N_{R}<N$ (i.e., $R$ is rank deficient) and there
exists an $\rho_{\min}>0$ such that $c^{-1}\left(  \frac{\rho_{\min}}%
{2}\right)  >2\frac{\omega_{\max}}{b_{\min}}$ and if $\beta>\max\left\{
c^{-1}\left(  \frac{\rho_{\min}}{2}\right)  ,2\frac{\omega_{\max}^{\flat}%
}{b_{\min}^{\flat}}\right\}  $ then for any $\mathbb{M}\left(  \beta\right)
$-eigenmode $Q=Q\left(  t\right)  =qe^{-\mathrm{i}\zeta t}$ of (\ref{sintro1})
whose $\mathbb{V}\left(  \beta\right)  $-eigenmode $\left[  Q,\dot{Q}\right]
^{\mathrm{T}}\ $belongs to $\mathbb{V}_{\ell\ell}\left(  \beta\right)  $ will
have either $\left[  Q,\dot{Q}\right]  ^{\mathrm{T}}\in\mathbb{V}_{\ell\ell
,0}\left(  \beta\right)  $ or $\left[  Q,\dot{Q}\right]  ^{\mathrm{T}}%
\in\mathbb{V}_{\ell\ell,1}\left(  \beta\right)  $ and the following estimates
hold (Corollary \ref{cllspmd}):\bigskip

\textbf{a.} If $\left[  Q,\dot{Q}\right]  ^{\mathrm{T}}\in\mathbb{V}_{\ell
\ell,0}\left(  \beta\right)  $ then $0\leq-\operatorname{Im}\zeta\leq c\left(
\beta\right)  $ and $\left\vert \operatorname{Re}\zeta\right\vert \leq
c\left(  \beta\right)  $.\bigskip

\textbf{b.} If $\left[  Q,\dot{Q}\right]  ^{\mathrm{T}}\in\mathbb{V}_{\ell
\ell,1}\left(  \beta\right)  $ then $0\leq-\operatorname{Im}\zeta\leq c\left(
\beta\right)  $, $\left\vert \operatorname{Re}\zeta\right\vert \geq\rho_{\min
}-c\left(  \beta\right)  $, and $Q_{\zeta}\geq\frac{1}{2}\frac{\rho_{\min
}-c\left(  \beta\right)  }{c\left(  \beta\right)  }>\frac{1}{2}$. In
particular, $\operatorname{Re}\zeta\not =0$ and all the $\mathbb{M}\left(
\beta\right)  $-eigenmodes in $\mathbb{V}_{\ell\ell,1}\left(  \beta\right)  $
are underdamped.\bigskip

\textbf{(iv)} In the fourth stage (Theorems \ref{tmdicII}, \ref{tmddII},
Corollary \ref{cdasym}, Section \ref{smdhlr}, and Propositions \ref{pspredl},
\ref{pspredl2}), if $\beta$ is sufficiently large (i.e., $\beta\gg1$) then the
space $\mathbb{V}\left(  \beta\right)  $ is spanned by a basis of
$\mathbb{V}\left(  \beta\right)  $-eigenmodes $\left[  Q_{j},\partial_{t}%
Q_{j}\right]  ^{\mathrm{T}}$, where $Q_{j}=Q_{j}\left(  t,\beta\right)
=q_{j}\left(  \beta\right)  e^{-\mathrm{i}\zeta_{j}\left(  \beta\right)  t}$,
$j=1,\ldots,2N$ which split into two distinct classes%
\begin{align*}
\text{high-loss}  &  \text{: }Q_{j}(\beta),\text{ }1\leq j\leq N_{R};\\
\text{low-loss}  &  \text{: }Q_{j}(\beta),\text{ }N_{R}+1\leq j\leq2N
\end{align*}
with the following properties:\bigskip

\textbf{a.} These $\mathbb{V}\left(  \beta\right)  $-eigenmode split the space
$\mathbb{V}\left(  \beta\right)  $ into the direct sum of subspaces
\[
\mathbb{V}\left(  \beta\right)  =\mathbb{V}_{h\ell}\left(  \beta\right)
\oplus\mathbb{V}_{\ell\ell,0}\left(  \beta\right)  \oplus\mathbb{V}_{\ell
\ell,1}\left(  \beta\right)
\]
in which
\begin{align*}
\mathbb{V}_{h\ell}\left(  \beta\right)   &  =\operatorname*{span}\left\{
\left[  Q_{j},\partial_{t}Q_{j}\right]  ^{\mathrm{T}}:\text{ }1\leq j\leq
N_{R}\right\}  ,\\
\mathbb{V}_{\ell\ell,0}\left(  \beta\right)   &  =\operatorname*{span}\left\{
\left[  Q_{j},\partial_{t}Q_{j}\right]  ^{\mathrm{T}}:\text{ }N_{R}+1\leq
j\leq2N_{R}\right\}  ,\\
\mathbb{V}_{\ell\ell,1}\left(  \beta\right)   &  =\operatorname*{span}\left\{
\left[  Q_{j},\partial_{t}Q_{j}\right]  ^{\mathrm{T}}:\text{ }2N_{R}+1\leq
j\leq2N\right\}  ,
\end{align*}
where $\operatorname*{span}\left\{  \cdot\right\}  $ denotes the span of a set
$\left\{  \cdot\right\}  $, i.e., all linear combinations of elements of
$\left\{  \cdot\right\}  $ over $%
\mathbb{C}
$.\bigskip

\textbf{b.} The frequencies $\operatorname{Re}\zeta_{j}\left(  \beta\right)
$, damping factors $-\operatorname{Im}\zeta_{j}\left(  \beta\right)  $, and
Q-factors $Q_{\zeta_{j}\left(  \beta\right)  }$ have the following asymptotic
expansions as $\beta\rightarrow\infty$:%
\begin{gather*}
\text{high-loss:}-\operatorname{Im}\zeta_{j}\left(  \beta\right)  =b_{j}%
\beta+O\left(  \beta^{-1}\right)  ,\text{ }\operatorname{Re}\zeta_{j}\left(
\beta\right)  =\rho_{j}+O\left(  \beta^{-2}\right)  ,\\
Q_{\zeta_{j}\left(  \beta\right)  }=\frac{\left\vert \rho_{j}\right\vert
}{b_{j}}\beta^{-1}+O\left(  \beta^{-3}\right)  \text{, for }1\leq j\leq N_{R},
\end{gather*}
where $0<b_{1}\leq\cdots\leq b_{N_{R}}$ are all the nonzero eigenvalues of
$\alpha^{-1}R$ listed in increasing order and repeated according to their
multiplicities;
\begin{gather*}
\text{low-loss, low-Q:}-\operatorname{Im}\zeta_{j}\left(  \beta\right)
=\frac{1}{b_{j-N_{R}}^{\flat}}\beta^{-1}+O\left(  \beta^{-3}\right)  ,\text{
}\operatorname{Re}\zeta_{j}\left(  \beta\right)  =O\left(  \beta^{-2}\right)
,\\
Q_{\zeta_{j}\left(  \beta\right)  }=O\left(  \beta^{-1}\right)  \text{, for
}N_{R}+1\leq j\leq2N_{R},
\end{gather*}
where $0<b_{1}^{\flat}\leq\cdots\leq b_{N_{R}}^{\flat}$ are all the nonzero
eigenvalues of $\eta^{-1}R$ listed in increasing order and repeated according
to their multiplicities;%
\begin{gather*}
\text{low-loss, high-Q:}-\operatorname{Im}\zeta_{j}\left(  \beta\right)
=d_{j}\beta^{-1}+O\left(  \beta^{-3}\right)  ,\text{ }\operatorname{Re}%
\zeta_{j}\left(  \beta\right)  =\rho_{j}+O\left(  \beta^{-2}\right)  ,\\
Q_{\zeta_{j}\left(  \beta\right)  }=\frac{\left\vert \rho_{j}\right\vert
}{d_{j}}\beta+O\left(  \beta^{-1}\right)  \text{ (provided }d_{j}%
\not =0\text{), for }2N_{R}+1\leq j\leq2N,
\end{gather*}
where the limiting frequencies $\rho_{j}$, $2N_{R}+1\leq j\leq2N$ are all the
nonzero eigenvalues of a self-adjoint operator $\Omega_{1}$, defined in
Proposition \ref{pspredl2} [see also (\ref{pod7})], and repeated according to
their multiplicities. \bigskip

In the third stage (iii) of the modal dichotomy described above, the value
$\rho_{\min}$ can be taken to be
\begin{equation}
\rho_{\min}=\min\left\{  \rho_{j}:2N_{R}+1\leq j\leq2N,\text{ }\rho
_{j}>0\right\}  . \label{sintmr0g0}%
\end{equation}
Now define the value $\rho_{\max}$ defined by%
\begin{equation}
\rho_{\max}=\max\left\{  \rho_{j}:2N_{R}+1\leq j\leq2N\right\}
\label{sintmr0g1}%
\end{equation}
Also define $\rho_{\min}^{\flat}$ and $\rho_{\max}^{\flat}$ similarly for the
dual Lagrangian system (\ref{dradis2a}) as we defined $\rho_{\min}$ and
$\rho_{\max}$ above for the Lagrangian system (\ref{sintro1}). It follows from
Proposition \ref{pspredl2} and the remark below that%
\begin{equation}
\rho_{\min}^{\flat}=\rho_{\max}^{-1}\text{, }\rho_{\max}^{\flat}=\rho_{\min
}^{-1}. \label{sintmr0g2}%
\end{equation}

\begin{remark}
[alternative spectral characterization]\label{raltspc}Proposition
\ref{pspredl2} (which complements Proposition \ref{pspredl}) in Sec.
\ref{smdhlr} and the perturbation analysis described in Sec. \ref{smdhlr}
gives an important alternative spectral characterization of the limiting
frequencies $\rho_{j}$, $2N_{R}+1\leq j\leq2N$ of the low-loss, high-Q modes,
which can be used to calculate\ explicitly these values as we have done, for
instance, in Section \ref{scircuits} for an electric circuit example.
Moreover, Proposition \ref{pspredl2} together with Remark \ref{pspredl2} gives
an interpretation (within the Lagrangian framework introduced in
\cite{FigWel2}) of these limiting frequencies as being the frequencies of the
eigenmodes of a certain conservative Lagrangian system whose Lagrangian is
also a quadratic form similar to (\ref{sintro2}) but associated with
$\operatorname{Ker}R$.
\end{remark}

\subsubsection{Selective overdamping\label{sbsbsecsod}}

Now we willl describe the selective overdamping phenomenon in terms of the
above modal dichotomy. First, we need to define the generic condition which is
the assumption that the nonzero eigenvalues of $\alpha^{-1}R$ and $\eta^{-1}R$
[in particular, $\sigma\left(  \alpha^{-1}R\right)  \setminus\left\{
0\right\}  =\left\{  b_{1},\ldots,b_{N_{R}}\right\}  $ and $\sigma\left(
\eta^{-1}R\right)  \setminus\left\{  0\right\}  =\left\{  b_{1}^{\flat}%
,\ldots,b_{N_{R}}^{\flat}\right\}  $] are simple (that is, their geometric
multiplicity is one), i.e.,%
\begin{equation}
b_{i}\not =b_{j}\text{, }b_{i}^{\flat}\not =b_{j}^{\flat}\text{, if }%
i\not =j\text{, for }1\leq i,\ j\leq N_{R}\text{\quad(generic condition).}
\label{sintmr0d}%
\end{equation}
Next, we define $\beta_{0}$, $\beta_{1}$, and $\beta_{2}$ as
\begin{align}
\beta_{0}  &  =\frac{2\omega_{\max}}{d},\text{ where }d=\min_{0\leq i,j\leq
N_{R},i\not =j}\left\vert b_{i}-b_{j}\right\vert ,\label{sintmr0e}\\
\beta_{1}  &  =\max\left\{  \beta_{0},\frac{2\omega_{\max}^{\flat}}{d^{\flat}%
}\right\}  ,\text{ where }d^{\flat}=\min_{0\leq i,j\leq N_{R},i\not =%
j}\left\vert b_{i}^{\flat}-b_{j}^{\flat}\right\vert ,\label{sintmr0f}\\
\beta_{2}  &  =\max\left\{  \min\left\{  c^{-1}\left(  \rho_{\min}/2\right)
,\left(  c^{\flat}\right)  ^{-1}\left(  \rho_{\min}^{\flat}/2\right)
\right\}  ,2\frac{\omega_{\max}}{b_{\min}},2\frac{\omega_{\max}^{\flat}%
}{b_{\min}^{\flat}}\right\}  , \label{sintmr0g}%
\end{align}

One of the most important facts we prove in this paper is that \emph{selective
overdamping is a generic phenomenon}. It will occur when $\beta$ is
sufficiently large, i.e., $\beta\gg1$, provided $N_{R}<N$ and the generic
condition is satisfied (\ref{sintmr0d}). Under these conditions and in terms
of the modal dichotomy describe above, the selective overdamping phenomenon
can be described as occurring in the following \emph{three stages (i)-(iii)}
with increasing $\beta$:

\textbf{(i)} If $\beta>\beta_{0}$ (Theorem \ref{tbeta0}) then $\beta
>2\frac{\omega_{\max}}{b_{\min}}$ and the $N_{R}$-dimensional subspace
$\mathbb{V}_{h\ell}\left(  \beta\right)  $ is spanned by overdamped
$\mathbb{V}\left(  \beta\right)  $-eigenmodes and, in particular, if $\left[
Q,\dot{Q}\right]  ^{\mathrm{T}}\in\mathbb{V}_{h\ell}\left(  \beta\right)  $,
where $Q=Q\left(  t\right)  =qe^{-\mathrm{i}\zeta t}$ is an eigenmode of
(\ref{sintro1}) then%
\[
\operatorname{Re}\zeta=0\text{.}%
\]

\textbf{(ii)} If $\beta>\beta_{1}$ (Corollary \ref{cbeta0}) then $\beta
>\max\left\{  2\frac{\omega_{\max}}{b_{\min}},2\frac{\omega_{\max}^{\flat}%
}{b_{\min}^{\flat}}\right\}  $ and\ the $N_{R}$-dimensional subspace
$\mathbb{V}_{\ell\ell,0}\left(  \beta\right)  $ is spanned by overdamped
$\mathbb{V}\left(  \beta\right)  $-eigenmodes and, in particular, if $\left[
Q,\dot{Q}\right]  ^{\mathrm{T}}\in\mathbb{V}_{\ell\ell,0}\left(  \beta\right)
$, where $Q=Q\left(  t\right)  =qe^{-\mathrm{i}\zeta t}$ is an eigenmode of
(\ref{sintro1}) then%
\[
\operatorname{Re}\zeta=0\text{.}%
\]

\textbf{(iii)} If $\beta>\beta_{2}$ (Corollary \ref{cllspmdd}; see also
Theorem \ref{tllspmd} and Corollary \ref{cllspmd}) then $\beta>\max\left\{
2\frac{\omega_{\max}}{b_{\min}},2\frac{\omega_{\max}^{\flat}}{b_{\min}^{\flat
}}\right\}  $ and the $\left(  2N-2N_{R}\right)  $-dimensional subspace
$\mathbb{V}_{\ell\ell,1}\left(  \beta\right)  $ is spanned by underdamped
$\mathbb{V}\left(  \beta\right)  $-eigenmodes and, in particular, if $\left[
Q,\dot{Q}\right]  ^{\mathrm{T}}\in\mathbb{V}_{\ell\ell,1}\left(  \beta\right)
$, where $Q=Q\left(  t\right)  =qe^{-\mathrm{i}\zeta t}$ is an eigenmode of
(\ref{sintro1}) then%
\[
\operatorname{Re}\zeta\not =0\text{.}%
\]

\begin{remark}
[selective overdamping estimates]One of the main goals of the paper has been
achieved, namely, we have given explicit formulas in terms of $\beta$ for
upper bound estimates on the amount of loss required in order that the lossy
component of a composite system, as modeled by our Lagrangian system
(\ref{sintro1}) when $R$ is rank deficient, for the selective overdamping to
occur in the generic case [in terms of $\beta_{0}$, $\beta_{1}$, and
$\beta_{2}$ as occuring in the stages (i)-(iii)] and have given Q-factor
estimates for the underdamped modes [in (iii).(a) of the modal dichotomy].
\end{remark}

\subsection{Overview of our framework\label{sinframe}}

Here we give a brief description of our framework we will use in our paper to
study the modal dichotomy, overdamping phenomena, and the associated spectral
problems that arise in this study. Further details on this framework can be
found in \cite{FigWel1}, \cite{FigWel2}.

Consider the Lagrangian system with equations of motion (\ref{sintro1}). The
eigenmodes $Q\left(  t\right)  =qe^{-\mathrm{i}\zeta t}$ of this Lagrangian
system corresponds to eigenpairs $\zeta$, $q$ of the quadratic \emph{matrix
pencil} $C\left(  \zeta,\beta\right)  $, i.e., solutions of the quadratic
eigenvalue problem:
\begin{gather}
C\left(  \zeta,\beta\right)  q=0,\text{\quad}0\not =q\in%
\mathbb{C}
^{N}\qquad\,\text{(quadratic eigenvalue problem),}\label{qevp}\\
C\left(  \zeta,\beta\right)  =\zeta^{2}\alpha+\left(  2\theta+\beta R\right)
\mathrm{i}\zeta-\eta\qquad\text{(quadratic pencil).} \label{qmpen}%
\end{gather}
\qquad Hence, the set of eigenvalues (spectrum) of the pencil $C\left(
\zeta,\beta\right)  $ is the set%
\begin{equation}
\sigma\left(  C\left(  \cdot,\beta\right)  \right)  =\left\{  \zeta\in%
\mathbb{C}
:\det\left(  C\left(  \zeta,\beta\right)  \right)  =0\right\}  \qquad
\text{(pencil spectrum),} \label{qspec}%
\end{equation}
which are exactly those values $\zeta$ for which an eigenmode of the
Lagrangian system with time-dependency $e^{-\mathrm{i}\zeta t}$ exists.

This form of the spectral problem is not suitable for the well-developed
perturbation theory of linear operators \cite{Bau85}, \cite{Kato},
\cite{Wel11}. Thus, we convert the spectral problem to the standard form by
making a change-of-variables from $\left[  Q,\dot{Q}\right]  ^{\mathrm{T}}$ to
a new variable $v$ via
\begin{gather}
v=Ku,\qquad K=%
\begin{bmatrix}
\sqrt{\alpha}^{-1} & 0\\
0 & \sqrt{\eta}%
\end{bmatrix}%
\begin{bmatrix}
\mathbf{1} & -\theta\\
0 & \mathbf{1}%
\end{bmatrix}
,\label{chvar}\\
u=%
\begin{bmatrix}
P\\
Q
\end{bmatrix}
=%
\begin{bmatrix}
\theta & \alpha\\
\mathbf{1} & 0
\end{bmatrix}%
\begin{bmatrix}
Q\\
\dot{Q}%
\end{bmatrix}
\text{ (change-of-variables),}\nonumber
\end{gather}
where $Q$, $P$ are conjugate variables with $P=\alpha\dot{Q}+\theta Q$ the
\emph{conjugate momentum} and $\mathbf{1}$ denotes the $N\times N$ identity
matrix. The variables $\left[  Q,\dot{Q}\right]  ^{\mathrm{T}}$ and $v$ are
related to the system energy, i.e., the Hamiltonian $\mathcal{H}%
=\mathcal{H}\left(  P,Q\right)  $, by%
\begin{equation}
\frac{1}{2}\left(  v,v\right)  =\mathcal{H}\left(  P,Q\right)  =\frac{1}%
{2}\left(  \dot{Q},\alpha\dot{Q}\right)  +\frac{1}{2}\left(  Q,\eta Q\right)
\qquad\text{(system energy),} \label{ennvar}%
\end{equation}
where $\left(  \cdot,\cdot\right)  $ denotes the standard complex inner product.

This change-of-variable takes solutions $Q$ in $%
\mathbb{C}
^{N}$ of the Lagrangian system (\ref{sintro1}) to solutions $v$ in $%
\mathbb{C}
^{2N}$ of the canonical system, i.e, solutions of the canonical evolution
equations%
\begin{gather}
\partial_{t}v=-\mathrm{i}A\left(  \beta\right)  v,\quad A\left(  \beta\right)
=\Omega-\mathrm{i}\beta B,\quad\beta\geq0\quad\text{(canonical evolution
equations),}\label{ceveqs}\\
\text{where }\beta\geq0,\quad v\left(  t\right)  \in H=%
\mathbb{C}
^{2N}.\nonumber
\end{gather}
The evolution of this canonical system is governed by a contraction semigroup
$e^{-\mathrm{i}A\left(  \beta\right)  t}$ in which the system operator
$A\left(  \beta\right)  $ has the important fundamental properties%
\begin{gather}
A\left(  \beta\right)  ^{\ast}=-A\left(  \beta\right)  ^{\mathrm{T}},\text{
}\operatorname{Re}A\left(  \beta\right)  =\Omega\qquad\text{(frequency
operator),}\label{ceveqs1}\\
-\operatorname{Im}A\left(  \beta\right)  =\beta B\geq0\qquad\text{(power
dissipation condition),}\nonumber\\
0<N_{R}=\operatorname{rank}B\leq N\qquad\text{(rank deficient).}\nonumber
\end{gather}
Here $\operatorname{Re}A\left(  \beta\right)  $, $\operatorname{Im}A\left(
\beta\right)  $ denote the real and imaginary part, respectively, of the
$2N\times2N$ matrix $A\left(  \beta\right)  $. The matrices $\Omega$, $B$ are
given in block form by%
\begin{gather}
\Omega=\left[
\begin{array}
[c]{cc}%
\Omega_{\mathrm{p}} & -\mathrm{i}\Phi^{\mathrm{T}}\\
\mathrm{i}\Phi & 0
\end{array}
\right]  ,\qquad B=\left[
\begin{array}
[c]{ll}%
\mathsf{\tilde{R}} & 0\\
0 & 0
\end{array}
\right]  ,\label{ceveqs2}\\
\Omega_{\mathrm{p}}=-\mathrm{i}2K_{\mathrm{p}}\theta K_{\mathrm{p}%
}^{\mathrm{T}},\qquad\Phi=K_{\mathrm{q}}K_{\mathrm{p}}^{\mathrm{T}}%
,\qquad\mathsf{\tilde{R}}=K_{\mathrm{p}}RK_{\mathrm{p}}^{\mathrm{T}%
},\nonumber\\
K_{\mathrm{p}}=\sqrt{\alpha}^{-1},\qquad K_{\mathrm{q}}=\sqrt{\eta},\nonumber
\end{gather}
where $\sqrt{\alpha}$ and $\sqrt{\eta}$ denote the unique positive definite
and positive semidefinite square roots of the matrices $\alpha$ and $\eta$,
respectively. In particular, this implies the real $N\times N$ matrices
$K_{\mathrm{p}}$, $K_{\mathrm{q}}$ have the properties
\begin{equation}
K_{\mathrm{p}}^{\mathrm{T}}=K_{\mathrm{p}}>0,\qquad K_{\mathrm{q}}%
^{\mathrm{T}}=K_{\mathrm{q}}\geq0. \label{ceveqs3}%
\end{equation}

The modal dichotomy and overdamping phenomenon is now studied via the spectral
perturbation analysis in $\beta$ of the system operator $A\left(
\beta\right)  $ and the standard spectral problem%
\begin{equation}
A\left(  \beta\right)  w=\zeta w,0\not =w\in%
\mathbb{C}
^{2N}\qquad\text{(standard eigenvalue problem),} \label{sevp}%
\end{equation}
and, in particular, its spectrum
\begin{equation}
\sigma\left(  A\left(  \beta\right)  \right)  =\left\{  \zeta\in%
\mathbb{C}
:\det\left(  \zeta\mathbf{1}-A\left(  \beta\right)  \right)  =0\right\}
\qquad\text{(system operator spectrum)}. \label{sevspec}%
\end{equation}
The main reason that we can study the standard spectral problem instead of the
quadratic eigenvalue problem is that an eigenmode $Q\left(  t\right)
=qe^{-\mathrm{i}\zeta t}$ of the Lagrangian system corresponds to an eigenmode
$v\left(  t\right)  =we^{-\mathrm{i}\zeta t}$ of the canonical system which
means $\zeta$, $w$ is an eigenpair of $A\left(  \beta\right)  $, i.e., a
solution of the spectral problem. This correspondence is elaborated on in
Corollary \ref{cpfsp}, In particular, by this corollary, we have the equality
of the spectra
\begin{equation}
\sigma\left(  C\left(  \cdot,\beta\right)  \right)  =\sigma\left(  A\left(
\beta\right)  \right)  ,\qquad\beta\geq0\qquad\text{(equivalence of
spectrum).} \label{sevspeq}%
\end{equation}

Finally, based on our perturbation theory developed in \cite{FigWel1}, it
follows that, except for only a finite number of $\beta$ in $[0,\infty)$, the
eigenvalues of $A\left(  \beta\right)  $ are semi-simple and $A\left(
\beta\right)  $ is diagonalizable. We prove this statement now.

\begin{proposition}
[diagonalization]\label{pdiagsop}The system operator $A\left(  \beta\right)  $
is diagonalizable for all $\beta\in\lbrack0,\infty)$ except for a finite set
of positive values of $\beta$.
\end{proposition}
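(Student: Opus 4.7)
The overall strategy is to recast the diagonalizability of $A(\beta) = \Omega - \mathrm{i}\beta B$ as a polynomial identity in $\beta$ that can then be verified on an infinite subset of $\mathbb{C}$, by exploiting the Hermitian symmetry available on the imaginary $\beta$-axis. First I would form the characteristic polynomial $p(\zeta,\beta) = \det(\zeta\mathbf{1} - A(\beta)) \in \mathbb{C}[\zeta,\beta]$ and factor it over $\mathbb{C}(\beta)[\zeta]$ into pairwise distinct monic irreducibles, $p(\zeta,\beta) = \prod_k p_k(\zeta,\beta)^{m_k}$. After clearing denominators (Gauss's lemma) I may take $p_k \in \mathbb{C}[\zeta,\beta]$, and I define the squarefree radical $q(\zeta,\beta) := \prod_k p_k(\zeta,\beta)$. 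Because each $p_k$ is irreducible (hence squarefree) in $\mathbb{C}(\beta)[\zeta]$ and the $p_i$, $p_j$ are pairwise coprime there, the discriminants $\operatorname{Disc}_\zeta p_k$ and the resultants $\operatorname{Res}_\zeta(p_i,p_j)$ for $i\neq j$ are all nonzero elements of $\mathbb{C}[\beta]$. Let $E_0 \subset \mathbb{C}$ be the finite set consisting of the zeros of these discriminants and resultants together with the zeros of the $\zeta$-leading coefficients of the $p_k$'s. For every $\beta \in \mathbb{C}\setminus E_0$, the specialization $q(\zeta,\beta) \in \mathbb{C}[\zeta]$ agrees up to a nonzero scalar with the squarefree radical of $p(\zeta,\beta)$; hence $A(\beta)$ is diagonalizable if and only if the matrix identity $q(A(\beta),\beta) = 0$ holds.

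Next I would exploit the Hermitian symmetry along the imaginary $\beta$-axis: for every $\mu \in \mathbb{R}$ one has $A(\mathrm{i}\mu) = \Omega + \mu B$, a real linear combination of the Hermitian matrices $\Omega$ and $B$, and therefore itself Hermitian and diagonalizable. For each $\mu \in \mathbb{R}$ with $\mathrm{i}\mu \notin E_0$ --- a cofinite, hence infinite, subset of $\mathbb{R}$ --- the criterion of the previous paragraph yields $q(A(\mathrm{i}\mu),\mathrm{i}\mu) = 0$. Thus every scalar entry of the $2N\times 2N$ matrix-valued polynomial $F(\beta) := q(A(\beta),\beta)$ is a polynomial in $\beta$ with infinitely many zeros, so every entry is the zero polynomial and $F(\beta) \equiv 0$.

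Combining the two steps: for every $\beta \in \mathbb{C} \setminus E_0$ the identity $q(A(\beta),\beta)=0$ holds with $q(\zeta,\beta)$ squarefree in $\zeta$, so the minimal polynomial of $A(\beta)$ divides a squarefree polynomial and is itself squarefree, and $A(\beta)$ is diagonalizable. Restricting to $[0,\infty)$, the set of $\beta$ for which $A(\beta)$ fails to be diagonalizable is contained in the finite set $E_0 \cap [0,\infty)$, and the value $\beta=0$ is automatically excluded since $A(0)=\Omega$ is Hermitian. The main obstacle is the bookkeeping in the first paragraph --- verifying that the formal squarefree radical computed over $\mathbb{C}(\beta)[\zeta]$ specializes correctly outside a finite exceptional set of $\beta$; once that is secured, the Hermitian-symmetry trick in the second paragraph supplies the infinite family of witnesses on the imaginary $\beta$-axis that forces the required polynomial identity $F \equiv 0$.
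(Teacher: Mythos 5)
Your proof is correct but proceeds by a genuinely different route from the paper's. The paper's own proof invokes analytic perturbation theory: it cites a theorem of Baumgartel that the Jordan normal form of the analytic matrix family $A(\beta)$ is locally constant outside a closed isolated set $S\subseteq\mathbb{C}$, imports results from the authors' earlier work establishing diagonalizability for $0<|\beta|<\beta_\ell$ and $|\beta|>\beta_h$, and concludes that $S$ must lie in the compact annulus $\{\beta_\ell\le|\beta|\le\beta_h\}$ and hence be finite. Your argument is purely algebraic and self-contained: you form the squarefree radical $q(\zeta,\beta)$ in $\zeta$ of the characteristic polynomial over $\mathbb{C}(\beta)[\zeta]$, use discriminants and resultants to isolate a finite exceptional set $E_0$ outside of which $q(\cdot,\beta)$ specializes to the squarefree part of $p(\cdot,\beta)$, and then observe that $A(\mathrm{i}\mu)=\Omega+\mu B$ is Hermitian (hence diagonalizable) for every real $\mu$, which forces the polynomial matrix identity $q(A(\beta),\beta)\equiv 0$ because it holds at infinitely many points on the imaginary $\beta$-axis. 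Exploiting the Hermitian symmetry along $\beta\in\mathrm{i}\mathbb{R}$ as an infinite family of diagonalizability witnesses is the key new idea, and it eliminates both the appeal to perturbation theory and the reference to the earlier asymptotic results. Both proofs are sound; yours is arguably cleaner and more portable, while the paper's version slots naturally into the analytic perturbation framework that the rest of the paper is built on.
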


\begin{proof}
As the matrix $A\left(  \beta\right)  =\Omega-\mathrm{i}\beta B$, $\beta\in%
\mathbb{C}
$ is analytic then, by a well-known fact from perturbation theory
\cite[Theorem 3, p. 25 and Theorem 1, p. 225]{Bau85}, its Jordan normal form
is invariant except on a set $S\subseteq$ $%
\mathbb{C}
$ which is closed and isolated. By the proof of our results \cite[Theorem 5,
Sec. IV.A and Theorem 15, Sec. IV.B]{FigWel1}, it follows that there exists
$\beta_{\ell}$, $\beta_{h}>0$ such that $A\left(  \beta\right)  $ is
diagonalizable with invariant Jordan normal form for $0<\left\vert
\beta\right\vert <\beta_{\ell}$ and for $\left\vert \beta\right\vert
>\beta_{h}$. Also, $A\left(  0\right)  =\Omega$ is Hermitian and so it is
diagonalizable. These facts imply this exceptional set $S$ is finite,
$A\left(  \beta\right)  $ is diagonalizable with an invariant Jordan normal
form on $%
\mathbb{C}
\setminus S$, and only in the finite set $\left\{  \beta\in%
\mathbb{C}
:\beta_{\ell}\leq\left\vert \beta\right\vert \leq\beta_{h}\right\}  \cap S$ is
it possible for $A\left(  \beta\right)  $ to not be diagonalizable. This
completes the proof.
\end{proof}

\section{Electric circuit example\label{scircuits}}

Among important applications of methods and results described in this paper
are electric circuits and networks involving resistors (lossy elements) and
gyrators (gyroscopic elements), where the latter are lossless nonreciprocal
circuit elements which was introduced in \cite{Tell48} (see also
\cite{CarGio64}). A general study of electric networks with losses can be
carried out with the help of the Lagrangian approach, and that systematic
study has already been carried out in \cite{FigWel2} and in this paper. For
more on the Lagrangian treatment of electric networks and circuits we refer
the reader to \cite[Sec. 9]{Gantmacher}, \cite[Sec. 2.5]{Goldstein}.

We now will illustrate the idea and give a flavor of the efficiency of our
methods by considering below an example of a rather simple electric circuit
with a gyrator as depicted on the top of Fig. \ref{Figc1} with the assumptions%
\begin{equation}
L_{1},\text{ }L_{2},\text{ }C_{1},\text{ }C_{2},\text{ }C_{12}>0\text{ and
}R_{2},\text{ }G_{12}\geq0\text{.} \label{circ0}%
\end{equation}
This example\ has the essential features of two component systems
incorporating lossy and lossless components. It serves to illustrate how our
theory can be used to calculate explicitly all the terms in (\ref{sintmr00}%
)-(\ref{sintmr02}), (\ref{sintmr0a})-(\ref{sintmrd1a}), (\ref{sintmr0g0}%
)-(\ref{sintmr0g}) for the phenomenon of modal dichotomy and selective
overdamping phenomenon. After we have done this, we will examine numerically
the phenomena using this example.

\begin{figure}
[ptbh]
\begin{center}
\includegraphics[
height=5.3583in,
width=4.8403in
]%
{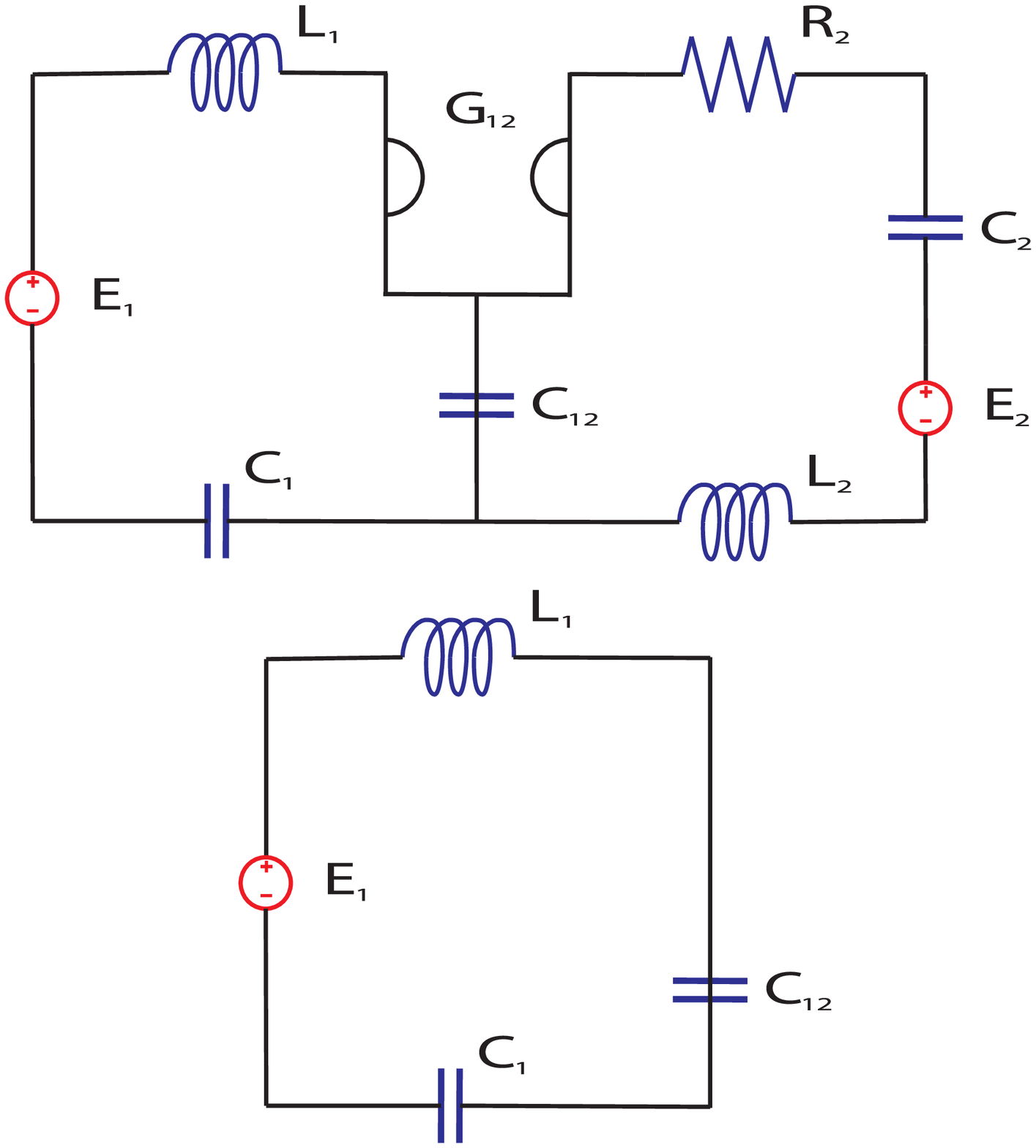}%
\caption{(Top) The electric circuit on top has a gryator $G_{12}$, three
capacitances $C_{1}$, $C_{2}$, $C_{12}$, two inductances $L_{1}$, $L_{2}$, a
resistor $R_{2}$, and two sources $E_{1}$, $E_{2}$. This electric circuit
example fits within the framework of our model.\ Indeed, the resistor $R_{2}$
represents losses, the gyrator $G_{12}$ represents gyroscopy, and this two
component system consists of a lossy component and a lossless component -- the
right and left circuits, respectively. (Bottom) The electric circuit on the
bottom is obtained from the circuit above by eliminating in it the right
circuit with lossy component $R_{2}$. This elimination corresponds to imposing
constraint $q_{2}=0$ on the Lagrangian defined by equation (\ref{circ1}). This
bottom circuit has inductance $L_{1}$ and two capacitances $C_{1}$, $C_{12}$
connected in series to a source $E_{1}$. The frequencies of the eigenmodes of
this circuit are $\rho_{3}$ and $\rho_{4}$ which are the limits of
$\operatorname{Re}\zeta_{3}\left(  \beta\right)  $ and $\operatorname{Re}%
\zeta_{4}\left(  \beta\right)  $ [see the asymptotic expansions in
(\ref{circllhq})], the frequencies of the underdamped eigenmodes of the
circuit above, as $\beta=\frac{R_{2}}{\ell}\rightarrow+\infty$, i.e., as
losses in the lossy component (top right circuit) become infinite, as
predicted by our theory on the selective overdamping phenomenon (see Fig.
\ref{Fig_udmf} and Remark \ref{rspredl2c} for more details on these results
for this electric circuit example and Section \ref{smdhlr}, Proposition
\ref{pspredl2} and Remarks \ref{rspredl2_0}, \ref{rspredl2} for more details
on the general theory).}%
\label{Figc1}%
\end{center}
\end{figure}

\subsection{\textbf{Lagrangian system\label{scircuitLag}}}

To derive evolution equations for the electric circuit with a gyrator in Fig.
\ref{Figc1} we use a general method for constructing Lagrangians for circuits,
\cite[Sec. 9]{Gantmacher}, that yields%
\begin{align}
\mathcal{L}  &  =\mathcal{T}-\mathcal{V}\text{
\ \ \ \ \ \ \ \ \ \ \ \ \ \ \ (circuit Lagrangian),}\label{circ1}\\
\mathcal{T}  &  =\frac{L_{1}}{2}\dot{q}_{1}^{2}+\frac{L_{2}}{2}\dot{q}_{2}%
^{2}+\frac{G_{12}}{4}(q_{1}\dot{q}_{2}-\dot{q}_{1}q_{2}),\quad\nonumber\\
\mathcal{V}  &  =\frac{1}{2C_{1}}q_{1}^{2}+\frac{1}{2C_{12}}\left(
q_{1}-q_{2}\right)  ^{2}+\frac{1}{2C_{2}}q_{2}^{2}-\frac{G_{12}}{4}(q_{1}%
\dot{q}_{2}-\dot{q}_{1}q_{2}),\nonumber\\
\mathcal{R}  &  =\frac{R_{2}}{2}\dot{q}_{2}^{2},\nonumber
\end{align}
where $\mathcal{L}$ is the Lagrangian, $\mathcal{R}$ is the Rayleigh
dissipative function, and $I_{1}=\dot{q}_{1}$, $I_{2}=\dot{q}_{2}$ are the
currents.\ The sources we take to be zero, i.e., $E_{1}=E_{2}=0$. The general
Euler-Lagrange equations of motion with forces are%
\begin{equation}
\frac{\partial}{\partial t}\frac{\partial\mathcal{L}}{\partial\dot{Q}}%
-\frac{\partial\mathcal{L}}{\partial Q}=-\frac{\partial\mathcal{R}}%
{\partial\dot{Q}}, \label{circ2}%
\end{equation}
where $Q$ are the charges%
\begin{equation}
Q=\left[
\begin{array}
[c]{c}%
q_{1}\\
q_{2}%
\end{array}
\right]  , \label{circ3}%
\end{equation}
yielding from (\ref{circ0})--(\ref{circ3}) the following second-order ODEs%
\begin{equation}
\alpha\ddot{Q}+\left(  2\theta+\beta R\right)  \dot{Q}+\eta Q=0, \label{circ4}%
\end{equation}
with the dimensionless loss parameter%
\begin{equation}
\beta=\frac{R_{2}}{\ell}\qquad\text{(where }\ell>0\text{ is fixed and has same
units as }R_{2}\text{)} \label{circ5}%
\end{equation}
that scales the intensity of losses in the system, and%
\begin{gather}
\alpha=\left[
\begin{array}
[c]{cc}%
L_{1} & 0\\
0 & L_{2}%
\end{array}
\right]  ,\qquad\eta=\left[
\begin{array}
[c]{cc}%
\frac{1}{C_{1}}+\frac{1}{C_{12}} & -\frac{1}{C_{12}}\\
-\frac{1}{C_{12}} & \frac{1}{C_{2}}+\frac{1}{C_{12}}%
\end{array}
\right]  ,\label{circ6}\\
\theta=\left[
\begin{array}
[c]{cc}%
0 & -\frac{G_{12}}{2}\\
\frac{G_{12}}{2} & 0
\end{array}
\right]  ,\qquad R=\left[
\begin{array}
[c]{cc}%
0 & 0\\
0 & \ell
\end{array}
\right]  .\nonumber
\end{gather}
Recall, the loss fraction $\delta_{R}$ defined in (\ref{sintro4}) is the ratio
of the rank of the matrix $R$ to the total degrees of freedom $N$ of the
system which in this case is
\begin{equation}
\text{loss fraction:\quad}\delta_{R}=\frac{N_{R}}{N}=\frac{1}{2},\qquad
N=2,\qquad N_{R}=\operatorname{rank}R=1. \label{circ7}%
\end{equation}
Thus, the Lagrangian system (\ref{circ4}) fits with our framework described in
Sec. \ref{sinmodel} with the loss fraction condition (\ref{sintro4a}),
satisfied, i.e., $0<\delta_{R}<1$, and hence is a model of a two-component
composite with a lossy and a lossless component.

\subsection{Modal dichotomy and overdamping}

We now will describe the modal dichotomy and overdamping phenomenon for this
electric circuit following our discussion in Section \ref{sinresults}.

First, the duality condition \ref{cnddl} holds in this example as%
\begin{equation}
\eta>0 \label{circ6_1}%
\end{equation}
and so the equations of motion for the dual Lagrangian system are%
\begin{equation}
\eta\ddot{Q}+\left(  2\theta+\beta R\right)  \dot{Q}+\alpha Q=0.
\label{circ4dual}%
\end{equation}

We now begin by calculating the spectra $\sigma\left(  \alpha^{-1}R\right)  $
and $\sigma\left(  \eta^{-1}R\right)  $ in order to calculate $b_{\min}$ and
$b_{\min}^{\flat}$ in (\ref{sintmr02}) and (\ref{sintmr0b}), respectively:%
\begin{align}
\sigma\left(  \alpha^{-1}R\right)   &  =\left\{  b_{0},b_{1}\right\}  ,\text{
}b_{0}=0,\text{ }b_{\min}=b_{1}=\ell L_{2}^{-1},\label{circ8}\\
\sigma\left(  \eta^{-1}R\right)   &  =\left\{  b_{0}^{\flat},b_{1}^{\flat
}\right\}  ,\text{ }b_{0}^{\flat}=0,\text{ }b_{\min}^{\flat}=b_{1}^{\flat
}=\ell\left(  \frac{1}{C_{2}}+\frac{1}{C_{1}+C_{12}}\right)  ^{-1}.
\label{circ9}%
\end{align}
\qquad Next, we calculate $\omega_{\max}$, $\omega_{\min}$, and $\omega_{\max
}^{\flat}$ in (\ref{sintmr00}), (\ref{sintmr01}), and (\ref{sintmr0a}),
respectively, from the spectrum of the pencil%
\[
C\left(  \zeta,\beta\right)  =\zeta^{2}\alpha+\left(  2\theta+\beta R\right)
\mathrm{i}\zeta-\eta,
\]
at $\beta=0$, i.e.,%
\[
\sigma\left(  C\left(  \cdot,0\right)  \right)  =\left\{  \zeta\in%
\mathbb{C}
:\det\left(  \zeta^{2}\alpha\mathbf{+}2\zeta\mathrm{i}\theta-\eta\right)
=0\right\}  =\left\{  \pm\omega_{\min},\pm\omega_{\max}\right\}  ,
\]
where%
\begin{gather*}
0<\omega_{\min}\leq\omega_{\max},\\
\omega_{\max}=\sqrt{\frac{a}{2}+\sqrt{\left(  \frac{a}{2}\right)  ^{2}-\left(
\det\alpha\right)  ^{-1}\det\eta}},\\
\omega_{\min}=\frac{1}{\omega_{\max}^{\flat}}=\sqrt{\frac{a}{2}-\sqrt{\left(
\frac{a}{2}\right)  ^{2}-\left(  \det\alpha\right)  ^{-1}\det\eta}},\\
a=\left(  \frac{1}{L_{2}C_{12}}+\frac{1}{L_{1}C_{12}}+\frac{1}{L_{2}C_{2}%
}+\frac{1}{L_{1}C_{1}}+\frac{G_{12}^{2}}{L_{1}L_{2}}\right)  ,\text{ }\\
\left(  \det\alpha\right)  ^{-1}\det\eta=\left(  L_{1}L_{2}\right)
^{-1}\left(  \frac{1}{C_{1}C_{2}}+\frac{1}{C_{1}C_{12}}+\frac{1}{C_{2}C_{12}%
}\right)  .
\end{gather*}
Next, we calculate the spectra $\rho_{\min}$, $\rho_{\max}$, and $\rho_{\min
}^{\flat}$ in (\ref{sintmr0g0})-(\ref{sintmr0g2}), which following the Remark
\ref{raltspc}, can be computed using Proposition \ref{pspredl2} as%
\begin{gather*}
\rho_{\min}=\min\left\{  \rho\in(0,\infty):\det\left[  \left(  P_{R}^{\bot
}\alpha^{-1}C\left(  \rho,0\right)  P_{R}^{\bot}\right)  |_{\operatorname{Ker}%
R}\right]  =0\right\}  ,\\
\rho_{\max}=\max\left\{  \rho\in(0,\infty):\det\left[  \left(  P_{R}^{\bot
}\alpha^{-1}C\left(  \rho,0\right)  P_{R}^{\bot}\right)  |_{\operatorname{Ker}%
R}\right]  =0\right\}  ,\\
\rho_{\min}^{\flat}=\rho_{\max}^{-1},
\end{gather*}
where $P_{R}^{\bot}$ is the orthogonal projection onto $\operatorname{Ker}R$
(the nullspace of $R$) and in this example,%
\begin{gather}
P_{R}^{\bot}=\left[
\begin{array}
[c]{cc}%
1 & 0\\
0 & 0
\end{array}
\right]  ,\text{ }P_{R}^{\bot}C\left(  \rho,0\right)  P_{R}^{\bot}=\left[
L_{1}\rho^{2}-\left(  \frac{1}{C_{1}}+\frac{1}{C_{12}}\right)  \right]
P_{R}^{\bot},\nonumber\\
\left\{  \rho\in%
\mathbb{C}
:\det\left[  \left(  P_{R}^{\bot}C\left(  \rho,0\right)  P_{R}^{\bot}\right)
|_{\operatorname{Ker}R}\right]  =0\right\}  =\left\{  \rho_{3},\rho
_{4}\right\}  ,\nonumber\\
\rho_{3}=-\rho_{4}=\rho_{\min}=\rho_{\max}=\frac{1}{\rho_{\min}^{\flat}}%
=\sqrt{L_{1}^{-1}\left(  \frac{1}{C_{1}}+\frac{1}{C_{12}}\right)  }.
\label{circllhqp}%
\end{gather}

\begin{remark}
[limiting frequencies]\label{rspredl2c}In accordance with our theory (see
Section \ref{smdhlr}, Proposition \ref{pspredl2}, and Remarks \ref{rspredl2_0}%
, \ref{rspredl2} for more details), the real numbers $\rho_{3},\rho_{4}$ are
the frequencies of the eigenmodes of a conservative Lagrangian system with
Euler-Lagrange equations%
\[
L_{1}\ddot{q}_{1}+\left(  \frac{1}{C_{1}}+\frac{1}{C_{12}}\right)  q_{1}=0
\]
corresponding to the Lagrangian%
\[
\mathcal{L}_{\operatorname{Ker}R}=\mathcal{L}_{\operatorname{Ker}R}\left(
q_{1},\dot{q}_{1}\right)  =\frac{1}{2}\left[
\begin{array}
[c]{l}%
\dot{q}_{1}\\
q_{1}%
\end{array}
\right]  ^{\mathrm{T}}\left[
\begin{array}
[c]{ll}%
L_{1} & 0\\
0 & \frac{1}{C_{1}}+\frac{1}{C_{12}}%
\end{array}
\right]  \left[
\begin{array}
[c]{l}%
\dot{q}_{1}\\
q_{1}%
\end{array}
\right]  .
\]
In particular, this is the Lagrangian of the electric circuit on the bottom of
Fig. \ref{Figc1} with inductance $L_{1}$ and two capacitances $C_{1}$,
$C_{12}$ connected in series (with no source, i.e., $E_{1}=0$). Notice that
this is the same Lagrangian for a $LC$-circuit with inductor $L_{1}$ and
capacitor $\left(  \frac{1}{C_{1}}+\frac{1}{C_{12}}\right)  ^{-1}$. This makes
sense since it well-known in electric circuit theory that connecting two
capacitors $C_{1}$ and $C_{12}$ in series is the same as having one capacitor
$C$ which is the one-half of the harmonic mean of the two capacitors, i.e.,
$C=$ $\left(  \frac{1}{C_{1}}+\frac{1}{C_{12}}\right)  ^{-1}$.
\end{remark}

Next, as the nonzero eigenvalues of $\alpha^{-1}R$ and $\eta^{-1}R$ are simple
then this implies the generic condition (\ref{cndgc}) is true for both the
Lagrangian system (\ref{circ4}) and its dual system (\ref{circ4dual}). Thus,
the terms (\ref{sintmr0e})-(\ref{sintmr0g}) for the selective overdamping in
this example are \
\begin{align}
\beta_{0}  &  =\frac{2\omega_{\max}}{d},\text{ where }d=\min_{0\leq i,j\leq
N_{R},i\not =j}\left\vert b_{i}-b_{j}\right\vert =b_{1},\label{circbeta0}\\
\beta_{1}  &  =\max\left\{  \beta_{0},\frac{2\omega_{\max}^{\flat}}{d^{\flat}%
}\right\}  ,\text{ where }d^{\flat}=\min_{0\leq i,j\leq N_{R},i\not =%
j}\left\vert b_{i}^{\flat}-b_{j}^{\flat}\right\vert =b_{1}^{\flat
},\label{circbeta1}\\
\beta_{2}  &  =\max\left\{  \min\left\{  c^{-1}\left(  \rho_{\min}/2\right)
,\left(  c^{\flat}\right)  ^{-1}\left(  \rho_{\min}^{\flat}/2\right)
\right\}  ,2\frac{\omega_{\max}}{b_{\min}},2\frac{\omega_{\max}^{\flat}%
}{b_{\min}^{\flat}}\right\}  , \label{circbeta2}%
\end{align}
where the functions $c^{-1}\left(  y\right)  $ and $\left(  c^{\flat}\right)
^{-1}\left(  y\right)  $\ are defined in (\ref{sintmr1a}) and (\ref{sintmrd1a}%
), respectively.

Finally, according to our theory, as $\beta\rightarrow\infty$ there are
eigenmodes $Q_{j}=Q_{j}\left(  t,\beta\right)  =q_{j}\left(  \beta\right)
e^{-\mathrm{i}\zeta_{j}\left(  \beta\right)  t}$, $j=1,2,3,4$ of the
Lagrangian system such that $\left[  Q_{j},\partial_{t}Q_{j}\right]
^{\mathrm{T}}$, $j=1,2,3,4$ is a basis for the phase space $\mathbb{V}\left(
\beta\right)  $, as defined in (\ref{sintmr1e}) for this Lagrangian system
(\ref{circ4})-(\ref{circ6}), and which split into two distinct classes%
\[
\text{high-loss}\text{: }Q_{1}(\beta);\text{ low-loss}\text{: }Q_{j}%
(\beta),\text{ }2\leq j\leq4
\]
with the asymptotic expansions:%
\begin{equation}
\text{high-loss:}-\operatorname{Im}\zeta_{1}\left(  \beta\right)  =b_{1}%
\beta+O\left(  \beta^{-1}\right)  ,\text{ }\operatorname{Re}\zeta_{1}\left(
\beta\right)  =0, \label{circllhq0}%
\end{equation}%
\begin{equation}
\text{low-loss, low-Q:}-\operatorname{Im}\zeta_{2}\left(  \beta\right)
=\frac{1}{b_{1}^{\flat}}\beta^{-1}+O\left(  \beta^{-3}\right)  ,\text{
}\operatorname{Re}\zeta_{2}\left(  \beta\right)  =0, \label{circllhq1}%
\end{equation}%
\begin{gather}
\text{low-loss, high-Q:}-\operatorname{Im}\zeta_{j}\left(  \beta\right)
=d_{j}\beta^{-1}+O\left(  \beta^{-3}\right)  ,\text{ }\operatorname{Re}%
\zeta_{j}\left(  \beta\right)  =\rho_{j}+O\left(  \beta^{-2}\right)
,\label{circllhq}\\
\text{(quality factor) }Q_{\zeta_{j}\left(  \beta\right)  }=\frac{\left\vert
\rho_{j}\right\vert }{d_{j}}\beta+O\left(  \beta^{-1}\right)  \text{ , for
}j=3,4,\nonumber
\end{gather}
where in this example
\begin{equation}
d_{3}=d_{4}=\frac{1}{2}\frac{G_{12}^{2}}{\ell L_{1}}+\frac{1}{2}\frac{1}{\ell
}\frac{C_{1}}{\left(  C_{1}+C_{12}\right)  C_{12}}, \label{circllhqd}%
\end{equation}
and $b_{1}$, $b_{1}^{\flat}$, $\rho_{3}$, $\rho_{4}$ are already calculated above.

Therefore, for this electric circuit example, the four stages (i)-(iv) of the
modal dichotomy as described in Sec. \ref{sbsbsecmd} occur for (i)
$\beta>\frac{2\omega_{\max}}{b_{\min}}=\beta_{0}$; (ii) $\beta>\max\left\{
\frac{2\omega_{\max}}{b_{\min}},\frac{2\omega_{\max}^{\flat}}{b_{\min}^{\flat
}}\right\}  =\beta_{1}$; (iii) $\beta>\max\left\{  c^{-1}\left(  \frac
{\rho_{\min}}{2}\right)  ,2\frac{\omega_{\max}^{\flat}}{b_{\min}^{\flat}%
}\right\}  \geq\beta_{2}\geq\beta_{1}$; (iv) $\beta\gg1$. Moreover, the three
stages (i)-(iii) of overdamping as described in Sec. \ref{sbsbsecsod} occur
for (i) $\beta>\beta_{0}$; (ii) $\beta>\beta_{1}$; (iii) $\beta>\beta_{2}$. In
particular, for this example, the two stages (i), (ii) of selective
overdamping correspond to the two stages (i), (ii) for the modal dichotomy, respectively.

\begin{remark}
The method used to calculate the $d_{j}$'s is found in Sec. \ref{smdhlr}. It
is calculated similar to the example in \cite[Sec. III]{FigWel1} by using the
formula in (\ref{pod14}) below and the system operator $A\left(  \beta\right)
=\Omega-\mathrm{i}\beta B$ for the Lagrangian system (\ref{circ4}) in this example.
\end{remark}

\begin{remark}
Notice that for these lowest order terms only in the low-loss, high-Q modes,
i.e., the underdamped modes, does gyroscopy effect the modes. And more
precisely for the lowest order asympotics ($b_{1}$, $b_{1}^{\flat}$, $\rho
_{j}$, $d_{j}$, $j=3,4$), gyroscopy has no effect on the frequencies
$\operatorname{Re}\zeta_{j}\left(  \beta\right)  $, $j=1,2,3,4$ or damping
factors of the overdamped modes $\operatorname{Im}\zeta_{j}\left(
\beta\right)  $, $j=1,2$, yet gyroscopy does have an effect on the damping
factors of the underdamped modes $\operatorname{Im}\zeta_{j}\left(
\beta\right)  $, $j=3,4$. The effect is proportional to $G_{12}^{2}=\left\Vert
2\theta\right\Vert ^{2}$, where $\left\Vert \cdot\right\Vert $ denotes the
operator norm. As the interplay between losses and gyroscopy in Lagrangian
systems is of significant interest, it would be interesting to derive formulas
for higher order terms for the frequencies and damping factors of the
eigenmodes to see how gyroscopy effects these terms asymptotically as
$\beta\rightarrow\infty$.
\end{remark}

\subsection{Numerical Analysis\label{snuma}}

We will now illustrate the behavior of the eigenmodes $Q_{j}\left(
t,\beta\right)  =q_{j}\left(  \beta\right)  e^{-\mathrm{i}\zeta_{j}\left(
\beta\right)  t}$, $j=1,2,3,4$ for the electric circuit with gyrator in Fig.
\ref{Figc1} as a function of the loss parameter $\beta$ based on the theory
described above and focusing on the overdamping phenomena and the asymptotic
expansions (\ref{circllhq0})-(\ref{circllhqd}) in the high-loss regime as
$\beta\rightarrow\infty$:%
\begin{align}
\text{high-loss}  &  \text{: }\zeta_{1}\left(  \beta\right)  =\zeta_{1}%
^{a}\left(  \beta\right)  +O\left(  \beta^{-1}\right)  ,\text{ }\zeta_{1}%
^{a}\left(  \beta\right)  =-\mathrm{i}b_{1}\beta,\text{ }\operatorname{Re}%
\zeta_{1}\left(  \beta\right)  =0,\label{circllhqe0}\\
\text{low-loss, low-Q}  &  \text{: }\zeta_{2}\left(  \beta\right)  =\zeta
_{2}^{a}\left(  \beta\right)  +O\left(  \beta^{-3}\right)  ,\text{ }\zeta
_{2}^{a}\left(  \beta\right)  =-\mathrm{i}\frac{1}{b_{1}^{\flat}}\beta
^{-1},\text{ }\operatorname{Re}\zeta_{2}\left(  \beta\right)
=0,\label{circllhqe1}\\
\text{low-loss, high-Q}  &  \text{: }\zeta_{j}\left(  \beta\right)  =\zeta
_{j}^{a}\left(  \beta\right)  +O\left(  \beta^{-2}\right)  ,\text{ }\zeta
_{j}^{a}\left(  \beta\right)  =\rho_{j}-\mathrm{i}d_{j}\beta^{-1},\text{
\ }\operatorname{Re}\zeta_{j}\left(  \beta\right)  \not =0, \label{circllhqe2}%
\end{align}
for $j=3,4$.

All the figures below were generated (by Marcus Marinho) with
MATLAB$^{\text{\textregistered}}$ using the framework described in Sec.
\ref{sinframe} by just calculating the eigenvalues of the system operator
$A\left(  \beta\right)  =\Omega-\mathrm{i}\beta B$, $\beta\geq0$ (a $4\times4$
matrix in this circuit example) since, according to our theory, they are the
values $\zeta_{j}\left(  \beta\right)  $, $j=1,2,3,4$.

We fix positive values of the capacitance $C_{1}$, $C_{2}$, $C_{12}$,
inductances $L_{1}$, $L_{2}$, gyration resistance $G_{12}/2$ (the term coined
by Tellegen in \cite{Tell48}), and $\ell$ [where the dimensionless loss
parameter is $\beta=R_{2}/\ell$ in (\ref{circ5}) with resistance $R_{2}>0$].
For the numerical analysis in this section, we choose%
\begin{equation}
C_{1}=C_{2}=25,\text{ }C_{12}=\frac{25}{3},\text{ }L_{1}=10,\text{ }%
L_{2}=\frac{1}{2},\text{ }G_{12}=\frac{5}{2},\text{ }\ell=10. \label{nae1}%
\end{equation}

The values of $\beta_{0}$, $\beta_{1}$, $\beta_{2}$ in Theorem \ref{tbeta0},
Corollary \ref{cbeta0}, and Corollary \ref{cllspmdd}, respectively, where the
high-loss modes are guaranteed to be overdamped for $\beta>\beta_{0}$, where
the overdamped low-loss modes are guaranteed to be overdamped for $\beta
>\beta_{1}$, and where the underdamped low-loss modes are guaranteed to be
underdamped for $\beta>\beta_{2}$ are determined explicitly using the analysis
in Sec. \ref{scircuitLag} which we calculate from Eqs. (\ref{circbeta0}%
)-(\ref{circbeta2}), using the values from (\ref{nae1}), to be%
\begin{align}
\beta_{0}  &  =\frac{2\omega_{\max}}{d}=\frac{2\omega_{\max}}{b_{\min}}%
=\frac{1}{1000}\sqrt{7930+10\sqrt{626609}}\approx0.1258803552,\label{nae2}\\
\beta_{1}  &  =\max\left\{  \beta_{0},\frac{2\omega_{\max}^{\flat}}{d^{\flat}%
}\right\}  =\frac{2\omega_{\max}^{\flat}}{b_{\min}^{\flat}}=\frac{7}%
{5\sqrt{7930-10\sqrt{626609}}}\approx0.3723591130,\nonumber\\
\beta_{2}  &  =\max\left\{  \min\left\{  c^{-1}\left(  \rho_{\min}/2\right)
,\left(  c^{\flat}\right)  ^{-1}\left(  \rho_{\min}^{\flat}/2\right)
\right\}  ,2\frac{\omega_{\max}}{b_{\min}},2\frac{\omega_{\max}^{\flat}%
}{b_{\min}^{\flat}}\right\}  =c^{-1}\left(  \rho_{\min}/2\right) \nonumber\\
&  =\frac{1}{20000}\left(  7930+10\sqrt{626609}\right)  \sqrt{10}+\frac
{1}{1000}\sqrt{7930+10\sqrt{626609}}\nonumber\\
&  \approx2.631331413.\nonumber
\end{align}

In the figures below (see Figs. \ref{Fig_hldp}-\ref{Fig_udmq}), we give a
graphical representation of the effects of increasing losses, i.e., increasing
$\beta=\frac{R_{2}}{\ell}$ (with $\ell$ fixed), in the lossy component of the
electric circuit with gyrator in Fig. \ref{Figc1} on frequencies
$\operatorname{Re}\zeta_{j}\left(  \beta\right)  $, damping factors
$-\operatorname{Im}\zeta_{j}\left(  \beta\right)  $, and quality factor
$Q_{\zeta_{j}\left(  \beta\right)  }=\frac{1}{2}\frac{\left\vert
\operatorname{Re}\zeta_{j}\left(  \beta\right)  \right\vert }%
{-\operatorname{Im}\zeta_{j}\left(  \beta\right)  }$ ($j=1,2,3,4$) of all the
eigenmodes of the dissipative-gyroscopic Lagrangian system (\ref{circ4}) [with
the numerical values (\ref{nae1})].%

\begin{figure}
[pth]
\begin{center}
\includegraphics[
height=4.9476in,
width=6.5956in
]%
{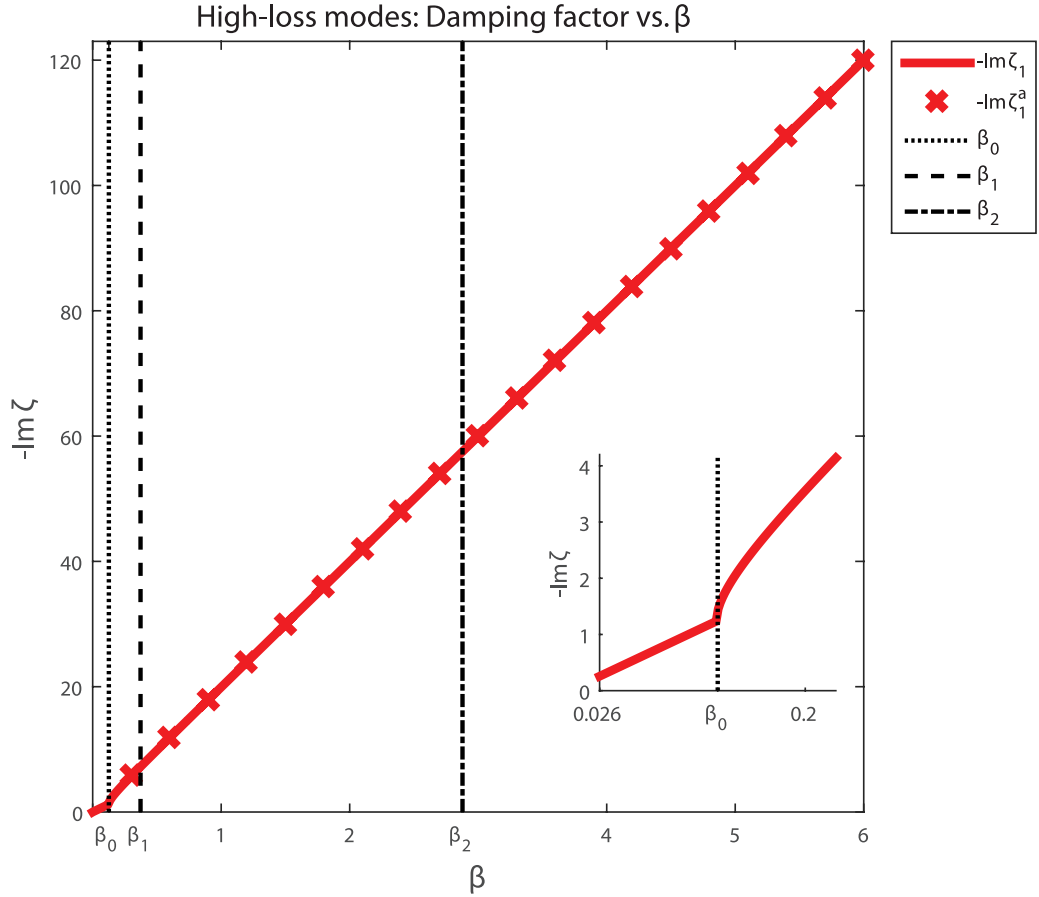}%
\caption{For the electric circuit with gyrator in Fig. \ref{Figc1} with the
numerical values (\ref{nae1}), the graph of damping factor $-\operatorname{Im}%
\zeta_{1}\left(  \beta\right)  $ of the high-loss eigenmodes with the
asymptotic approximation $-\operatorname{Im}\zeta_{1}^{a}\left(  \beta\right)
=b_{1}\beta$ in (\ref{circ8}) and (\ref{circllhqe0}) as a function of the loss
parameter $\beta=\frac{R_{2}}{\ell}$ (with $\ell$ fixed). In the inset, a
close-up of the damping factor in neighborhood of the value $\beta_{0}$ from
Theorem \ref{tbeta0} where the high-loss modes are guaranteed to be overdamped
for $\beta>\beta_{0}$.}%
\label{Fig_hldp}%
\end{center}
\end{figure}
\begin{figure}
[pth]
\begin{center}
\includegraphics[
height=4.9467in,
width=6.5956in
]%
{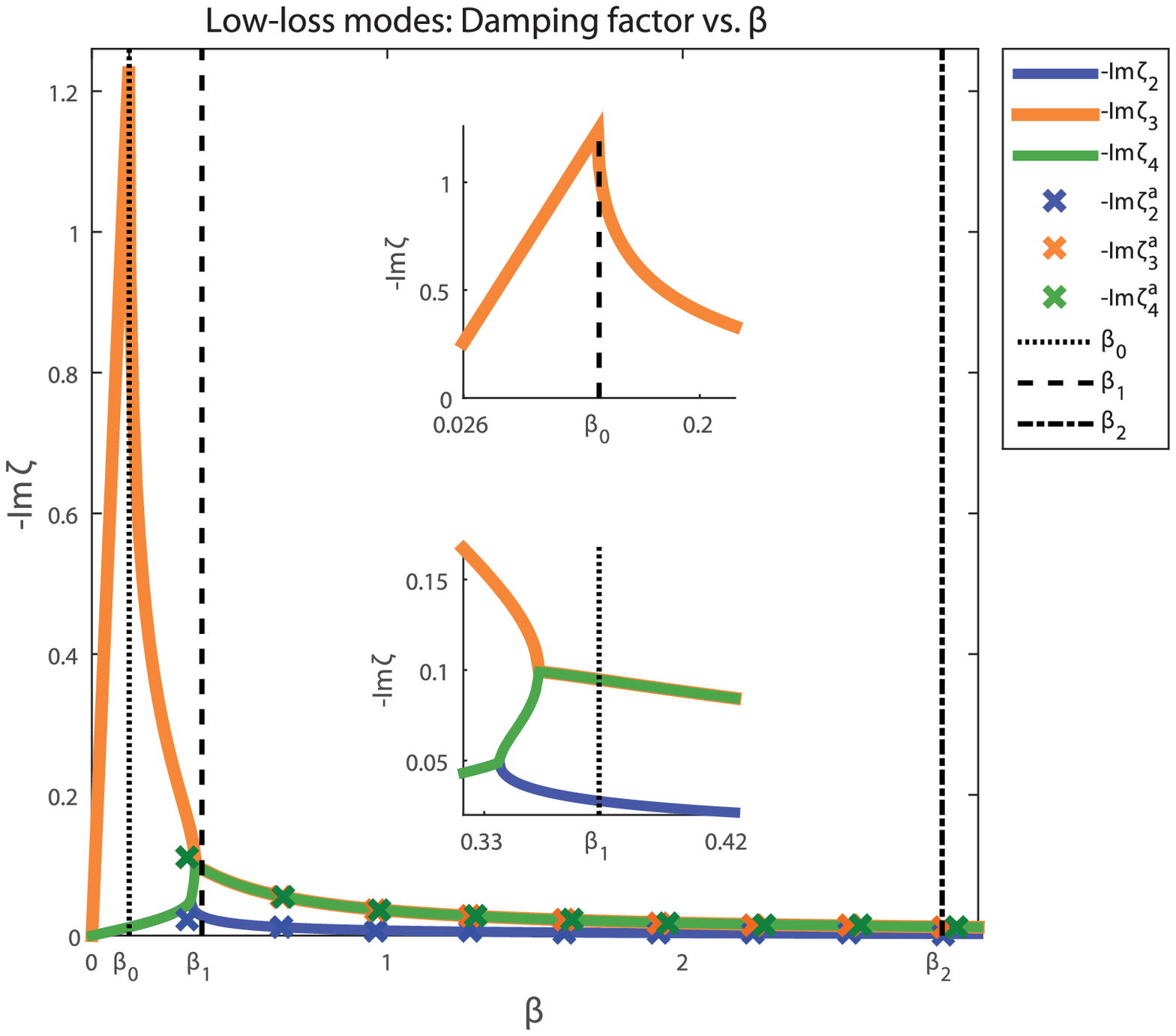}%
\caption{For the electric circuit with gyrator in Fig. \ref{Figc1} with the
numerical values (\ref{nae1}), the graph of damping factors
$-\operatorname{Im}\zeta_{j}\left(  \beta\right)  $, $j=2,3,4$ of the low-loss
eigenmodes with the asymptotic approximation $-\operatorname{Im}\zeta_{2}%
^{a}\left(  \beta\right)  =\frac{1}{b_{1}^{\flat}}\beta^{-1}$ in (\ref{circ9})
and (\ref{circllhqe1}) and $-\operatorname{Im}\zeta_{j}^{a}\left(
\beta\right)  =d_{j}\beta^{-1}$ in (\ref{circllhqd}) and (\ref{circllhqe2})
for $j=3,4$, as a function of the loss parameter $\beta=\frac{R_{2}}{\ell}$
(with $\ell$ fixed). Due to the spectral symmetry described in Proposition
\ref{pssym}, we always have $\left\{  \zeta_{1}\left(  \beta\right)
,\zeta_{2}\left(  \beta\right)  ,\zeta_{3}\left(  \beta\right)  ,\zeta
_{4}\left(  \beta\right)  \right\}  =\left\{  -\overline{\zeta_{1}\left(
\beta\right)  },-\overline{\zeta_{2}\left(  \beta\right)  },-\overline
{\zeta_{3}\left(  \beta\right)  },-\overline{\zeta_{4}\left(  \beta\right)
}\right\}  $ and so there can be intervals where the damping factors overlap,
as seen in this figure. In the bottom inset, a close-up of the damping factor
in neighborhood of the value $\beta_{1}$ from Corollary \ref{cbeta0} where the
overdamped low-loss modes, with damping factor $-\operatorname{Im}\zeta
_{2}\left(  \beta\right)  $ shown as the blue curve, are guaranteed to be
overdamped for $\beta>\beta_{1}$. Comparing this figure and the insets with
that of Fig. \ref{Fig_hldp}, one can see clearly the modal dichotomy near
$\beta=\beta_{0}$, between the high-loss modes with damping factor
$-\operatorname{Im}\zeta_{1}\left(  \beta\right)  $ and the low-loss modes
with damping factors $-\operatorname{Im}\zeta_{j}\left(  \beta\right)  $,
$j=2,3,4$ (as described by Theorem \ref{tmdic} and Corollary \ref{cmdic}) and
near $\beta=\beta_{1}$, between the low-loss, low-Q modes with damping factor
$-\operatorname{Im}\zeta_{2}\left(  \beta\right)  $ and the low-loss mode,
high-Q modes with damping factors $-\operatorname{Im}\zeta_{j}\left(
\beta\right)  $, $j=3,4$ (as described by Theorem \ref{tmddI} and Corollary
\ref{cmddI}).}%
\label{Fig_lldp}%
\end{center}
\end{figure}
\begin{figure}
[pth]
\begin{center}
\includegraphics[
height=4.9467in,
width=6.5956in
]%
{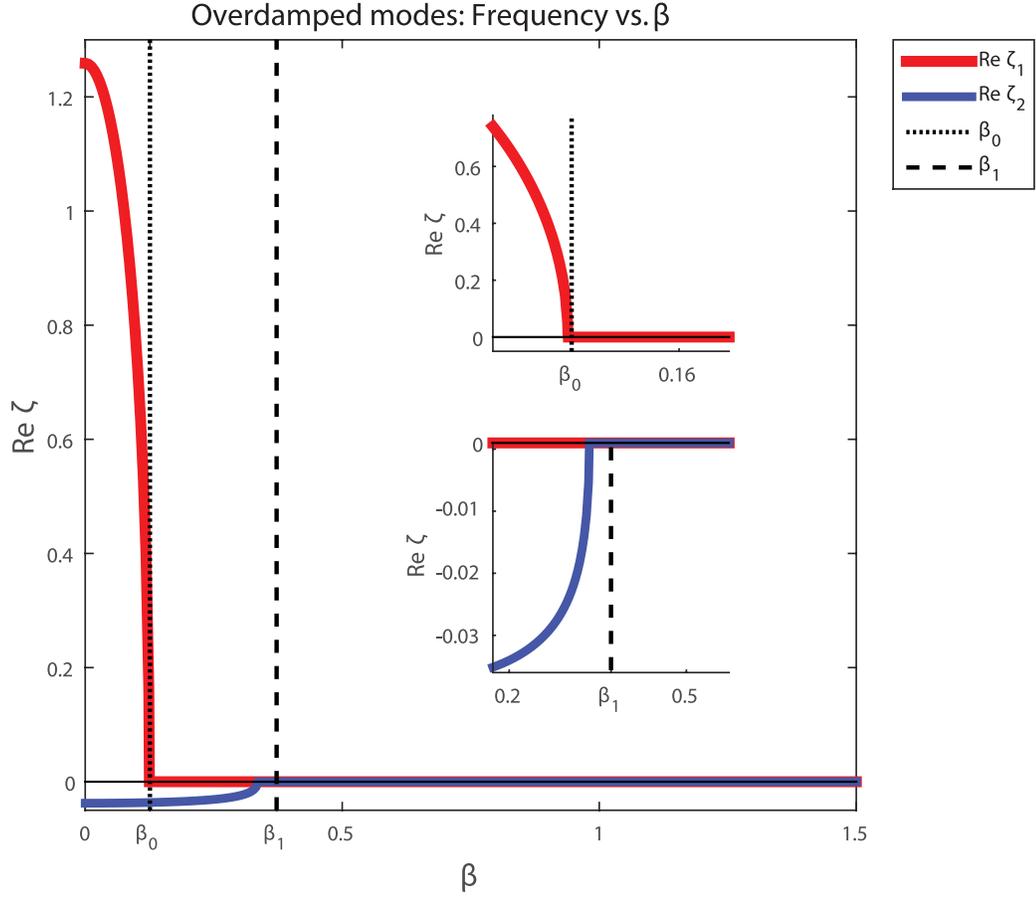}%
\caption{For the electric circuit with gyrator in Fig. \ref{Figc1} with the
numerical values (\ref{nae1}), the graph of frequencies $\operatorname{Re}%
\zeta_{j}\left(  \beta\right)  $, $j=1,2$ of the overdamped eigenmodes as a
function of the loss parameter $\beta=\frac{R_{2}}{\ell}$ (with $\ell$ fixed).
In the top inset, a close-up of the frequency in neighborhood of the value
$\beta_{0}$ from Theorem \ref{tbeta0} where the high-loss modes, with
frequency $\operatorname{Re}\zeta_{1}\left(  \beta\right)  $ shown as the red
curve, are guaranteed to be overdamped for $\beta>\beta_{0}$. This overdamping
is clearly visible since $\operatorname{Re}\zeta_{1}\left(  \beta\right)  =0$
for $\beta>\beta_{0}$ in this figure. In the bottom inset, a close-up of the
frequencies in neighborhood of the value $\beta_{1}$ from Corollary
\ref{cbeta0} where the overdamped low-loss modes, with frequency
$\operatorname{Re}\zeta_{2}\left(  \beta\right)  $ shown as the blue curve,
are guaranteed to be overdamped for $\beta>\beta_{1}$. This overdamping is
clearly visible since $\operatorname{Re}\zeta_{2}\left(  \beta\right)  =0$ for
$\beta>\beta_{1}$ in this figure.}%
\label{Fig_odmf}%
\end{center}
\end{figure}
\begin{figure}
[pth]
\begin{center}
\includegraphics[
height=4.9467in,
width=6.5956in
]%
{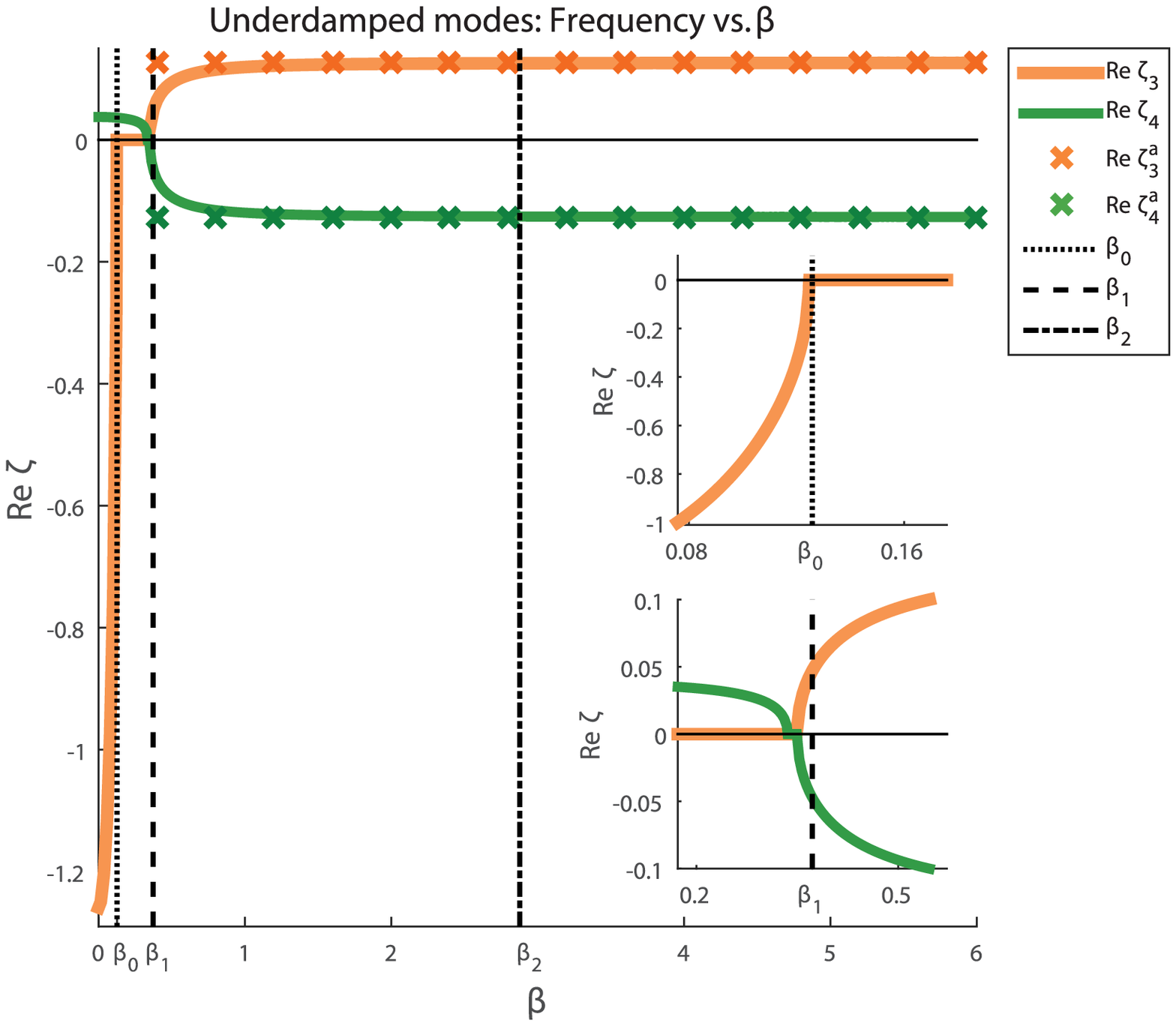}%
\caption{For the electric circuit with gyrator in Fig. \ref{Figc1} with the
numerical values (\ref{nae1}), the graph of frequencies $\operatorname{Re}%
\zeta_{j}\left(  \beta\right)  $, $j=3,4$ of the low-loss, high-Q eigenmodes
with the asymptotic approximation $\operatorname{Re}\zeta_{j}^{a}\left(
\beta\right)  =\rho_{j}$, $j=3,4$ in (\ref{circllhqp}) and (\ref{circllhqe2})
as a function of the loss parameter $\beta$. The low-loss, high-Q eigenmodes
are guaranteed to be underdamped for $\beta>\beta_{2}$ by Corollary
\ref{cllspmdd}, that is, $\operatorname{Re}\zeta_{j}\left(  \beta\right)
\not =0$ for $j=3,4$ as this figure indicates. Although outside the scope of
our studies here, it is interesting to point out some other interesting
phenomena which can be seen in these figures. For instance, in regards to the
low-loss eigenmodes, i.e., those with frequencies $\operatorname{Re}\zeta
_{j}\left(  \beta\right)  $, $j=2,3,4$, there can be open intervals in
$(\beta_{0},\beta_{1})$ where some of the frequencies are identically zero
such as for $\operatorname{Re}\zeta_{3}\left(  \beta\right)  $ near
$\beta=\beta_{0}$ (see top inset). And there is even an interval
$\mathcal{I}\subseteq(\beta_{0},\beta_{1})$, near $\beta=\beta_{1}$, where all
the eigenmode of the system have their frequencies identically zero, i.e.,
$\operatorname{Re}\zeta_{j}\left(  \beta\right)  \equiv0$, $j=1,2,3,4$ (see
bottom insets in this figure and in Fig. \ref{Fig_odmf}).}%
\label{Fig_udmf}%
\end{center}
\end{figure}
\begin{figure}
[pth]
\begin{center}
\includegraphics[
height=4.9476in,
width=6.5947in
]%
{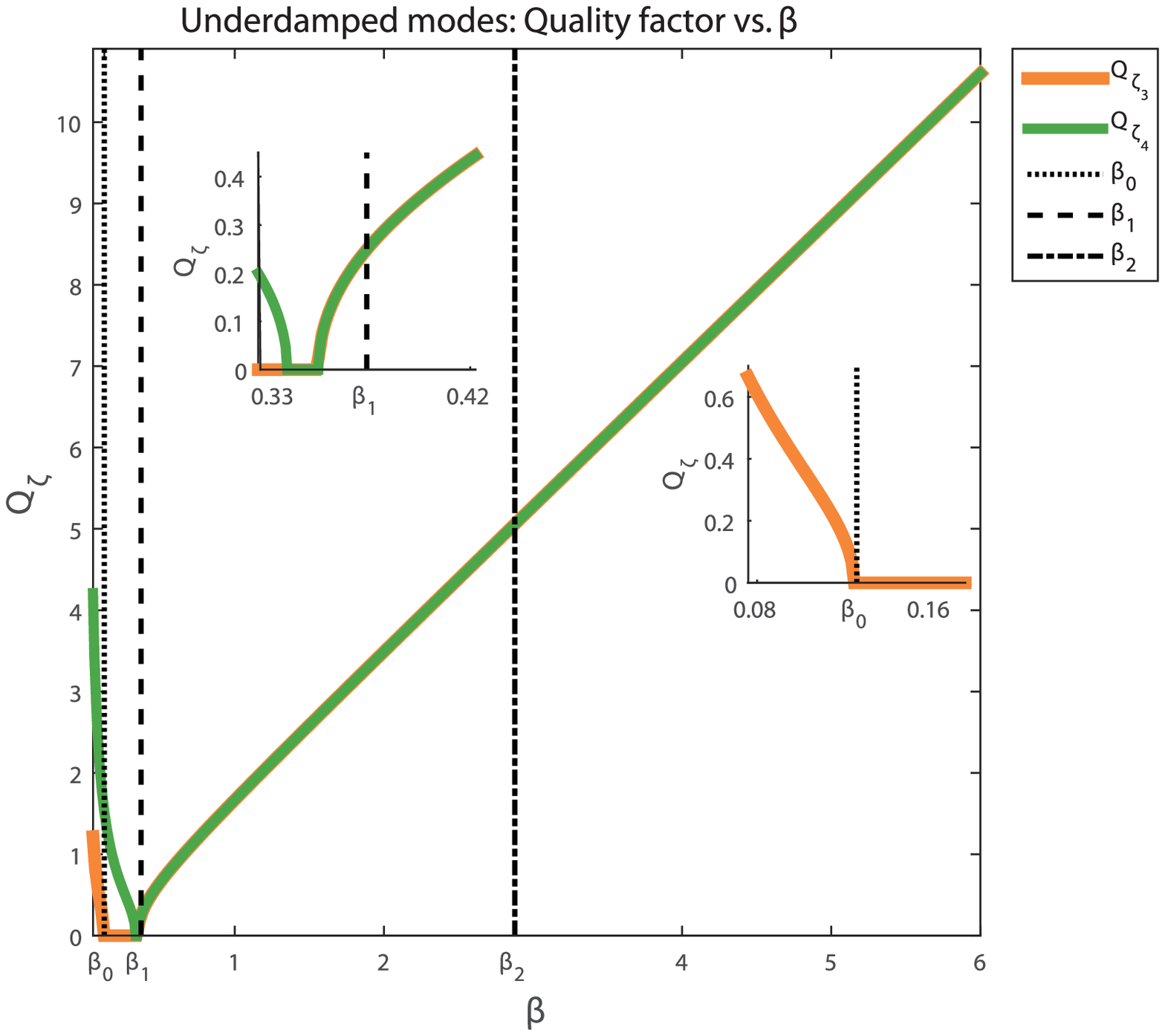}%
\caption{For the electric circuit with gyrator in Fig. \ref{Figc1} with the
numerical values (\ref{nae1}), the graph of quality factors (Q-factor)
$Q_{\zeta_{j}\left(  \beta\right)  }=\frac{1}{2}\frac{\left\vert
\operatorname{Re}\zeta_{j}\left(  \beta\right)  \right\vert }%
{-\operatorname{Im}\zeta_{j}\left(  \beta\right)  }$, $j=3,4$ of the low-loss,
high-Q eigenmodes as a function of the loss parameter $\beta$. These
eigenmodes are guaranteed to be underdamped for $\beta>\beta_{2}$ (by
Corollary \ref{cllspmdd}) and estimates on their dissipative properties in
terms of the modal dichotomy is described in Theorem \ref{tllspmd} and
Corollary \ref{cllspmd}. In particular, according to our theory $Q_{\zeta
_{j}\left(  \beta\right)  }\nearrow+\infty$ (i.e., increases without bound) as
$\beta\rightarrow\infty$ (see Corollary \ref{cllspmd}), which this figure
indicates. Due to the spectral symmetry described in Proposition \ref{pssym}
and Theorem \ref{tmddI} [see (\ref{tmddIsm})], we must have $\left\{
\zeta_{3}\left(  \beta\right)  ,\zeta_{4}\left(  \beta\right)  \right\}
=\left\{  -\overline{\zeta_{3}\left(  \beta\right)  },-\overline{\zeta
_{4}\left(  \beta\right)  }\right\}  $ for $\beta>\beta_{1}$. This offers an
explaination as to why we have intervals in which $Q_{\zeta_{3}\left(
\beta\right)  }=Q_{\zeta_{4}\left(  \beta\right)  }$ in this figure since in
those intervals $\zeta_{4}\left(  \beta\right)  =-\overline{\zeta_{3}\left(
\beta\right)  }$. In fact, this is guaranteed to be true for $\beta>\beta_{2}$
by Theorem \ref{tmddI} [see (\ref{tmddIsm})] and Corollary \ref{cllspmdd}.}%
\label{Fig_udmq}%
\end{center}
\end{figure}

\section{The Lagrangian system and its dual\label{smodel}}

In this paper, a linear \emph{Lagrangian system} will be a system whose state
is described by a time-dependent $Q=Q(t)$ taking values in the Hilbert space $%
\mathbb{C}
^{N}$ with the standard inner product $(\cdot,\cdot)$ (i.e., $(a,b)=a^{\ast}%
b$, where $\ast$ denotes the conjugate transpose, i.e., $a^{\ast}=\overline
{a}^{\mathrm{T}}$) whose dynamics are governed by the ODEs (\ref{sintro1}).
And associated with this system is its Lagrangian $\mathcal{L}$ in
(\ref{sintro2}) and its Rayleigh dissipation function $\mathcal{R}$ in
(\ref{sintro2a}).

\subsection{Dual system \label{subsDualSys}}

To the Lagrangian $\mathcal{L}$ (\ref{sintro2}) there is a corresponding
"dual" Lagrangian $\mathcal{L}^{\flat}$ defined by%
\begin{gather}
\mathcal{L}^{\flat}=\mathcal{L}^{\flat}\left(  Q,\dot{Q}\right)
=-\mathcal{L}\left(  \dot{Q},Q\right)  =\label{ddlag1}\\
=\frac{1}{2}\left[
\begin{array}
[c]{l}%
\dot{Q}\\
Q
\end{array}
\right]  ^{\mathrm{T}}\left[
\begin{array}
[c]{ll}%
\eta & \theta\\
\theta^{\mathrm{T}} & -\alpha
\end{array}
\right]  \left[
\begin{array}
[c]{l}%
\dot{Q}\\
Q
\end{array}
\right]  \qquad\text{(dual Lagrangian).}\nonumber
\end{gather}
From the definition of the dual, a fundamental property is $\left(
\mathcal{L}^{\flat}\right)  ^{\flat}=$ $\mathcal{L}$.

The general Euler-Lagrange equations of motion for the Lagrangian system with
Lagrangian $\mathcal{L}^{\flat}$ and Rayleigh dissipation function
$\mathcal{R}$ in (\ref{sintro2a}) is%
\begin{equation}
\eta\ddot{Q}+\left(  2\theta+\beta R\right)  \dot{Q}+\alpha Q=0\qquad
\text{(evolution equations for the dual system).} \label{dradis2a}%
\end{equation}

This linear Lagrangian system, whose states have dynamics governed by the ODEs
(\ref{dradis2a}), will be called the \emph{dual Lagrangian system}. This
system is obtained from our original Lagrangian system (\ref{sintro1}) by just
making the substitution $\left(  \alpha,\eta\right)  \longmapsto\left(
\eta,\alpha\right)  $. In particular, this means whenever $\eta^{-1}$ exists,
the dual Lagrangian $\mathcal{L}^{\flat}$ has all the same assumptions
satisfied as our Lagrangian $\mathcal{L}$ as well including the duality
condition (\ref{cnddl}) being true. Therefore, whenever the duality condition
(\ref{cnddl}) is true, i.e., $\eta>0$, then all results in this paper that
apply to the Lagrangian system (\ref{sintro1}) will also apply to it's dual
Lagrangian system (\ref{dradis2a}). The importance of this duality will become
clear when we study the modal dichotomy and overdamping phenomena.

We will also need to introduce the following notation convention.

\begin{notation}
In the rest of this paper, whenever we use the superscript notation $X^{\flat
}$ it will be implicitly understood that $X$ is an associated with the
Lagrangian system and $X^{\flat}$ is the same object but associated with dual
Lagrangian system, e.g., the dual Lagrangian $\mathcal{L}^{\flat}$, dual
Hamiltonian $\mathcal{H}^{\flat}$, dual quadratic matrix pencil $C^{\flat
}\left(  \zeta,\beta\right)  $, dual system operator $A^{\flat}\left(
\beta\right)  =\Omega^{\flat}-\mathrm{i}\beta B^{\flat}$, etc.
\end{notation}

Using this notation, the system energy for the dual Lagrangian system, which
has the Lagrangian $\mathcal{L}^{\flat}$ in (\ref{ddlag1}), can be calculated
in terms of the definition in (\ref{sintro3b}) of $\mathcal{V}$, $\mathcal{T}$
of the Lagrangian system (\ref{sintro1}) by:
\begin{gather}
\mathcal{L}^{\flat}=\mathcal{T}^{\flat}-\mathcal{V}^{\flat},\qquad
0\leq\mathcal{H}^{\flat}=\mathcal{T}^{\flat}\mathcal{+V}^{\flat}\text{
\ \ \ (dual Hamiltonian),}\label{dradis3}\\
\mathcal{T}^{\flat}=\mathcal{V}(Q,\dot{Q}),\qquad\mathcal{V}^{\flat
}=\mathcal{T}(Q,\dot{Q})\text{.}\nonumber
\end{gather}
Also from our definition and notation we have the Rayleigh dissipation
functions are the same, i.e., \
\begin{equation}
\mathcal{R}^{\flat}=\mathcal{R}\qquad\text{(dual Rayleigh dissipation
function).} \label{dradis3_a}%
\end{equation}

\textbf{On the eigenmodes and the quality factor of the dual system.} Given an
eigenmode $Q(t)=qe^{-\mathrm{i}\zeta t}$ (with $\zeta\not =0$) of the
Lagrangian system (\ref{sintro1}) it follows that $Q^{\flat}%
(t)=qe^{-\mathrm{i}\left(  -\zeta^{-1}\right)  t}$ is an eigenmode of the dual
Lagrangian system (\ref{dradis2a}) since%
\begin{equation}
0=\eta\ddot{Q}+\left(  2\theta+\beta R\right)  \dot{Q}+\alpha Q=\left(
-\mathrm{i}\zeta\right)  ^{2}\left(  \eta\ddot{Q}^{\flat}+\left(
2\theta+\beta R\right)  \dot{Q}^{\flat}+\alpha Q^{\flat}\right)  .
\label{dradis4_2}%
\end{equation}

For each eigenmode $Q$ with nonzero eigenfrequency $\zeta$, we will refer to
$Q^{\flat}$ as it's \emph{dual eigenmode}. With this definition, it follows
that \emph{the quality factor }$Q_{\zeta}$\emph{ of the eigenmode of the
Lagrangian system (\ref{sintro1}) is exactly the quality factor }%
$Q_{-\zeta^{-1}}$\emph{ of this dual eigenmode} \emph{of the dual Lagrangian
system }(\ref{dradis2a}) since by the definition of the quality factor in
(\ref{sintro6}) we have%
\begin{equation}
Q_{-\zeta^{-1}}=-\frac{1}{2}\frac{\left\vert \operatorname{Re}\left(
-\zeta^{-1}\right)  \right\vert }{\operatorname{Im}\left(  -\zeta^{-1}\right)
}=-\frac{1}{2}\frac{\left\vert \operatorname{Re}\zeta\right\vert
}{\operatorname{Im}\zeta}=Q_{\zeta}\qquad\text{(equivalence of the
Q-factors).} \label{dradis4_3}%
\end{equation}

\subsection{Standard versus pencil formulations of the spectral problems
\label{stvspen}}

In this section we elaborate on the relationship between the two main spectral
problems in this paper, namely, between the standard eigenvalue problem
(\ref{sevp}) for the system operator $A\left(  \beta\right)  =\Omega
-\mathrm{i}\beta B$ and the quadratic eigenvalue problem (\ref{qevp}) for the
quadratic matrix pencil $C\left(  \zeta,\beta\right)  $. We will also describe
some spectral properties of the system operator $A\left(  \beta\right)  $ and
that of the dual system operator $A^{\flat}\left(  \beta\right)  $. The
results in this section are need for our main results in Sec. \ref{smainr}\ on
the modal dichotomy and overdamping phenomena.

First, we begin with some notation. The Hilbert space $H=%
\mathbb{C}
^{2N}$ with standard inner product $\left(  \cdot,\cdot\right)  $ can be
decomposed as $H=H_{\mathrm{p}}\oplus H_{\mathrm{q}}$ into the orthogonal
subspaces $H_{\mathrm{p}}=%
\mathbb{C}
^{N}$, $H_{\mathrm{q}}=%
\mathbb{C}
^{N}$ with orthogonal matrix projections%
\begin{equation}
P_{\mathrm{p}}=\left[
\begin{array}
[c]{cc}%
\mathbf{1} & 0\\
0 & 0
\end{array}
\right]  ,\qquad P_{\mathrm{q}}=\left[
\begin{array}
[c]{cc}%
0 & 0\\
0 & \mathbf{1}%
\end{array}
\right]  . \label{pfsp8a}%
\end{equation}
In particular, the matrix $A\left(  \beta\right)  $ defined in (\ref{ceveqs}),
(\ref{ceveqs2}) is a block matrix already partitioned with respect to the
decomposition $H=H_{\mathrm{p}}\oplus H_{\mathrm{q}}$ and any vector $w\in H$
can be represented uniquely in the block form%
\begin{equation}
w=\left[
\begin{array}
[c]{c}%
\varphi\\
\psi
\end{array}
\right]  ,\qquad\text{where }\varphi=P_{\mathrm{p}}w,\qquad\psi=P_{\mathrm{q}%
}w. \label{pfsp8b}%
\end{equation}
With respect to this decomposition, we have the following results from
\cite{FigWel2}. First, the following proposition tells us the characteristic
matrix of the system operator $\zeta\mathbf{1}-A\left(  \beta\right)  $ can be
factored in terms of the quadratic matrix pencil $C\left(  \zeta,\beta\right)
$. Second, the corollary that follows gives the description of the spectral
equivalence between the two main spectral problems (\ref{qevp}) and
(\ref{sevp}).

\begin{proposition}
\label{ppfsp}If $\zeta\not =0$ then
\begin{gather}
\zeta\mathbf{1}-A\left(  \beta\right)  =\label{pfsp8}\\
=\left[
\begin{array}
[c]{cc}%
K_{\mathrm{p}} & \zeta^{-1}\mathrm{i}\Phi^{\mathrm{T}}\\
0 & \mathbf{1}%
\end{array}
\right]  \left[
\begin{array}
[c]{cc}%
\zeta^{-1}\mathbf{1} & 0\\
0 & \zeta\mathbf{1}%
\end{array}
\right]  \left[
\begin{array}
[c]{cc}%
C(\zeta,\beta) & 0\\
0 & \mathbf{1}%
\end{array}
\right]  \left[
\begin{array}
[c]{cc}%
K_{\mathrm{p}}^{\mathrm{T}} & 0\\
-\zeta^{-1}\mathrm{i}\Phi & \mathbf{1}%
\end{array}
\right]  .\nonumber
\end{gather}

\end{proposition}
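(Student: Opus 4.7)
My plan is to prove Proposition \ref{ppfsp} by direct block-matrix multiplication, verifying that the three factors on the right-hand side collapse to $\zeta\mathbf{1}-A(\beta)$. Since $A(\beta)=\Omega-\mathrm{i}\beta B$ is already given in the block form (\ref{ceveqs2}) relative to the decomposition $H=H_{\mathrm{p}}\oplus H_{\mathrm{q}}$, I would first write out explicitly
\[
\zeta\mathbf{1}-A(\beta)=\left[\begin{array}{cc} \zeta\mathbf{1}-\Omega_{\mathrm{p}}+\mathrm{i}\beta\tilde{R} & \mathrm{i}\Phi^{\mathrm{T}}\\ -\mathrm{i}\Phi & \zeta\mathbf{1}\end{array}\right],
\]
which is the target. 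The factorization on the right is valid only for $\zeta\neq 0$, which is precisely where $\zeta^{-1}$ appears, so the hypothesis $\zeta\neq 0$ is used only to make the factors well-defined.

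Next I would carry out the multiplication from right to left. Multiplying the two rightmost block matrices gives $\bigl[\begin{smallmatrix} C(\zeta,\beta)K_{\mathrm{p}}^{\mathrm{T}} & 0\\ -\zeta^{-1}\mathrm{i}\Phi & \mathbf{1}\end{smallmatrix}\bigr]$, and then applying the middle diagonal factor yields $\bigl[\begin{smallmatrix} \zeta^{-1}C(\zeta,\beta)K_{\mathrm{p}}^{\mathrm{T}} & 0\\ -\mathrm{i}\Phi & \zeta\mathbf{1}\end{smallmatrix}\bigr]$. Finally multiplying by the leftmost factor is immediate in the last row (giving $-\mathrm{i}\Phi$ and $\zeta\mathbf{1}$) and gives $\mathrm{i}\Phi^{\mathrm{T}}$ in the top-right block as required. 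The only nontrivial entry is the top-left block, which equals
\[
\zeta^{-1}K_{\mathrm{p}}C(\zeta,\beta)K_{\mathrm{p}}^{\mathrm{T}}+\zeta^{-1}\Phi^{\mathrm{T}}\Phi.
\]

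Verifying that this last expression equals $\zeta\mathbf{1}-\Omega_{\mathrm{p}}+\mathrm{i}\beta\tilde{R}$ is where all the algebraic content sits, and this is the step I expect to require the most care (though it is still a bookkeeping exercise rather than a genuine obstacle). Expanding $C(\zeta,\beta)=\zeta^{2}\alpha+(2\theta+\beta R)\mathrm{i}\zeta-\eta$ and conjugating by $K_{\mathrm{p}}$, $K_{\mathrm{p}}^{\mathrm{T}}$ term by term, I would invoke the four identities packaged in (\ref{ceveqs2}) and (\ref{ceveqs3}): (a) $K_{\mathrm{p}}\alpha K_{\mathrm{p}}^{\mathrm{T}}=\sqrt{\alpha}^{-1}\alpha\sqrt{\alpha}^{-1}=\mathbf{1}$; (b) $\mathrm{i}\,2K_{\mathrm{p}}\theta K_{\mathrm{p}}^{\mathrm{T}}=-\Omega_{\mathrm{p}}$; (c) $K_{\mathrm{p}}RK_{\mathrm{p}}^{\mathrm{T}}=\tilde{R}$; and (d) $\Phi^{\mathrm{T}}\Phi=K_{\mathrm{p}}K_{\mathrm{q}}^{\mathrm{T}}K_{\mathrm{q}}K_{\mathrm{p}}^{\mathrm{T}}=K_{\mathrm{p}}\eta K_{\mathrm{p}}^{\mathrm{T}}$. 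Using (a)--(c), the first three terms of $\zeta^{-1}K_{\mathrm{p}}C(\zeta,\beta)K_{\mathrm{p}}^{\mathrm{T}}$ contribute exactly $\zeta\mathbf{1}-\Omega_{\mathrm{p}}+\mathrm{i}\beta\tilde{R}$, while the $-\eta$ term contributes $-\zeta^{-1}K_{\mathrm{p}}\eta K_{\mathrm{p}}^{\mathrm{T}}=-\zeta^{-1}\Phi^{\mathrm{T}}\Phi$ by (d), which cancels the extra $\zeta^{-1}\Phi^{\mathrm{T}}\Phi$ picked up from the off-diagonal coupling in the leftmost factor. This yields the claimed equality and completes the proof.
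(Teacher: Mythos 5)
Your proof is correct. The paper itself does not supply a proof of Proposition \ref{ppfsp}---it is imported from \cite{FigWel2}---so there is nothing to compare against here, but your direct block-multiplication verification is exactly the right approach and every step checks out: the bottom row and top-right block fall out immediately, and the cancellation in the top-left block between $-\zeta^{-1}K_{\mathrm{p}}\eta K_{\mathrm{p}}^{\mathrm{T}}$ and $+\zeta^{-1}\Phi^{\mathrm{T}}\Phi$ (using $\Phi^{\mathrm{T}}\Phi = K_{\mathrm{p}}K_{\mathrm{q}}^{\mathrm{T}}K_{\mathrm{q}}K_{\mathrm{p}}^{\mathrm{T}} = K_{\mathrm{p}}\eta K_{\mathrm{p}}^{\mathrm{T}}$, with $K_{\mathrm{q}}^{\mathrm{T}}=K_{\mathrm{q}}=\sqrt{\eta}$) is the crux, which you identify and carry out correctly.
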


\begin{corollary}
[spectral equivalence]\label{cpfsp}For any $\zeta\in%
\mathbb{C}
$,
\begin{equation}
\det\left(  \zeta\mathbf{1}-A\left(  \beta\right)  \right)  =\frac{\det
C(\zeta,\beta)}{\det\alpha}. \label{cpfsp1}%
\end{equation}
In particular, the system operator $A\left(  \beta\right)  $ and quadratic
matrix pencil $C(\zeta,\beta)$ have the same spectrum, i.e.,
\begin{equation}
\sigma\left(  A\left(  \beta\right)  \right)  =\sigma\left(  C\left(
\cdot,\beta\right)  \right)  . \label{cpfsp2}%
\end{equation}
Moreover, if $\zeta\not =0$ then the following statements are true:

\begin{enumerate}
\item If $A\left(  \beta\right)  w=\zeta w$ and $w\not =0$ then
\begin{equation}
w=\left[
\begin{array}
[c]{c}%
-\mathrm{i}\zeta\sqrt{\alpha}q\\
\sqrt{\eta}q
\end{array}
\right]  ,\qquad\text{where }C(\zeta,\beta)q=0,\qquad q\not =0. \label{cpfsp3}%
\end{equation}

\item If $C(\zeta,\beta)q=0$ and $q\not =0$ then
\begin{equation}
A\left(  \beta\right)  w=\zeta w\text{,}\qquad\text{where }w=\left[
\begin{array}
[c]{c}%
-\mathrm{i}\zeta\sqrt{\alpha}q\\
\sqrt{\eta}q
\end{array}
\right]  \not =0. \label{cpfsp4}%
\end{equation}

\end{enumerate}
\end{corollary}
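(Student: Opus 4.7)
The entire corollary is essentially a bookkeeping consequence of the factorization in Proposition \ref{ppfsp}, so my plan is to just take determinants and then invert the outer factors to track eigenvectors.

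First I would prove (\ref{cpfsp1}) in the case $\zeta\neq 0$ by simply multiplying the determinants of the four factors on the right-hand side of (\ref{pfsp8}). The first and fourth factors are block-triangular with $K_{\mathrm{p}}$ or $K_{\mathrm{p}}^{\mathrm{T}}$ on the diagonal block and $\mathbf{1}$ on the other, so each contributes $\det K_{\mathrm{p}} = \det(\sqrt{\alpha}^{-1}) = (\det\alpha)^{-1/2}$. The middle scalar-block factor $\mathrm{diag}(\zeta^{-1}\mathbf{1},\zeta\mathbf{1})$ has determinant $\zeta^{-N}\zeta^{N}=1$, and the remaining factor contributes $\det C(\zeta,\beta)$. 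Multiplying gives $\det(\zeta\mathbf{1}-A(\beta))=\det C(\zeta,\beta)/\det\alpha$. Both sides of (\ref{cpfsp1}) are polynomials in $\zeta$ (the left-hand side is the characteristic polynomial of $A(\beta)$; the right-hand side is a polynomial in $\zeta$ divided by the constant $\det\alpha>0$), so since they agree on the dense set $\zeta\neq 0$ they agree for all $\zeta\in\mathbb{C}$.

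Statement (\ref{cpfsp2}) is then immediate from (\ref{cpfsp1}) since $\det\alpha>0$ implies $\det(\zeta\mathbf{1}-A(\beta))=0$ if and only if $\det C(\zeta,\beta)=0$. For the eigenvector correspondence with $\zeta\neq 0$, I would observe that the outermost factors
\[
L_{1}=\begin{bmatrix} K_{\mathrm{p}} & \zeta^{-1}\mathrm{i}\Phi^{\mathrm{T}}\\ 0 & \mathbf{1}\end{bmatrix},\qquad L_{4}=\begin{bmatrix} K_{\mathrm{p}}^{\mathrm{T}} & 0\\ -\zeta^{-1}\mathrm{i}\Phi & \mathbf{1}\end{bmatrix}
\]
in (\ref{pfsp8}) are invertible (block-triangular with invertible diagonal blocks), as is the middle scaling factor $\mathrm{diag}(\zeta^{-1}\mathbf{1},\zeta\mathbf{1})$. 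Hence $(\zeta\mathbf{1}-A(\beta))w=0$ with $w\neq0$ is equivalent, via $w'=L_{4}w$, to
\[
\begin{bmatrix} C(\zeta,\beta) & 0\\ 0 & \mathbf{1}\end{bmatrix}w'=0,
\]
which forces $w' = [q',0]^{\mathrm{T}}$ with $C(\zeta,\beta)q'=0$, $q'\neq 0$. Writing $w=L_{4}^{-1}w'$ in block form and using $K_{\mathrm{p}}^{\mathrm{T}}=\sqrt{\alpha}^{-1}$ and $\Phi = K_{\mathrm{q}}K_{\mathrm{p}}^{\mathrm{T}}=\sqrt{\eta}\sqrt{\alpha}^{-1}$, a short calculation recovers, after the rescaling $q=-(\mathrm{i}\zeta)^{-1}q'$ (which keeps $q$ in $\ker C(\zeta,\beta)$), the normalization
\[
w=\begin{bmatrix} -\mathrm{i}\zeta\sqrt{\alpha}\,q\\ \sqrt{\eta}\,q\end{bmatrix},
\]
establishing item 1. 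Item 2 is the reverse direction: given $q\neq 0$ with $C(\zeta,\beta)q=0$, form $w$ by the same formula and verify, using $L_{1}\,\mathrm{diag}(\zeta^{-1},\zeta)\,\mathrm{diag}(C(\zeta,\beta),\mathbf{1})\,L_{4}\,w=0$, that $(\zeta\mathbf{1}-A(\beta))w=0$; that $w\neq 0$ follows because its lower block $\sqrt{\eta}q$ would only vanish if $q\in\ker\sqrt{\eta}=\ker\eta$, in which case $C(\zeta,\beta)q=0$ with $\zeta\neq 0$ forces $\zeta\alpha q +(2\theta+\beta R)\mathrm{i}q=0$ and taking the $q$-inner product reveals the upper block $-\mathrm{i}\zeta\sqrt{\alpha}q$ to be nonzero.

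There is no real obstacle here: the whole argument is bookkeeping on the factorization (\ref{pfsp8}). The only mild care needed is in the polynomial-extension step for $\zeta=0$ and the scaling $q\mapsto -(\mathrm{i}\zeta)^{-1}q$ that brings the eigenvector into the symmetric normalization (\ref{cpfsp3})--(\ref{cpfsp4}); both are routine.
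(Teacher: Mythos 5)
Your proof is correct and is exactly the intended argument: take determinants of the four factors in (\ref{pfsp8}), extend from $\zeta\neq 0$ to all of $\mathbb{C}$ by polynomial continuation, and chase eigenvectors through the invertible outer factors. The only (harmless) detour is your argument for $w\neq 0$ in item 2: since $\zeta\neq 0$, $\sqrt{\alpha}$ is invertible, and $q\neq 0$, the upper block $-\mathrm{i}\zeta\sqrt{\alpha}q$ is automatically nonzero, so the case analysis on the lower block and the inner-product step are not needed.
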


The following lemma tells us that the eigenvalues of the system operator
$A\left(  \beta\right)  $ are nonzero whenever the duality condition
(\ref{cnddl}) is true and so the eigenvectors will have the unique block
representation in Corollary \ref{cpfsp}.

\begin{lemma}
\label{lpfsp}If (\ref{cnddl}) is true then $\Omega$ is invertible and
$A\left(  \beta\right)  $ is invertible. In particular, $0\not \in
\sigma\left(  A\left(  \beta\right)  \right)  $.
\end{lemma}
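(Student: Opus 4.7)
The plan is to exploit the factorization from Corollary \ref{cpfsp} together with the block structure of $A(\beta)$ and $\Omega$. The key observation is that under the duality condition $\eta > 0$, both $\sqrt{\alpha}^{-1}$ and $\sqrt{\eta}$ are invertible, which forces $\Phi = K_{\mathrm{q}} K_{\mathrm{p}}^{\mathrm{T}} = \sqrt{\eta}\sqrt{\alpha}^{-1}$ to be an invertible $N\times N$ matrix. I would present two (essentially equivalent) routes and pick the shorter one.

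The quickest route is via the determinant identity (\ref{cpfsp1}): setting $\zeta = 0$ in $\det(\zeta\mathbf{1} - A(\beta)) = \det C(\zeta,\beta)/\det\alpha$, and using $C(0,\beta) = -\eta$ from the definition (\ref{qmpen}) of the quadratic pencil, yields
\begin{equation*}
\det(-A(\beta)) \;=\; \frac{\det(-\eta)}{\det\alpha} \;=\; (-1)^{N}\,\frac{\det\eta}{\det\alpha}.
\end{equation*}
Since $\eta > 0$ and $\alpha > 0$, both determinants are strictly positive, so $\det A(\beta) \neq 0$. Taking $\beta = 0$ gives the same conclusion for $\Omega = A(0)$. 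Invertibility immediately implies $0 \notin \sigma(A(\beta))$ by the definition (\ref{sevspec}) of the spectrum.

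Alternatively (and more transparently), I would argue directly from the block form. Suppose $A(\beta) w = 0$ and write $w = [\varphi,\psi]^{\mathrm{T}}$ as in (\ref{pfsp8b}). From (\ref{ceveqs2}) the bottom block of $A(\beta) w$ is $\mathrm{i}\Phi\varphi$, so $\Phi\varphi = 0$; since $\Phi = \sqrt{\eta}\sqrt{\alpha}^{-1}$ is invertible under (\ref{cnddl}), we get $\varphi = 0$. The top block then reduces to $-\mathrm{i}\Phi^{\mathrm{T}}\psi = 0$, whence $\psi = 0$ and $w = 0$. Thus $A(\beta)$ is injective, hence invertible; the same argument applied with $\beta = 0$ handles $\Omega$.

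There is no real obstacle here — the lemma is essentially a structural observation — so I would write the proof in one short paragraph, probably using the second (block-elimination) approach since it makes transparent \emph{why} $\eta > 0$ is the relevant hypothesis (it is exactly what makes $\Phi$ invertible), without requiring the reader to invoke the determinant factorization of Proposition \ref{ppfsp}.
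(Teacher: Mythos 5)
Your proposal is correct, and both of your routes are genuinely different from the paper's argument. The paper proceeds in two steps: first it writes $\Omega = \mathrm{i}KJK^{\mathrm{T}}$ with $K$ invertible (because $\eta>0$) and $J$ the symplectic matrix, which gives invertibility of $\Omega$; then it takes any $w$ with $A(\beta)w=0$, forms $\left(w,\Omega w\right)=-\mathrm{i}\beta\left(w,Bw\right)$, and uses $\Omega^{\ast}=\Omega$, $B\geq0$, $\beta\in\mathbb{R}$ to conclude $Bw=0$, whence $\Omega w=0$ and $w=0$. Your route 1 (determinant identity from Corollary \ref{cpfsp} evaluated at $\zeta=0$, where $C(0,\beta)=-\eta$) and your route 2 (block elimination using $\Phi=\sqrt{\eta}\sqrt{\alpha}^{-1}$ invertible) both avoid the inner-product/positivity argument entirely and handle $\Omega$ and $A(\beta)$ uniformly in one pass. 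What the paper's approach buys is a more abstract, structure-independent step: given $\Omega$ Hermitian, $B\geq0$, and $\beta$ real, invertibility of $\Omega$ alone already forces invertibility of $\Omega-\mathrm{i}\beta B$, a fact that does not rely on the specific block form of $\Omega$ or $B$. What your route 2 buys is transparency about where $\eta>0$ enters (it is exactly what makes the lower-left block $\Phi$ invertible), and your route 1 is probably the shortest possible proof given that Corollary \ref{cpfsp} is already in hand. All three arguments are sound.
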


\begin{proof}
Suppose (\ref{cnddl}) is true, i.e., $\eta$ is invertible. Then $K$, as
defined in (\ref{chvar}) is invertible and hence $\Omega=\mathrm{i}%
KJK^{\mathrm{T}}$ is invertible, where $J\,$\ is the invertible $2N\times2N$
symplectic matrix $J=%
\begin{bmatrix}
0 & -\mathbf{1}\\
\mathbf{1} & 0
\end{bmatrix}
$. Next, if $A\left(  \beta\right)  w=0$ then $\Omega w=-\mathrm{i}\beta Bw$
which implies $\left(  w,\Omega w\right)  =-\mathrm{i}\beta\left(
w,Bw\right)  $. But since $\Omega^{\ast}=\Omega$, $B^{\ast}=B$, and $\beta$ is
real this implies $\left(  w,Bw\right)  =0$ and since $B\geq0$ this implies
$Bw=0$ which implies $\Omega w=0$. And since $\Omega$ was shown to be
invertible then $w=0$ and so $0\not \in \sigma\left(  A\left(  \beta\right)
\right)  $. This completes the proof.
\end{proof}

Whenever the duality condition (\ref{cnddl}) is true, the dual Lagrangian
system (\ref{dradis2a}) with (dual) system operator $A^{\flat}\left(
\beta\right)  $ has\ the corresponding (dual) quadratic matrix pencil
\begin{equation}
C^{\flat}(\zeta,\beta)=\zeta^{2}\eta+\left(  2\theta+\beta R\right)
\mathrm{i}\zeta-\alpha\qquad\text{(dual quadratic pencil).} \label{dpfsp2}%
\end{equation}
The following proposition describes the correspondence between the standard
eigenvalue problem (\ref{sevp}) for the system operator $A\left(
\beta\right)  $ and that of the dual system operator $A^{\flat}\left(
\beta\right)  $.

\begin{proposition}
[spectral equivalence-duality]\label{ppfspdl}Suppose (\ref{cnddl}) is true.
Then the following statements are true:

\begin{enumerate}
\item For any $\zeta\not =0$,
\begin{align}
C(\zeta,\beta)  &  =-\zeta^{2}C^{\flat}(-\zeta^{-1},\beta),\nonumber\\
\det\left(  \zeta\mathbf{1}-A\left(  \beta\right)  \right)   &  =\frac{\left(
-\zeta^{2}\right)  ^{N}\det\eta}{\det\alpha}\det\left(  \left(  -\zeta
^{-1}\right)  \mathbf{1}-A^{\flat}\left(  \beta\right)  \right) \\
&  =\left(  -1\right)  ^{N}\det\left(  \left(  -\zeta\right)  \mathbf{1}%
-\left[  A^{\flat}\left(  \beta\right)  \right]  ^{-1}\right)  .\nonumber
\end{align}

\item The system operator $A\left(  \beta\right)  $ is related to the spectrum
of dual system operator $A^{\flat}\left(  \beta\right)  $ by
\begin{equation}
\sigma\left(  A\left(  \beta\right)  \right)  =-\sigma\left(  A^{\flat}\left(
\beta\right)  \right)  ^{-1}.
\end{equation}

\item If $\zeta$ is an eigenvalue of $A\left(  \beta\right)  $ then
$-\zeta^{-1}$ is an eigenvalue of $A^{\flat}\left(  \beta\right)  $ and they
have the same geometric multiplicity, algebraic multiplicity, and partial
multiplicities (i.e., for the corresponding eigenvalue they have the same
Jordan normal form).

\item If $\zeta$, $w$ is an eigenpair of the system operator $A\left(
\beta\right)  $ then $-\zeta^{-1}$, $w^{\flat}$ is an eigenpair of the dual
system operator $A^{\flat}\left(  \beta\right)  $, where
\begin{gather}
w=\left[
\begin{array}
[c]{c}%
-\mathrm{i}\zeta\sqrt{\alpha}q\\
\sqrt{\eta}q
\end{array}
\right]  ,\qquad w^{\flat}=\left[
\begin{array}
[c]{c}%
-\mathrm{i}\left(  -\zeta^{-1}\right)  \sqrt{\eta}q\\
\sqrt{\alpha}q
\end{array}
\right]  ,\\
C(\zeta,\beta)q=0,\text{ }C^{\flat}(-\zeta^{-1},\beta)q=0,\text{\ \ }%
q\not =0.\nonumber
\end{gather}

\end{enumerate}
\end{proposition}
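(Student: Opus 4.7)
The plan is to prove all four items by first establishing the fundamental algebraic identity in item 1, from which items 2--4 follow with modest additional work. The key computation that drives everything is a direct substitution: in $C^{\flat}(\zeta,\beta) = \zeta^{2}\eta + (2\theta+\beta R)\mathrm{i}\zeta - \alpha$ I replace $\zeta$ by $-\zeta^{-1}$ and multiply by $-\zeta^{2}$; this produces, term by term, $\zeta^{2}\alpha + (2\theta+\beta R)\mathrm{i}\zeta - \eta = C(\zeta,\beta)$. Taking determinants yields $\det C(\zeta,\beta) = (-\zeta^{2})^{N}\det C^{\flat}(-\zeta^{-1},\beta)$, and then invoking Corollary \ref{cpfsp} twice (once for the system, once for the dual, where the dual's ``$\alpha$'' is our $\eta$) converts each pencil determinant into the corresponding characteristic determinant, yielding the first equality of item 1.

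For the second equality in item 1, I would use the elementary identity $\det(\lambda\mathbf{1} - M) = (-\lambda)^{n}\det(M)\det(\lambda^{-1}\mathbf{1} - M^{-1})$ valid for any invertible $n\times n$ matrix $M$ and nonzero $\lambda$. This applies to $M = A^{\flat}(\beta)$ with $\lambda = -\zeta^{-1}$, since Lemma \ref{lpfsp} applied to the dual system (whose own duality hypothesis is $\alpha>0$, automatic from (\ref{sintro1a})) guarantees $A^{\flat}(\beta)$ is invertible. Combining this with an explicit evaluation of $\det A^{\flat}(\beta)$ obtained by specializing Corollary \ref{cpfsp} for the dual at $\zeta=0$ (using $C^{\flat}(0,\beta) = -\alpha$) collapses the scalar prefactor and produces the second equality.

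Item 2 is then immediate: for $\zeta\neq 0$ the prefactor $(-\zeta^{2})^{N}\det\eta/\det\alpha$ is nonzero, so $\zeta\in\sigma(A(\beta))$ iff $-\zeta^{-1}\in\sigma(A^{\flat}(\beta))$; since both spectra exclude $0$ by Lemma \ref{lpfsp}, the map $\zeta\mapsto -\zeta^{-1}$ is a well-defined bijection $\sigma(A(\beta))\to\sigma(A^{\flat}(\beta))$. Item 4 follows by direct verification: the eigenvector structure supplied by Corollary \ref{cpfsp} gives $w = [-\mathrm{i}\zeta\sqrt{\alpha}q,\sqrt{\eta}q]^{\mathrm{T}}$ with $C(\zeta,\beta)q = 0$; the pencil identity from item 1 forces $C^{\flat}(-\zeta^{-1},\beta)q = 0$, and applying the dual version of Corollary \ref{cpfsp} (with the roles of $\alpha$ and $\eta$ swapped) produces exactly the stated $w^{\flat}$ as an eigenvector of $A^{\flat}(\beta)$ at eigenvalue $-\zeta^{-1}$.

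The main subtlety, and where I expect to spend the most care, is item 3's claim about \emph{partial} multiplicities. Algebraic multiplicity comes for free from item 1 (it is the order of the root of the characteristic polynomial, and the identity preserves root orders under $\zeta\mapsto -\zeta^{-1}$). Geometric multiplicity also falls out of the explicit bijection in item 4, since $q\mapsto q$ is a linear isomorphism between $\ker C(\zeta,\beta)$ and $\ker C^{\flat}(-\zeta^{-1},\beta)$. For the full Jordan structure I would either (a) promote the correspondence to Jordan chains of the matrix pencils, noting that repeated differentiation of $C(\zeta,\beta) = -\zeta^{2}C^{\flat}(-\zeta^{-1},\beta)$ in $\zeta$ yields a triangular change-of-basis matching a Jordan chain of $C$ at $\zeta$ with one of $C^{\flat}$ at $-\zeta^{-1}$, or (b) use the block factorization in Proposition \ref{ppfsp} together with its dual analogue to compare the local Smith normal forms of $\zeta\mathbf{1}-A(\beta)$ near $\zeta$ and $\lambda\mathbf{1}-A^{\flat}(\beta)$ near $-\zeta^{-1}$ and show they coincide. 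Either route gives identical partial multiplicities, and hence identical Jordan normal forms at paired eigenvalues, completing item 3.
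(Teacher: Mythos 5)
Your proposal is correct and follows the paper's own proof essentially step for step: the same pencil identity $C(\zeta,\beta)=-\zeta^{2}C^{\flat}(-\zeta^{-1},\beta)$ established by direct substitution, the same use of Corollary \ref{cpfsp} (applied once to the system and once to the dual) to pass between the pencil determinants and the characteristic determinants, the same elementary determinant identity to obtain the second equality in item 1, and the same eigenvector correspondence via $q\mapsto q$ for item 4. The one place where you do more than the paper is item 3: the paper simply asserts that the partial-multiplicity claim ``follows immediately from statement 1 and Lemma \ref{lpfsp},'' but as you correctly observe, equal characteristic polynomials alone do not determine Jordan structure, so an extra argument is genuinely needed. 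Either of your two proposed routes closes that gap — differentiating the pencil identity to transport Jordan chains, or (cleaner, and closer in spirit to the paper's machinery) using the block factorization of Proposition \ref{ppfsp} and its dual analogue, whose outer factors are analytic and invertible near any $\zeta_{0}\neq0$, to show the local Smith normal form of $\zeta\mathbf{1}-A(\beta)$ at $\zeta_{0}$ matches that of $C(\zeta,\beta)$, and then transporting it through the bianalytic change of variable $\zeta\mapsto-\zeta^{-1}$ into the corresponding local Smith form for $A^{\flat}(\beta)$ at $-\zeta_{0}^{-1}$. So the overall approach is the same; your treatment of item 3 is a modest but real improvement in rigor over the paper's terse claim.
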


\begin{proof}
First, if $\zeta\not =0$ then by Corollary \ref{cpfsp} and duality we have
\begin{gather*}
C(\zeta,\beta)=\zeta^{2}\alpha+\left(  2\theta+\beta R\right)  \mathrm{i}%
\zeta-\eta\\
=-\zeta^{2}\left(  -\alpha+\left(  2\theta+\beta R\right)  \mathrm{i}\left(
-\zeta^{-1}\right)  +\left(  -\zeta^{-1}\right)  ^{2}\eta\right)  =-\zeta
^{2}C^{\flat}(-\zeta^{-1},\beta),\\
\det\left(  \zeta\mathbf{1}-A\left(  \beta\right)  \right)  =\frac{\det
C(\zeta,\beta)}{\det\alpha}=\frac{\left(  -\zeta^{2}\right)  ^{N}\det\eta
}{\det\alpha}\frac{\det C^{\flat}(-\zeta^{-1},\beta)}{\det\eta}\\
=\frac{\left(  -\zeta^{2}\right)  ^{N}\det\eta}{\det\alpha}\det\left(  \left(
-\zeta^{-1}\right)  \mathbf{1}-A^{\flat}\left(  \beta\right)  \right) \\
=\frac{\left(  -1\right)  ^{N}\det\eta}{\det\alpha}\det\left[  A^{\flat
}\left(  \beta\right)  \right]  \det\left(  \left(  -\zeta\right)
\mathbf{1}-\left[  A^{\flat}\left(  \beta\right)  \right]  ^{-1}\right) \\
=\left(  -1\right)  ^{N}\det\left(  \left(  -\zeta\right)  \mathbf{1}-\left[
A^{\flat}\left(  \beta\right)  \right]  ^{-1}\right)  .
\end{gather*}
It then follows immediately from this and Lemma \ref{lpfsp} that%
\[
\sigma\left(  A\left(  \beta\right)  \right)  =-\sigma\left(  A^{\flat}\left(
\beta\right)  \right)  ^{-1}.
\]
Next, it follows from Corollary \ref{cpfsp} and Lemma \ref{lpfsp} that if
$\zeta$, $w$ is an eigenpair of the system operator $A\left(  \beta\right)  $
then $\zeta\not =0$ and
\[
w=\left[
\begin{array}
[c]{c}%
-\mathrm{i}\zeta\sqrt{\alpha}q\\
\sqrt{\eta}q
\end{array}
\right]  ,\qquad C(\zeta,\beta)q=0,\text{\ \ }q\not =0.
\]
But this implies that
\[
C^{\flat}(-\zeta^{-1},\beta)q=\left(  -\zeta^{2}\right)  ^{-1}C(\zeta
,\beta)q=0,\qquad q\not =0
\]
implying by Corollary \ref{cpfsp} and duality that $-\zeta^{-1}$, $w^{\flat}$
is an eigenpair of the dual system operator $A^{\flat}\left(  \beta\right)  $
with
\[
w^{\flat}=\left[
\begin{array}
[c]{c}%
-\mathrm{i}\left(  -\zeta^{-1}\right)  \sqrt{\eta}q\\
\sqrt{\alpha}q
\end{array}
\right]  .
\]
This proves statements 1, 2, and 4. Finally, statement 3 follows immediately
from statement 1 and Lemma \ref{lpfsp}. This completes the proof.
\end{proof}

The next proposition from \cite{FigWel2} describes the spectral symmetries of
the system operator $A\left(  \beta\right)  $ which follows from its
fundamental property (\ref{ceveqs1}).

\begin{proposition}
[spectral symmetry]\label{pssym}The following statements are true:

\begin{enumerate}
\item The characteristic polynomial of $A\left(  \beta\right)  $ satisfies
\begin{equation}
\overline{\det\left(  -\overline{\zeta}\mathbf{1}-A\left(  \beta\right)
\right)  }=\det\left(  \zeta I-A\left(  \beta\right)  \right)  ,
\label{pssym1}%
\end{equation}
for every $\zeta\in%
\mathbb{C}
$. \ In particular, the spectrum $\sigma\left(  A\left(  \beta\right)
\right)  $ of the system operator $A\left(  \beta\right)  $ is symmetric with
respect to the imaginary axis of the complex plane, i.e.,%
\begin{equation}
\sigma\left(  A\left(  \beta\right)  \right)  =-\overline{\sigma\left(
A\left(  \beta\right)  \right)  }\text{.} \label{pssym2}%
\end{equation}

\item If $w$ is an eigenvector of the system operator $A\left(  \beta\right)
$ with corresponding eigenvalue $\zeta$ then $\overline{w}$ is an eigenvector
of $A$ with corresponding eigenvalue $-\overline{\zeta}$.
\end{enumerate}
\end{proposition}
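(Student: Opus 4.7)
The plan is to exploit the first fundamental property in (\ref{ceveqs1}), namely $A(\beta)^{*}=-A(\beta)^{\mathrm{T}}$. Writing $A^{*}=\overline{A}^{\mathrm{T}}$, this identity is equivalent to $\overline{A(\beta)}^{\mathrm{T}}=-A(\beta)^{\mathrm{T}}$, and taking transposes gives the key observation
\[
\overline{A(\beta)}=-A(\beta),
\]
i.e., $A(\beta)$ is a purely imaginary matrix. (This can also be read off directly from the block form (\ref{ceveqs2}): $\Omega_{\mathrm{p}}=-\mathrm{i}2K_{\mathrm{p}}\theta K_{\mathrm{p}}^{\mathrm{T}}$, the off-diagonal entries $\pm\mathrm{i}\Phi,\pm\mathrm{i}\Phi^{\mathrm{T}}$ of $\Omega$, and the block $-\mathrm{i}\beta B$ are each purely imaginary.) This single structural fact carries the whole proposition.

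For statement 1, with $\overline{A(\beta)}=-A(\beta)$ in hand, the identity (\ref{pssym1}) follows from a short chain of elementary determinant manipulations. Namely, for every $\zeta\in\mathbb{C}$,
\[
\overline{\det\!\left(-\overline{\zeta}\mathbf{1}-A(\beta)\right)}
=\det\!\left(\overline{-\overline{\zeta}\mathbf{1}-A(\beta)}\right)
=\det\!\left(-\zeta\mathbf{1}-\overline{A(\beta)}\right)
=\det\!\left(-\zeta\mathbf{1}+A(\beta)\right)
=(-1)^{2N}\det\!\left(\zeta\mathbf{1}-A(\beta)\right)
=\det\!\left(\zeta\mathbf{1}-A(\beta)\right),
\]
where the last step uses that $A(\beta)$ is $2N\times 2N$. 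The symmetry (\ref{pssym2}) of the spectrum is then immediate: $\zeta\in\sigma(A(\beta))$ iff $\det(\zeta\mathbf{1}-A(\beta))=0$ iff (by the identity just proved) $\det(-\overline{\zeta}\mathbf{1}-A(\beta))=0$ iff $-\overline{\zeta}\in\sigma(A(\beta))$, which shows $\sigma(A(\beta))=-\overline{\sigma(A(\beta))}$.

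For statement 2, the argument is equally direct. If $A(\beta)w=\zeta w$ with $w\neq 0$, then taking entrywise complex conjugates gives $\overline{A(\beta)}\,\overline{w}=\overline{\zeta}\,\overline{w}$, and substituting $\overline{A(\beta)}=-A(\beta)$ yields
\[
A(\beta)\,\overline{w}=-\overline{\zeta}\,\overline{w},
\]
so that $\overline{w}$ is an eigenvector of $A(\beta)$ with eigenvalue $-\overline{\zeta}$, as claimed.

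I do not anticipate any real obstacle: the entire proof hinges on the single observation that $A(\beta)$ is purely imaginary, and the rest reduces to conjugating the eigenvalue equation and to elementary identities for determinants under complex conjugation. The only small point requiring care is the parity factor $(-1)^{2N}=1$ coming from factoring $-1$ out of a $2N\times 2N$ determinant, which is what makes the signs in (\ref{pssym1}) line up.
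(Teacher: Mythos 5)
Your proof is correct and takes exactly the approach the paper points to: it extracts from the fundamental property $A(\beta)^{*}=-A(\beta)^{\mathrm{T}}$ in (\ref{ceveqs1}) the fact that $\overline{A(\beta)}=-A(\beta)$, and then statement 1 follows from conjugating the characteristic polynomial (with $(-1)^{2N}=1$) while statement 2 follows from conjugating the eigenvalue equation. The paper cites \cite{FigWel2} rather than reproducing the argument, but your derivation is precisely what that citation and the phrase "follows from its fundamental property (\ref{ceveqs1})" have in mind.
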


The next proposition in this section relates the spectrum of matrices $\Omega
$, $B$ in the definition of the system operator $A\left(  \beta\right)
=\Omega-\mathrm{i}\beta B$ and the spectrum of the matrices $\Omega^{\flat}$,
$B^{\flat}$ for the dual system operator $A^{\flat}\left(  \beta\right)
=\Omega^{\flat}-\mathrm{i}\beta B^{\flat}$ to the matrices $\alpha$, $\eta$,
$\theta$, $R$.

\begin{proposition}
[spectra relations I]\label{pspredl}For the $2N\times2N$ matrices $B$,
$\Omega$ and the $N\times N$ matrices $\alpha$, $\eta$, $\theta$, $R$ we have
\begin{align}
\det\left(  \zeta\mathbf{1}-\Omega\right)   &  =\left(  \det\alpha\right)
^{-1}\det C\left(  \zeta,0\right)  =\left(  \det\alpha\right)  ^{-1}%
\det\left(  \zeta^{2}\alpha+2\theta\mathrm{i}\zeta-\eta\right)  ,\\
\det\left(  \zeta\mathbf{1}-B\right)   &  =\zeta^{N}\det\left(  \zeta
\alpha-R\right)  ,\nonumber
\end{align}
for every $\zeta\in%
\mathbb{C}
$ and%
\begin{equation}
\sigma\left(  \Omega\right)  =\sigma\left(  C\left(  \cdot,0\right)  \right)
\subseteq%
\mathbb{R}
,\text{\quad}\sigma\left(  \Omega\right)  =-\sigma\left(  \Omega\right)
,\text{\quad}\sigma\left(  B\right)  \setminus\left\{  0\right\}
=\sigma\left(  \alpha^{-1}R\right)  \setminus\left\{  0\right\}  .
\end{equation}
In particular, if $b_{\min}$ and $\omega_{\max}$ denote the smallest nonzero
eigenvalue and the largest eigenvalue of $B$ and $\Omega$, respectively, then
\begin{equation}
b_{\min}=\min\left[  \sigma\left(  \alpha^{-1}R\right)  \setminus\left\{
0\right\}  \right]  ,\qquad\omega_{\max}=\left\Vert \Omega\right\Vert
=\max\sigma\left(  C\left(  \cdot,0\right)  \right)  .
\end{equation}
Moreover, if (\ref{cnddl})\ is true then%
\begin{equation}
\sigma\left(  \Omega^{\flat}\right)  =-\sigma\left(  \Omega\right)
^{-1}=\sigma\left(  \Omega^{-1}\right)  ,\qquad\sigma\left(  B^{\flat}\right)
\setminus\left\{  0\right\}  =\sigma\left(  \eta^{-1}R\right)  \setminus
\left\{  0\right\}  .
\end{equation}
In particular, if $b_{\text{min}}^{\flat}$ and $\omega_{\text{max}}^{\flat}$
denote the smallest nonzero eigenvalue and the largest eigenvalue of
$B^{\flat}$ and $\Omega^{\flat}$, respectively, then%
\begin{equation}
b_{\min}^{\flat}=\min\left[  \sigma\left(  \eta^{-1}R\right)  \setminus
\left\{  0\right\}  \right]  ,\qquad\omega_{\max}^{\flat}=\left\Vert
\Omega^{\flat}\right\Vert =\left\Vert \Omega^{-1}\right\Vert =\omega_{\min
}^{-1},
\end{equation}
where%
\begin{equation}
\omega_{\min}=\min\left\vert \sigma\left(  \Omega\right)  \right\vert
\end{equation}
is the smallest positive eigenvalue of $\Omega$.
\end{proposition}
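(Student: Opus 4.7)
The plan is to reduce the statement to a sequence of direct algebraic computations using the block decompositions of $\Omega$ and $B$ in (\ref{ceveqs2}), together with Corollary \ref{cpfsp} (the pencil--system-operator equivalence) and, for the dual part, Proposition \ref{ppfspdl} and Lemma \ref{lpfsp}. In outline, I would first treat the $\Omega$-part, then the $B$-part, then derive the norm identities, and finally transport everything to the dual.

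For $\Omega$: since $A(0) = \Omega$, Corollary \ref{cpfsp} applied at $\beta = 0$ immediately yields
\[
\det(\zeta\mathbf{1} - \Omega) = (\det\alpha)^{-1}\det C(\zeta, 0) = (\det\alpha)^{-1}\det(\zeta^2\alpha + 2\theta\mathrm{i}\zeta - \eta),
\]
and hence $\sigma(\Omega) = \sigma(C(\cdot,0))$. To place this spectrum in $\mathbb{R}$, I would verify directly from (\ref{ceveqs2}) that $\Omega$ is Hermitian: the block $\Omega_{\mathrm{p}} = -2\mathrm{i}K_{\mathrm{p}}\theta K_{\mathrm{p}}^{\mathrm{T}}$ is Hermitian because $\theta^{\mathrm{T}} = -\theta$ and $K_{\mathrm{p}}$ is real, which combined with the off-diagonal block structure forces $\Omega^{*} = \Omega$. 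The symmetry $\sigma(\Omega) = -\sigma(\Omega)$ is then Proposition \ref{pssym} applied at $\beta = 0$ (complex conjugation acts trivially on a real spectrum).

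For $B$: the block structure $B = \operatorname{diag}(\mathsf{\tilde R},0)$ gives $\det(\zeta\mathbf{1} - B) = \zeta^{N}\det(\zeta\mathbf{1}_{N} - \mathsf{\tilde R})$, and the factorization $\zeta\mathbf{1}_N - \mathsf{\tilde R} = \sqrt{\alpha}^{-1}(\zeta\alpha - R)\sqrt{\alpha}^{-1}$ (legitimate because $\mathsf{\tilde R} = \sqrt{\alpha}^{-1}R\sqrt{\alpha}^{-1}$) yields the stated characteristic-polynomial identity and simultaneously exhibits the similarity $\mathsf{\tilde R} \sim \alpha^{-1}R$ via conjugation by $\sqrt{\alpha}$. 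Hence $\sigma(B)\setminus\{0\} = \sigma(\alpha^{-1}R)\setminus\{0\}$. The two norm identities then follow: $b_{\min}$ is the smallest nonzero eigenvalue of $B$ by definition, while $\omega_{\max} = \|\Omega\|$ because $\Omega$ is self-adjoint with spectrum symmetric about $0$, so $\|\Omega\| = \max\{|\lambda|:\lambda\in\sigma(\Omega)\} = \max\sigma(\Omega)$.

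For the dual claims (under the duality condition $\eta > 0$), the strategy is to invoke Proposition \ref{ppfspdl} rather than redo the determinantal computation. At $\beta = 0$ that proposition gives $\sigma(\Omega^{\flat}) = -\sigma(\Omega)^{-1}$, and combining this with $\sigma(\Omega) = -\sigma(\Omega)$ yields $\sigma(\Omega^{\flat}) = \sigma(\Omega)^{-1} = \sigma(\Omega^{-1})$ (the invertibility of $\Omega$ being Lemma \ref{lpfsp}). The claim $\sigma(B^{\flat})\setminus\{0\} = \sigma(\eta^{-1}R)\setminus\{0\}$ is obtained by repeating the $B$-argument with $\alpha$ replaced by $\eta$, which is legitimate because the dual system is defined by the substitution $\alpha \leftrightarrow \eta$ and $\eta > 0$ by assumption. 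The identities $\omega_{\max}^{\flat} = \|\Omega^{-1}\| = \omega_{\min}^{-1}$ and $b_{\min}^{\flat} = \min[\sigma(\eta^{-1}R)\setminus\{0\}]$ then read off from these spectral descriptions exactly as in the original system. I do not anticipate any substantial obstacle; the one point requiring care is confirming the Hermiticity of $\Omega$ from (\ref{ceveqs2}), because it is that property which licenses the passage from the operator norm $\|\Omega\|$ to $\max\sigma(\Omega)$ and so underpins every subsequent norm identification.
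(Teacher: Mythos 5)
Your proposal is correct and follows essentially the same route as the paper: Corollary \ref{cpfsp} at $\beta=0$ for the $\Omega$-characteristic polynomial, the block form of $B$ together with the similarity $\mathsf{\tilde R}\sim\alpha^{-1}R$ for the $B$-spectrum, the Hermiticity and skew-transpose property of $\Omega$ for the spectral symmetry and norm identities, and Proposition \ref{ppfspdl} with Lemma \ref{lpfsp} for the dual part. Your explicit verification of $\Omega^{*}=\Omega$ from the blocks and your appeal to Proposition \ref{pssym} are merely unwound forms of the paper's direct use of the stated fundamental properties $\Omega=\Omega^{*}=-\Omega^{\mathrm{T}}$, so there is no substantive divergence.
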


\begin{proof}
It follows by Corollary \ref{cpfsp} that%
\begin{gather*}
\det\left(  \zeta\mathbf{1}-\Omega\right)  =\det\left(  \zeta\mathbf{1}%
-A\left(  0\right)  \right)  =\\
=\left(  \det\alpha\right)  ^{-1}\det C\left(  \zeta,0\right)  =\left(
\det\alpha\right)  ^{-1}\det\left(  \zeta^{2}\alpha+2\theta\mathrm{i}%
\zeta-\eta\right)
\end{gather*}
and it follows immediately from the definition of $B$ that%
\[
\det\left(  \zeta\mathbf{1}-B\right)  =\zeta^{N}\det\left(  \zeta
\alpha-R\right)  =\zeta^{N}\left(  \det\alpha\right)  \det\left(
\zeta\mathbf{1}-\alpha^{-1}R\right)
\]
for every $\zeta\in%
\mathbb{C}
$. Now the $2N\times2N$ matrices $\Omega$, $B$ have the fundamental properties%
\[
\Omega=\Omega^{\ast}=-\Omega^{\mathrm{T}},\qquad\text{ }B\geq0,
\]
from which it follows that%
\begin{gather*}
\sigma\left(  \Omega\right)  =\sigma\left(  C\left(  \cdot,0\right)  \right)
\subseteq%
\mathbb{R}
,\qquad\sigma\left(  \Omega\right)  =-\sigma\left(  \Omega\right)  ,\\
\sigma\left(  B\right)  \setminus\left\{  0\right\}  =\sigma\left(
\alpha^{-1}R\right)  \setminus\left\{  0\right\}  ,
\end{gather*}
and%
\begin{gather*}
b_{\min}=\min\left[  \sigma\left(  \alpha^{-1}R\right)  \setminus\left\{
0\right\}  \right]  ,\\
\omega_{\max}=\left\Vert \Omega\right\Vert =\max\left\vert \sigma\left(
\Omega\right)  \right\vert =\max\left\vert \sigma\left(  \Omega\right)
\right\vert =\max\sigma\left(  C\left(  \cdot,0\right)  \right)  .
\end{gather*}
Suppose now that (\ref{cnddl})\ is true. Then from what we just proved we have%
\[
\sigma\left(  B^{\flat}\right)  \setminus\left\{  0\right\}  =\sigma\left(
\eta^{-1}R\right)  \setminus\left\{  0\right\}  ,
\]
and%
\[
\omega_{\max}^{\flat}=\left\Vert \Omega^{\flat}\right\Vert .
\]
In particular,
\[
b_{\min}^{\flat}=\min\left[  \sigma\left(  \eta^{-1}R\right)  \setminus
\left\{  0\right\}  \right]  .
\]
By Proposition \ref{ppfspdl} we have
\[
\sigma\left(  \Omega^{\flat}\right)  =\sigma\left(  A^{\flat}\left(  0\right)
\right)  =-\sigma\left(  A\left(  0\right)  \right)  ^{-1}=-\sigma\left(
\Omega\right)  ^{-1}.
\]
Then since $\sigma\left(  \Omega\right)  =-\sigma\left(  \Omega\right)  $ this
together with Lemma \ref{lpfsp}\ implies%
\[
\sigma\left(  \Omega^{\flat}\right)  =-\sigma\left(  \Omega\right)
^{-1}=\sigma\left(  \Omega^{-1}\right)  .
\]
From which it follows that%
\[
\omega_{\text{max}}^{\flat}=\left\Vert \Omega^{\flat}\right\Vert =\left\Vert
\Omega^{-1}\right\Vert =\left(  \min\left\vert \sigma\left(  \Omega\right)
\right\vert \right)  ^{-1}.
\]
But since $\sigma\left(  \Omega\right)  =-\sigma\left(  \Omega\right)  $ then%
\[
\min\left\vert \sigma\left(  \Omega\right)  \right\vert =\min\sigma\left(
\Omega\right)  =\omega_{\min}%
\]
is the smallest positive eigenvalue of $\Omega$. This completes the proof.
\end{proof}

\begin{remark}
[limiting frequencies]\label{rspredl2_0}This final proposition and the remark
that follows, describes the spectrum of an important self-adjoint operator
$\Omega_{1}$ on $\operatorname{Ker}B$ that plays a key role in the high-loss
regime $\beta\gg1$ in describing the modal dichotomy in Sec. \ref{sevmd} and
in describing the spectral asymptotics of $A\left(  \beta\right)  $ as
$\beta\rightarrow\infty$ in Sec. \ref{smdhlr}. In particular, if $\rho_{j}$,
$j=N_{R}+1,\ldots,2N$ are all the eigenvalues of $\Omega_{1}$ (repeated
according to their multiplicities as eigenvalues of $\Omega_{1}$) then in the
perturbation analysis of the frequencies $\operatorname{Re}\zeta_{j}\left(
\beta\right)  $, $j=N_{R}+1,\ldots,2N$ of the low-loss eigenmodes, as
described in Sec. \ref{smdhlr} [see (\ref{pod16})], we have the limiting
frequencies:
\begin{equation}
\lim_{\beta\rightarrow\infty}\operatorname{Re}\zeta_{j}\left(  \beta\right)
=\rho_{j}\text{, for }j=N_{R}+1,\ldots,2N;\text{ \ \ }\sigma\left(  \Omega
_{1}\right)  =\left\{  \rho_{j}:N_{R}+1\leq j\leq2N\right\}  .
\label{pspredl2lim}%
\end{equation}

\end{remark}

\begin{proposition}
[spectral relations II]\label{pspredl2}Let $P_{B}^{\bot}$ denote the
orthogonal projection of the Hilbert space $%
\mathbb{C}
^{2N}$, with standard inner product $\left(  \cdot,\cdot\right)  $, onto
$\operatorname{Ker}B$ (i.e., the nullspace of $B$). Denote by $\Omega
_{1}=\left.  P_{B}^{\bot}\Omega P_{B}^{\bot}\right\vert _{\operatorname{Ker}%
B}:\operatorname{Ker}B\rightarrow\operatorname{Ker}B$ the restriction of
$P_{B}^{\bot}\Omega P_{B}^{\bot}$ to $\operatorname{Ker}B$. For any operator
$M$ on a finite dimensional vector space over $%
\mathbb{C}
$, we will denote the product of its eigenvalues (counting multiplicities) by
$\det M$. Then the matrix $P_{B}^{\bot}\Omega P_{B}^{\bot}\not =0$ if and only
if $N_{R}<N$ (i.e., $R$ is rank deficient), and in this case the following
statements are true:

\begin{enumerate}
\item The nonzero eigenvalues of $P_{B}^{\bot}\Omega P_{B}^{\bot}$ are real
and come in $\pm$ pairs with equal multiplicity. In particular,
\begin{equation}
0\in\sigma\left(  P_{B}^{\bot}\Omega P_{B}^{\bot}\right)  =-\sigma\left(
P_{B}^{\bot}\Omega P_{B}^{\bot}\right)  \subseteq%
\mathbb{R}
.
\end{equation}

\item The operator $\Omega_{1}$ is self-adjoint with respect to the inner
product $\left(  \cdot,\cdot\right)  $ and for every $\rho\in%
\mathbb{C}
$,
\begin{equation}
\det\left(  \rho\mathbf{1}-\Omega_{1}\right)  =\rho^{N_{R}}\frac{\det\left[
\left.  P_{R}^{\bot}C\left(  \rho,0\right)  P_{R}^{\bot}\right\vert
_{\operatorname{Ker}R}\right]  }{\det\left(  \left.  P_{R}^{\bot}\alpha
P_{R}^{\bot}\right\vert _{\operatorname{Ker}R}\right)  }%
\end{equation}
where $P_{R}^{\bot}$ denotes the orthogonal projection onto
$\operatorname{Ker}R$,%
\begin{align}
\left.  P_{R}^{\bot}C\left(  \rho,0\right)  P_{R}^{\bot}\right\vert
_{\operatorname{Ker}R}  &  :\operatorname{Ker}R\rightarrow\operatorname{Ker}%
R,\\
\left.  P_{R}^{\bot}\alpha P_{R}^{\bot}\right\vert _{\operatorname{Ker}R}  &
:\operatorname{Ker}R\rightarrow\operatorname{Ker}R,
\end{align}
are the restriction of $P_{R}^{\bot}C\left(  \rho,0\right)  P_{R}^{\bot}$ and
$P_{R}^{\bot}\alpha P_{R}^{\bot}$ to $\operatorname{Ker}R$, and $\left.
P_{R}^{\bot}\alpha P_{R}^{\bot}\right\vert _{\operatorname{Ker}R}$ is
invertible. In particular,
\begin{equation}
\sigma\left(  P_{B}^{\bot}\Omega P_{B}^{\bot}\right)  =\sigma\left(
\Omega_{1}\right)  =\left\{  \rho\in%
\mathbb{C}
:\det\left[  \left.  P_{R}^{\bot}C\left(  \rho,0\right)  P_{R}^{\bot
}\right\vert _{\operatorname{Ker}R}\right]  =0\right\}  \cup\left\{
0\right\}  .
\end{equation}

\item If $\rho_{0}\not =0$ is an eigenvalue of $P_{B}^{\bot}\Omega P_{B}%
^{\bot}$ then its multiplicity is equal to the multiplicity as an eigenvalue
of $\Omega_{1}$ which is equal to the order of the zero of the polynomial (of
degree $2N-2N_{R}$) $\det\left[  \left.  P_{R}^{\bot}C\left(  \rho,0\right)
P_{R}^{\bot}\right\vert _{\operatorname{Ker}R}\right]  $ at $\rho=\rho_{0}$.

\item The eigenvalue $0$ of $\Omega_{1}$ has multiplicity $N_{R}$ if and only
if $\operatorname{Ker}R\cap\operatorname{Ker}\eta=\left\{  0\right\}  $, in
which case%
\begin{equation}
\dim\operatorname{Ker}\left(  \Omega_{1}\right)  =N_{R}\text{, }%
\dim\operatorname{Ran}\left(  P_{B}^{\bot}\Omega P_{B}^{\bot}\right)
=\dim\operatorname{Ran}\left(  \Omega_{1}\right)  =2\left(  N-N_{R}\right)  .
\end{equation}

\item If (\ref{cnddl}) is true then, denoting the corresponding dual operator
of $\Omega_{1}$ for the dual Lagrangian system (\ref{dradis2a}) by $\Omega
_{1}^{\flat}$, the following are true: i) $0$ is an eigenvalue of both
$\Omega_{1}^{\flat}$ and $\Omega_{1}$ of equal multiplicity $N_{R}$; (iii)
$\left.  P_{R}^{\bot}C\left(  \rho,0\right)  P_{R}^{\bot}\right\vert
_{\operatorname{Ker}R}=-\rho^{2}\left.  P_{R}^{\bot}C^{\flat}\left(
-\rho^{-1},0\right)  P_{R}^{\bot}\right\vert _{\operatorname{Ker}R}$ for
$\rho\not =0$; (iv) $\rho\not =0$ is an eigenvalue of $\Omega_{1}$ of
multiplicity $m$ if and only $-\rho^{-1}$ is an eigenvalue of $\Omega
_{1}^{\flat}$ of multiplicity $m$. In particular, if we define $\rho_{\min}$
and $\rho_{\max}$ by
\[
\rho_{\min}=\min\left[  \sigma\left(  \Omega_{1}\right)  \cap\left(
0,\infty\right)  \right]  ,\text{ }\rho_{\max}=\max\sigma\left(  \Omega
_{1}\right)  ,
\]
and letting $\rho_{\min}^{\flat}$, $\rho_{\max}^{\flat}$ denote these
corresponding values for the dual Lagrangian system (\ref{dradis2a}) then%
\[
\rho_{\min}^{\flat}=\rho_{\max}^{-1}\text{, }\rho_{\max}^{\flat}=\rho_{\min
}^{-1}.
\]

\end{enumerate}
\end{proposition}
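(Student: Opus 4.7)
The plan is to reduce every assertion to a single determinant identity and read off the rest as consequences. First I would identify $\operatorname{Ker} B$ explicitly. Since $B=\begin{bmatrix}\tilde R & 0\\ 0 & 0\end{bmatrix}$ with $\tilde R=K_{\mathrm p} R K_{\mathrm p}^{\mathrm T}$ and $K_{\mathrm p}=\sqrt\alpha^{-1}$ invertible, we have $\operatorname{Ker}\tilde R=\sqrt\alpha(\operatorname{Ker} R)$; write $V_1=\operatorname{Ker}\tilde R$, so $\operatorname{Ker} B=V_1\oplus H_{\mathrm q}$ and $\dim\operatorname{Ker} B=2N-N_R$. This immediately handles the ``iff'' in the opening sentence, since $N_R=N$ forces $V_1=\{0\}$ and then the block form of $\Omega$ shows $P_B^\perp\Omega P_B^\perp$ is zero on $H_{\mathrm q}$ (hence zero everywhere); the converse will fall out of the determinant formula below.

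For statement 1, $\Omega^{\ast}=\Omega$ together with $P_B^\perp$ being real and Hermitian (because $B$ is real symmetric) makes $\Omega_1$ Hermitian, so $\sigma(\Omega_1)\subseteq\mathbb R$. Combined with $\Omega^{\mathrm T}=-\Omega$, I get $\Omega_1^{\mathrm T}=-\Omega_1$, hence $\overline{\Omega_1}=-\Omega_1$; conjugating an eigenrelation $\Omega_1 v=\lambda v$ then gives $\Omega_1\bar v=-\lambda\bar v$, and since the eigenvalues are real the map $v\mapsto\bar v$ is an antilinear bijection between the $\lambda$- and $(-\lambda)$-eigenspaces, yielding both the sign symmetry and equality of multiplicities. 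That $0\in\sigma(P_B^\perp\Omega P_B^\perp)$ is automatic since the operator vanishes on $\operatorname{Ran} B$.

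The heart of the proof is the determinant formula of statement 2. On $\operatorname{Ker} B=V_1\oplus H_{\mathrm q}$ the operator $\Omega_1$ has block form $\begin{bmatrix}P_{\tilde R}^\perp\Omega_{\mathrm p}P_{\tilde R}^\perp & -\mathrm i P_{\tilde R}^\perp\Phi^{\mathrm T}\\ \mathrm i\Phi P_{\tilde R}^\perp & 0\end{bmatrix}$. For $\rho\neq 0$ the Schur-complement formula with respect to the invertible block $\rho\mathbf 1_{H_{\mathrm q}}$ gives
\[
\det(\rho\mathbf 1-\Omega_1)=\rho^{N_R}\det\bigl[\bigl(P_{\tilde R}^\perp(\rho^2\mathbf 1-\rho\Omega_{\mathrm p}-\Phi^{\mathrm T}\Phi)P_{\tilde R}^\perp\bigr)|_{V_1}\bigr],
\]
and using $\Omega_{\mathrm p}=-2\mathrm i K_{\mathrm p}\theta K_{\mathrm p}^{\mathrm T}$, $\Phi^{\mathrm T}\Phi=K_{\mathrm p}\eta K_{\mathrm p}^{\mathrm T}$, together with $K_{\mathrm p}\alpha K_{\mathrm p}^{\mathrm T}=\mathbf 1$, the middle factor simplifies to $K_{\mathrm p}C(\rho,0)K_{\mathrm p}^{\mathrm T}$. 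To translate from $V_1$ to $\operatorname{Ker} R$, I would pick an orthonormal basis $\{u_j\}$ of $\operatorname{Ker} R$ and use the (nonorthonormal) basis $f_j=\sqrt\alpha u_j$ of $V_1$. The standard identity $\det T=(\det G)^{-1}\det[(f_i,Tf_j)]$ applied to $T=P_{\tilde R}^\perp K_{\mathrm p}C(\rho,0)K_{\mathrm p}^{\mathrm T}P_{\tilde R}^\perp|_{V_1}$ then yields, after collapsing each $\sqrt\alpha^{\ast}=\sqrt\alpha$ against a $K_{\mathrm p}^{\mathrm T}=\sqrt\alpha^{-1}$, the clean equalities $(f_i,Tf_j)=u_i^{\ast} C(\rho,0)u_j$ and $G_{ij}=u_i^{\ast}\alpha u_j$. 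The latter is precisely the matrix of $P_R^\perp\alpha P_R^\perp|_{\operatorname{Ker} R}$ (invertible since $\alpha>0$) in the orthonormal basis $\{u_j\}$, while the former is the matrix of $P_R^\perp C(\rho,0)P_R^\perp|_{\operatorname{Ker} R}$. This gives the determinant formula for $\rho\neq 0$, and both sides being polynomials in $\rho$ extends it to all $\rho\in\mathbb C$.

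Statements 3--5 are consequences. For statement 3, any nonzero root $\rho_0$ of the right-hand polynomial contributes only through the numerator factor, so its multiplicity as eigenvalue of $\Omega_1$ (which equals its algebraic multiplicity since $\Omega_1$ is self-adjoint) coincides with the order of vanishing of $\det[P_R^\perp C(\rho,0)P_R^\perp|_{\operatorname{Ker} R}]$ at $\rho_0$. For statement 4, plugging $\rho=0$ and using $C(0,0)=-\eta$ with $\eta\geq 0$ shows $\det[P_R^\perp C(0,0)P_R^\perp|_{\operatorname{Ker} R}]\neq 0$ iff $\operatorname{Ker} R\cap\operatorname{Ker}\eta=\{0\}$, which pins the multiplicity of $0$ at $N_R$; the dimensional complement then gives $\dim\operatorname{Ran}\Omega_1=2(N-N_R)$. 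For statement 5, the duality identity $C(\zeta,\beta)=-\zeta^2 C^\flat(-\zeta^{-1},\beta)$ from Proposition~\ref{ppfspdl} is preserved under sandwiching by $P_R^\perp$ and restricting to $\operatorname{Ker} R$ (which is common to the two systems); feeding this into the determinant formula for both $\Omega_1$ and $\Omega_1^\flat$ yields the eigenvalue correspondence $\rho\leftrightarrow -\rho^{-1}$ with preserved multiplicities, and hence the extremal relations $\rho_{\min}^\flat=\rho_{\max}^{-1}$, $\rho_{\max}^\flat=\rho_{\min}^{-1}$. The main obstacle in this plan is the careful bookkeeping in the Gram-matrix step: the projections $P_{\tilde R}^\perp$ onto $V_1$ and $P_R^\perp$ onto $\operatorname{Ker} R$ live on \emph{different} (nonorthogonal) subspaces of $\mathbb C^N$, and ensuring that the $\sqrt\alpha$ conjugation threads correctly through both the inner product and the projection is the one genuinely nontrivial computation in the proof.
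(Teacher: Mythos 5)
Your plan arrives at the same central determinant identity as the paper but by a genuinely different route, and it is worth comparing. The paper proves
\[
\det(\rho\mathbf 1-\Omega_1)=\rho^{N_R}\,\frac{\det\bigl[P_R^\perp C(\rho,0)P_R^\perp|_{\operatorname{Ker} R}\bigr]}{\det\bigl(P_R^\perp\alpha P_R^\perp|_{\operatorname{Ker} R}\bigr)}
\]
by building an explicit block factorization of $\rho P_B^\perp-P_B^\perp\Omega P_B^\perp$, which requires establishing several projection identities first (for instance $\sqrt\alpha^{-1}P_{\tilde R}^\perp\sqrt\alpha^{-1}=P_R^\perp\sqrt\alpha^{-1}P_{\tilde R}^\perp\sqrt\alpha^{-1}P_R^\perp$, and that $\sqrt\alpha^{-1}P_{\tilde R}^\perp\sqrt\alpha^{-1}|_{\operatorname{Ker} R}$ and $P_R^\perp\alpha P_R^\perp|_{\operatorname{Ker} R}$ are mutually inverse). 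You instead take the Schur complement of the invertible block $\rho\mathbf 1_{H_{\mathrm q}}$ (for $\rho\neq 0$) to reduce the determinant on $\operatorname{Ker} B=V_1\oplus H_{\mathrm q}$ to a determinant on $V_1=\operatorname{Ker}\tilde R$, and then transport $V_1$ to $\operatorname{Ker} R$ through the basis $f_j=\sqrt\alpha u_j$ using the Gram-determinant identity $\det T=(\det G)^{-1}\det[(f_i,Tf_j)]$; the $\sqrt\alpha$'s collapse correctly through the inner product exactly as you describe, the Gram matrix becomes $P_R^\perp\alpha P_R^\perp|_{\operatorname{Ker} R}$, and the numerator matrix becomes $P_R^\perp C(\rho,0)P_R^\perp|_{\operatorname{Ker} R}$. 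Both proofs do the same two things (block-triangularize in the $V_1\oplus H_{\mathrm q}$ splitting, then change from $V_1$ to $\operatorname{Ker} R$), but your Schur-complement-plus-Gram-matrix toolkit is shorter and avoids most of the projection bookkeeping, at the cost of not producing the explicit inverse between the two restricted operators that the paper records along the way. Your deductions of statements 1, 3, 4 and 5 from the determinant identity then agree with the paper's. One step that does not go through as written: you say the converse of the opening ``iff'' (that $N_R<N$ implies $P_B^\perp\Omega P_B^\perp\neq 0$) will ``fall out of the determinant formula,'' but the formula only determines the characteristic polynomial of $\Omega_1$, and it is perfectly consistent with $\Omega_1\equiv 0$ (giving $\rho^{2N-N_R}$ on both sides); indeed the same gap is present in the paper's own proof, whose asserted equivalence $P_B^\perp\Omega P_B^\perp=0\iff P_R^\perp C(\rho,0)P_R^\perp\equiv 0$ fails if one takes, say, $\alpha=\mathbf 1$, $\eta=R$ with $0<\operatorname{rank} R<N$, and $\theta$ with $P_R^\perp\theta P_R^\perp=0$, a case in which one checks directly that $P_B^\perp\Omega P_B^\perp=0$ even though $N_R<N$.
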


\begin{proof}
Suppose that $P_{B}^{\bot}\Omega P_{B}^{\bot}\not =0$. From block matrix
representation of $B$ and $\Omega$ in (\ref{ceveqs2}) with respect to the
decomposition $%
\mathbb{C}
^{2N}=%
\mathbb{C}
^{N}\oplus%
\mathbb{C}
^{N}$, it follows that $P_{B}^{\bot}$, $P_{B}^{\bot}\Omega P_{B}^{\bot}$ have
the block matrix representation%
\begin{equation}
P_{B}^{\bot}=\left[
\begin{array}
[c]{ll}%
P_{\mathsf{\tilde{R}}}^{\bot} & 0\\
0 & \mathbf{1}%
\end{array}
\right]  ,\text{ }P_{B}^{\bot}\Omega P_{B}^{\bot}=\mathrm{i}%
\begin{bmatrix}
-2P_{\mathsf{\tilde{R}}}^{\bot}\sqrt{\alpha}^{-1}\theta\sqrt{\alpha}%
^{-1}P_{\mathsf{\tilde{R}}}^{\bot} & -P_{\mathsf{\tilde{R}}}^{\bot}%
\sqrt{\alpha}^{-1}\sqrt{\eta}\\
\sqrt{\eta}\sqrt{\alpha}^{-1}P_{\mathsf{\tilde{R}}}^{\bot} & 0
\end{bmatrix}
,\label{brepproj_1}%
\end{equation}
where $\mathbf{1}$ denotes the identity operator on\ $%
\mathbb{C}
^{N}$, $P_{\mathsf{\tilde{R}}}^{\bot}$ denotes the orthogonal projection onto
$\operatorname{Ker}\mathsf{\tilde{R}}$, and $\mathsf{\tilde{R}}=K_{\mathrm{p}%
}RK_{\mathrm{p}}^{\mathrm{T}}$ with $K_{\mathrm{p}}=\sqrt{\alpha}^{-1}$. Also,
as $B=B^{\ast}=B^{\mathrm{T}}$ this implies $P_{B}^{\bot}=\left(  P_{B}^{\bot
}\right)  ^{\ast}=\left(  P_{B}^{\bot}\right)  ^{\mathrm{T}}$ so that since
$\Omega=\Omega^{\ast}=-\Omega^{\mathrm{T}}$ then%
\[
P_{B}^{\bot}\Omega P_{B}^{\bot}=\left(  P_{B}^{\bot}\Omega P_{B}^{\bot
}\right)  ^{\ast}=-\left(  P_{B}^{\bot}\Omega P_{B}^{\bot}\right)
^{\mathrm{T}}.
\]
From these facts we have, $0$ is an eigenvalue of $P_{B}^{\bot}\Omega
P_{B}^{\bot}$ and it's nonzero eigenvalues are real and come in $\pm$ pairs
with equal multiplicity. In particular,
\[
0\in\sigma\left(  P_{B}^{\bot}\Omega P_{B}^{\bot}\right)  =-\sigma\left(
P_{B}^{\bot}\Omega P_{B}^{\bot}\right)  \subseteq%
\mathbb{R}
\text{.}%
\]
It also follows immediately that $\Omega_{1}$ is self-adjoint with respect to
the inner product $\left(  \cdot,\cdot\right)  $ since $P_{B}^{\bot}\Omega
P_{B}^{\bot}$ is.

Next, we will prove the operator identity $\sqrt{\alpha}^{-1}P_{\mathsf{\tilde
{R}}}^{\bot}\sqrt{\alpha}^{-1}=P_{R}^{\bot}\sqrt{\alpha}^{-1}P_{\mathsf{\tilde
{R}}}^{\bot}\sqrt{\alpha}^{-1}P_{R}^{\bot}$, where $P_{R}^{\bot}$ denotes the
orthogonal projection onto $\operatorname{Ker}R$. First, since
$P_{\mathsf{\tilde{R}}}^{\bot}$ and $P_{R}^{\bot}$ are orthogonal projections
onto $\operatorname{Ker}\mathsf{\tilde{R}}$ and $\operatorname{Ker}R$ for the
real symmetric matrices $\mathsf{\tilde{R}}$ and $R$, respectively, then
$\left(  P_{\mathsf{\tilde{R}}}^{\bot}\right)  ^{2}=P_{\mathsf{\tilde{R}}%
}^{\bot}=\left(  P_{\mathsf{\tilde{R}}}^{\bot}\right)  ^{\ast}$, $\left(
P_{R}^{\bot}\right)  ^{2}=P_{R}^{\bot}=\left(  P_{R}^{\bot}\right)  $,
$\operatorname{Ran}P_{\mathsf{\tilde{R}}}^{\bot}=\operatorname{Ker}%
\mathsf{\tilde{R}}$ $=\sqrt{\alpha}\operatorname{Ker}R$, $\operatorname{Ker}%
P_{\mathsf{\tilde{R}}}^{\bot}=\operatorname{Ran}\mathsf{\tilde{R}}$ $=\left(
\operatorname{Ker}\mathsf{\tilde{R}}\right)  ^{\bot}=\sqrt{\alpha}%
^{-1}\operatorname{Ran}R$, $\operatorname{Ran}P_{R}^{\bot}=\operatorname{Ker}%
R$, and $\operatorname{Ker}P_{R}^{\bot}=\operatorname{Ran}R=\left(
\operatorname{Ker}R\right)  ^{\bot}$. These facts imply that $P_{R}^{\bot
}\sqrt{\alpha}^{-1}P_{\mathsf{\tilde{R}}}^{\bot}=\sqrt{\alpha}^{-1}%
P_{\mathsf{\tilde{R}}}^{\bot}$ and so by taking complex conjugate transpose we
have $P_{\mathsf{\tilde{R}}}^{\bot}\sqrt{\alpha}^{-1}P_{R}^{\bot
}=P_{\mathsf{\tilde{R}}}^{\bot}\sqrt{\alpha}^{-1}$. This implies $\sqrt
{\alpha}^{-1}P_{\mathsf{\tilde{R}}}^{\bot}\sqrt{\alpha}^{-1}=P_{R}^{\bot}%
\sqrt{\alpha}^{-1}P_{\mathsf{\tilde{R}}}^{\bot}\sqrt{\alpha}^{-1}=\sqrt
{\alpha}^{-1}P_{\mathsf{\tilde{R}}}^{\bot}\sqrt{\alpha}^{-1}P_{R}^{\bot}$.
Multiply this identity by $P_{R}^{\bot}$ on the left implies that
$\sqrt{\alpha}^{-1}P_{\mathsf{\tilde{R}}}^{\bot}\sqrt{\alpha}^{-1}=P_{R}%
^{\bot}\sqrt{\alpha}^{-1}P_{\mathsf{\tilde{R}}}^{\bot}\sqrt{\alpha}^{-1}%
P_{R}^{\bot}$, which is the desired identity.

Next, it follows from this and the fact that $\sqrt{\alpha}^{-1}%
:\operatorname{Ker}\mathsf{\tilde{R}}\rightarrow\operatorname{Ker}R$ is an
invertible map with inverse $\sqrt{\alpha}:\operatorname{Ker}R\rightarrow
\operatorname{Ker}\mathsf{\tilde{R}}$ with that%
\begin{gather*}
P_{\mathsf{\tilde{R}}}^{\bot}K_{\mathrm{p}}C\left(  \rho,0\right)
K_{\mathrm{p}}^{\mathrm{T}}P_{\mathsf{\tilde{R}}}^{\bot}=\left(
P_{\mathsf{\tilde{R}}}^{\bot}\sqrt{\alpha}^{-1}C\left(  \rho,0\right)
\sqrt{\alpha}^{-1}P_{\mathsf{\tilde{R}}}^{\bot}\right) \\
=P_{\mathsf{\tilde{R}}}^{\bot}\sqrt{\alpha}^{-1}P_{R}^{\bot}C\left(
\rho,0\right)  P_{R}^{\bot}\sqrt{\alpha}^{-1}P_{\mathsf{\tilde{R}}}^{\bot
}=P_{\mathsf{\tilde{R}}}^{\bot}\sqrt{\alpha}^{-1}P_{R}^{\bot}\left(
P_{R}^{\bot}C\left(  \rho,0\right)  P_{R}^{\bot}\right)  |_{\operatorname{Ker}%
R}P_{R}^{\bot}\sqrt{\alpha}^{-1}P_{\mathsf{\tilde{R}}}^{\bot}%
\end{gather*}
where $\left(  P_{R}^{\bot}C\left(  \rho,0\right)  P_{R}^{\bot}\right)
|_{\operatorname{Ker}R}$ is the restrictions of $P_{R}^{\bot}C\left(
\rho,0\right)  P_{R}^{\bot}$ to $\operatorname{Ker}R$.

We will now prove that $\sqrt{\alpha}^{-1}P_{\mathsf{\tilde{R}}}^{\bot}%
\sqrt{\alpha}^{-1}:\operatorname{Ker}R\rightarrow\operatorname{Ker}R$ is
invertible with inverse $P_{R}^{\bot}\alpha P_{R}^{\bot}:\operatorname{Ker}%
R\rightarrow\operatorname{Ker}R$. First, $\sqrt{\alpha}^{-1}%
:\operatorname{Ker}\mathsf{\tilde{R}}\rightarrow\operatorname{Ker}R$ is an
invertible map with inverse $\sqrt{\alpha}:\operatorname{Ker}R\rightarrow
\operatorname{Ker}\mathsf{\tilde{R}}$. Second, we have
\begin{align*}
\operatorname{rank}\left(  \sqrt{\alpha}^{-1}P_{\mathsf{\tilde{R}}}^{\bot
}\sqrt{\alpha}^{-1}\right)   &  =\operatorname{rank}P_{\mathsf{\tilde{R}}%
}^{\bot}=\dim\operatorname{Ran}P_{\mathsf{\tilde{R}}}^{\bot}=\dim\left(
\operatorname{Ker}\mathsf{\tilde{R}}\right) \\
&  =\dim\left(  \sqrt{\alpha}\operatorname{Ker}R\right)  =\dim
\operatorname{Ker}R.
\end{align*}
It follows immediately from this that $\sqrt{\alpha}^{-1}P_{\mathsf{\tilde{R}%
}}^{\bot}\sqrt{\alpha}^{-1}:\operatorname{Ker}R\rightarrow:\operatorname{Ker}%
R$ is invertible. Next, it follows that $P_{\mathsf{\tilde{R}}}^{\bot}%
\sqrt{\alpha}P_{R}^{\bot}=\sqrt{\alpha}P_{R}^{\bot}$, since
$\operatorname{Ran}P_{\mathsf{\tilde{R}}}^{\bot}=\operatorname{Ker}%
\mathsf{\tilde{R}}$ $=\sqrt{\alpha}\operatorname{Ker}R=\operatorname{Ran}%
\sqrt{\alpha}P_{R}^{\bot}$ and $%
\mathbb{C}
^{N}=\operatorname{Ran}P_{R}^{\bot}\oplus\operatorname{Ker}P_{R}^{\bot}$, and
hence taking the complex conjugate transpose yields $P_{R}^{\bot}\sqrt{\alpha
}P_{\mathsf{\tilde{R}}}^{\bot}=$ $P_{R}^{\bot}\sqrt{\alpha}$ so that together
with the fact that $P_{R}^{\bot}\sqrt{\alpha}^{-1}P_{\mathsf{\tilde{R}}}%
^{\bot}=\sqrt{\alpha}^{-1}P_{\mathsf{\tilde{R}}}^{\bot}$ we have
\begin{gather*}
P_{R}^{\bot}\alpha P_{R}^{\bot}\sqrt{\alpha}^{-1}P_{\mathsf{\tilde{R}}}^{\bot
}\sqrt{\alpha}^{-1}=P_{R}^{\bot}\alpha\left(  \sqrt{\alpha}^{-1}%
P_{\mathsf{\tilde{R}}}^{\bot}\right)  \sqrt{\alpha}^{-1}\\
=P_{R}^{\bot}\sqrt{\alpha}P_{\mathsf{\tilde{R}}}^{\bot}\sqrt{\alpha}%
^{-1}=\left(  P_{R}^{\bot}\sqrt{\alpha}\right)  \sqrt{\alpha}^{-1}=P_{R}%
^{\bot},
\end{gather*}
which implies the inverse of $\sqrt{\alpha}^{-1}P_{\mathsf{\tilde{R}}}^{\bot
}\sqrt{\alpha}^{-1}:\operatorname{Ker}R\rightarrow:\operatorname{Ker}R$ is
$P_{R}^{\bot}\alpha P_{R}^{\bot}:\operatorname{Ker}R\rightarrow
\operatorname{Ker}R$.

Now suppose $\rho\not =0$. Then by the block representations of $\rho
\mathbf{1}-A\left(  0\right)  $ and $P_{B}^{\bot}$ in Proposition \ref{ppfsp}
and (\ref{brepproj_1}), we have%
\begin{gather}
\left[
\begin{array}
[c]{cc}%
P_{R}^{\bot}\alpha P_{R}^{\bot}\sqrt{\alpha}^{-1}P_{\mathsf{\tilde{R}}}^{\bot}
& 0\\
-\rho^{-1}\mathrm{i}\Phi P_{\mathsf{\tilde{R}}}^{\bot} & \mathbf{1}%
\end{array}
\right]  =\left[
\begin{array}
[c]{cc}%
P_{R}^{\bot}\alpha\sqrt{\alpha}^{-1}P_{\mathsf{\tilde{R}}}^{\bot} & 0\\
-\rho^{-1}\mathrm{i}\Phi P_{\mathsf{\tilde{R}}}^{\bot} & \mathbf{1}%
\end{array}
\right] \nonumber\\
=\left[
\begin{array}
[c]{cc}%
P_{R}^{\bot}\sqrt{\alpha}P_{\mathsf{\tilde{R}}}^{\bot} & 0\\
0 & \mathbf{1}%
\end{array}
\right]  \left[
\begin{array}
[c]{cc}%
\mathbf{1} & 0\\
-\rho^{-1}\mathrm{i}\Phi P_{\mathsf{\tilde{R}}}^{\bot} & \mathbf{1}%
\end{array}
\right] \nonumber
\end{gather}
and so defining%
\begin{align}
X  &  =\left[
\begin{array}
[c]{cc}%
\mathbf{1} & \rho^{-1}\mathrm{i}P_{\mathsf{\tilde{R}}}^{\bot}\Phi^{\mathrm{T}%
}\\
0 & \mathbf{1}%
\end{array}
\right]  \left[
\begin{array}
[c]{cc}%
P_{\mathsf{\tilde{R}}}^{\bot}\sqrt{\alpha}^{-1}P_{R}^{\bot} & 0\\
0 & \mathbf{1}%
\end{array}
\right]  ,\text{ }\label{FacIdOmega1a}\\
Y  &  =\left[
\begin{array}
[c]{cc}%
P_{R}^{\bot}\sqrt{\alpha}P_{\mathsf{\tilde{R}}}^{\bot} & 0\\
0 & \mathbf{1}%
\end{array}
\right]  \left[
\begin{array}
[c]{cc}%
\mathbf{1} & 0\\
-\rho^{-1}\mathrm{i}\Phi P_{\mathsf{\tilde{R}}}^{\bot} & \mathbf{1}%
\end{array}
\right]  ,\nonumber
\end{align}
we have%
\begin{gather}
\rho P_{B}^{\bot}-P_{B}^{\bot}\Omega P_{B}^{\bot}=P_{B}^{\bot}\left[
\rho\mathbf{1}-A\left(  0\right)  \right]  P_{B}^{\bot}=\label{FacIdOmega1}\\
=\left[
\begin{array}
[c]{cc}%
P_{\mathsf{\tilde{R}}}^{\bot} & \rho^{-1}\mathrm{i}P_{\mathsf{\tilde{R}}%
}^{\bot}\Phi^{\mathrm{T}}\\
0 & \mathbf{1}%
\end{array}
\right]  \left[
\begin{array}
[c]{cc}%
\rho^{-1}P_{\mathsf{\tilde{R}}}^{\bot} & 0\\
0 & \rho\mathbf{1}%
\end{array}
\right] \nonumber\\
\times\left[
\begin{array}
[c]{cc}%
P_{\mathsf{\tilde{R}}}^{\bot}K_{\mathrm{p}}C\left(  \rho,0\right)
K_{\mathrm{p}}^{\mathrm{T}}P_{\mathsf{\tilde{R}}}^{\bot} & 0\\
0 & \mathbf{1}%
\end{array}
\right]  \left[
\begin{array}
[c]{cc}%
P_{\mathsf{\tilde{R}}}^{\bot} & 0\\
-\rho^{-1}\mathrm{i}\Phi P_{\mathsf{\tilde{R}}}^{\bot} & \mathbf{1}%
\end{array}
\right] \nonumber\\
=X\left[
\begin{array}
[c]{cc}%
\rho^{-1}P_{R}^{\bot} & 0\\
0 & \rho\mathbf{1}%
\end{array}
\right]  \left[
\begin{array}
[c]{cc}%
\left(  P_{R}^{\bot}C\left(  \rho,0\right)  P_{R}^{\bot}\right)
|_{\operatorname{Ker}R}\left.  P_{R}^{\bot}\alpha P_{R}^{\bot}\right\vert
_{\operatorname{Ker}R}^{-1} & 0\\
0 & \mathbf{1}%
\end{array}
\right]  Y,\nonumber
\end{gather}
where the operator $P_{R}^{\bot}\sqrt{\alpha}^{-1}P_{\mathsf{\tilde{R}}}%
^{\bot}:\operatorname{Ker}\mathsf{\tilde{R}}\rightarrow\operatorname{Ker}R$ is
invertible whose inverse is the operator $P_{\mathsf{\tilde{R}}}^{\bot}%
\sqrt{\alpha}P_{R}^{\bot}:\operatorname{Ker}R\rightarrow\operatorname{Ker}%
\mathsf{\tilde{R}}$. The latter facts follow from facts that%
\begin{align*}
P_{R}^{\bot}\sqrt{\alpha}^{-1}P_{\mathsf{\tilde{R}}}^{\bot}\left(
P_{\mathsf{\tilde{R}}}^{\bot}\sqrt{\alpha}P_{R}^{\bot}\right)   &
=P_{R}^{\bot}\sqrt{\alpha}^{-1}\left(  P_{\mathsf{\tilde{R}}}^{\bot}%
\sqrt{\alpha}P_{R}^{\bot}\right)  =P_{R}^{\bot}\sqrt{\alpha}^{-1}\sqrt{\alpha
}P_{R}^{\bot}=P_{R}^{\bot},\\
\left(  P_{\mathsf{\tilde{R}}}^{\bot}\sqrt{\alpha}P_{R}^{\bot}\right)
P_{R}^{\bot}\sqrt{\alpha}^{-1}P_{\mathsf{\tilde{R}}}^{\bot}  &
=P_{\mathsf{\tilde{R}}}^{\bot}\sqrt{\alpha}\left(  P_{R}^{\bot}\sqrt{\alpha
}^{-1}P_{\mathsf{\tilde{R}}}^{\bot}\right)  =P_{\mathsf{\tilde{R}}}^{\bot
}\sqrt{\alpha}\sqrt{\alpha}^{-1}P_{\mathsf{\tilde{R}}}^{\bot}%
=P_{\mathsf{\tilde{R}}}^{\bot},
\end{align*}
which follow from the facts $P_{R}^{\bot}$ and $P_{\mathsf{\tilde{R}}}^{\bot}$
are projections onto $\operatorname{Ker}R$ and $\operatorname{Ker}%
\mathsf{\tilde{R}}$, respectively, together with identities $P_{\mathsf{\tilde
{R}}}^{\bot}\sqrt{\alpha}P_{R}^{\bot}=\sqrt{\alpha}P_{R}^{\bot}$ and
$P_{R}^{\bot}\sqrt{\alpha}^{-1}P_{\mathsf{\tilde{R}}}^{\bot}=\sqrt{\alpha
}^{-1}P_{\mathsf{\tilde{R}}}^{\bot}$. It follows immediately, from the
identities (\ref{FacIdOmega1}), (\ref{FacIdOmega1a}) and the fact that
$P_{R}^{\bot}\sqrt{\alpha}^{-1}P_{\mathsf{\tilde{R}}}^{\bot}%
:\operatorname{Ker}\mathsf{\tilde{R}}\rightarrow\operatorname{Ker}R$ and
$P_{\mathsf{\tilde{R}}}^{\bot}\sqrt{\alpha}P_{R}^{\bot}:\operatorname{Ker}%
R\rightarrow\operatorname{Ker}\mathsf{\tilde{R}}$ are inverses of each other,
that for all $\rho\in%
\mathbb{C}
,$
\begin{align*}
\det\left(  \rho\mathbf{1}-\Omega_{1}\right)   &  =\rho^{N_{R}}\det\left[
\left.  P_{R}^{\bot}C\left(  \rho,0\right)  P_{R}^{\bot}\right\vert
_{\operatorname{Ker}R}\left.  P_{R}^{\bot}\alpha P_{R}^{\bot}\right\vert
_{\operatorname{Ker}R}^{-1}\right] \\
&  =\rho^{N_{R}}\frac{\det\left[  \left.  P_{R}^{\bot}C\left(  \rho,0\right)
P_{R}^{\bot}\right\vert _{\operatorname{Ker}R}\right]  }{\det\left(  \left.
P_{R}^{\bot}\alpha P_{R}^{\bot}\right\vert _{\operatorname{Ker}R}\right)  }.
\end{align*}
In particular, this implies that if $\rho_{0}\not =0$ then $\rho=\rho_{0}$ is
an eigenvalue of $\Omega_{1}$ of multiplicity $m$ if and only if $\rho
=\rho_{0}$ is a zero of $\det\left[  \left(  P_{R}^{\bot}C\left(
\rho,0\right)  P_{R}^{\bot}\right)  |_{\operatorname{Ker}R}\right]  $ of
multiplicity $m$. And $\rho=0$ is an eigenvalue of $\Omega_{1}$ of
multiplicity $m$ if and only if $\rho=0$ is a zero of $\det\left[  \left(
P_{R}^{\bot}C\left(  \rho,0\right)  P_{R}^{\bot}\right)  |_{\operatorname{Ker}%
R}\right]  $ of multiplicity $m-N_{R}\geq0$. This proves the first three
statements of this theorem.

Now it follows that $\dim\operatorname{Ker}\left(  \Omega_{1}\right)  =N_{R}$
if and only if $\rho=0$ is not a zero of $\det\left[  \left(  P_{R}^{\bot
}C\left(  \rho,0\right)  P_{R}^{\bot}\right)  |_{\operatorname{Ker}R}\right]
$. And if the former is true then
\begin{gather*}
2N-N_{R}=\dim\operatorname{Ker}B=\dim\operatorname{Ker}\left(  \Omega
_{1}\right)  +\dim\operatorname{Ran}\Omega_{1}=N_{R}+\dim\operatorname{Ran}%
\left(  P_{B}^{\bot}\Omega P_{B}^{\bot}\right)  ,\\
\dim\operatorname{Ran}\left(  P_{B}^{\bot}\Omega P_{B}^{\bot}\right)
=2\left(  N-N_{R}\right)  .
\end{gather*}
Thus to complete the proof of the fourth statement of this theorem we need
only prove that\ $\rho=0$ is a zero of $\det\left[  \left(  P_{R}^{\bot
}C\left(  \rho,0\right)  P_{R}^{\bot}\right)  |_{\operatorname{Ker}R}\right]
$ if and only if $\operatorname{Ker}R\cap\operatorname{Ker}\eta\not =\left\{
0\right\}  $. If $x\in\operatorname{Ker}R\cap\operatorname{Ker}\eta$ with
$x\not =0$ then this implies $0=\det\left[  \left(  P_{R}^{\bot}\eta
P_{R}^{\bot}\right)  |_{\operatorname{Ker}R}\right]  =\det\left[  \left(
P_{R}^{\bot}C\left(  0,0\right)  P_{R}^{\bot}\right)  |_{\operatorname{Ker}%
R}\right]  $ and hence $\rho=0$ is a zero of $\det\left[  \left(  P_{R}^{\bot
}C\left(  \rho,0\right)  P_{R}^{\bot}\right)  |_{\operatorname{Ker}R}\right]
$. Conversely, if $\rho=0$ is a zero of $\det\left[  \left(  P_{R}^{\bot
}C\left(  \rho,0\right)  P_{R}^{\bot}\right)  |_{\operatorname{Ker}R}\right]
$ then
\[
0=\det\left[  \left(  P_{R}^{\bot}C\left(  0,0\right)  P_{R}^{\bot}\right)
|_{\operatorname{Ker}R}\right]  =\det\left[  \left(  P_{R}^{\bot}\eta
P_{R}^{\bot}\right)  |_{\operatorname{Ker}R}\right]  .
\]
This implies there exists $x\not =0$ with $x\in\operatorname{Ker}R$ such that
$0=P_{R}^{\bot}\eta P_{R}^{\bot}x=P_{R}^{\bot}\eta x$ implying that $\eta
x\in\operatorname{Ran}R$. But $\operatorname{Ran}R$ is orthogonal to
$\operatorname{Ker}R$ which implies $0=\left(  \eta x,x\right)  =\left\Vert
\sqrt{\eta}x\right\Vert ^{2}$ implying $\eta x=0$ so that $\operatorname{Ker}%
R\cap\operatorname{Ker}\eta$ with $x\not =0$. This proves the fourth
statement. The fifth statement now follows immediately from the first four
statements by duality.

Finally, it follows that $P_{B}^{\bot}\Omega P_{B}^{\bot}=0$ if and only if
$P_{R}^{\bot}C\left(  \rho,0\right)  P_{R}^{\bot}\equiv0$ if and only if
$P_{R}^{\bot}\alpha P_{R}^{\bot}=0$, where the latter equivalence follows from
the fact that since $\alpha>0$ then $P_{R}^{\bot}\alpha P_{R}^{\bot}=0$ if and
only $P_{R}^{\bot}=0$ if and only if $N_{R}=N$. This completes the proof of
the theorem.
\end{proof}

The following remark provides an interpret of Proposition (\ref{pspredl2}) in
terms of Lagrangian systems within our Lagrangian framework introduced in
\cite{FigWel2}, albeit slightly more abstractly as it is defined not on the
Euclidean space $%
\mathbb{R}
^{N}$, but on the finite-dimensional vector space $\operatorname{Ker}R\cap%
\mathbb{R}
^{N}$ over $%
\mathbb{R}
$ equipped with the dot product whose complexification is the vector space
$\operatorname{Ker}R$ over $%
\mathbb{C}
$ with standard inner product. Recall from Sec. \ref{sinmodel}, in our model
$\operatorname{Ker}R$ was associated with the lossless component of the
two-component composite system with a high-loss and a lossless component.

\begin{remark}
[Interpretation of the limiting frequencies]\label{rspredl2}Suppose that
$N_{R}<N$, i.e., $\operatorname{Ker}R\not =\left\{  0\right\}  $. Let
$P_{R}^{\bot}$ denote $N\times N$ matrix representing on $%
\mathbb{C}
^{N}$ the orthogonal projection onto $\operatorname{Ker}R$, in particular,
because $R$ is a real symmetric matrix this implies that $\left(  P_{R}^{\bot
}\right)  ^{\ast}=\left(  P_{R}^{\bot}\right)  ^{\mathrm{T}}=P_{R}^{\bot}$.
Now define on the vector space $\operatorname{Ker}R\cap%
\mathbb{R}
^{N}$ over $%
\mathbb{R}
$ equipped with the dot product, the Lagrangian%
\[
\mathcal{L}_{\operatorname{Ker}R}=\mathcal{L}_{\operatorname{Ker}R}\left(
Q,\dot{Q}\right)  =\frac{1}{2}\left[
\begin{array}
[c]{l}%
\dot{Q}\\
Q
\end{array}
\right]  ^{\mathrm{T}}\left[
\begin{array}
[c]{ll}%
\left.  P_{R}^{\bot}\alpha P_{R}^{\bot}\right\vert _{\operatorname{Ker}R} &
\left.  P_{R}^{\bot}\theta P_{R}^{\bot}\right\vert _{\operatorname{Ker}R}\\
\left.  P_{R}^{\bot}\theta^{\mathrm{T}}P_{R}^{\bot}\right\vert
_{\operatorname{Ker}R} & -\left.  P_{R}^{\bot}\eta P_{R}^{\bot}\right\vert
_{\operatorname{Ker}R}%
\end{array}
\right]  \left[
\begin{array}
[c]{l}%
\dot{Q}\\
Q
\end{array}
\right]  .
\]
Then all the assumptions of our Lagrangian framework in \cite{FigWel2} are
satisfied, namely, the operators satisfy
\begin{gather*}
\left.  P_{R}^{\bot}\alpha P_{R}^{\bot}\right\vert _{\operatorname{Ker}%
R}^{\mathrm{T}}=\left.  P_{R}^{\bot}\alpha P_{R}^{\bot}\right\vert
_{\operatorname{Ker}R}>0,\text{ }\left.  P_{R}^{\bot}\eta P_{R}^{\bot
}\right\vert _{\operatorname{Ker}R}^{\mathrm{T}}=\left.  P_{R}^{\bot}\eta
P_{R}^{\bot}\right\vert _{\operatorname{Ker}R}\geq0,\text{ }\\
\left.  P_{R}^{\bot}\theta P_{R}^{\bot}\right\vert _{\operatorname{Ker}%
R}^{\mathrm{T}}=-\left.  P_{R}^{\bot}\theta P_{R}^{\bot}\right\vert
_{\operatorname{Ker}R}.
\end{gather*}
This conservative Lagrangian system has the equations of motion given by the
Euler-Lagrange equations%
\[
\left.  P_{R}^{\bot}\alpha P_{R}^{\bot}\right\vert _{\operatorname{Ker}R}%
\ddot{Q}+2\left.  P_{R}^{\bot}\theta P_{R}^{\bot}\right\vert
_{\operatorname{Ker}R}\dot{Q}+\left.  P_{R}^{\bot}\eta P_{R}^{\bot}\right\vert
_{\operatorname{Ker}R}Q=0.
\]
The eigenmodes $Q\left(  t\right)  =qe^{-\mathrm{i}\rho t}$, $0\not =%
q\in\operatorname{Ker}R$ (with frequency $\rho$) of this Lagrangian system
correspond to the eigenpairs $\rho$, $q$ of the quadratic matrix pencil
\[
\left.  P_{R}^{\bot}C\left(  \rho,0\right)  P_{R}^{\bot}\right\vert
_{\operatorname{Ker}R}=\rho^{2}\left.  P_{R}^{\bot}\alpha P_{R}^{\bot
}\right\vert _{\operatorname{Ker}R}+2\left.  P_{R}^{\bot}\theta P_{R}^{\bot
}\right\vert _{\operatorname{Ker}R}\mathrm{i}\rho-\left.  P_{R}^{\bot}\eta
P_{R}^{\bot}\right\vert _{\operatorname{Ker}R}.
\]
Therefore, the set of eigenvalues of the pencil $\left.  P_{R}^{\bot}C\left(
\rho,0\right)  P_{R}^{\bot}\right\vert _{\operatorname{Ker}R}$ is the set%
\[
\sigma\left(  \left.  P_{R}^{\bot}C\left(  \cdot,0\right)  P_{R}^{\bot
}\right\vert _{\operatorname{Ker}R}\right)  =\left\{  \rho\in%
\mathbb{C}
:\det\left[  \left.  P_{R}^{\bot}C\left(  \rho,0\right)  P_{R}^{\bot
}\right\vert _{\operatorname{Ker}R}\right]  =0\right\}  ,
\]
which are exactly the frequencies $\rho$ of the conservative Lagrangian system
with Lagrangian $\mathcal{L}_{\operatorname{Ker}R}$.
\end{remark}

\section{Detailed statements of main results and their proofs\label{smainr}}

We provide here detailed statements of main results and their proofs.

\subsection{Modal dichotomy-duality\label{sevmd}}

In this section we will recall some results in \cite{FigWel2} on the modal
dichotomy on the spectrum $\sigma\left(  A\left(  \beta\right)  \right)
=\sigma\left(  C\left(  \cdot,\beta\right)  \right)  $ of the system operator
$A\left(  \beta\right)  =\Omega-\mathrm{i}\beta B$. We will then apply the
duality to achieve deeper results on this dichotomy which we describe below by
considering the spectrum $\sigma\left(  A^{\flat}\left(  \beta\right)
\right)  =\sigma\left(  C^{\flat}\left(  \cdot,\beta\right)  \right)  $ of the
dual system operator $A^{\flat}\left(  \beta\right)  =\Omega^{\flat
}-\mathrm{i}\beta B^{\flat}$ (whenever the duality condition \ref{cnddl} holds).

Denote the eigenvalues of $B$, listed in increasing order and indexed
according to their respective multiplicities, by $0=b_{0}<b_{1}\leq\ldots\leq
b_{N_{R}}$. In particular,
\begin{equation}
\sigma\left(  B\right)  =\left\{  b_{0},b_{1},\ldots,b_{N_{R}}\right\}  ,
\end{equation}
and by Proposition \ref{pspredl} we have
\begin{equation}
\sigma\left(  \alpha^{-1}R\right)  \setminus\{0\}=\left\{  b_{1}%
,\ldots,b_{N_{R}}\right\}  ,\text{ \ \ }b_{\min}:=b_{1}=\min\left[
\sigma\left(  \alpha^{-1}R\right)  \setminus\left\{  0\right\}  \right]  .
\end{equation}
Denote the largest eigenvalue of $\Omega$ by $\omega_{\max}$. By Proposition
\ref{pspredl} we have
\begin{equation}
\omega_{\max}=\left\Vert \Omega\right\Vert =\max\sigma\left(  C\left(
\cdot,0\right)  \right)  .
\end{equation}
Also, denote the discs centered at the eigenvalues of $-\mathrm{i}\beta B$
with radius $\omega_{\max}$ by%
\begin{equation}
D_{j}\left(  \beta\right)  =\left\{  \zeta\in%
\mathbb{C}
:\left\vert \zeta-(-\mathrm{i}\beta b_{j})\right\vert \leq\omega_{\max
}\right\}  ,\qquad0\leq j\leq N_{R}.
\end{equation}
Two\ subsets of the spectrum $\sigma\left(  A\left(  \beta\right)  \right)  $
which play a key role below are%
\begin{align}
\sigma_{0}\left(  A\left(  \beta\right)  \right)   &  =\sigma\left(  A\left(
\beta\right)  \right)  \cap D_{0}\left(  \beta\right)  ,\\
\sigma_{1}\left(  A\left(  \beta\right)  \right)   &  =\sigma\left(  A\left(
\beta\right)  \right)  \cap\bigcup\nolimits_{j=1}^{N_{R}}D_{j}\left(
\beta\right)  .\nonumber
\end{align}

The following result on eigenvalue bounds and clustering (from \cite{FigWel2})
is key to proving the modal dichotomy.

\begin{proposition}
[eigenvalue bounds \& clustering]\label{pevbd}For all $\beta\geq0$, the
following statements are true:

\begin{enumerate}
\item The eigenvalues of the system operator $A\left(  \beta\right)  $ lie in
the union of the closed discs whose centers are the eigenvalues of
$-\mathrm{i}\beta B$ with radius $\omega_{\max}$, that is,%
\begin{equation}
\sigma\left(  A\left(  \beta\right)  \right)  =\sigma_{0}\left(  A\left(
\beta\right)  \right)  \cup\sigma_{1}\left(  A\left(  \beta\right)  \right)  .
\end{equation}
In addition, these sets are symmetric with respect to the imaginary axis,
i.e.,%
\[
\sigma_{0}\left(  A\left(  \beta\right)  \right)  =-\overline{\sigma
_{0}\left(  A\left(  \beta\right)  \right)  },\qquad\sigma_{1}\left(  A\left(
\beta\right)  \right)  =-\overline{\sigma_{1}\left(  A\left(  \beta\right)
\right)  }.
\]

\item If $w\not =0$ and $A\left(  \beta\right)  w=\zeta w$ then%
\begin{equation}
\operatorname{Re}\zeta=\frac{\left(  w,\Omega w\right)  }{\left(  w,w\right)
},\qquad-\operatorname{Im}\zeta=\beta\frac{\left(  w,Bw\right)  }{\left(
w,w\right)  }\text{.}%
\end{equation}
In particular,
\begin{equation}
\left\vert \operatorname{Re}\zeta\right\vert \leq\omega_{\max},\qquad
0\leq-\operatorname{Im}\zeta\leq\beta\left\Vert B\right\Vert ,
\end{equation}
where%
\begin{equation}
\left\Vert B\right\Vert =\max\sigma\left(  B\right)  =\max\sigma\left(
\alpha^{-1}R\right)  .
\end{equation}

\end{enumerate}
\end{proposition}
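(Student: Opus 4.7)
The plan is to establish statement 2 first by a one-line Rayleigh-quotient computation, and then deduce the disc containment in statement 1 from a Bauer--Fike argument, with the imaginary-axis symmetry coming from Proposition \ref{pssym}.

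For statement 2, suppose $A(\beta)w=\zeta w$ with $w\neq 0$; after normalising we may assume $(w,w)=1$. Using the decomposition $A(\beta)=\Omega-\mathrm{i}\beta B$, I take the inner product with $w$ to get $\zeta=(w,\Omega w)-\mathrm{i}\beta(w,Bw)$. Since $\Omega^{\ast}=\Omega$ and $B^{\ast}=B\geq 0$ by (\ref{ceveqs1}), both scalars $(w,\Omega w)$ and $(w,Bw)$ are real and the second is nonnegative. Splitting $\zeta$ into real and imaginary parts yields the two identities; the inequalities $|\operatorname{Re}\zeta|\leq\omega_{\max}$ and $0\leq-\operatorname{Im}\zeta\leq\beta\|B\|$ are then just the standard Rayleigh-quotient bounds $|(w,\Omega w)|\leq\|\Omega\|$ and $0\leq(w,Bw)\leq\|B\|$, together with the identifications $\|\Omega\|=\omega_{\max}$ and $\|B\|=\max\sigma(\alpha^{-1}R)$ from Proposition \ref{pspredl}.

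For the disc containment in statement 1, the key observation is that $-\mathrm{i}\beta B$ is skew-Hermitian, hence normal, with spectrum $\{-\mathrm{i}\beta b_{j}:0\leq j\leq N_{R}\}$. Given an eigenvalue $\zeta$ of $A(\beta)=-\mathrm{i}\beta B+\Omega$, either $\zeta=-\mathrm{i}\beta b_{j}$ for some $j$ (in which case $\zeta\in D_{j}(\beta)$ trivially), or $-\mathrm{i}\beta B-\zeta\mathbf{1}$ is invertible. In the latter case, rearranging $(A(\beta)-\zeta\mathbf{1})w=0$ gives $w=-(-\mathrm{i}\beta B-\zeta\mathbf{1})^{-1}\Omega w$, so
\[
1\leq\|(-\mathrm{i}\beta B-\zeta\mathbf{1})^{-1}\|\,\|\Omega\|=\frac{\omega_{\max}}{\operatorname{dist}(\zeta,\sigma(-\mathrm{i}\beta B))},
\]
where the equality uses normality of $-\mathrm{i}\beta B$. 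This yields $\operatorname{dist}(\zeta,\sigma(-\mathrm{i}\beta B))\leq\omega_{\max}$, so $\zeta\in\bigcup_{j=0}^{N_{R}}D_{j}(\beta)=D_{0}(\beta)\cup\bigcup_{j=1}^{N_{R}}D_{j}(\beta)$. By definition of $\sigma_{0}(A(\beta))$ and $\sigma_{1}(A(\beta))$ this gives $\sigma(A(\beta))=\sigma_{0}(A(\beta))\cup\sigma_{1}(A(\beta))$.

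Finally, for the symmetry under $\zeta\mapsto-\overline{\zeta}$, Proposition \ref{pssym} already supplies $\sigma(A(\beta))=-\overline{\sigma(A(\beta))}$. Each disc $D_{j}(\beta)$ is centred at the purely imaginary point $-\mathrm{i}\beta b_{j}$ with real radius $\omega_{\max}$, and the map $\zeta\mapsto-\overline{\zeta}$ preserves the imaginary axis pointwise and preserves Euclidean distance, so it sends $D_{j}(\beta)$ to itself. Consequently it sends $D_{0}(\beta)$ to itself and permutes (in fact fixes) the union $\bigcup_{j=1}^{N_{R}}D_{j}(\beta)$, which forces $\sigma_{0}(A(\beta))=-\overline{\sigma_{0}(A(\beta))}$ and $\sigma_{1}(A(\beta))=-\overline{\sigma_{1}(A(\beta))}$. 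There is no real obstacle here: the argument is routine, with the only mildly delicate step being the appeal to normality of $-\mathrm{i}\beta B$ to identify the resolvent norm with the reciprocal distance to the spectrum.
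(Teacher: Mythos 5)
Your proof is correct and follows the same path as the paper's. The symmetry argument (Proposition \ref{pssym} plus the observation that each disc $D_{j}(\beta)$, being centered on the imaginary axis, is fixed setwise by $\zeta\mapsto-\overline{\zeta}$) and the Rayleigh-quotient identities and norm bounds via Proposition \ref{pspredl} are exactly what the paper does; the paper merely defers the disc-containment part and the Rayleigh-quotient identity to \cite{FigWel2}, whereas you supply them inline, and your resolvent/normality step is precisely the Bauer--Fike bound recorded as Lemma \ref{lllspeb} (which the paper states and uses later in proving Theorem \ref{tllspeb}). So this is the same argument, made self-contained.
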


\begin{proof}
This proposition except for the last two parts of both these statements were
proved in \cite{FigWel2}. To prove the symmetry in part 1 we have by
Proposition \ref{pssym} that
\[
\sigma\left(  A\left(  \beta\right)  \right)  =-\overline{\sigma\left(
A\left(  \beta\right)  \right)  },
\]
and by definition we have%
\[
D_{j}\left(  \beta\right)  =-\overline{D_{j}\left(  \beta\right)
},\text{\qquad}0\leq j\leq N_{R},
\]
so that%
\begin{align*}
-\overline{\sigma\left(  A\left(  \beta\right)  \right)  \cap D_{j}\left(
\beta\right)  }  &  =\left[  -\overline{\sigma\left(  A\left(  \beta\right)
\right)  }\right]  \cap\left[  -\overline{D_{j}\left(  \beta\right)  }\right]
=\\
&  =\sigma\left(  A\left(  \beta\right)  \right)  \cap D_{j}\left(
\beta\right)
\end{align*}
for $0\leq j\leq N_{R}$. In part 2, if $w\not =0$ and $A\left(  \beta\right)
w=\zeta w$ then%
\[
\left\vert \operatorname{Re}\zeta\right\vert =\left\vert \frac{\left(
w,\Omega w\right)  }{\left(  w,w\right)  }\right\vert \leq\left\Vert
\Omega\right\Vert =\omega_{\max}%
\]
by Proposition \ref{pspredl} and
\[
0\leq-\operatorname{Im}\zeta=\beta\frac{\left(  w,Bw\right)  }{\left(
w,w\right)  }\leq\beta\left\Vert B\right\Vert =\beta\max\sigma\left(
B\right)  =\beta\max\sigma\left(  \alpha^{-1}R\right)
\]
since $\beta\geq0$, $B\geq0$ and by Proposition \ref{pspredl}. This completes
the proof.
\end{proof}

The following main results on modal dichotomy were proved in \cite{FigWel2}.

\begin{theorem}
[modal dichotomy I]\label{tmdic}If $\beta>2\frac{\omega_{\max}}{b_{\min}}$
then
\begin{equation}
\sigma\left(  A\left(  \beta\right)  \right)  =\sigma_{0}\left(  A\left(
\beta\right)  \right)  \cup\sigma_{1}\left(  A\left(  \beta\right)  \right)
,\qquad\sigma_{0}\left(  A\left(  \beta\right)  \right)  \cap\sigma_{1}\left(
A\left(  \beta\right)  \right)  =\emptyset. \label{tmdic1}%
\end{equation}
Furthermore, there exists unique\ invariant subspaces $H_{\ell\ell}\left(
\beta\right)  $, $H_{h\ell}\left(  \beta\right)  $ of the system operator
$A\left(  \beta\right)  =$ $\Omega-\mathrm{i}\beta B$ with the properties%
\begin{align}
\mathrm{(i)}\quad H  &  =H_{h\ell}\left(  \beta\right)  \oplus H_{\ell\ell
}\left(  \beta\right)  ;\label{tmdic2}\\
\mathrm{(ii)}\quad\sigma\left(  A\left(  \beta\right)  |_{H_{\ell\ell}\left(
\beta\right)  }\right)   &  =\sigma_{0}\left(  A\left(  \beta\right)  \right)
,\qquad\sigma\left(  A\left(  \beta\right)  |_{H_{h\ell}\left(  \beta\right)
}\right)  =\sigma_{1}\left(  A\left(  \beta\right)  \right)  ,\nonumber
\end{align}
where $H=%
\mathbb{C}
^{2N}$. Moreover, the dimensions of these subspaces satisfy%
\begin{equation}
\dim H_{h\ell}\left(  \beta\right)  =N_{R},\ \ \dim H_{\ell\ell}\left(
\beta\right)  =2N-N_{R}. \label{tmdic3}%
\end{equation}

\end{theorem}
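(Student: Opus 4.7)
The plan is to leverage the eigenvalue clustering from Proposition \ref{pevbd} together with Riesz spectral projections. First I would verify that the two spectral clusters are geometrically disjoint under the hypothesis $\beta>2\omega_{\max}/b_{\min}$. The disc $D_0(\beta)$ is centered at $0$ with radius $\omega_{\max}$, while each $D_j(\beta)$ for $1\leq j\leq N_R$ is centered at $-\mathrm{i}\beta b_j$ with $b_j\geq b_{\min}>0$. The distance between $0$ and $-\mathrm{i}\beta b_j$ is $\beta b_j\geq \beta b_{\min}>2\omega_{\max}$, so $D_0(\beta)$ is disjoint from $\bigcup_{j=1}^{N_R} D_j(\beta)$. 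Combined with Proposition \ref{pevbd}, this immediately forces $\sigma_0(A(\beta))\cap\sigma_1(A(\beta))=\emptyset$, yielding \eqref{tmdic1}.

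Next I would construct the invariant subspaces via the standard Riesz functional calculus. Since $\sigma_0(A(\beta))$ and $\sigma_1(A(\beta))$ are separated closed subsets of $\mathbb{C}$, one can choose disjoint positively oriented contours $\Gamma_0$, $\Gamma_1$ enclosing $\sigma_0(A(\beta))$ and $\sigma_1(A(\beta))$ respectively, and define
\[
P_{\ell\ell}(\beta)=\frac{1}{2\pi\mathrm{i}}\oint_{\Gamma_0}\left(\zeta\mathbf{1}-A(\beta)\right)^{-1}d\zeta,\qquad P_{h\ell}(\beta)=\frac{1}{2\pi\mathrm{i}}\oint_{\Gamma_1}\left(\zeta\mathbf{1}-A(\beta)\right)^{-1}d\zeta.
\]
The standard properties give $P_{\ell\ell}(\beta)+P_{h\ell}(\beta)=\mathbf{1}$, $P_{\ell\ell}(\beta)P_{h\ell}(\beta)=0$, and the ranges $H_{\ell\ell}(\beta):=\operatorname{Ran}P_{\ell\ell}(\beta)$, $H_{h\ell}(\beta):=\operatorname{Ran}P_{h\ell}(\beta)$ are $A(\beta)$-invariant with spectra $\sigma_0(A(\beta))$, $\sigma_1(A(\beta))$ respectively, establishing \eqref{tmdic2}. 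Uniqueness of such a spectrally matched invariant splitting is a standard consequence of the Riesz calculus.

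The hard part is pinning down the dimensions \eqref{tmdic3}. I would argue by continuity and a large-$\beta$ limit. On the open set $\{\beta:\beta>2\omega_{\max}/b_{\min}\}$ the resolvent $(\zeta\mathbf{1}-A(\beta))^{-1}$ varies analytically in $\beta$ uniformly on the fixed contour $\Gamma_1$, so $\beta\mapsto P_{h\ell}(\beta)$ is continuous; hence $\dim H_{h\ell}(\beta)=\operatorname{tr}P_{h\ell}(\beta)$ is continuous and integer-valued, and therefore constant. To identify this constant, I would rescale and let $\beta\to\infty$: writing $\beta^{-1}A(\beta)=\beta^{-1}\Omega-\mathrm{i}B$, the perturbation $\beta^{-1}\Omega\to 0$ and classical eigenvalue perturbation theory (e.g.\ Rouch\'e applied to $\det(\zeta\mathbf{1}-A(\beta))$ versus $\det(\zeta\mathbf{1}+\mathrm{i}\beta B)$ on each disc, using the bound $\|\Omega\|=\omega_{\max}$) shows that the total algebraic multiplicity of eigenvalues of $A(\beta)$ inside $\bigcup_{j=1}^{N_R}D_j(\beta)$ equals the total algebraic multiplicity of the nonzero eigenvalues of $-\mathrm{i}\beta B$, namely $N_R$. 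The main technical point here is simply to verify the Rouch\'e comparison on the boundary circles $|\zeta-(-\mathrm{i}\beta b_j)|=\omega_{\max}$; this gives $\dim H_{h\ell}(\beta)=N_R$, and then $\dim H_{\ell\ell}(\beta)=2N-N_R$ follows from \eqref{tmdic2}.
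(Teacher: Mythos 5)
Your overall plan is sound and tracks the paper's approach: disjointness of the two clusters via Proposition~\ref{pevbd}, Riesz spectral projections onto the two separated spectral sets (so that existence and uniqueness of the invariant splitting is automatic), and dimension counting via continuity of the trace plus an identification at $\beta=\infty$. This is essentially the same Riesz-projection scheme that underlies the paper's treatment; compare the proof of Theorem~\ref{tllspeb}, where the change of variable $\varepsilon=(-\mathrm{i}\beta)^{-1}$ turns the problem into a perturbation $C(\varepsilon)=B+\varepsilon\Omega$ of the normal matrix $B$ and the dimension is read off at $\varepsilon=0$ from $\dim\operatorname{Ran}P(0)=\dim\operatorname{Ker}B=2N-N_R$. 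Working directly in $\beta$ is fine, but the contour $\Gamma_1$ in your continuity step cannot literally be fixed, since $\sigma_1(A(\beta))$ drifts with $\beta$; you need to fix $\Gamma_1$ only locally near a base point and then invoke connectedness of $(2\omega_{\max}/b_{\min},\infty)$.

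The step that would actually fail as written is the Rouch\'e comparison "on the boundary circles $|\zeta-(-\mathrm{i}\beta b_j)|=\omega_{\max}$." The Rouch\'e-type estimate here rests on the bound $\left\|(\zeta\mathbf{1}+\mathrm{i}\beta B)^{-1}\Omega\right\|\leq \omega_{\max}/\operatorname{dist}\left(\zeta,\sigma(-\mathrm{i}\beta B)\right)<1$, which requires $\operatorname{dist}\left(\zeta,\sigma(-\mathrm{i}\beta B)\right)>\omega_{\max}$ on the contour. But the hypothesis $\beta>2\omega_{\max}/b_{\min}$ does not keep the nonzero eigenvalues of $B$ apart from one another: if $\beta|b_j-b_{j'}|\leq 2\omega_{\max}$ for some $j\neq j'$ (quite possible whenever $|b_j-b_{j'}|<b_{\min}$), then the circle $|\zeta-(-\mathrm{i}\beta b_j)|=\omega_{\max}$ comes within distance $\omega_{\max}$ of the eigenvalue $-\mathrm{i}\beta b_{j'}$, and the bound breaks. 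The correct move is to use a single separating contour, for instance $|\zeta|=r$ with $\omega_{\max}<r<\beta b_{\min}-\omega_{\max}$ (such an $r$ exists precisely when $\beta>2\omega_{\max}/b_{\min}$). On that circle $\operatorname{dist}\left(\zeta,\sigma(-\mathrm{i}\beta B)\right)\geq\min\{r,\beta b_{\min}-r\}>\omega_{\max}$, Rouch\'e counts $2N-N_R$ eigenvalues of $A(\beta)$ inside (matching the zero eigenvalue of $-\mathrm{i}\beta B$), hence $N_R$ outside, giving $\dim H_{\ell\ell}(\beta)=2N-N_R$ and $\dim H_{h\ell}(\beta)=N_R$ as required. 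With this single-contour replacement your proof is complete.
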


\begin{corollary}
[high-loss subspace: dissipative properties]\label{cmdic}If $\beta
>2\frac{\omega_{\max}}{b_{\min}}$ then spectrum of $A\left(  \beta\right)  $
can be partitioned in terms of the damping factors by%
\begin{align}
\sigma\left(  A\left(  \beta\right)  |_{H_{\ell\ell}\left(  \beta\right)
}\right)   &  =\left\{  \zeta\in\sigma\left(  A\left(  \beta\right)  \right)
:0\leq-\operatorname{Im}\zeta\leq\omega_{\max}\right\}  ,\\
\text{\ }\sigma\left(  A\left(  \beta\right)  |_{H_{h\ell}\left(
\beta\right)  }\right)   &  =\left\{  \zeta\in\sigma\left(  A\left(
\beta\right)  \right)  :-\operatorname{Im}\zeta\geq\beta b_{\min}-\omega
_{\max}>\omega_{\max}\right\}  .\nonumber
\end{align}
Moreover, maximum of the quality factors $Q_{\zeta}=\frac{1}{2}\frac
{\left\vert \operatorname{Re}\zeta\right\vert }{-\operatorname{Im}\zeta}$ for
$\zeta\in$ $\sigma\left(  A\left(  \beta\right)  |_{H_{h\ell}\left(
\beta\right)  }\right)  $ satisfy%
\begin{equation}
\max_{\zeta\in\sigma\left(  A\left(  \beta\right)  |_{H_{h\ell}\left(
\beta\right)  }\right)  }Q_{\zeta}\leq\frac{1}{2}\frac{\omega_{\max}}{\beta
b_{\min}-\omega_{\max}}<\frac{1}{2}.
\end{equation}

\end{corollary}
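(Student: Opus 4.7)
The plan is to deduce the corollary from Theorem \ref{tmdic} together with Proposition \ref{pevbd} by translating the disc-containment information into explicit bounds on the damping factor $-\operatorname{Im}\zeta$. By Theorem \ref{tmdic}, under the hypothesis $\beta > 2\omega_{\max}/b_{\min}$, we have $\sigma(A(\beta)|_{H_{\ell\ell}(\beta)}) = \sigma_0(A(\beta)) = \sigma(A(\beta)) \cap D_0(\beta)$ and $\sigma(A(\beta)|_{H_{h\ell}(\beta)}) = \sigma_1(A(\beta)) = \sigma(A(\beta)) \cap \bigcup_{j=1}^{N_R} D_j(\beta)$, with the union being disjoint. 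So the task reduces to identifying each of these disc intersections with the set described by the corresponding damping-factor inequality.

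First I would handle the low-loss part. Since $b_0 = 0$, the disc $D_0(\beta)$ equals $\{\zeta \in \mathbb{C} : |\zeta| \leq \omega_{\max}\}$. For any $\zeta \in \sigma_0(A(\beta))$, I obtain $|\operatorname{Im}\zeta| \leq |\zeta| \leq \omega_{\max}$; combining this with $-\operatorname{Im}\zeta \geq 0$ (which holds for every eigenvalue by Proposition \ref{pevbd}.2, since $\beta \geq 0$ and $B \geq 0$) yields $0 \leq -\operatorname{Im}\zeta \leq \omega_{\max}$. The reverse inclusion uses the disjointness established in the next paragraph: any $\zeta \in \sigma(A(\beta))$ satisfying $0 \leq -\operatorname{Im}\zeta \leq \omega_{\max}$ cannot belong to $\sigma_1(A(\beta))$, so by the partition in Theorem \ref{tmdic} it must lie in $\sigma_0(A(\beta))$.

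Next, for the high-loss part, any $\zeta \in \sigma_1(A(\beta))$ satisfies $|\zeta - (-\mathrm{i}\beta b_j)| \leq \omega_{\max}$ for some $1 \leq j \leq N_R$, and looking at imaginary parts forces $-\operatorname{Im}\zeta \geq \beta b_j - \omega_{\max} \geq \beta b_{\min} - \omega_{\max}$; the hypothesis $\beta b_{\min} > 2\omega_{\max}$ makes this last quantity strictly larger than $\omega_{\max}$. This simultaneously produces the claimed damping-factor bound, supplies the disjointness needed for the low-loss reverse inclusion, and shows that no eigenvalue in $\sigma_0(A(\beta))$ can satisfy the $\sigma_1$ inequality, so both partitions are complete. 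For the Q-factor estimate, I combine the uniform bound $|\operatorname{Re}\zeta| \leq \omega_{\max}$ from Proposition \ref{pevbd}.2 with the lower bound on $-\operatorname{Im}\zeta$ just derived; plugging into the definition of $Q_\zeta$ in (\ref{sintro6}) gives $Q_\zeta \leq \tfrac{1}{2}\,\omega_{\max}/(\beta b_{\min} - \omega_{\max})$, and the strict bound $<\tfrac{1}{2}$ is again the hypothesis $\beta b_{\min} > 2\omega_{\max}$.

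I do not expect any serious obstacle here: the corollary is essentially a bookkeeping consequence of Proposition \ref{pevbd} and Theorem \ref{tmdic}. The only care required is to verify both inclusions in each partition and to check that $-\operatorname{Im}\zeta \geq 0$ holds throughout, which is the only place the power dissipation condition $B \geq 0$ enters. Notably, nothing in this argument uses the gyroscopic structure explicitly; it depends only on disc localization of the spectrum of $A(\beta) = \Omega - \mathrm{i}\beta B$ and the separation threshold $\beta > 2\omega_{\max}/b_{\min}$.
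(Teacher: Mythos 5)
Your argument is correct and is the natural (and, given the structure of Theorem \ref{tmdic} and Proposition \ref{pevbd}, essentially the only) proof of this corollary; the paper itself imports it from \cite{FigWel2} without re-proving it here, but the route you follow — identify $D_0(\beta)$ with the disc $|\zeta|\leq\omega_{\max}$, extract the imaginary-part bound $-\operatorname{Im}\zeta\geq\beta b_{j}-\omega_{\max}\geq\beta b_{\min}-\omega_{\max}$ from the high-loss discs, use the threshold $\beta b_{\min}>2\omega_{\max}$ to separate the two ranges and close both inclusions, and then combine $|\operatorname{Re}\zeta|\leq\omega_{\max}$ with the damping-factor lower bound to get the Q-factor estimate — is exactly the intended bookkeeping. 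One small point worth making explicit when you write this up: the reverse inclusion for both sets follows simultaneously from the three facts (i) $\sigma(A(\beta))=\sigma_{0}\sqcup\sigma_{1}$, (ii) $\sigma_{0}\subseteq\{0\leq-\operatorname{Im}\zeta\leq\omega_{\max}\}$, and (iii) $\sigma_{1}\subseteq\{-\operatorname{Im}\zeta\geq\beta b_{\min}-\omega_{\max}>\omega_{\max}\}$, since the two target sets are disjoint and together exhaust $\sigma(A(\beta))$; your phrasing gestures at this but could state it as a single disjoint-cover argument rather than handling the two directions separately.
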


We will now use duality to further refine these results. Denote by $b_{\min
}^{\flat}$ the smallest nonzero eigenvalue of $B^{\flat}$ and by $\omega
_{\max}^{\flat}$ largest eigenvalue of $\Omega^{\flat}$. By Proposition
\ref{pspredl}
\begin{align}
b_{\min}^{\flat}  &  =\min\sigma\left(  B^{\flat}\right)  \setminus\left\{
0\right\}  =\min\sigma\left(  \eta^{-1}R\right)  \setminus\left\{  0\right\}
,\\
\omega_{\max}^{\flat}  &  =\left\Vert \Omega^{\flat}\right\Vert =\left\Vert
\Omega^{-1}\right\Vert =\omega_{\min}^{-1},
\end{align}
where $\omega_{\min}$ is the smallest positive eigenvalue of $\Omega$.

\begin{theorem}
[modal dichotomy I-duality]\label{tmddI}Suppose the condition (\ref{cnddl}) is
true. If $\beta>\max\left\{  2\frac{\omega_{\max}}{b_{\min}},2\frac
{\omega_{\max}^{\flat}}{b_{\min}^{\flat}}\right\}  $ then there exists
unique\ invariant subspaces $H_{\ell\ell,0}\left(  \beta\right)  $,
$H_{\ell\ell,1}\left(  \beta\right)  $ of the system operator $A\left(
\beta\right)  =$ $\Omega-\mathrm{i}\beta B$ with the properties%
\begin{align}
&  \mathrm{(i)}\text{ }H_{\ell\ell}\left(  \beta\right)  =H_{\ell\ell
,0}\left(  \beta\right)  \oplus H_{\ell\ell,1}\left(  \beta\right)  ;\\
&  \mathrm{(ii)}\text{ }\sigma\left(  A\left(  \beta\right)  |_{H_{\ell\ell
,0}\left(  \beta\right)  }\right)  =\sigma\left(  A\left(  \beta\right)
\right)  \cap\left\{  \zeta:\left\vert \zeta\right\vert <\omega_{\min
}\right\}  ;\\
&  \mathrm{(iii)}\text{ }\sigma\left(  A\left(  \beta\right)  |_{H_{\ell
\ell,1}\left(  \beta\right)  }\right)  =\sigma\left(  A\left(  \beta\right)
\right)  \cap\left\{  \zeta:\omega_{\min}\leq\left\vert \zeta\right\vert
\leq\omega_{\max}\right\}  .
\end{align}
Futhermore, the dimensions of these subspaces satisfy%
\begin{align}
\dim H_{\ell\ell,0}\left(  \beta\right)   &  =N_{R},\text{ \ }\\
\dim H_{\ell\ell,1}\left(  \beta\right)   &  =2N-2N_{R}.
\end{align}
Moreover, these invariant subspaces of the system operator $A\left(
\beta\right)  $ have the following properties:%
\begin{equation}
\sigma\left(  A\left(  \beta\right)  |_{H_{\ell\ell,0}\left(  \beta\right)
}\right)  =-\overline{\sigma\left(  A\left(  \beta\right)  |_{H_{\ell\ell
,0}\left(  \beta\right)  }\right)  }=-\sigma\left(  A^{\flat}\left(
\beta\right)  |_{H_{h\ell}^{\flat}\left(  \beta\right)  }\right)  ^{-1},
\end{equation}%
\begin{equation}
\sigma\left(  A\left(  \beta\right)  |_{H_{h\ell}\left(  \beta\right)
}\right)  =-\overline{\sigma\left(  A\left(  \beta\right)  |_{H_{h\ell}\left(
\beta\right)  }\right)  }=-\sigma\left(  A^{\flat}\left(  \beta\right)
|_{H_{\ell\ell,0}^{\flat}\left(  \beta\right)  }\right)  ^{-1},
\end{equation}%
\begin{equation}
\sigma\left(  A\left(  \beta\right)  |_{H_{\ell\ell,1}\left(  \beta\right)
}\right)  =-\overline{\sigma\left(  A\left(  \beta\right)  |_{H_{\ell\ell
,1}\left(  \beta\right)  }\right)  }=-\sigma\left(  A^{\flat}\left(
\beta\right)  |_{H_{\ell\ell,1}^{\flat}\left(  \beta\right)  }\right)  ^{-1}.
\label{tmddIsm}%
\end{equation}

\end{theorem}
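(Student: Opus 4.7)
The plan is to derive this refined dichotomy from Theorem \ref{tmdic} by applying it simultaneously to $A(\beta)$ and its dual $A^{\flat}(\beta)$ and then transferring information between them via Proposition \ref{ppfspdl}. Since the duality condition $\eta>0$ guarantees that $\mathcal{L}^{\flat}$ satisfies the same hypotheses as $\mathcal{L}$, Theorem \ref{tmdic} applies to the dual and, under $\beta>2\omega_{\max}^{\flat}/b_{\min}^{\flat}$, produces a splitting $H=H_{h\ell}^{\flat}(\beta)\oplus H_{\ell\ell}^{\flat}(\beta)$ with $\dim H_{h\ell}^{\flat}(\beta)=N_{R}$. Proposition \ref{ppfspdl} together with Lemma \ref{lpfsp} (ensuring $0\notin\sigma(A(\beta))$) then supplies a multiplicity-preserving bijection $\zeta\leftrightarrow-\zeta^{-1}$ between $\sigma(A(\beta))$ and $\sigma(A^{\flat}(\beta))$ that respects Jordan structure.

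The core step is to use this bijection, together with Corollary \ref{cmdic} applied to the dual, to split $\sigma(A|_{H_{\ell\ell}(\beta)})$. Every $\zeta$ in that set has $|\zeta|\le\omega_{\max}$, and its dual image $\zeta^{\flat}=-\zeta^{-1}$ lies either in $\sigma(A^{\flat}|_{H_{\ell\ell}^{\flat}})$ (so $|\zeta^{\flat}|\le\omega_{\max}^{\flat}=\omega_{\min}^{-1}$, forcing $|\zeta|\ge\omega_{\min}$) or in $\sigma(A^{\flat}|_{H_{h\ell}^{\flat}})$ (so $|\zeta^{\flat}|\ge\beta b_{\min}^{\flat}-\omega_{\max}^{\flat}$, forcing $|\zeta|\le(\beta b_{\min}^{\flat}-\omega_{\max}^{\flat})^{-1}$). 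The hypothesis $\beta>2\omega_{\max}^{\flat}/b_{\min}^{\flat}$ is exactly what makes $(\beta b_{\min}^{\flat}-\omega_{\max}^{\flat})^{-1}<\omega_{\min}$, so these two spectral subsets are disjoint. I would then define $H_{\ell\ell,0}(\beta)$ and $H_{\ell\ell,1}(\beta)$ to be the Riesz spectral projection images of $A(\beta)$ associated with $\sigma(A)\cap\{|\zeta|<\omega_{\min}\}$ and $\sigma(A)\cap\{\omega_{\min}\le|\zeta|\le\omega_{\max}\}$ respectively; uniqueness is the standard uniqueness of Riesz spectral subspaces. Because the high-loss spectrum $\sigma(A|_{H_{h\ell}})$ lives in $\{-\operatorname{Im}\zeta\ge\beta b_{\min}-\omega_{\max}\}\subseteq\{|\zeta|>\omega_{\max}\}$ by Corollary \ref{cmdic}, both Riesz images actually lie inside $H_{\ell\ell}(\beta)$ and sum to it, giving statement (i). The dimension count then follows from multiplicity preservation in Proposition \ref{ppfspdl}: $\dim H_{\ell\ell,0}(\beta)=\dim H_{h\ell}^{\flat}(\beta)=N_{R}$, and hence $\dim H_{\ell\ell,1}(\beta)=(2N-N_{R})-N_{R}=2N-2N_{R}$.

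For the concluding spectral symmetries, the self-symmetry $\sigma(A|_{\bullet})=-\overline{\sigma(A|_{\bullet})}$ in each of the three lines follows from Proposition \ref{pssym} because each of the regions $\{|\zeta|<\omega_{\min}\}$, $\{\omega_{\min}\le|\zeta|\le\omega_{\max}\}$, $\{|\zeta|>\omega_{\max}\}$ is invariant under the involution $\zeta\mapsto-\bar{\zeta}$. The duality equalities $\sigma(A|_{\bullet})=-\sigma(A^{\flat}|_{\bullet'})^{-1}$ are obtained by running the same construction for $A^{\flat}(\beta)$ (producing $H_{\ell\ell,0}^{\flat}(\beta)$ and $H_{\ell\ell,1}^{\flat}(\beta)$) and matching spectral regions under $\zeta\mapsto-\zeta^{-1}$: using $\omega_{\min}^{\flat}=\omega_{\max}^{-1}$ and $\omega_{\max}^{\flat}=\omega_{\min}^{-1}$, the exterior $|\zeta|>\omega_{\max}$ is mapped to the disc $|\zeta^{\flat}|<\omega_{\min}^{\flat}$, the disc $|\zeta|<\omega_{\min}$ to the exterior $|\zeta^{\flat}|>\omega_{\max}^{\flat}$, and the annulus $\omega_{\min}\le|\zeta|\le\omega_{\max}$ to the annulus $\omega_{\min}^{\flat}\le|\zeta^{\flat}|\le\omega_{\max}^{\flat}$. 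The main obstacle I anticipate is the bookkeeping to confirm that the Riesz projections genuinely produce subspaces of $H_{\ell\ell}(\beta)$ rather than of $H$; this rests on the strict three-way separation of the spectrum into the regions above, which is exactly what the combined hypothesis on $\beta$ secures.
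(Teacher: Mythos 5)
Your proposal is correct and follows essentially the same route as the paper's proof: apply Theorem \ref{tmdic} to both $A(\beta)$ and $A^{\flat}(\beta)$, observe that Corollary \ref{cmdic} localizes the low-loss spectrum to $\{|\zeta|\le\omega_{\max}\}$ and the high-loss spectrum to $\{|\zeta|>\omega_{\max}\}$, split the low-loss spectrum into the disc $\{|\zeta|<\omega_{\min}\}$ and the annulus $\{\omega_{\min}\le|\zeta|\le\omega_{\max}\}$ using the multiplicity-preserving bijection $\zeta\leftrightarrow-\zeta^{-1}$ of Proposition \ref{ppfspdl}, take the corresponding spectral (Riesz) subspaces, and then obtain the dimension count and the symmetry/duality relations from Propositions \ref{ppfspdl} and \ref{pssym}. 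No genuine gaps; the bookkeeping you flag at the end is handled exactly as you describe.
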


\begin{proof}
Suppose the duality condition \ref{cnddl} holds and
\[
\beta>\max\left\{  2\frac{\omega_{\max}}{b_{\min}},2\frac{\omega_{\max}%
^{\flat}}{b_{\min}^{\flat}}\right\}  .
\]
Then by Theorem \ref{tmdic} and duality we have%
\begin{align*}
&  (1)\text{ \ \ }H=H_{h\ell}\left(  \beta\right)  \oplus H_{\ell\ell}\left(
\beta\right)  ;\\
&  (2)\text{ \ \ }\sigma\left(  A\left(  \beta\right)  |_{H_{\ell\ell}\left(
\beta\right)  }\right)  =\sigma_{0}\left(  A\left(  \beta\right)  \right)
,\text{\quad}\sigma\left(  A\left(  \beta\right)  |_{H_{h\ell}\left(
\beta\right)  }\right)  =\sigma_{1}\left(  A\left(  \beta\right)  \right)  ,\\
&  (1^{\flat})\text{ \ \ \ }H=H_{h\ell}^{\flat}\left(  \beta\right)  \oplus
H_{\ell\ell}^{\flat}\left(  \beta\right)  ;\\
&  (2^{\flat})\text{ }\sigma\left(  A^{\flat}\left(  \beta\right)
|_{H_{\ell\ell}^{\flat}\left(  \beta\right)  }\right)  =\sigma_{0}^{\flat
}\left(  A^{\flat}\left(  \beta\right)  \right)  ,\text{\quad}\sigma\left(
A^{\flat}\left(  \beta\right)  |_{H_{h\ell}^{\flat}\left(  \beta\right)
}\right)  =\sigma_{1}^{\flat}\left(  A^{\flat}\left(  \beta\right)  \right)
\end{align*}
where $H=%
\mathbb{C}
^{2N}$. Moreover, the dimensions of these subspaces satisfy%
\[
\dim H_{h\ell}\left(  \beta\right)  =\dim H_{h\ell}^{\flat}\left(
\beta\right)  =N_{R},\ \ \dim H_{\ell\ell}\left(  \beta\right)  =\dim
H_{\ell\ell}^{\flat}\left(  \beta\right)  =2N-N_{R}.
\]
It follows from the fact that%
\begin{gather*}
\sigma\left(  A\left(  \beta\right)  \right)  =\sigma_{0}\left(  A\left(
\beta\right)  \right)  \cup\sigma_{1}\left(  A\left(  \beta\right)  \right)
,\text{ }\sigma_{0}\left(  A\left(  \beta\right)  \right)  \cap\sigma
_{1}\left(  A\left(  \beta\right)  \right)  =\emptyset,\\
\sigma_{0}\left(  A\left(  \beta\right)  \right)  =\sigma\left(  A\left(
\beta\right)  \right)  \cap D_{0}\left(  \beta\right)  =\sigma\left(  A\left(
\beta\right)  \right)  \cap\left\{  \zeta:\left\vert \zeta\right\vert
\leq\omega_{\max}\right\}
\end{gather*}
that%
\begin{align*}
\sigma\left(  A\left(  \beta\right)  |_{H_{h\ell}\left(  \beta\right)
}\right)   &  =\sigma\left(  A\left(  \beta\right)  \right)  \cap\left\{
\zeta:\left\vert \zeta\right\vert >\omega_{\max}\right\}  ,\\
\sigma\left(  A\left(  \beta\right)  |_{H_{\ell\ell}\left(  \beta\right)
}\right)   &  =\sigma\left(  A\left(  \beta\right)  \right)  \cap\left\{
\zeta:\left\vert \zeta\right\vert \leq\omega_{\max}\right\}  .
\end{align*}
Thus we can partition the spectrum $\sigma\left(  A\left(  \beta\right)
|_{H_{\ell\ell}\left(  \beta\right)  }\right)  $ into the two sets%
\begin{align*}
&  \sigma\left(  A\left(  \beta\right)  \right)  \cap\left\{  \zeta:\left\vert
\zeta\right\vert <\omega_{\min}\right\}  ;\\
&  \sigma\left(  A\left(  \beta\right)  \right)  \cap\left\{  \zeta
:\omega_{\min}\leq\left\vert \zeta\right\vert \leq\omega_{\max}\right\}  .
\end{align*}
In follows from elementary linear algebra that to these two sets there exists
two unique invariant subspaces $H_{\ell\ell,0}\left(  \beta\right)  $,
$H_{\ell\ell,1}\left(  \beta\right)  $ of $A\left(  \beta\right)  $ with the
properties
\begin{align*}
&  \mathrm{(i)}\text{ }H_{\ell\ell}\left(  \beta\right)  =H_{\ell\ell
,0}\left(  \beta\right)  \oplus H_{\ell\ell,1}\left(  \beta\right)  ;\\
&  \mathrm{(ii)}\text{ }\sigma\left(  A\left(  \beta\right)  |_{H_{\ell\ell
,0}\left(  \beta\right)  }\right)  =\sigma\left(  A\left(  \beta\right)
\right)  \cap\left\{  \zeta:\left\vert \zeta\right\vert <\omega_{\min
}\right\}  ;\\
&  \mathrm{(iii)}\text{ }\sigma\left(  A\left(  \beta\right)  |_{H_{\ell
\ell,1}\left(  \beta\right)  }\right)  =\sigma\left(  A\left(  \beta\right)
\right)  \cap\left\{  \zeta:\omega_{\min}\leq\left\vert \zeta\right\vert
\leq\omega_{\max}\right\}  .
\end{align*}
In particular, $H_{\ell\ell,0}\left(  \beta\right)  $ and $H_{\ell\ell
,1}\left(  \beta\right)  $ are the union of the algebraic eigenspaces of
$A\left(  \beta\right)  $ corresponding to the eigenvalues in the set
$\sigma\left(  A\left(  \beta\right)  \right)  \cap\left\{  \zeta:\left\vert
\zeta\right\vert <\omega_{\min}\right\}  $ and $\sigma\left(  A\left(
\beta\right)  \right)  \cap\left\{  \zeta:\omega_{\min}\leq\left\vert
\zeta\right\vert \leq\omega_{\max}\right\}  $, respectively$.$

Now by Proposition \ref{pssym} we know that
\[
\sigma\left(  A\left(  \beta\right)  \right)  =-\overline{\sigma\left(
A\left(  \beta\right)  \right)  }%
\]
and since $\left\vert \zeta\right\vert =\left\vert -\overline{\zeta
}\right\vert $ then these fact imply that%
\[
\sigma\left(  A\left(  \beta\right)  |_{H_{\ell\ell,0}\left(  \beta\right)
}\right)  =-\overline{\sigma\left(  A\left(  \beta\right)  |_{H_{\ell\ell
,0}\left(  \beta\right)  }\right)  },
\]%
\[
\sigma\left(  A\left(  \beta\right)  |_{H_{h\ell}\left(  \beta\right)
}\right)  =-\overline{\sigma\left(  A\left(  \beta\right)  |_{H_{h\ell}\left(
\beta\right)  }\right)  },
\]%
\[
\sigma\left(  A\left(  \beta\right)  |_{H_{\ell\ell,1}\left(  \beta\right)
}\right)  =-\overline{\sigma\left(  A\left(  \beta\right)  |_{H_{\ell\ell
,1}\left(  \beta\right)  }\right)  }.
\]
And by Proposition \ref{ppfspdl} we know that%
\[
\sigma\left(  A^{\flat}\left(  \beta\right)  \right)  =-\sigma\left(  A\left(
\beta\right)  \right)  ^{-1}.
\]
Now we begin by proving that%
\[
\sigma\left(  A\left(  \beta\right)  |_{H_{\ell\ell,0}\left(  \beta\right)
}\right)  =-\sigma\left(  A^{\flat}\left(  \beta\right)  |_{H_{h\ell}^{\flat
}\left(  \beta\right)  }\right)  ^{-1}.
\]
Let $\zeta\in\sigma\left(  A\left(  \beta\right)  |_{H_{\ell\ell,0}\left(
\beta\right)  }\right)  $. Then $\left\vert \zeta\right\vert <\omega_{\min}$
so that $-\zeta^{-1}\in\sigma\left(  A^{\flat}\left(  \beta\right)  \right)  $
and $\left\vert -\zeta^{-1}\right\vert >\omega_{\min}^{-1}=\omega_{\max
}^{\flat}$. Thus, by what we have proven for $A\left(  \beta\right)  $ in this
statement already which by duality is true for $A^{\flat}\left(  \beta\right)
$ as well, we cannot have $-\zeta^{-1}$ in $\sigma\left(  A^{\flat}\left(
\beta\right)  |_{H_{\ell\ell}^{\flat}\left(  \beta\right)  }\right)  $
implying we must have $-\zeta^{-1}\in$ $\sigma\left(  A^{\flat}\left(
\beta\right)  |_{H_{h\ell}^{\flat}\left(  \beta\right)  }\right)  $. This
proves that%
\[
-\sigma\left(  A\left(  \beta\right)  |_{H_{\ell\ell,0}\left(  \beta\right)
}\right)  ^{-1}\subseteq\sigma\left(  A^{\flat}\left(  \beta\right)
|_{H_{h\ell}^{\flat}\left(  \beta\right)  }\right)  .
\]
We will now prove the reverse inclusion. Let $\zeta\in\sigma\left(  A^{\flat
}\left(  \beta\right)  |_{H_{h\ell}^{\flat}\left(  \beta\right)  }\right)  $.
Then $\left\vert \zeta\right\vert >\omega_{\max}^{\flat}$ and $-\zeta^{-1}%
\in\sigma\left(  A\left(  \beta\right)  \right)  $. Hence, $\left\vert
-\zeta^{-1}\right\vert <\left(  \omega_{\max}^{\flat}\right)  ^{-1}%
=\omega_{\min}$ implying $-\zeta^{-1}\in\sigma\left(  A\left(  \beta\right)
|_{H_{\ell\ell,0}\left(  \beta\right)  }\right)  $. That is, $\zeta\in
-\sigma\left(  A\left(  \beta\right)  |_{H_{\ell\ell,0}\left(  \beta\right)
}\right)  ^{-1}$. Thus,
\[
\sigma\left(  A^{\flat}\left(  \beta\right)  |_{H_{h\ell}^{\flat}\left(
\beta\right)  }\right)  \subseteq-\sigma\left(  A\left(  \beta\right)
|_{H_{\ell\ell,0}\left(  \beta\right)  }\right)  ^{-1}.
\]
This proves that
\[
\sigma\left(  A^{\flat}\left(  \beta\right)  |_{H_{h\ell}^{\flat}\left(
\beta\right)  }\right)  =-\sigma\left(  A\left(  \beta\right)  |_{H_{\ell
\ell,0}\left(  \beta\right)  }\right)  ^{-1}%
\]
which implies that%
\[
\sigma\left(  A\left(  \beta\right)  |_{H_{\ell\ell,0}\left(  \beta\right)
}\right)  =-\sigma\left(  A^{\flat}\left(  \beta\right)  |_{H_{h\ell}^{\flat
}\left(  \beta\right)  }\right)  ^{-1}.
\]
Now it immediately follows from this and duality that%
\[
\sigma\left(  A^{\flat}\left(  \beta\right)  |_{H_{\ell\ell,0}^{\flat}\left(
\beta\right)  }\right)  =-\sigma\left(  A\left(  \beta\right)  |_{H_{h\ell
}\left(  \beta\right)  }\right)  ^{-1}%
\]
which implies%
\[
\sigma\left(  A\left(  \beta\right)  |_{H_{h\ell}\left(  \beta\right)
}\right)  =-\sigma\left(  A^{\flat}\left(  \beta\right)  |_{H_{\ell\ell
,0}^{\flat}\left(  \beta\right)  }\right)  ^{-1}.
\]
Hence we have that%
\begin{align*}
\sigma\left(  A\left(  \beta\right)  \right)   &  =\sigma\left(  A\left(
\beta\right)  |_{H_{h\ell}\left(  \beta\right)  }\right)  \cup\sigma\left(
A\left(  \beta\right)  |_{H_{\ell\ell,0}\left(  \beta\right)  }\right)
\cup\sigma\left(  A\left(  \beta\right)  |_{H_{\ell\ell,1}\left(
\beta\right)  }\right) \\
\sigma\left(  A^{\flat}\left(  \beta\right)  \right)   &  =\sigma\left(
A^{\flat}\left(  \beta\right)  |_{H_{h\ell}^{\flat}\left(  \beta\right)
}\right)  \cup\sigma\left(  A^{\flat}\left(  \beta\right)  |_{H_{\ell\ell
,0}^{\flat}\left(  \beta\right)  }\right)  \cup\sigma\left(  A^{\flat}\left(
\beta\right)  |_{H_{\ell\ell,1}^{\flat}\left(  \beta\right)  }\right)
\end{align*}
which are the union of disjoint sets and%
\begin{align*}
\sigma\left(  A\left(  \beta\right)  |_{H_{\ell\ell,0}\left(  \beta\right)
}\right)   &  =-\sigma\left(  A^{\flat}\left(  \beta\right)  |_{H_{h\ell
}^{\flat}\left(  \beta\right)  }\right)  ^{-1},\\
\sigma\left(  A\left(  \beta\right)  |_{H_{h\ell}\left(  \beta\right)
}\right)   &  =-\sigma\left(  A^{\flat}\left(  \beta\right)  |_{H_{\ell\ell
,0}^{\flat}\left(  \beta\right)  }\right)  ^{-1},\\
\sigma\left(  A\left(  \beta\right)  \right)   &  =-\sigma\left(  A^{\flat
}\left(  \beta\right)  \right)  ^{-1}.
\end{align*}
These facts imply that%
\[
\sigma\left(  A\left(  \beta\right)  |_{H_{\ell\ell,1}\left(  \beta\right)
}\right)  =-\sigma\left(  A^{\flat}\left(  \beta\right)  |_{H_{\ell\ell
,1}^{\flat}\left(  \beta\right)  }\right)  ^{-1}.
\]

Now by Theorem \ref{tmdic}\ and duality we have that%
\begin{align*}
\dim H_{h\ell}\left(  \beta\right)   &  =N_{R},\quad\dim H_{\ell\ell}\left(
\beta\right)  =2N-N_{R},\\
\dim H_{h\ell}^{\flat}\left(  \beta\right)   &  =N_{R},\quad\dim H_{\ell\ell
}^{\flat}\left(  \beta\right)  =2N-N_{R}.
\end{align*}
And $H_{h\ell}^{\flat}\left(  \beta\right)  $ is the union of the algebraic
eigenspaces of $A^{\flat}\left(  \beta\right)  $ corresponding to the
eigenvalues in the set $\sigma\left(  A^{\flat}\left(  \beta\right)  \right)
\cap\left\{  \zeta:\left\vert \zeta\right\vert >\omega_{\max}^{\flat}\right\}
$ and $H_{\ell\ell,0}\left(  \beta\right)  $ is the union of the algebraic
eigenspaces of $A\left(  \beta\right)  $ corresponding to the eigenvalues in
the set $\sigma\left(  A\left(  \beta\right)  \right)  \cap\left\{
\zeta:\left\vert \zeta\right\vert <\omega_{\min}\right\}  $. Hence since
\[
\sigma\left(  A\left(  \beta\right)  |_{H_{\ell\ell,0}\left(  \beta\right)
}\right)  =-\sigma\left(  A^{\flat}\left(  \beta\right)  |_{H_{h\ell}^{\flat
}\left(  \beta\right)  }\right)  ^{-1},
\]
then it follows from Proposition \ref{ppfspdl}.3 that%
\[
\dim H_{\ell\ell,0}\left(  \beta\right)  =\dim H_{h\ell}^{\flat}\left(
\beta\right)  =N_{R},
\]
where by definition $N_{R}=\dim\operatorname{Ran}R$. Hence implying that%
\[
\dim H_{\ell\ell,1}\left(  \beta\right)  =\dim H_{\ell\ell}\left(
\beta\right)  -\dim H_{\ell\ell,0}\left(  \beta\right)  =2\left(
N-N_{R}\right)  .
\]
This completes the proof.
\end{proof}

\begin{corollary}
[low-loss/low-Q subspace: dissipative properties]\label{cmddI}If \ref{cnddl}
is true and $\beta>\max\left\{  2\frac{\omega_{\max}}{b_{\min}},2\frac
{\omega_{\max}^{\flat}}{b_{\min}^{\flat}}\right\}  $ then, in addition to
Corollary \ref{cmdic}\ being true, the spectrum of $A\left(  \beta\right)  $
rectricted to $H_{\ell\ell}\left(  \beta\right)  $ can be further partitioned
in terms of the damping factors by
\begin{equation}
\sigma\left(  A\left(  \beta\right)  |_{H_{\ell\ell,0}\left(  \beta\right)
}\right)  =\left\{  \zeta\in\sigma\left(  A\left(  \beta\right)  \right)
:\frac{-\operatorname{Im}\zeta}{\left\vert \zeta\right\vert ^{2}}\geq\beta
b_{\min}^{\flat}-\omega_{\max}^{\flat}>\omega_{\max}^{\flat}\right\}  .
\end{equation}
Furthermore, the quality factor $Q_{\zeta}=\frac{1}{2}\frac{\left\vert
\operatorname{Re}\zeta\right\vert }{-\operatorname{Im}\zeta}$ for $\zeta\in$
$\sigma\left(  A\left(  \beta\right)  |_{H_{\ell\ell,0}\left(  \beta\right)
}\right)  $ satisfies%
\begin{equation}
\max_{\zeta\in\sigma\left(  A\left(  \beta\right)  |_{H_{\ell\ell,0}\left(
\beta\right)  }\right)  }Q_{\zeta}\leq\frac{1}{2}\frac{\omega_{\max}^{\flat}%
}{\beta b_{\min}^{\flat}-\omega_{\max}^{\flat}}<\frac{1}{2}.
\end{equation}
Moreover,
\begin{equation}
\max_{\zeta\in\sigma\left(  A\left(  \beta\right)  |_{H_{\ell\ell,0}\left(
\beta\right)  }\right)  }\left\vert \zeta\right\vert \leq\frac{1}{\beta
b_{\min}^{\flat}-\omega_{\max}^{\flat}}<\omega_{\min}.
\end{equation}

\end{corollary}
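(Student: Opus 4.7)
The plan is to proceed entirely by duality, applying Corollary \ref{cmdic} to the dual Lagrangian system and then transporting the resulting bounds back via the spectral correspondence $\zeta \mapsto -\zeta^{-1}$ that was established in Theorem \ref{tmddI}.

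First, I would note that since the duality condition (\ref{cnddl}) holds, the dual Lagrangian system (\ref{dradis2a}) satisfies all the same hypotheses as (\ref{sintro1}), so every result proved so far for $A(\beta)$ applies verbatim to $A^{\flat}(\beta)$. Under the assumption $\beta>2\omega_{\max}^{\flat}/b_{\min}^{\flat}$, the dual version of Corollary \ref{cmdic} therefore applies, yielding for every $\zeta^{\flat}\in\sigma(A^{\flat}(\beta)|_{H_{h\ell}^{\flat}(\beta)})$ the bounds
\[
-\operatorname{Im}\zeta^{\flat}\geq \beta b_{\min}^{\flat}-\omega_{\max}^{\flat}>\omega_{\max}^{\flat},\qquad |\operatorname{Re}\zeta^{\flat}|\leq\omega_{\max}^{\flat},\qquad Q_{\zeta^{\flat}}\leq \tfrac{1}{2}\tfrac{\omega_{\max}^{\flat}}{\beta b_{\min}^{\flat}-\omega_{\max}^{\flat}}.
\]

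Next, by Theorem \ref{tmddI} we have $\sigma(A(\beta)|_{H_{\ell\ell,0}(\beta)})=-\sigma(A^{\flat}(\beta)|_{H_{h\ell}^{\flat}(\beta)})^{-1}$, so each $\zeta\in\sigma(A(\beta)|_{H_{\ell\ell,0}(\beta)})$ corresponds bijectively to $\zeta^{\flat}:=-\zeta^{-1}\in\sigma(A^{\flat}(\beta)|_{H_{h\ell}^{\flat}(\beta)})$. The third step is purely algebraic translation: using $\zeta^{-1}=\bar\zeta/|\zeta|^{2}$ one computes $-\operatorname{Im}(-\zeta^{-1})=-\operatorname{Im}\zeta/|\zeta|^{2}$ and $|-\zeta^{-1}|=|\zeta|^{-1}$, while the identity (\ref{dradis4_3}) gives $Q_{-\zeta^{-1}}=Q_{\zeta}$. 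Substituting these into the dual bounds above produces the damping characterization $-\operatorname{Im}\zeta/|\zeta|^{2}\geq \beta b_{\min}^{\flat}-\omega_{\max}^{\flat}$ and the Q-factor estimate stated in the corollary. For the modulus bound, I would combine the elementary inequality $|\zeta^{\flat}|\geq -\operatorname{Im}\zeta^{\flat}$ (valid because the damping factor is nonnegative) with the dual damping bound to obtain $|\zeta^{\flat}|\geq \beta b_{\min}^{\flat}-\omega_{\max}^{\flat}$, which upon inverting yields $|\zeta|\leq 1/(\beta b_{\min}^{\flat}-\omega_{\max}^{\flat})$; the strict inequality $<\omega_{\min}$ then follows from $\beta b_{\min}^{\flat}-\omega_{\max}^{\flat}>\omega_{\max}^{\flat}=\omega_{\min}^{-1}$, which is precisely what the strict hypothesis $\beta>2\omega_{\max}^{\flat}/b_{\min}^{\flat}$ encodes.

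There is no genuine obstacle here; the whole statement is essentially a bookkeeping exercise once the spectral-duality identities from Proposition \ref{ppfspdl}, Theorem \ref{tmddI}, and (\ref{dradis4_3}) are in hand. The only place a little care is needed is in confirming that the strict inequality $\beta b_{\min}^{\flat}-\omega_{\max}^{\flat}>\omega_{\max}^{\flat}$ persists after inversion, which is why I separate the strict and nonstrict chains of inequalities carefully in the last step.
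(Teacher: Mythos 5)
Your proposal is correct and follows essentially the same route as the paper's proof: apply Corollary \ref{cmdic} to the dual operator $A^{\flat}(\beta)$, transport the resulting high-loss bounds back through the spectral correspondence $\zeta\mapsto-\zeta^{-1}$ from Theorem \ref{tmddI}, use $Q_{-\zeta^{-1}}=Q_{\zeta}$ for the quality factor, and deduce the modulus bound from $-\operatorname{Im}\zeta^{\flat}\leq|\zeta^{\flat}|$. The only cosmetic difference is that you bound $|\zeta^{\flat}|$ on the dual side before inverting, whereas the paper bounds $-\operatorname{Im}\zeta/|\zeta|^{2}\leq 1/|\zeta|$ after the change of variables, but these are the same inequality in different coordinates.
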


\begin{proof}
Suppose \ref{cnddl} is true and $\beta>\max\left\{  2\frac{\omega_{\max}%
}{b_{\min}},2\frac{\omega_{\max}^{\flat}}{b_{\min}^{\flat}}\right\}  $. Then
by Corollary \ref{cmdic} we know that
\begin{align*}
\sigma\left(  A\left(  \beta\right)  |_{H_{\ell\ell}\left(  \beta\right)
}\right)   &  =\left\{  \zeta\in\sigma\left(  A\left(  \beta\right)  \right)
:0\leq-\operatorname{Im}\zeta\leq\omega_{\max}\right\}  ,\\
\text{\ }\sigma\left(  A\left(  \beta\right)  |_{H_{h\ell}\left(
\beta\right)  }\right)   &  =\left\{  \zeta\in\sigma\left(  A\left(
\beta\right)  \right)  :-\operatorname{Im}\zeta\geq\beta b_{\min}-\omega
_{\max}>\omega_{\max}\right\}  .
\end{align*}
Thus by Theorem \ref{tmddI} and duality we have%
\begin{gather*}
-\sigma\left(  A\left(  \beta\right)  |_{H_{\ell\ell,0}\left(  \beta\right)
}\right)  ^{-1}=\sigma\left(  A^{\flat}\left(  \beta\right)  |_{H_{h\ell
}^{\flat}\left(  \beta\right)  }\right) \\
=\left\{  \zeta\in\sigma\left(  A^{\flat}\left(  \beta\right)  \right)
:-\operatorname{Im}\zeta\geq\beta b_{\min}^{\flat}-\omega_{\max}^{\flat
}>\omega_{\text{\thinspace}\max}^{\flat}\right\}  .
\end{gather*}
Then since $\sigma\left(  A^{\flat}\left(  \beta\right)  \right)
=-\sigma\left(  A\left(  \beta\right)  \right)  ^{-1}$ this implies%
\begin{align*}
\sigma\left(  A\left(  \beta\right)  |_{H_{\ell\ell,0}\left(  \beta\right)
}\right)   &  =\left\{  \zeta\in\sigma\left(  A\left(  \beta\right)  \right)
:-\operatorname{Im}\left(  -\frac{1}{\zeta}\right)  \geq\beta b_{\min}^{\flat
}-\omega_{\max}^{\flat}>\omega_{\max}^{\flat}\right\} \\
&  =\left\{  \zeta\in\sigma\left(  A\left(  \beta\right)  \right)
:\frac{-\operatorname{Im}\zeta}{\left\vert \zeta\right\vert }\geq\beta
b_{\min}^{\flat}-\omega_{\max}^{\flat}>\omega_{\max}^{\flat}\right\}  .
\end{align*}
Next, by Corollary \ref{cmdic} and duality we have%
\[
\max_{\zeta\in\sigma\left(  A^{\flat}\left(  \beta\right)  |_{H_{h\ell}%
^{\flat}\left(  \beta\right)  }\right)  }Q_{\zeta}\leq\frac{1}{2}\frac
{\omega_{\max}^{\flat}}{\beta b_{\min}^{\flat}-\omega_{\max}^{\flat}}<\frac
{1}{2}.
\]
\ But since $Q_{-\zeta^{-1}}=Q_{\zeta}$ and $\sigma\left(  A\left(
\beta\right)  |_{H_{\ell\ell,0}\left(  \beta\right)  }\right)  =-\sigma\left(
A^{\flat}\left(  \beta\right)  |_{H_{h\ell}^{\flat}\left(  \beta\right)
}\right)  ^{-1}$ then
\begin{gather*}
\max_{\zeta\in\sigma\left(  A^{\flat}\left(  \beta\right)  |_{H_{h\ell}%
^{\flat}\left(  \beta\right)  }\right)  }Q_{\zeta}=\max_{-\zeta^{-1}\in
\sigma\left(  A^{\flat}\left(  \beta\right)  |_{H_{h\ell}^{\flat}\left(
\beta\right)  }\right)  }Q_{-\zeta^{-1}}\\
=\max_{\zeta\in\sigma\left(  A\left(  \beta\right)  |_{H_{\ell\ell,0}\left(
\beta\right)  }\right)  }Q_{-\zeta^{-1}}=\max_{\zeta\in\sigma\left(  A\left(
\beta\right)  |_{H_{\ell\ell,0}\left(  \beta\right)  }\right)  }Q_{\zeta}%
\end{gather*}
implying%
\[
\max_{\zeta\in\sigma\left(  A\left(  \beta\right)  |_{H_{\ell\ell,0}\left(
\beta\right)  }\right)  }Q_{\zeta}\leq\frac{1}{2}\frac{\omega_{\max}^{\flat}%
}{\beta b_{\min}^{\flat}-\omega_{\max}^{\flat}}<\frac{1}{2}.
\]
Finally, it follows from the fact
\[
\sigma\left(  A\left(  \beta\right)  |_{H_{\ell\ell,0}\left(  \beta\right)
}\right)  =\left\{  \zeta\in\sigma\left(  A\left(  \beta\right)  \right)
:\frac{-\operatorname{Im}\zeta}{\left\vert \zeta\right\vert ^{2}}\geq\beta
b_{\min}^{\flat}-\omega_{\max}^{\flat}>\omega_{\max}^{\flat}\right\}
\]
that%
\[
\zeta\in\sigma\left(  A\left(  \beta\right)  |_{H_{\ell\ell,0}\left(
\beta\right)  }\right)  \Rightarrow\beta b_{\min}^{\flat}-\omega_{\max}%
^{\flat}\leq\frac{-\operatorname{Im}\zeta}{\left\vert \zeta\right\vert ^{2}%
}\leq\frac{\left\vert \zeta\right\vert }{\left\vert \zeta\right\vert ^{2}%
}=\frac{1}{\left\vert \zeta\right\vert }%
\]
implying%
\[
\left\vert \zeta\right\vert \leq\frac{1}{\beta b_{\min}^{\flat}-\omega_{\max
}^{\flat}}<\left(  \omega_{\max}^{\flat}\right)  ^{-1}=\omega_{\min}.
\]
where the latter equality follows from Proposition (\ref{pspredl}). This
completes the proof.
\end{proof}

\begin{remark}
The results above show that we may consider the $N_{R}$-dimensional invariant
subspaces $H_{h\ell}\left(  \beta\right)  $ and $H_{\ell\ell,1}\left(
\beta\right)  $ of $A\left(  \beta\right)  $ to be the high-loss and
low-loss/low-$Q$ susceptible subspaces, respectively. Our results below will
show that we may consider the $2\left(  N-N_{R}\right)  $-dimensional
invariant subspace $H_{\ell\ell,1}\left(  \beta\right)  $ to be the
low-loss/high-$Q$ susceptible subspace.
\end{remark}

The following lemma from \cite[Appendix B, Lemma 28]{FigWel2}, \cite[Sec. V.4,
Prob. 4.8]{Kato} will be used to prove the next results.

\begin{lemma}
\label{lllspeb}If $M=M_{0}+E$ is a square matrix and $M_{0}$ is normal (i.e.,
$M_{0}M_{0}^{\ast}=M_{0}^{\ast}M_{0}$) then
\begin{equation}
\sigma\left(  M\right)  \subseteq\left\{  \lambda\in%
\mathbb{C}
:\operatorname{dist}\left(  \lambda,\sigma\left(  M_{0}\right)  \right)
\leq\left\Vert E\right\Vert \right\}  ,
\end{equation}
where $\operatorname{dist}\left(  \lambda,X\right)  :=\inf_{x\in X}\left\vert
\lambda-x\right\vert $ for any nonempty set $X\subseteq%
\mathbb{C}
$.
\end{lemma}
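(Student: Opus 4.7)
The plan is to argue by contraposition on the resolvent: fix any $\lambda \in \sigma(M)$ and show that if the distance from $\lambda$ to $\sigma(M_0)$ strictly exceeded $\|E\|$, then $\lambda \mathbf{1} - M$ would be invertible, contradicting the hypothesis. The one ingredient that makes the bound work out cleanly with the right constant is the spectral theorem for normal matrices, which I would invoke as the key lemma.

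More concretely, first I would dispose of the trivial case $\lambda \in \sigma(M_0)$, since then $\operatorname{dist}(\lambda, \sigma(M_0)) = 0 \leq \|E\|$. So assume $\lambda \notin \sigma(M_0)$, in which case $R(\lambda) := (\lambda \mathbf{1} - M_0)^{-1}$ exists. Because $M_0$ is normal, the spectral theorem gives a unitary $U$ and a diagonal $D = \operatorname{diag}(\mu_1, \ldots, \mu_n)$ with $\mu_j \in \sigma(M_0)$ such that $M_0 = U D U^{\ast}$, and hence
\begin{equation*}
\|R(\lambda)\| = \left\|\operatorname{diag}\bigl((\lambda - \mu_1)^{-1}, \ldots, (\lambda - \mu_n)^{-1}\bigr)\right\| = \max_j \frac{1}{|\lambda - \mu_j|} = \frac{1}{\operatorname{dist}(\lambda, \sigma(M_0))}.
\end{equation*}
This identity is the real content of the lemma; without normality one only gets an inequality in one direction.

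Next I would write the factorization
\begin{equation*}
\lambda \mathbf{1} - M = (\lambda \mathbf{1} - M_0) - E = (\lambda \mathbf{1} - M_0)\bigl(\mathbf{1} - R(\lambda) E\bigr).
\end{equation*}
Suppose for contradiction that $\operatorname{dist}(\lambda, \sigma(M_0)) > \|E\|$. Then $\|R(\lambda) E\| \leq \|R(\lambda)\| \, \|E\| < 1$, so $\mathbf{1} - R(\lambda) E$ is invertible via the Neumann series $\sum_{k \geq 0} (R(\lambda) E)^k$. Combined with invertibility of $\lambda \mathbf{1} - M_0$, this forces $\lambda \mathbf{1} - M$ to be invertible, contradicting $\lambda \in \sigma(M)$. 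Therefore $\operatorname{dist}(\lambda, \sigma(M_0)) \leq \|E\|$, which is exactly the claimed inclusion.

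The only mild obstacle I anticipate is citing or recalling the resolvent-norm identity for normal matrices; everything else is routine Neumann-series manipulation. Since the paper freely uses spectral-theoretic facts elsewhere, I would simply state this identity as a standard consequence of the spectral theorem rather than reproving it.
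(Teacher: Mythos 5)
Your proof is correct; this is the standard Bauer--Fike argument, combining the resolvent-norm identity $\left\Vert (\lambda\mathbf{1}-M_{0})^{-1}\right\Vert = \operatorname{dist}\left(\lambda,\sigma(M_{0})\right)^{-1}$ (valid because $M_{0}$ is normal) with the factorization $\lambda\mathbf{1}-M = (\lambda\mathbf{1}-M_{0})\left(\mathbf{1}-(\lambda\mathbf{1}-M_{0})^{-1}E\right)$ and a Neumann-series invertibility criterion. The paper itself does not supply a proof of this lemma --- it simply cites it from an earlier paper and from Kato --- so there is no in-paper argument to compare against, but the route you take is precisely the expected one and every step checks out.
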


Let us now introduce some notation. Denote the disc in the complex plane of
radius $r>0$ centered at $\rho\in%
\mathbb{C}
$ by%
\begin{equation}
D\left(  \rho;r\right)  :=\left\{  \lambda\in%
\mathbb{C}
:\left\vert \lambda-\rho\right\vert \leq r\right\}  .
\end{equation}
Denote the largest eigenvalue of $B$ by $b_{\max}$, in particular, by
Proposition \ref{pspredl}%
\begin{equation}
b_{\max}=\max\sigma\left(  B\right)  =\max\left[  \sigma\left(  \alpha
^{-1}R\right)  \right]  .
\end{equation}
Let
\begin{equation}
y=c\left(  \beta\right)  ,\text{ }\beta>2\frac{\omega_{\max}}{b_{\min}}%
\end{equation}
be the function defined in (\ref{sintmr1}). If $\Omega\not =0$ (note that
$\Omega=0$ if and only if $\theta=0$ and $\eta=0$) then it is a strictly
decreasing function whose inverse
\begin{equation}
\beta=c^{-1}\left(  y\right)  ,\text{ }y>0,
\end{equation}
which is given by (\ref{sintmr1a}), is also strictly decreasing.

\begin{theorem}
[low-loss subspace: eigenvalue bounds]\label{tllspeb}If $\beta>2\frac
{\omega_{\max}}{b_{\min}}$ then Theorem \ref{tmdic} and Corollary \ref{cmdic}
are true and we have the additional bounds%
\[
\max_{\zeta\in\sigma\left(  A\left(  \beta\right)  |_{H_{\ell\ell}\left(
\beta\right)  }\right)  }\min_{\rho\in\sigma\left(  P_{B}^{\bot}\Omega
P_{B}^{\bot}\right)  }\left\vert \zeta-\rho\right\vert \leq c\left(
\beta\right)  .
\]
In particular, for the frequencies and damping factors, we have the bounds%
\[
0\leq\min_{\rho\in\sigma\left(  P_{B}^{\bot}\Omega P_{B}^{\bot}\right)
}\left\vert \operatorname{Re}\zeta-\rho\right\vert ,\text{\quad}%
-\operatorname{Im}\zeta\leq c\left(  \beta\right)
\]
for all $\beta>2\frac{\omega_{\max}}{b_{\min}}$ whenever $\zeta\in
\sigma\left(  A\left(  \beta\right)  |_{H_{\ell\ell}\left(  \beta\right)
}\right)  $.
\end{theorem}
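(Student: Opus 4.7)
The plan is to combine a Schur-complement reduction with Lemma \ref{lllspeb}. Theorem \ref{tmdic} and Corollary \ref{cmdic} are quoted verbatim from \cite{FigWel2} and hold automatically under $\beta > 2\omega_{\max}/b_{\min}$; only the distance bound to $\sigma(P_B^\perp \Omega P_B^\perp)$ and its companion bounds on $|\operatorname{Re}\zeta - \rho|$ and $-\operatorname{Im}\zeta$ require new work.

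Fix $\zeta \in \sigma(A(\beta)|_{H_{\ell\ell}(\beta)})$ with eigenvector $w \neq 0$, and decompose $w = w_0 + w_1$ with $w_0 = P_B^\perp w \in \operatorname{Ker} B$ and $w_1 = P_B w \in \operatorname{Ran} B$. Since $B P_B^\perp = P_B^\perp B = 0$, projecting $(\Omega - \mathrm{i}\beta B) w = \zeta w$ onto $\operatorname{Ker} B$ and $\operatorname{Ran} B$ respectively yields
\begin{align*}
(P_B^\perp \Omega P_B^\perp - \zeta I)\, w_0 + P_B^\perp \Omega P_B\, w_1 &= 0,\\
P_B \Omega P_B^\perp\, w_0 + M\, w_1 &= 0,
\end{align*}
where $M := P_B \Omega P_B - \mathrm{i}\beta B - \zeta I$ acts on $\operatorname{Ran} B$. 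The key subproblem is to invert $M$ uniformly in such $\zeta$: Hermiticity of $\Omega$ and $B$ gives, for any unit $u \in \operatorname{Ran} B$, $\operatorname{Im}(u, Mu) = -\beta (u, Bu) - \operatorname{Im}\zeta$, which by $B|_{\operatorname{Ran} B} \geq b_{\min}$ together with $-\operatorname{Im}\zeta \leq \omega_{\max}$ (Corollary \ref{cmdic}) is at most $-(\beta b_{\min} - \omega_{\max}) < 0$. A standard numerical-range argument then gives $\|M^{-1}\| \leq (\beta b_{\min} - \omega_{\max})^{-1}$.

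With $M$ invertible, solve the second equation for $w_1$ and substitute into the first to reduce the problem to the nonlinear eigenvalue problem on $\operatorname{Ker} B$
\begin{equation*}
(\Omega_1 + T(\zeta))\, w_0 = \zeta w_0, \qquad T(\zeta) := -P_B^\perp \Omega P_B\, M^{-1}\, P_B \Omega P_B^\perp\big|_{\operatorname{Ker} B},
\end{equation*}
with $\|T(\zeta)\| \leq \omega_{\max}^2/(\beta b_{\min} - \omega_{\max})$. The case $w_0 = 0$ is impossible, since the invertibility of $M$ would then force $w_1 = 0$ and hence $w = 0$; so $w_0 \neq 0$. Lemma \ref{lllspeb} applied with $M_0 = \Omega_1$ (self-adjoint on $\operatorname{Ker} B$, hence normal) and $E = T(\zeta)$ yields $\operatorname{dist}(\zeta, \sigma(\Omega_1)) \leq \|T(\zeta)\|$, and an elementary algebraic check shows $\omega_{\max}^2/(\beta b_{\min} - \omega_{\max}) \leq c(\beta)$. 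Since $\sigma(\Omega_1) \subseteq \sigma(P_B^\perp \Omega P_B^\perp)$, the main inequality follows. The companion bounds are then immediate from Hermiticity of $P_B^\perp \Omega P_B^\perp$, which forces $\sigma(P_B^\perp \Omega P_B^\perp) \subset \mathbb{R}$: choosing $\rho$ that realizes the minimum distance, the identity $|\zeta - \rho|^2 = (\operatorname{Re}\zeta - \rho)^2 + (\operatorname{Im}\zeta)^2$ bounds both $|\operatorname{Re}\zeta - \rho|$ and $|\operatorname{Im}\zeta|$ by $c(\beta)$ simultaneously.

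The main obstacle is securing the sharp $O(\beta^{-1})$ bound on $T(\zeta)$, which hinges on the uniform invertibility of the Schur block $M$. This uses both the dissipativity $B \geq 0$ and the a priori upper bound on $-\operatorname{Im}\zeta$ supplied by Corollary \ref{cmdic}, and it is exactly what produces the denominator $\beta b_{\min} - \omega_{\max}$ that matches the asymptotic form of $c(\beta)$.
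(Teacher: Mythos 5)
Your proof is correct, and it takes a genuinely different route than the paper's. Both rely on Lemma \ref{lllspeb}, but the way the perturbation $E$ is produced differs. The paper substitutes $\varepsilon = (-\mathrm{i}\beta)^{-1}$, forms $C(\varepsilon) = \varepsilon\Omega + B$, obtains the $\left(0,0\right)$-group projection $P(\varepsilon)$ as a contour integral over the circle $\Gamma$ of radius $b_{\min}/2$, defines the compressed operator $M(\varepsilon) = \varepsilon^{-1} P(\varepsilon) C(\varepsilon) P(\varepsilon)$, and bounds $\left\Vert M(\varepsilon) - M(0) \right\Vert$ by summing a Neumann series for the resolvent; the geometric series is precisely what produces $c(\beta)$. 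You instead fix a low-loss eigenvector, project the equation $A(\beta)w = \zeta w$ onto $\operatorname{Ker}B$ and $\operatorname{Ran}B$, invert the Schur block $M$ via a numerical-range estimate (using $B|_{\operatorname{Ran}B} \geq b_{\min}\mathbf{1}$ and the a priori bound $-\operatorname{Im}\zeta \leq \omega_{\max}$ from Corollary \ref{cmdic}), and read off $\zeta \in \sigma\left(\Omega_1 + T(\zeta)\right)$ with $\left\Vert T(\zeta) \right\Vert \leq \omega_{\max}^2/(\beta b_{\min} - \omega_{\max})$. What your route buys: it is more elementary (no analytic-perturbation machinery, no contour integrals), and it yields a strictly sharper estimate, since $\omega_{\max}^2/(\beta b_{\min}-\omega_{\max}) \leq c(\beta) = 2\omega_{\max}^2/(\beta b_{\min}-2\omega_{\max})$ for all $\beta > 2\omega_{\max}/b_{\min}$. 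What the paper's route buys: the projection $P(\varepsilon)$ and the operator $M(\varepsilon)$ are reused immediately in the proof of Theorem \ref{tllspmd} to perform a second spectral splitting into $H_{\ell\ell,0}(\beta)$ and $H_{\ell\ell,1}(\beta)$, so the extra machinery pays off in the next step, whereas the Feshbach reduction would have to be redone at the level of $\Omega_1$.
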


We will prove this theorem and the next two results below. To do so we
introduce some new notation.

Denote the smallest positive eigenvalue of the Hermitian matrix $P_{B}^{\bot
}\Omega P_{B}^{\bot}$ by $\rho_{\min}$ whenever $P_{B}^{\bot}\Omega
P_{B}^{\bot}\not =0$ (by Proposition \ref{pspredl2} this is equivalent to
$N_{R}<N$), in particular, it follows from Proposition \ref{pspredl2} that
\begin{equation}
\rho_{\min}=\min_{0\not =\rho\in\sigma\left(  P_{B}^{\bot}\Omega P_{B}^{\bot
}\right)  }\left\vert \rho\right\vert =\min\left[  \sigma\left(  P_{B}^{\bot
}\Omega P_{B}^{\bot}\right)  \cap\left(  0,\infty\right)  \right]  .
\end{equation}

\begin{theorem}
[modal dichotomy II]\label{tllspmd}Assume that $N_{R}<N$. If $\beta
>c^{-1}\left(  \rho_{\min}/2\right)  $ then $\beta>2\frac{\omega_{\max}%
}{b_{\min}}$, $c\left(  \beta\right)  <\rho_{\min}/2\leq\omega_{\max}/2$ and,
in addition to Theorems \ref{tmdic}, \ref{tllspeb} being true, there exists
unique invariant subspaces of the system operator $A\left(  \beta\right)
=\Omega-\mathrm{i}\beta B$ such that
\begin{align*}
&  H_{\ell\ell}\left(  \beta\right)  =H_{\ell\ell,0}\left(  \beta\right)
\oplus H_{\ell\ell,1}\left(  \beta\right)  ;\\
&  \sigma\left(  A\left(  \beta\right)  |_{H_{\ell\ell,0}\left(  \beta\right)
}\right)  =\sigma\left(  A\left(  \beta\right)  \right)  \cap D\left(
0;c\left(  \beta\right)  \right)  ;\\
&  \sigma\left(  A\left(  \beta\right)  |_{H_{\ell\ell,1}\left(  \beta\right)
}\right)  =\sigma\left(  A\left(  \beta\right)  \right)  \cap%
{\textstyle\bigcup_{0\not =\rho\in\sigma\left(  P_{B}^{\bot}\Omega P_{B}%
^{\bot}\right)  }}
D\left(  \rho;c\left(  \beta\right)  \right)  .
\end{align*}
Furthermore,%
\begin{align*}
\sigma\left(  A\left(  \beta\right)  |_{H_{\ell\ell,1}\left(  \beta\right)
}\right)   &  =\sigma\left(  A\left(  \beta\right)  \right)  \cap\left\{
\zeta\in%
\mathbb{C}
:\rho_{\min}/2<\left\vert \zeta\right\vert \leq\omega_{\max}\right\}  ,\\
\sigma\left(  A\left(  \beta\right)  |_{H_{\ell\ell,0}\left(  \beta\right)
}\right)   &  =\sigma\left(  A\left(  \beta\right)  \right)  \cap\left\{
\zeta\in%
\mathbb{C}
:\left\vert \zeta\right\vert <\rho_{\min}/2\right\}  .
\end{align*}
Moreover, the dimensions of these invariant subspaces are%
\[
\dim H_{\ell\ell,1}\left(  \beta\right)  =\dim\operatorname{Ran}\left(
P_{B}^{\bot}\Omega P_{B}^{\bot}\right)  ,\text{\quad}\dim H_{\ell\ell
,0}\left(  \beta\right)  =2N-N_{R}-\dim H_{\ell\ell,1}\left(  \beta\right)  .
\]
In particular, if $\operatorname{Ker}R\cap\operatorname{Ker}\eta=\left\{
0\right\}  $ then
\[
\dim H_{\ell\ell,1}\left(  \beta\right)  =2\left(  N-N_{R}\right)  .
\]

\end{theorem}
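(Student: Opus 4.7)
The plan is to proceed in four steps: verify the quantitative inequalities, localize the low-loss spectrum into two disjoint clusters via Theorem \ref{tllspeb}, extract invariant subspaces by Riesz projections, and finally compute the dimensions by a homotopy-continuity argument in $\beta$.

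First I would check the preliminary inequalities. Since $c(\beta)$ is strictly decreasing on $(2\omega_{\max}/b_{\min},\infty)$ with $c(c^{-1}(y))=y$, the assumption $\beta>c^{-1}(\rho_{\min}/2)$ yields $c(\beta)<\rho_{\min}/2$. Because $\rho_{\min}$ is a nonzero eigenvalue of the self-adjoint operator $P_{B}^{\bot}\Omega P_{B}^{\bot}$, whose norm is bounded by $\|\Omega\|=\omega_{\max}$, we have $\rho_{\min}\leq\omega_{\max}$. Finally the explicit form of $c^{-1}$ gives $c^{-1}(\rho_{\min}/2)\geq c^{-1}(\omega_{\max}/2)=6\omega_{\max}/b_{\min}>2\omega_{\max}/b_{\min}$, so the hypotheses of Theorems \ref{tmdic} and \ref{tllspeb} are in force.

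Next, by Theorem \ref{tllspeb}, every $\zeta\in\sigma(A(\beta)|_{H_{\ell\ell}(\beta)})$ lies within distance $c(\beta)$ of some $\rho\in\sigma(P_{B}^{\bot}\Omega P_{B}^{\bot})$, and by the preceding paragraph, $\sigma(P_{B}^{\bot}\Omega P_{B}^{\bot})\setminus\{0\}$ is contained in $\{|\rho|\geq\rho_{\min}\}$. Since $c(\beta)<\rho_{\min}/2$, the disc $D(0;c(\beta))$ is disjoint from every disc $D(\rho;c(\beta))$ with $\rho\neq 0\in\sigma(P_{B}^{\bot}\Omega P_{B}^{\bot})$, producing the disjoint spectral partition $\sigma(A(\beta)|_{H_{\ell\ell}(\beta)})=\Sigma_{0}\sqcup\Sigma_{1}$ where $\Sigma_{0}$ is the part in $D(0;c(\beta))$ and $\Sigma_{1}$ the part in the remaining discs. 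Standard Dunford/Riesz functional calculus then yields unique complementary invariant subspaces $H_{\ell\ell,0}(\beta)$ and $H_{\ell\ell,1}(\beta)$ with the asserted spectra. The alternative characterization in terms of $|\zeta|$ follows from two direct estimates: $\zeta\in D(0;c(\beta))\Rightarrow|\zeta|\leq c(\beta)<\rho_{\min}/2$; and $\zeta\in D(\rho;c(\beta))$ with $\rho\neq 0\Rightarrow|\zeta|\geq|\rho|-c(\beta)\geq\rho_{\min}-c(\beta)>\rho_{\min}/2$. The upper bound $|\zeta|\leq\omega_{\max}$ on $H_{\ell\ell,1}(\beta)$ is inherited from Theorem \ref{tmdic}.

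The genuine technical point is the dimension count for $H_{\ell\ell,1}(\beta)$. Write $P_{1}(\beta)$ for the Riesz projection onto $H_{\ell\ell,1}(\beta)$ obtained by integrating $(\zeta-A(\beta))^{-1}$ around a fixed contour enclosing $\bigcup_{\rho\neq 0}D(\rho;c(\beta_{0}))$ and excluding $0$, for any fixed $\beta_{0}>c^{-1}(\rho_{\min}/2)$. Because $A(\beta)=\Omega-\mathrm{i}\beta B$ depends analytically on $\beta$, and because the spectral partition remains disjoint from the contour for all $\beta>c^{-1}(\rho_{\min}/2)$ (monotonicity of $c$ shrinks the discs), the function $\beta\mapsto P_{1}(\beta)$ is analytic and its rank is constant on this interval. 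I would compute this constant rank by letting $\beta\to\infty$: by the perturbation analysis underlying Proposition \ref{pspredl2} and the limiting identities (\ref{pspredl2lim}), the low-loss eigenvalues of $A(\beta)$ converge precisely to the eigenvalues of $\Omega_{1}=P_{B}^{\bot}\Omega P_{B}^{\bot}|_{\operatorname{Ker}B}$ with multiplicities preserved. Hence the number of low-loss eigenvalues of $A(\beta)$ clustering near the nonzero eigenvalues of $\Omega_{1}$ equals the total multiplicity of nonzero eigenvalues of $\Omega_{1}$, which is $\dim\operatorname{Ran}(\Omega_{1})=\dim\operatorname{Ran}(P_{B}^{\bot}\Omega P_{B}^{\bot})$. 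Finally, $\dim H_{\ell\ell,0}(\beta)=\dim H_{\ell\ell}(\beta)-\dim H_{\ell\ell,1}(\beta)=2N-N_{R}-\dim\operatorname{Ran}(P_{B}^{\bot}\Omega P_{B}^{\bot})$, and Proposition \ref{pspredl2}.4 yields $\dim H_{\ell\ell,1}(\beta)=2(N-N_{R})$ when $\operatorname{Ker}R\cap\operatorname{Ker}\eta=\{0\}$. The hard part will be rigorously justifying this asymptotic identification of clustered eigenvalues with those of $\Omega_{1}$ directly from the Riesz projection framework rather than invoking the perturbation expansions of Section \ref{smdhlr} as a black box; one clean way is to show that $P_{1}(\beta)$ converges as $\beta\to\infty$ to the spectral projection of $\Omega_{1}$ onto its nonzero spectrum, using the factorization (\ref{FacIdOmega1}) and the boundedness of the contour.
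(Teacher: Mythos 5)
Your argument is correct and reaches the same conclusions, but you take a different route than the paper for the crucial dimension count, and it is worth comparing the two. The paper never integrates $(\zeta-A(\beta))^{-1}$ directly: it works entirely within the substitution $\varepsilon=(-\mathrm{i}\beta)^{-1}$, forming the analytic family $M(\varepsilon)=\varepsilon^{-1}P(\varepsilon)C(\varepsilon)P(\varepsilon)$ with $M(0)=P_{B}^{\bot}\Omega P_{B}^{\bot}$, then taking the group projection $P_{0}(\varepsilon)$ of the zero eigengroup of $M(\varepsilon)$. The dimension count is purely algebraic: because $P(\varepsilon)$ and $P_{0}(\varepsilon)$ are commuting analytic projections, and because $\operatorname{Ran}[\mathbf{1}-P_{0}(0)]=\operatorname{Ran}(P_{B}^{\bot}\Omega P_{B}^{\bot})\subseteq\operatorname{Ran}P(0)=\operatorname{Ker}B$, one gets $P(\varepsilon)[\mathbf{1}-P_{0}(\varepsilon)]=\mathbf{1}-P_{0}(\varepsilon)$ for all $\varepsilon$ by constancy of rank, which yields $\dim H_{\ell\ell,1}(\beta)=\dim\operatorname{Ran}(P_{B}^{\bot}\Omega P_{B}^{\bot})$ immediately and without reference to the spectral asymptotics of Section \ref{smdhlr}. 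You instead form the Riesz projection of $A(\beta)$ itself around a fixed contour and identify the constant rank by letting $\beta\to\infty$ and invoking the perturbation expansions (\ref{pod14})--(\ref{pod16}). This is a valid route (those expansions are established independently in \cite{FigWel1} and do not create circularity here), but it imports the entire asymptotic machinery of Section \ref{smdhlr} where the paper uses only the algebraic fact that the range of an analytic projection has constant dimension; you recognize this yourself in the closing paragraph, and the fix you sketch (convergence of $P_{1}(\beta)$ to the spectral projection of $\Omega_{1}$, using the factorization (\ref{FacIdOmega1})) is essentially a reconstruction of the paper's $M(\varepsilon)$ argument. A minor technical remark on your $\beta\to\infty$ step: because $A(\beta)$ is not normal and its high-loss eigenvalues escape to $-\mathrm{i}\infty$, uniform resolvent bounds on the contour are not immediate from spectral separation alone; the paper sidesteps this entirely by working with the operator $M(\varepsilon)$, which is analytic at $\varepsilon=0$. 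In short: same localization via Lemma \ref{lllspeb} and the disc geometry, same preliminary inequalities, but the paper's dimension count is self-contained within the analytic-projection framework, whereas yours relies on external perturbation theory to evaluate the limiting rank.
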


\begin{corollary}
[low-loss/high-Q subspace: dissipative properties]\label{cllspmd}If $N_{R}<N$
and $\beta>c^{-1}\left(  \rho_{\min}/2\right)  $ then the spectrum of
$A\left(  \beta\right)  $ can be partitioned in terms of the frequencies and
damping factors by%
\begin{align*}
\sigma\left(  A\left(  \beta\right)  |_{H_{\ell\ell,0}\left(  \beta\right)
}\right)   &  =\left\{  \zeta\in\sigma\left(  A\left(  \beta\right)  \right)
:0\leq-\operatorname{Im}\zeta\leq c\left(  \beta\right)  \text{ and
}\left\vert \operatorname{Re}\zeta\right\vert \leq c\left(  \beta\right)
\right\}  \\
\sigma\left(  A\left(  \beta\right)  |_{H_{\ell\ell,1}\left(  \beta\right)
}\right)   &  =\left\{  \zeta\in\sigma\left(  A\left(  \beta\right)  \right)
:0\leq-\operatorname{Im}\zeta\leq c\left(  \beta\right)  \text{ and
}\left\vert \operatorname{Re}\zeta\right\vert \geq\rho_{\min}-c\left(
\beta\right)  \right\}
\end{align*}
where $\rho_{\min}-c\left(  \beta\right)  >\max\left\{  c\left(  \beta\right)
,\rho_{\min}/2\right\}  $, $\rho_{\min}/2\leq\omega_{\max}/2$, and $c\left(
\beta\right)  \searrow0$ as $\beta\rightarrow\infty$. Moreover, minimum of the
quality factors $Q_{\zeta}=\frac{1}{2}\frac{\left\vert \operatorname{Re}%
\zeta\right\vert }{-\operatorname{Im}\zeta}$ for $\zeta\in$ $\sigma\left(
A\left(  \beta\right)  |_{H_{\ell\ell,1}\left(  \beta\right)  }\right)  $
satisfy%
\begin{equation}
\min_{\zeta\in\sigma\left(  A\left(  \beta\right)  |_{H_{\ell\ell,1}\left(
\beta\right)  }\right)  }Q_{\zeta}\geq\frac{1}{2}\frac{\rho_{\min}-c\left(
\beta\right)  }{c\left(  \beta\right)  }>\frac{1}{4}\frac{\rho_{\min}%
}{c\left(  \beta\right)  }\geq\frac{1}{2}.
\end{equation}
In particular, $\operatorname{Re}\zeta\not =0$ for every $\zeta\in$
$\sigma\left(  A\left(  \beta\right)  |_{H_{\ell\ell,1}\left(  \beta\right)
}\right)  $.
\end{corollary}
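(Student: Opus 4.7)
My plan is to derive everything directly from Theorem \ref{tllspmd}, which locates $\sigma(A(\beta)|_{H_{\ell\ell,0}(\beta)})$ in the disc $D(0;c(\beta))$ and $\sigma(A(\beta)|_{H_{\ell\ell,1}(\beta)})$ in the union of discs $D(\rho;c(\beta))$ over nonzero $\rho\in\sigma(P_B^\perp\Omega P_B^\perp)$, combined with the fact from Proposition \ref{pspredl2} that every such nonzero $\rho$ is real. The nonnegativity of $-\operatorname{Im}\zeta$ comes from Proposition \ref{pevbd}.

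First I would handle the easy inequalities on $c(\beta)$ and $\rho_{\min}$. Since $c$ is strictly decreasing on $(2\omega_{\max}/b_{\min},\infty)$, the hypothesis $\beta>c^{-1}(\rho_{\min}/2)$ immediately gives $c(\beta)<\rho_{\min}/2$, hence $\rho_{\min}-c(\beta)>\rho_{\min}/2>c(\beta)$. The bound $\rho_{\min}\leq\omega_{\max}$ follows from $\rho_{\min}\leq\|P_B^\perp\Omega P_B^\perp\|\leq\|\Omega\|=\omega_{\max}$ (Proposition \ref{pspredl}), and $c(\beta)\searrow0$ as $\beta\to\infty$ is immediate from the definition (\ref{sintmr1}).

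Next, for the spectral characterization: if $\zeta\in\sigma(A(\beta)|_{H_{\ell\ell,0}(\beta)})$ then by Theorem \ref{tllspmd} we have $|\zeta|\leq c(\beta)$, which bounds both $|\operatorname{Re}\zeta|$ and $|\operatorname{Im}\zeta|$ by $c(\beta)$; combined with $-\operatorname{Im}\zeta\geq0$ from Proposition \ref{pevbd} this yields the desired estimates. If instead $\zeta\in\sigma(A(\beta)|_{H_{\ell\ell,1}(\beta)})$ then $|\zeta-\rho|\leq c(\beta)$ for some nonzero $\rho\in\sigma(P_B^\perp\Omega P_B^\perp)$. Since $\rho$ is real (Proposition \ref{pspredl2}), $|\operatorname{Im}\zeta|=|\operatorname{Im}(\zeta-\rho)|\leq c(\beta)$, giving $0\leq-\operatorname{Im}\zeta\leq c(\beta)$; and by the reverse triangle inequality $|\operatorname{Re}\zeta|\geq|\rho|-|\operatorname{Re}\zeta-\rho|\geq\rho_{\min}-c(\beta)$.

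Finally, the quality factor estimate combines the two bounds: for $\zeta\in\sigma(A(\beta)|_{H_{\ell\ell,1}(\beta)})$,
\[
Q_\zeta=\frac{1}{2}\frac{|\operatorname{Re}\zeta|}{-\operatorname{Im}\zeta}\geq\frac{1}{2}\frac{\rho_{\min}-c(\beta)}{c(\beta)}>\frac{1}{2}\cdot\frac{\rho_{\min}/2}{c(\beta)}=\frac{1}{4}\frac{\rho_{\min}}{c(\beta)}\geq\frac{1}{4}\cdot\frac{\rho_{\min}}{\rho_{\min}/2}=\frac{1}{2},
\]
using $c(\beta)<\rho_{\min}/2$ twice. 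In particular $|\operatorname{Re}\zeta|\geq\rho_{\min}-c(\beta)>0$, so $\operatorname{Re}\zeta\neq0$. The proof is essentially a bookkeeping exercise; there is no real obstacle, since all the deep work has been done in Theorem \ref{tllspmd} and Proposition \ref{pspredl2}. The only place requiring a moment of care is confirming that every $\rho\in\sigma(P_B^\perp\Omega P_B^\perp)\setminus\{0\}$ is real (so that $\rho$ contributes nothing to $\operatorname{Im}\zeta$), which is exactly what Proposition \ref{pspredl2}(1) provides.
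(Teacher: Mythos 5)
Your proof follows the paper's proof very closely: you invoke the disc localization from Theorem \ref{tllspmd}, the reality of the nonzero eigenvalues of $P_B^{\bot}\Omega P_B^{\bot}$ from Proposition \ref{pspredl2}, and the nonnegativity of the damping factor from Proposition \ref{pevbd}, then assemble the Q-factor bound via the triangle inequality exactly as the paper does. The one point you leave implicit is that your argument as written yields only the inclusions $\sigma\left(A(\beta)|_{H_{\ell\ell,0}(\beta)}\right)\subseteq S_0$ and $\sigma\left(A(\beta)|_{H_{\ell\ell,1}(\beta)}\right)\subseteq S_1$; to upgrade these to the stated equalities one must also observe, as the paper does by citing Corollary \ref{cmdic}, that $\sigma\left(A(\beta)|_{H_{h\ell}(\beta)}\right)$ is disjoint from $S_0\cup S_1$ (its damping factors exceed $\omega_{\max}>c(\beta)$) and that $S_0\cap S_1=\emptyset$ (since $\rho_{\min}-c(\beta)>c(\beta)$), so the three-part partition of $\sigma(A(\beta))$ forces the reverse inclusions.
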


\begin{proof}
We begin by proving Theorem \ref{tllspeb}. If $\Omega=0$ then the statement is
true trivially. \ Thus, suppose $\Omega\not =0$. Consider the perturbation and
it's resolvent%
\begin{gather}
C\left(  \varepsilon\right)  =\varepsilon\Omega+B,\text{ }\varepsilon\in%
\mathbb{C}
;\text{ }R\left(  \lambda,\varepsilon\right)  =\left(  \lambda\mathbf{1}%
-C\left(  \varepsilon\right)  \right)  ^{-1},\label{PertC}\\
\lambda\not \in \sigma\left(  C\left(  \varepsilon\right)  \right)  ;\text{
}R_{0}\left(  \lambda\right)  =R\left(  \lambda,0\right)  .\nonumber
\end{gather}
In particular, since $B\geq0$, $\Omega\not =0$ are Hermitian matrices then
from the spectral theory of self-adjoint operators it follows that%
\begin{gather*}
\left\Vert R_{0}\left(  \lambda\right)  \right\Vert =\operatorname{dist}%
\left(  \lambda,\sigma\left(  B\right)  \right)  ^{-1},\text{ }\omega_{\max
}=\left\Vert \Omega\right\Vert >0,\\
b_{\min}=\operatorname{dist}\left(  0,\sigma\left(  B\right)  \setminus
\left\{  0\right\}  \right)  >0\text{.}%
\end{gather*}
We will denote the circle centered at $0$ with radius $\frac{b_{\min}}{2}$ by
$\Gamma$, i.e.,
\begin{equation}
\Gamma=\left\{  \lambda\in%
\mathbb{C}
:\left\vert \lambda\right\vert =\frac{b_{\min}}{2}\right\}  .
\label{Stg1SpltGamma}%
\end{equation}
Then it follows from the results of \cite[Theorems 1 \& 2, Sec. 8.1.]{Bau85},
\cite[Lemma 4, Sec. 3.3.3.]{Bau85}, \cite[Formula (3.5), Sec. 3.3.1.]%
{Bau85}\ that the group projection of the $\left(  0,0\right)  $-group of
perturbed eigenvalues of $C\left(  \varepsilon\right)  $, which we denote by
$P\left(  \varepsilon\right)  $, is analytic in $\varepsilon$ for $\left\vert
\varepsilon\right\vert <\frac{b_{\min}}{2\omega_{\max}}$ and can be
represented by the contour integral over the circle $\Gamma$ (positively
oriented) with
\begin{gather}
P\left(  \varepsilon\right)  =\frac{1}{2\pi i}%
{\textstyle\int_{\Gamma}}
R\left(  \lambda,\varepsilon\right)  d\lambda=%
{\textstyle\sum_{n=0}^{\infty}}
P_{n}\varepsilon^{n},\text{\quad}\left\vert \varepsilon\right\vert
<\frac{b_{\min}}{2\omega_{\max}},\label{PertP}\\
P_{0}=P_{B}^{\bot},\text{ }P_{n}=\frac{1}{2\pi i}%
{\textstyle\int_{\Gamma}}
R_{0}\left(  \lambda\right)  \left[  \Omega R_{0}\left(  \lambda\right)
\right]  ^{n}d\lambda,\text{\quad}n=0,1,2,\ldots.,\nonumber\\
\left\Vert P_{n}\right\Vert \leq\frac{1}{2\pi}%
{\textstyle\int_{\Gamma}}
\left\Vert \Omega\right\Vert ^{n}\left\Vert R_{0}\left(  \lambda\right)
\right\Vert ^{n+1}d\left\vert \lambda\right\vert =\omega_{\max}^{n}\left(
\frac{b_{\min}}{2}\right)  ^{-\left(  n+1\right)  }\frac{1}{2\pi}%
{\textstyle\int_{\Gamma}}
d\left\vert \lambda\right\vert =\left(  \frac{2\omega_{\max}}{b_{\min}%
}\right)  ^{n}.\nonumber
\end{gather}

Now we can define the analytic matrix-valued function%
\begin{equation}
M\left(  \varepsilon\right)  =\frac{1}{\varepsilon}P\left(  \varepsilon
\right)  C\left(  \varepsilon\right)  P\left(  \varepsilon\right)
,\text{\quad}\left\vert \varepsilon\right\vert <\frac{b_{\min}}{2\omega_{\max
}} \label{PertM}%
\end{equation}
which by by \cite[Lemma 4, Sec. 3.3.3.]{Bau85} has the properties%
\begin{gather}
M\left(  \varepsilon\right)  =\frac{1}{\varepsilon}C\left(  \varepsilon
\right)  P\left(  \varepsilon\right)  =M\left(  0\right)  +%
{\textstyle\sum_{n=1}^{\infty}}
\left(  \Omega P_{n}+BP_{n+1}\right)  \varepsilon^{n},\text{\quad}M\left(
0\right)  =P_{B}^{\bot}\Omega P_{B}^{\bot},\label{PertM_1}\\
\Omega P_{n}+BP_{n+1}=\frac{1}{2\pi i}%
{\textstyle\int_{\Gamma}}
\lambda R_{0}\left(  \lambda\right)  \left[  \Omega R_{0}\left(
\lambda\right)  \right]  ^{n+1}d\lambda,\text{\quad}n=1,2,\ldots.,\nonumber\\
\left\Vert \Omega P_{n}+BP_{n+1}\right\Vert \leq\frac{1}{2\pi}%
{\textstyle\int_{\Gamma}}
\left\vert \lambda\right\vert \left\Vert \Omega\right\Vert ^{n+1}\left\Vert
R_{0}\left(  \lambda\right)  \right\Vert ^{n+2}d\left\vert \lambda\right\vert
=\frac{b_{\min}}{2}\left(  \frac{2\omega_{\max}}{b_{\min}}\right)
^{n+1}.\nonumber
\end{gather}
As $M\left(  0\right)  =P_{B}^{\bot}\Omega P_{B}^{\bot}$ is a Hermitian matrix
(since $\Omega$ and $P_{B}^{\bot}$ are) and $0\leq\frac{2\omega_{\max}%
}{b_{\min}}\left\vert \varepsilon\right\vert <1$ then it follows immediately
from Lemma \ref{lllspeb} that%
\begin{gather*}
\max_{\lambda\in\sigma\left(  M\left(  \varepsilon\right)  \right)  }%
\min_{\rho\in\sigma\left(  M\left(  0\right)  \right)  }\left\vert
\lambda-\rho\right\vert \leq\left\Vert M\left(  \varepsilon\right)  -M\left(
0\right)  \right\Vert \leq%
{\textstyle\sum_{n=1}^{\infty}}
\left\Vert \Omega P_{n}+BP_{n+1}\right\Vert \left\vert \varepsilon\right\vert
^{n}\\
\leq%
{\textstyle\sum_{n=1}^{\infty}}
\frac{b_{\min}}{2}\left(  \frac{2\omega_{\max}}{b_{\min}}\right)
^{n+1}\left\vert \varepsilon\right\vert ^{n}=\frac{b_{\min}}{2}\left(
\frac{2\omega_{\max}}{b_{\min}}\right)
{\textstyle\sum_{n=1}^{\infty}}
\left(  \frac{2\omega_{\max}}{b_{\min}}\left\vert \varepsilon\right\vert
\right)  ^{n}\\
=\left(  \frac{2\omega_{\max}^{2}}{b_{\min}}\right)  \left(  \frac{\left\vert
\varepsilon\right\vert }{1-\frac{2\omega_{\max}}{b_{\min}}\left\vert
\varepsilon\right\vert }\right)  .
\end{gather*}
Thus we have proven%
\begin{align}
\left\vert \varepsilon\right\vert  &  <\frac{b_{\min}}{2\omega_{\max}%
}\Rightarrow\max_{\lambda\in\sigma\left(  \varepsilon^{-1}C\left(
\varepsilon\right)  P\left(  \varepsilon\right)  \right)  }\min_{\rho\in
\sigma\left(  P_{B}^{\bot}\Omega P_{B}^{\bot}\right)  }\left\vert \lambda
-\rho\right\vert \leq b\left(  \left\vert \varepsilon\right\vert \right)
,\text{ where}\nonumber\\
b\left(  u\right)   &  =\left(  \frac{2\omega_{\max}^{2}}{b_{\min}}\right)
\left(  \frac{u}{1-\frac{2\omega_{\max}}{b_{\min}}u}\right)  \text{,\quad for
}0\leq u<\frac{b_{\min}}{2\omega_{\max}} \label{bfunc}%
\end{align}

We will now prove that
\[
\sigma\left(  C\left(  \varepsilon\right)  |_{\operatorname{Ran}P\left(
\varepsilon\right)  }\right)  \subseteq D\left(  0;\left\vert \varepsilon
\right\vert \omega_{\text{max}}\right)  ,\text{\quad}\sigma\left(  C\left(
\varepsilon\right)  |_{\operatorname{Ran}\left[  \mathbf{1}-P\left(
\varepsilon\right)  \right]  }\right)  \subseteq%
{\textstyle\bigcup_{j=1}^{N_{R}}}
D\left(  b_{j};\left\vert \varepsilon\right\vert \omega_{\max}\right)  .
\]
First, it follows from Lemma \ref{lllspeb} and the fact that $\omega
_{\text{max}}=\left\Vert \Omega\right\Vert $ that%
\begin{gather}
\sigma\left(  C\left(  \varepsilon\right)  \right)  \subseteq\left\{
\lambda\in%
\mathbb{C}
:\operatorname{dist}\left(  \lambda,\sigma\left(  B\right)  \right)
\leq\left\vert \varepsilon\right\vert \omega_{\max}\right\} \nonumber\\
=D\left(  0;\left\vert \varepsilon\right\vert \omega_{\max}\right)  \cup%
{\textstyle\bigcup_{j=1}^{N_{R}}}
D\left(  b_{j};\left\vert \varepsilon\right\vert \omega_{\max}\right)
,\nonumber\\
\left\vert \varepsilon\right\vert <\frac{b_{\min}}{2\omega_{\max}}\Rightarrow
D\left(  0;\left\vert \varepsilon\right\vert \omega_{\max}\right)  \cap%
{\textstyle\bigcup_{j=1}^{N_{R}}}
D\left(  b_{j};\left\vert \varepsilon\right\vert \omega_{\max}\right)
=\emptyset\text{ and}\label{Stg1Splt_1}\\
D\left(  b_{j};\left\vert \varepsilon\right\vert \omega_{\max}\right)
\subseteq\left\{  \lambda\in%
\mathbb{C}
:\left\vert \lambda-b_{j}\right\vert <\frac{b_{\min}}{2}\right\}  \text{ for
}0\leq j\leq N_{R}.\nonumber
\end{gather}
Now by analytic continuation of the eigenvalues of the perturbation $C\left(
\varepsilon\right)  $ of $C\left(  0\right)  =B$ from $\varepsilon=0$ in the
neighborhood $\left\vert \varepsilon\right\vert <\frac{b_{\min}}{2\omega
_{\max}}$, it follows that $\sigma\left(  C\left(  \varepsilon\right)
|_{\operatorname{Ran}P\left(  \varepsilon\right)  }\right)  \subseteq
\operatorname{int}\Gamma=\left\{  \lambda\in%
\mathbb{C}
:\left\vert \lambda\right\vert <\frac{b_{\min}}{2}\right\}  $ implying that if
$\left\vert \varepsilon\right\vert <\frac{b_{\min}}{2\omega_{\max}}$ then
\begin{equation}
\sigma\left(  C\left(  \varepsilon\right)  |_{\operatorname{Ran}P\left(
\varepsilon\right)  }\right)  \subseteq D\left(  0;\left\vert \varepsilon
\right\vert \omega_{\max}\right)  \text{, }\sigma\left(  C\left(
\varepsilon\right)  |_{\operatorname{Ran}\left[  \mathbf{1}-P\left(
\varepsilon\right)  \right]  }\right)  \subseteq%
{\textstyle\bigcup_{j=1}^{N_{R}}}
D\left(  b_{j};\left\vert \varepsilon\right\vert \omega_{\max}\right)  .
\label{Stg1Splt_2}%
\end{equation}

Now we make the substitute $\varepsilon=\left(  -\mathrm{i}\beta\right)
^{-1}$ with $\left\vert \varepsilon\right\vert <\frac{b_{\min}}{2\omega_{\max
}}$ so that%
\[
A\left(  \beta\right)  =\varepsilon^{-1}C\left(  \varepsilon\right)
,\text{\quad}\sigma\left(  \varepsilon^{-1}C\left(  \varepsilon\right)
|_{\operatorname{Ran}P\left(  \varepsilon\right)  }\right)  \subseteq D\left(
0;\omega_{\max}\right)
\]
which implies by the uniqueness portion of Theorem \ref{tmdic} that%
\begin{align*}
\sigma_{0}\left(  A\left(  \beta\right)  \right)   &  =\sigma\left(
\varepsilon^{-1}C\left(  \varepsilon\right)  |_{\operatorname{Ran}P\left(
\varepsilon\right)  }\right)  ,\text{ }H_{\ell\ell}\left(  \beta\right)
=\operatorname{Ran}P\left(  \varepsilon\right)  ,\text{ }H_{h\ell}\left(
\beta\right)  =\operatorname{Ran}\left[  \mathbf{1}-P\left(  \varepsilon
\right)  \right]  ,\\
\sigma_{1}\left(  A\left(  \beta\right)  \right)   &  =\sigma\left(
\varepsilon^{-1}C\left(  \varepsilon\right)  |_{\operatorname{Ran}\left[
\mathbf{1}-P\left(  \varepsilon\right)  \right]  }\right)  \subseteq%
{\textstyle\bigcup_{j=1}^{N_{R}}}
D\left(  \varepsilon^{-1}b_{j};\omega_{\max}\right)
\end{align*}
and hence
\begin{equation}
\left\vert \beta\right\vert ^{-1}=\left\vert \varepsilon\right\vert
<\frac{b_{\min}}{2\omega_{\max}}\Rightarrow\max_{\lambda\in\sigma\left(
\varepsilon^{-1}C\left(  \varepsilon\right)  |_{\operatorname{Ran}P\left(
\varepsilon\right)  }\right)  }\min_{\rho\in\sigma\left(  P_{B}^{\bot}\Omega
P_{B}^{\bot}\right)  }\left\vert \lambda-\rho\right\vert \leq b\left(
\left\vert \varepsilon\right\vert \right)  =c\left(  \left\vert \beta
\right\vert \right)  . \label{Stg1Splt_3}%
\end{equation}
This completes the proof of Theorem \ref{tllspeb}.

We will now prove Theorem \ref{tllspmd}. Assume that $N_{R}<N$ (i.e.,
$P_{B}^{\bot}\Omega P_{B}^{\bot}\not =0$, by Proposition \ref{pspredl2}). We
will work with the analytic perturbation (\ref{PertM}) and at the end
interpret the results for the substitute $\varepsilon=\left(  -\mathrm{i}%
\beta\right)  ^{-1}$. We begin by assuming that $\left\vert \varepsilon
\right\vert <\frac{b_{\min}}{2\omega_{\max}}$. Then by definition of $M\left(
\varepsilon\right)  $ in (\ref{PertM}), the fact that $\dim\operatorname{Ran}%
P\left(  \varepsilon\right)  =2N-N_{R}>0$ for the projection $P\left(
\varepsilon\right)  $ in (\ref{PertP}), and by (\ref{Stg1Splt_2}),
(\ref{Stg1Splt_3}) we have must have%
\begin{gather}
M\left(  \varepsilon\right)  =\varepsilon^{-1}C\left(  \varepsilon\right)
\text{ on }\operatorname{Ran}P\left(  \varepsilon\right)  ,\label{Stg2Splt_1}%
\\
\sigma\left(  M\left(  \varepsilon\right)  \right)  =\sigma\left(
\varepsilon^{-1}C\left(  \varepsilon\right)  |_{\operatorname{Ran}P\left(
\varepsilon\right)  }\right)  \cup\left\{  0\right\}  \subseteq D\left(
0;\omega_{\max}\right)  ,\label{Stg2Splt_1a}\\
\sigma\left(  M\left(  \varepsilon\right)  \right)  \subseteq%
{\textstyle\bigcup_{\rho\in\sigma\left(  P_{B}^{\bot}\Omega P_{B}^{\bot
}\right)  }}
D\left(  \rho;b\left(  \left\vert \varepsilon\right\vert \right)  \right)
.\label{Stg2Splt_1b}%
\end{gather}
We now define $b^{-1}\left(  v\right)  $, $v>0$ to be the inverse function of
$b\left(  u\right)  $ in (\ref{bfunc}), i.e.,%
\begin{equation}
b^{-1}\left(  v\right)  =\frac{v}{\left(  \frac{2\omega_{\max}^{2}}{b_{\min}%
}\right)  +\left(  2\frac{\omega_{\max}}{b_{\min}}\right)  v},\text{\quad for
}v\geq0.\label{bfuncinv}%
\end{equation}
It follows that the functions $b$ and $b^{-1}$ are strictly increasing
functions and%
\begin{equation}
b^{-1}\left(  v\right)  <\lim_{v\rightarrow\infty}b^{-1}\left(  v\right)
=\frac{b_{\min}}{2\omega_{\max}}\text{ for any }v\geq0\text{.}%
\label{bfuninv_1}%
\end{equation}
From now on we will assume $\varepsilon\in%
\mathbb{C}
$ is such that $\left\vert \varepsilon\right\vert <b^{-1}\left(  \rho_{\min
}/2\right)  $. This implies that
\begin{equation}
\left\vert \varepsilon\right\vert <b^{-1}\left(  \rho_{\min}/2\right)
<\frac{b_{\min}}{2\omega_{\max}},\text{\qquad}b\left(  \left\vert
\varepsilon\right\vert \right)  <\rho_{\min}/2\leq\omega_{\max}%
/2.\label{bfunc_1}%
\end{equation}
Thus since $b\left(  \left\vert \varepsilon\right\vert \right)  <\rho_{\min
}/2$ then%
\begin{equation}
D\left(  0;b\left(  \left\vert \varepsilon\right\vert \right)  \right)  \cap%
{\textstyle\bigcup_{0\not =\rho\in\sigma\left(  P_{B}^{\bot}\Omega P_{B}%
^{\bot}\right)  }}
D\left(  \rho;b\left(  \left\vert \varepsilon\right\vert \right)  \right)
=\emptyset\label{Stg2Splt_2}%
\end{equation}
implying the spectrum of $M\left(  \varepsilon\right)  $ splits as
\begin{align}
\sigma\left(  M\left(  \varepsilon\right)  \right)   &  =\sigma_{0}\left(
M\left(  \varepsilon\right)  \right)  \cup\sigma_{1}\left(  M\left(
\varepsilon\right)  \right)  \subseteq D\left(  0;\omega_{\text{max}}\right)
,\text{ }\sigma_{0}\left(  M\left(  \varepsilon\right)  \right)  \cap
\sigma_{1}\left(  M\left(  \varepsilon\right)  \right)  =\emptyset
,\label{Stg2Splt_3}\\
\sigma_{0}\left(  M\left(  \varepsilon\right)  \right)   &  :=\sigma\left(
M\left(  \varepsilon\right)  \right)  \cap D\left(  0;b\left(  \left\vert
\varepsilon\right\vert \right)  \right)  \subseteq\left\{  \lambda\in%
\mathbb{C}
:\left\vert \lambda\right\vert <\rho_{\min}/2\right\}  ,\label{Stg2Splt_3a}\\
\sigma_{1}\left(  M\left(  \varepsilon\right)  \right)   &  :=\sigma\left(
M\left(  \varepsilon\right)  \right)  \cap%
{\textstyle\bigcup_{0\not =\rho\in\sigma\left(  P_{B}^{\bot}\Omega P_{B}%
^{\bot}\right)  }}
D\left(  \rho;b\left(  \left\vert \varepsilon\right\vert \right)  \right)
\subseteq%
\mathbb{C}
\setminus D\left(  0;\rho_{\min}/2\right)  ,\label{Stg2Splt_3b}%
\end{align}
which follows from (\ref{Stg2Splt_1a}), (\ref{Stg2Splt_1b}), (\ref{bfunc_1}),
and the fact that if $b\left(  \left\vert \varepsilon\right\vert \right)
\geq\left\vert \lambda-\rho\right\vert $ and $0\not =\rho\in\sigma\left(
P_{B}^{\bot}\Omega P_{B}^{\bot}\right)  $ then%
\[
\rho_{\min}/2>b\left(  \left\vert \varepsilon\right\vert \right)
\geq\left\vert \lambda-\rho\right\vert \geq\left\vert \rho\right\vert
-\left\vert \lambda\right\vert \geq\rho_{\min}-\left\vert \lambda\right\vert
.\text{ }%
\]
Denote the circle centered at $0$ with radius $\frac{\rho_{\min}}{2}$ by
$\Gamma_{0}$, i.e.,%
\begin{equation}
\Gamma_{0}=\left\{  \lambda\in%
\mathbb{C}
:\left\vert \lambda\right\vert =\rho_{\min}/2\right\}
.\label{Stg2SpltGamma_0}%
\end{equation}
Then it follows from the results of \cite[Theorems 1 \& 2, Sec. 8.1.]{Bau85},
\cite[Lemma 4, Sec. 3.3.3.]{Bau85}, \cite[Formula (3.5), Sec. 3.3.1.]%
{Bau85}\ that the group projection of the $\left(  0,0\right)  $-group of
perturbed eigenvalues of $M\left(  \varepsilon\right)  $, which we denote by
$P_{0}\left(  \varepsilon\right)  $, is analytic in $\varepsilon$ for
$\left\vert \varepsilon\right\vert <b^{-1}\left(  \rho_{\min}/2\right)  $ and
can be represented by the contour integral over the circle $\Gamma_{0}$
(positively oriented) with
\begin{equation}
P_{0}\left(  \varepsilon\right)  =\frac{1}{2\pi\mathrm{i}}%
{\textstyle\int_{\Gamma_{0}}}
\left(  \lambda\mathbf{1}-M\left(  \varepsilon\right)  \right)  \,\mathrm{d}%
\lambda,\text{\qquad}\left\vert \varepsilon\right\vert <b^{-1}\left(
\rho_{\min}/2\right)  ,\label{PertP_0}%
\end{equation}
where $P_{0}\left(  0\right)  $ is the projection onto the algebraic
eigenspace of $M\left(  0\right)  =P_{B}^{\bot}\Omega P_{B}^{\bot}$
corresponding to the zero eigenvalue. As $M\left(  0\right)  $ is a Hermitian
matrix this implies that $P_{0}\left(  0\right)  $ is the orthogonal
projection onto the $\operatorname{Ker}\left(  P_{B}^{\bot}\Omega P_{B}^{\bot
}\right)  $ and $\mathbf{1}-P_{0}\left(  0\right)  $ is the orthogonal
projection onto $\operatorname{Ran}\left(  P_{B}^{\bot}\Omega P_{B}^{\bot
}\right)  $. It follows from analytic continuation of the eigenvalues of
$M\left(  \varepsilon\right)  $ from $\varepsilon=0$ that $\mathbf{1}%
-P_{0}\left(  \varepsilon\right)  $ is the sum over all $\rho\in\sigma\left(
P_{B}^{\bot}\Omega P_{B}^{\bot}\right)  $, $\rho\not =0$ of the group
projection of the $\left(  0,\rho\right)  $-group of perturbed eigenvalues of
$M\left(  \varepsilon\right)  $ and hence%
\begin{align}%
\mathbb{C}
^{2N} &  =\operatorname{Ran}P_{0}\left(  \varepsilon\right)  \oplus
\operatorname{Ran}\left[  \mathbf{1}-P_{0}\left(  \varepsilon\right)  \right]
\label{Stg2Splt4}\\
\sigma\left(  M\left(  \varepsilon\right)  |_{\operatorname{Ran}P_{0}\left(
\varepsilon\right)  }\right)   &  =\sigma_{0}\left(  M\left(  \varepsilon
\right)  \right)  ,\text{ }\sigma\left(  M\left(  \varepsilon\right)
|_{\operatorname{Ran}\left[  \mathbf{1}-P_{0}\left(  \varepsilon\right)
\right]  }\right)  =\sigma_{1}\left(  M\left(  \varepsilon\right)  \right)
,\nonumber\\
\dim\operatorname{Ran}\left[  \mathbf{1}-P_{0}\left(  \varepsilon\right)
\right]   &  =\dim\operatorname{Ran}\left[  \mathbf{1}-P_{0}\left(  0\right)
\right]  =\dim\operatorname{Ran}\left(  P_{B}^{\bot}\Omega P_{B}^{\bot
}\right)  .\nonumber
\end{align}
Now it follows that since $C\left(  \varepsilon\right)  $ and $M\left(
\varepsilon\right)  $ commute then from their integral representations it
follows that $P\left(  \varepsilon\right)  $ and $P_{0}\left(  \varepsilon
\right)  $ are commuting projections which also commute with $C\left(
\varepsilon\right)  $. Thus in particular, $P\left(  \varepsilon\right)
\left[  \mathbf{1}-P_{0}\left(  \varepsilon\right)  \right]  $ and $P\left(
\varepsilon\right)  P_{0}\left(  \varepsilon\right)  $ are also analytic
projections that commute with $C\left(  \varepsilon\right)  $ such that
\begin{gather}
\operatorname{Ran}P\left(  \varepsilon\right)  =\operatorname{Ran}\left[
P\left(  \varepsilon\right)  P_{0}\left(  \varepsilon\right)  \right]
\oplus\operatorname{Ran}\left\{  P\left(  \varepsilon\right)  \left[
\mathbf{1}-P_{0}\left(  \varepsilon\right)  \right]  \right\}
,\label{Stg2Splt5}\\
P\left(  \varepsilon\right)  \left[  \mathbf{1}-P_{0}\left(  \varepsilon
\right)  \right]  =\left[  \mathbf{1}-P_{0}\left(  \varepsilon\right)
\right]  P\left(  \varepsilon\right)  =\left[  \mathbf{1}-P_{0}\left(
\varepsilon\right)  \right]  P\left(  \varepsilon\right)  \left[
\mathbf{1}-P_{0}\left(  \varepsilon\right)  \right]  .\label{Stg2Splt5a}%
\end{gather}
We will now prove that
\begin{equation}
P\left(  \varepsilon\right)  \left[  \mathbf{1}-P_{0}\left(  \varepsilon
\right)  \right]  =\mathbf{1}-P_{0}\left(  \varepsilon\right)
.\label{Stg2Splt5b}%
\end{equation}
To do we need only prove that $\dim\operatorname{Ran}\left\{  P\left(
\varepsilon\right)  \left[  \mathbf{1}-P_{0}\left(  \varepsilon\right)
\right]  \right\}  =\dim\operatorname{Ran}\left[  \mathbf{1}-P_{0}\left(
\varepsilon\right)  \right]  $, but by the fact that these are the ranges of
analytic projections then these dimensions are constant and hence it sufficies
to prove that $P\left(  0\right)  \left[  \mathbf{1}-P_{0}\left(  0\right)
\right]  =\mathbf{1}-P_{0}\left(  0\right)  $. But this follows immediately
from the facts%
\begin{equation}
P\left(  0\right)  =P_{B}^{\bot},\text{ }\operatorname{Ker}\left[
\mathbf{1}-P_{0}\left(  0\right)  \right]  =\operatorname{Ker}\left(
P_{B}^{\bot}\Omega P_{B}^{\bot}\right)  ,\text{ }\operatorname{Ran}\left[
\mathbf{1}-P_{0}\left(  0\right)  \right]  =\operatorname{Ran}\left(
P_{B}^{\bot}\Omega P_{B}^{\bot}\right)  .\label{Stg2Splt5c}%
\end{equation}
Thus we conclude from (\ref{Stg2Splt5})-(\ref{Stg2Splt5c}) that%
\begin{align}
\operatorname{Ran}P\left(  \varepsilon\right)   &  =\operatorname{Ran}\left[
P\left(  \varepsilon\right)  P_{0}\left(  \varepsilon\right)  \right]
\oplus\operatorname{Ran}\left[  \mathbf{1}-P_{0}\left(  \varepsilon\right)
\right]  ,\label{Stg2Splt6}\\
\dim\operatorname{Ran}\left[  P\left(  \varepsilon\right)  P_{0}\left(
\varepsilon\right)  \right]   &  =\dim\operatorname{Ran}P\left(
\varepsilon\right)  -\dim\operatorname{Ran}\left[  \mathbf{1}-P_{0}\left(
\varepsilon\right)  \right]  \nonumber\\
&  =2N-N_{R}-\dim\operatorname{Ran}\left(  P_{B}^{\bot}\Omega P_{B}^{\bot
}\right)  .\nonumber
\end{align}
By making the substitute $\varepsilon=\left(  -\mathrm{i}\beta\right)  ^{-1}$
so that $A\left(  \beta\right)  =\varepsilon^{-1}C\left(  \varepsilon\right)
$, $b\left(  \left\vert \varepsilon\right\vert \right)  =c\left(  \left\vert
\beta\right\vert \right)  $, and by defining%
\begin{equation}
H_{\ell\ell,0}\left(  \beta\right)  =\operatorname{Ran}\left[  P\left(
\varepsilon\right)  P_{0}\left(  \varepsilon\right)  \right]  ,\text{\quad
}H_{\ell\ell,1}\left(  \beta\right)  =\operatorname{Ran}\left[  \mathbf{1}%
-P_{0}\left(  \varepsilon\right)  \right]  ,\label{Stg2Splt6a}%
\end{equation}
the proof of Theorem \ref{tllspmd} follows immediately from this and
Proposition \ref{pspredl2}.

We will now prove Corollary \ref{cllspmd}. Suppose $P_{B}^{\bot}\Omega
P_{B}^{\bot}\not =0$ and $\left\vert \beta\right\vert >c^{-1}\left(
\rho_{\min}/2\right)  $. Then from what we have proved above $\rho_{\min
}-c\left(  \left\vert \beta\right\vert \right)  >\max\left\{  c\left(
\left\vert \beta\right\vert \right)  ,\rho_{\min}/2\right\}  $, $\rho_{\min
}/2\leq\omega_{\max}/2$, and $c\left(  \left\vert \beta\right\vert \right)
\searrow0$ as $\left\vert \beta\right\vert \rightarrow\infty$ and
\begin{align*}
&  \sigma\left(  A\left(  \beta\right)  |_{H_{\ell\ell,0}\left(  \beta\right)
}\right)  =\sigma\left(  A\left(  \beta\right)  \right)  \cap D\left(
0;c\left(  \left\vert \beta\right\vert \right)  \right) \\
&  \sigma\left(  A\left(  \beta\right)  |_{H_{\ell\ell,1}\left(  \beta\right)
}\right)  =\sigma\left(  A\left(  \beta\right)  \right)  \cap%
{\textstyle\bigcup_{0\not =\rho\in\sigma\left(  P_{B}^{\bot}\Omega P_{B}%
^{\bot}\right)  }}
D\left(  \rho;c\left(  \left\vert \beta\right\vert \right)  \right)  .
\end{align*}
Thus if $\zeta\in\sigma\left(  A\left(  \beta\right)  |_{H_{\ell\ell,0}\left(
\beta\right)  }\right)  $ then $\left\vert \zeta\right\vert \leq c\left(
\left\vert \beta\right\vert \right)  $ implies $\left\vert \operatorname{Im}%
\zeta\right\vert \leq c\left(  \beta\right)  $ and $\left\vert
\operatorname{Re}\zeta\right\vert \leq c\left(  \beta\right)  $. Also, if
$\zeta\in\sigma\left(  A\left(  \beta\right)  |_{H_{\ell\ell,0}\left(
\beta\right)  }\right)  $ then there exists $\rho\in\sigma\left(  P_{B}^{\bot
}\Omega P_{B}^{\bot}\right)  $ with $\rho\not =0$ such that $c\left(
\left\vert \beta\right\vert \right)  \geq\left\vert \zeta-\rho\right\vert
\geq\left\vert \rho\right\vert -\left\vert \zeta\right\vert $ implying
$\left\vert \zeta\right\vert \geq\left\vert \rho\right\vert -c\left(
\left\vert \beta\right\vert \right)  >c\left(  \left\vert \beta\right\vert
\right)  >\rho_{\min}/2$ and since $\rho\in$ $%
\mathbb{R}
$ then $\left\vert \operatorname{Re}\zeta-\rho\right\vert ,\left\vert
\operatorname{Im}\zeta\right\vert \leq\left\vert \zeta-\rho\right\vert \leq
c\left(  \left\vert \beta\right\vert \right)  $ and hence $\left\vert
\operatorname{Re}\zeta\right\vert \geq\left\vert \rho\right\vert -c\left(
\left\vert \beta\right\vert \right)  \geq\rho_{\min}-c\left(  \left\vert
\beta\right\vert \right)  \geq\rho_{\min}/2$. Also, if $\beta>c^{-1}\left(
\rho_{\min}/2\right)  $ then $\beta>2\frac{\omega_{\max}}{b_{\min}}$ so that
$\zeta\in\sigma\left(  A\left(  \beta\right)  \right)  $ implies
$0\leq-\operatorname{Im}\zeta=\left\vert \operatorname{Im}\zeta\right\vert $
\ and by Corollary \ref{cmdic} we have $\sigma\left(  A\left(  \beta\right)
|_{H_{\ell\ell}\left(  \beta\right)  }\right)  =\left\{  \zeta\in\sigma\left(
A\left(  \beta\right)  \right)  :0\leq-\operatorname{Im}\zeta\leq\omega_{\max
}\right\}  $. These facts prove that
\begin{align*}
\sigma\left(  A\left(  \beta\right)  |_{H_{\ell\ell,0}\left(  \beta\right)
}\right)   &  =\left\{  \zeta\in\sigma\left(  A\left(  \beta\right)  \right)
:0\leq-\operatorname{Im}\zeta\leq c\left(  \beta\right)  \text{ and
}\left\vert \operatorname{Re}\zeta\right\vert \leq c\left(  \beta\right)
\right\} \\
\sigma\left(  A\left(  \beta\right)  |_{H_{\ell\ell,1}\left(  \beta\right)
}\right)   &  =\left\{  \zeta\in\sigma\left(  A\left(  \beta\right)  \right)
:0\leq-\operatorname{Im}\zeta\leq c\left(  \beta\right)  \text{ and
}\left\vert \operatorname{Re}\zeta\right\vert \geq\rho_{\min}-c\left(
\beta\right)  \right\}
\end{align*}
and%
\[
\max_{\zeta\in\sigma\left(  A\left(  \beta\right)  |_{H_{\ell\ell,1}\left(
\beta\right)  }\right)  }Q_{\zeta}\geq\frac{1}{2}\frac{\rho_{\min}-c\left(
\beta\right)  }{c\left(  \beta\right)  }>\frac{1}{4}\frac{\rho_{\min}%
}{c\left(  \beta\right)  }\geq\frac{1}{2}.
\]
This completes the proof of Corollary \ref{cllspmd}.
\end{proof}

\begin{corollary}
[Underdamped: low-loss/high-Q subspace]\label{cllspmdd}Suppose the condition
(\ref{cnddl}) is true and $N_{R}<N$. Define $\beta_{2}$ by%
\begin{equation}
\beta_{2}=\max\left\{  \min\left\{  c^{-1}\left(  \rho_{\min}/2\right)
,\left(  c^{\flat}\right)  ^{-1}\left(  \rho_{\min}^{\flat}/2\right)
\right\}  ,2\frac{\omega_{\max}}{b_{\min}},2\frac{\omega_{\max}^{\flat}%
}{b_{\min}^{\flat}}\right\}  .\label{DefBeta2}%
\end{equation}
If $\beta>\beta_{2}$ then in addition to Theorems \ref{tmdic}, \ref{tmddI}
being true, the minimum of the quality factors $Q_{\zeta}=\frac{1}{2}%
\frac{\left\vert \operatorname{Re}\zeta\right\vert }{-\operatorname{Im}\zeta}$
for $\zeta\in$ $\sigma\left(  A\left(  \beta\right)  |_{H_{\ell\ell,1}\left(
\beta\right)  }\right)  $ satisfy%
\begin{equation}
\min_{\zeta\in\sigma\left(  A\left(  \beta\right)  |_{H_{\ell\ell,1}\left(
\beta\right)  }\right)  }Q_{\zeta}>\frac{1}{2}.
\end{equation}
In particular, $\operatorname{Re}\zeta\not =0$ for every $\zeta\in$
$\sigma\left(  A\left(  \beta\right)  |_{H_{\ell\ell,1}\left(  \beta\right)
}\right)  $.
\end{corollary}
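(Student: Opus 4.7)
The plan is to reduce Corollary \ref{cllspmdd} to Corollary \ref{cllspmd} (or its dual version) via the duality framework of Section \ref{subsDualSys}. First, I would note that the hypothesis $\beta>\beta_{2}$ immediately ensures $\beta>\max\{2\omega_{\max}/b_{\min},\,2\omega_{\max}^{\flat}/b_{\min}^{\flat}\}$, so Theorems \ref{tmdic} and \ref{tmddI} together with their duals apply, yielding the decomposition $H=H_{h\ell}(\beta)\oplus H_{\ell\ell,0}(\beta)\oplus H_{\ell\ell,1}(\beta)$ of Theorem \ref{tmddI} along with all the spectral bounds of Corollary \ref{cmddI}.

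Since $\beta_{2}\geq\min\{c^{-1}(\rho_{\min}/2),\,(c^{\flat})^{-1}(\rho_{\min}^{\flat}/2)\}$, we are in at least one of two cases: \textbf{Case A}, where $\beta>c^{-1}(\rho_{\min}/2)$, or \textbf{Case B}, where $\beta>(c^{\flat})^{-1}(\rho_{\min}^{\flat}/2)$. In Case A, Theorem \ref{tllspmd} furnishes an alternate splitting $H_{\ell\ell}(\beta)=\tilde H_{\ell\ell,0}(\beta)\oplus\tilde H_{\ell\ell,1}(\beta)$ and Corollary \ref{cllspmd} gives the Q-factor bound $Q_{\zeta}>1/2$ for every $\zeta\in\sigma(A(\beta)|_{\tilde H_{\ell\ell,1}(\beta)})$. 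The essential step is then to identify $\tilde H_{\ell\ell,1}(\beta)$ with the $H_{\ell\ell,1}(\beta)$ from Theorem \ref{tmddI}: both are $2(N-N_{R})$-dimensional invariant subspaces of $A(\beta)|_{H_{\ell\ell}(\beta)}$ (the dimension of $\tilde H_{\ell\ell,1}(\beta)$ is computed via Proposition \ref{pspredl2}.5, using the duality condition), so by uniqueness of the invariant subspace associated with a given spectral subset, it suffices to show their spectral characterizations pick out the same eigenvalues. This should follow by combining the spectral exclusion $\sigma(A(\beta))\cap\{c(\beta)<|\zeta|\leq\rho_{\min}/2\}=\emptyset$ implied by Theorem \ref{tllspmd} with the tighter bound $|\zeta|<1/(\beta b_{\min}^{\flat}-\omega_{\max}^{\flat})<\omega_{\min}$ on $\sigma(A(\beta)|_{H_{\ell\ell,0}(\beta)})$ from Corollary \ref{cmddI}.

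In Case B, I would apply Case A to the dual Lagrangian system (\ref{dradis2a}), obtaining $Q_{-\zeta^{-1}}>1/2$ for every $-\zeta^{-1}\in\sigma(A^{\flat}(\beta)|_{H_{\ell\ell,1}^{\flat}(\beta)})$. The result is then transferred back via the spectral duality (\ref{tmddIsm}), namely $\sigma(A(\beta)|_{H_{\ell\ell,1}(\beta)})=-\sigma(A^{\flat}(\beta)|_{H_{\ell\ell,1}^{\flat}(\beta)})^{-1}$, together with the Q-factor invariance $Q_{\zeta}=Q_{-\zeta^{-1}}$ from (\ref{dradis4_3}). In either case we conclude $Q_{\zeta}>1/2$ on $H_{\ell\ell,1}(\beta)$, and the ``in particular'' statement $\operatorname{Re}\zeta\not=0$ follows at once from $Q_{\zeta}>0$ and the definition (\ref{sintro6}) of the Q-factor.

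The main obstacle will be the subspace identification in Case A (and its dual counterpart in Case B), since the two spectral characterizations of the low-loss/high-Q subspace---the annulus $\{\omega_{\min}\leq|\zeta|\leq\omega_{\max}\}$ of Theorem \ref{tmddI} versus the union of discs $\bigcup_{0\not=\rho\in\sigma(P_{B}^{\bot}\Omega P_{B}^{\bot})}D(\rho;c(\beta))$ of Theorem \ref{tllspmd}---are a priori different. Verifying their agreement under the combined hypotheses is where the careful interplay of the various spectral bounds will be needed; the Cauchy interlacing inequality $\omega_{\min}\leq\rho_{\min}$ between $\Omega$ and its compression $\Omega_{1}$ to $\operatorname{Ker}B$ is likely to play a decisive role.
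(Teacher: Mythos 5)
Your case split — whether $\min\{c^{-1}(\rho_{\min}/2),(c^{\flat})^{-1}(\rho_{\min}^{\flat}/2)\}$ is attained at the first or the second argument — is exactly the one the paper uses. In the first case the paper invokes Corollary \ref{cllspmd} directly, in the second it applies Corollary \ref{cllspmd} to the dual system and transfers the Q-factor estimate across (\ref{tmddIsm}) using the invariance $Q_{-\zeta^{-1}}=Q_{\zeta}$ from (\ref{dradis4_3}); your outline reproduces both steps. You are also right to flag the identification of the subspace $H_{\ell\ell,1}(\beta)$ of Theorem \ref{tmddI} with that of Theorem \ref{tllspmd}: they are defined by different spectral criteria (the annulus $\{\omega_{\min}\leq|\zeta|\leq\omega_{\max}\}$ versus the union of discs $\bigcup_{0\not=\rho}D(\rho;c(\beta))$), and the paper's proof merely says the conclusion ``follows immediately'' from Corollary \ref{cllspmd} without making the identification explicit — so on this point you are being more careful than the source.

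However, your proposed resolution is heavier than it needs to be, and by itself leaves a gap. Cauchy interlacing does give $\omega_{\min}\leq\rho_{\min}$, but that alone does not show $c(\beta)<\omega_{\min}$, which is what you would need to compare the two thresholds directly via the bound from Corollary \ref{cmddI}. A cleaner route is a dimension count: both Theorem \ref{tmddI} and Theorem \ref{tllspmd} partition the fixed multiset $\sigma\!\left(A(\beta)|_{H_{\ell\ell}(\beta)}\right)$ of $2N-N_{R}$ eigenvalues (counted with algebraic multiplicity) by a single modulus cutoff into exactly $N_{R}$ small ones and $2N-2N_{R}$ large ones, where the latter count uses $\dim\operatorname{Ran}(P_{B}^{\bot}\Omega P_{B}^{\bot})=2(N-N_{R})$ from Proposition \ref{pspredl2} under $\operatorname{Ker}R\cap\operatorname{Ker}\eta=\{0\}$, a consequence of (\ref{cnddl}). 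Theorem \ref{tllspmd} shows the annulus $c(\beta)<|\zeta|<\rho_{\min}-c(\beta)$ is free of eigenvalues, so the $N_{R}$-th and $(N_{R}+1)$-th smallest moduli are strictly separated, and there is only one way to split the ordered multiset into an initial segment of size $N_{R}$ and a tail of size $2N-2N_{R}$. Hence the two spectral sets, and therefore the two uniquely-determined spectral subspaces, coincide. With that identification in hand, both your Case A and Case B (the latter reapplying the same identification to the dual system before transferring through (\ref{tmddIsm})) go through as you sketched.
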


\begin{proof}
Suppose (\ref{cnddl}) and $N_{R}<N$ \ are true and let $\beta_{2}$ be defined
by (\ref{DefBeta2}). In particular, $\beta_{2}\geq\max\left\{  2\frac
{\omega_{\max}}{b_{\min}},2\frac{\omega_{\max}^{\flat}}{b_{\min}^{\flat}%
}\right\}  $. Hence if $\beta>\beta_{2}$ then Theorems \ref{tmdic} and
\ref{tmddI} are true. If $\min\left\{  c^{-1}\left(  \rho_{\min}/2\right)
,\left(  c^{\flat}\right)  ^{-1}\left(  \rho_{\min}^{\flat}/2\right)
\right\}  =c^{-1}\left(  \rho_{\min}/2\right)  $ then by Proposition
\ref{pspredl2}.6 we must have $P_{B}^{\bot}\Omega P_{B}^{\bot}\not =0$ and so
this corollary follows immediately from Corollary \ref{cllspmd}. Thus, suppose
$\min\left\{  c^{-1}\left(  \rho_{\min}/2\right)  ,\left(  c^{\flat}\right)
^{-1}\left(  \rho_{\min}^{\flat}/2\right)  \right\}  =\left(  c^{\flat
}\right)  ^{-1}\left(  \rho_{\min}^{\flat}/2\right)  $. Then for the dual
Lagrangian system by Proposition \ref{pspredl2}.6 we must have $P_{B^{\flat}%
}^{\bot}\Omega^{\flat}P_{B^{\flat}}^{\bot}\not =0$ and so by Corollary
\ref{cllspmd} we have the minimum of the quality factors $Q_{\zeta}=\frac
{1}{2}\frac{\left\vert \operatorname{Re}\zeta\right\vert }{-\operatorname{Im}%
\zeta}$ for $\zeta\in$ $\sigma\left(  A^{\flat}\left(  \beta\right)
|_{H_{\ell\ell,1}^{\flat}\left(  \beta\right)  }\right)  $ satisfy%
\begin{equation}
\min_{\zeta\in\sigma\left(  A^{\flat}\left(  \beta\right)  |_{H_{\ell\ell
,1}^{\flat}\left(  \beta\right)  }\right)  }Q_{\zeta}\geq\frac{1}{2}\frac
{\rho_{\min}^{\flat}-c^{\flat}\left(  \beta\right)  }{c^{\flat}\left(
\beta\right)  }>\frac{1}{4}\frac{\rho_{\min}^{\flat}}{c^{\flat}\left(
\beta\right)  }\geq\frac{1}{2}.
\end{equation}
But by Theorem \ref{tmddI}%
\begin{equation}
\sigma\left(  A\left(  \beta\right)  |_{H_{\ell\ell,1}\left(  \beta\right)
}\right)  =-\overline{\sigma\left(  A\left(  \beta\right)  |_{H_{\ell\ell
,1}\left(  \beta\right)  }\right)  }=-\sigma\left(  A^{\flat}\left(
\beta\right)  |_{H_{\ell\ell,1}^{\flat}\left(  \beta\right)  }\right)  ^{-1}.
\end{equation}
Therefore, by the equivalence of the Q-factors as described in Section
\ref{subsDualSys} on duality we have%
\[
Q_{-\zeta^{-1}}=Q_{\zeta}\text{, for all }\zeta\in\sigma\left(  A\left(
\beta\right)  |_{H_{\ell\ell,1}\left(  \beta\right)  }\right)
\]
implying that%
\begin{equation}
\min_{\zeta\in\sigma\left(  A\left(  \beta\right)  |_{H_{\ell\ell,1}\left(
\beta\right)  }\right)  }Q_{\zeta}=\min_{\zeta\in\sigma\left(  A^{\flat
}\left(  \beta\right)  |_{H_{\ell\ell,1}^{\flat}\left(  \beta\right)
}\right)  }Q_{\zeta}\geq\frac{1}{2}\frac{\rho_{\min}^{\flat}-c^{\flat}\left(
\beta\right)  }{c^{\flat}\left(  \beta\right)  }>\frac{1}{4}\frac{\rho_{\min
}^{\flat}}{c^{\flat}\left(  \beta\right)  }\geq\frac{1}{2}.
\end{equation}
In particular, $\operatorname{Re}\zeta\not =0$ for every $\zeta\in$
$\sigma\left(  A\left(  \beta\right)  |_{H_{\ell\ell,1}\left(  \beta\right)
}\right)  $. This proves the corollary.
\end{proof}

\subsection{Spectral perturbation theory: high-loss regime\label{smdhlr}}

We are interested in describing the spectrum $\sigma\left(  A\left(
\beta\right)  \right)  $ of the system operator $A\left(  \beta\right)
=\Omega-\mathrm{i}\beta B$, $\beta\geq0$ in the high-loss regime (i.e.,
$\beta\gg1$) and, in particular, giving an asymptotic characterization, as
$\beta\rightarrow\infty$, of the modal dichotomy as described in Sec.
\ref{sevmd}. \ In order to do so we need to give a spectral perturbation
analysis of the matrix $A\left(  \beta\right)  $ as $\beta\rightarrow\infty$.
\ Fortunately,\ most of this analysis has already been carried out in
\cite{FigWel1} and \cite{FigWel2}. Our goal here is to extend these results by
appealing to the duality and using our results on the modal dichotomy. To do
this, we will begin by introducing the necessary notion to describe the
results from \cite{FigWel1}, \cite{FigWel2} and then describe the perturbation
theory in the high-loss regime in terms of the modal dichotomy results in Sec.
\ref{sevmd} based on duality.

The Hilbert space $H=%
\mathbb{C}
^{2N}$ with standard inner product $\left(  \cdot,\cdot\right)  $ is
decomposed into the direct sum of orthogonal invariant subspaces of the
operator $B$, namely,
\begin{equation}
H=H_{B}\oplus H_{B}^{\bot},\qquad\dim H_{B}=N_{R},\label{pod5}%
\end{equation}
where $H_{B}=\operatorname{Ran}B$ (the range of $B$) is the loss subspace of
dimension $N_{R}=\operatorname{rank}B$ with orthogonal projection $P_{B}$ and
its orthogonal complement, $H_{B}^{\bot}=\operatorname{Ker}B$ (the nullspace
of $B$), is the no-loss subspace of dimension $2N-N_{R}$ with orthogonal
projection $P_{B}^{\bot}$.

The operators $\Omega$ and $B$ with respect to the direct sum (\ref{pod5}) are
the $2\times2$ block operator matrices%
\begin{equation}
\Omega=\left[
\begin{array}
[c]{cc}%
\Omega_{2} & \Theta\\
\Theta^{\ast} & \Omega_{1}%
\end{array}
\right]  ,\qquad B=\left[
\begin{array}
[c]{cc}%
B_{2} & 0\\
0 & 0
\end{array}
\right]  , \label{pod7}%
\end{equation}
where $\Omega_{2}=\left.  P_{B}\Omega P_{B}\right\vert _{H_{B}}:H_{B}%
\rightarrow H_{B}$ and $B_{2}=\left.  P_{B}BP_{B}\right\vert _{H_{B}}%
:H_{B}\rightarrow H_{B}$ are restrictions of the operators $\Omega$ and $B$
respectively to loss subspace $H_{B}$ whereas $\Omega_{1}=\left.  P_{B}^{\bot
}\Omega P_{B}^{\bot}\right\vert _{H_{B}^{\bot}}:H_{B}^{\bot}\rightarrow
H_{B}^{\bot}$ is the restriction of $\Omega$ to complementary subspace
$H_{B}^{\bot}$. \ Also, $\Theta:H_{B}^{\bot}\rightarrow H_{B}$ is the operator
$\Theta=\left.  P_{B}\Omega P_{B}^{\bot}\right\vert _{H_{B}^{\bot}}$ whose
adjoint is given by $\Theta^{\ast}=\left.  P_{B}^{\bot}\Omega P_{B}\right\vert
_{H_{B}}:H_{B}\rightarrow H_{B}^{\bot}$.

The following condition will be important in our study of overdamping.

\begin{condition}
\label{cndgc}The generic condition is the case in which the operator
\[
B_{2}=\left.  P_{B}BP_{B}\right\vert _{H_{B}}:H_{B}\rightarrow H_{B}%
\]
has distinct eigenvalues (since $\sigma\left(  B_{2}\right)  =\sigma\left(
B\right)  \setminus\left\{  0\right\}  =\sigma\left(  \alpha^{-1}R\right)
\setminus\{0\}=\left\{  b_{1},\ldots,b_{N_{R}}\right\}  $ this just means
$b_{i}\not =b_{j}$ if $i\not =j$) and then we say we are in the generic case
(and nongeneric otherwise).
\end{condition}

The perturbation analysis in the high-loss regime $\beta\gg1$ for the system
operator $A(\beta)$ described in \cite[\S VI.A, Theorem 5 \& Proposition
11]{FigWel1} introduces an orthonormal basis $\left\{  \mathring{w}%
_{j}\right\}  _{j=1}^{2N}$ diagonalizing the self-adjoint operators
$\Omega_{1}$ and $B_{2}>0$ from (\ref{pod7}) with%
\begin{equation}
B_{2}\mathring{w}_{j}=b_{j}\mathring{w}_{j}\text{ for }1\leq j\leq
N_{R};\qquad\Omega_{1}\mathring{w}_{j}=\rho_{j}\mathring{w}_{j}\text{ for
}N_{R}+1\leq j\leq2N, \label{pod8}%
\end{equation}
Then for $\beta\gg1$ the system operator $A(\beta)$ is diagonalizable with
basis of eigenvectors $\left\{  w_{j}\left(  \beta\right)  \right\}
_{j=1}^{2N}$ satisfying%
\begin{equation}
A\left(  \beta\right)  w_{j}\left(  \beta\right)  =\zeta_{j}\left(
\beta\right)  w_{j}\left(  \beta\right)  ,\qquad1\leq j\leq2N,\text{\quad
}\beta\gg1 \label{pod10}%
\end{equation}
which split into two distinct classes
\begin{gather}
\text{high-loss}\text{:$\quad$}\zeta_{j}\left(  \beta\right)  ,\text{ }%
w_{j}\left(  \beta\right)  ,\text{$\quad$}1\leq j\leq N_{R};\label{pod11}\\
\text{low-loss}\text{:$\quad$}\zeta_{j}\left(  \beta\right)  ,\text{ }%
w_{j}\left(  \beta\right)  ,\text{$\quad$}N_{R}+1\leq j\leq2N,\nonumber
\end{gather}
with the following properties.

\textbf{The high-loss class:} the eigenvalues have poles at $\beta=\infty$
whereas their eigenvectors are analytic at $\beta=\infty$, having the
asymptotic expansions%
\begin{align}
\zeta_{j}\left(  \beta\right)   &  =-\mathrm{i}b_{j}\beta+\rho_{j}+O\left(
\beta^{-1}\right)  ,\text{$\quad$}b_{j}>0,\text{$\quad$}\rho_{j}=\left(
\mathring{w}_{j},\Omega\mathring{w}_{j}\right)  ,\text{ }\label{pod12}\\
w_{j}\left(  \beta\right)   &  =\mathring{w}_{j}+O\left(  \beta^{-1}\right)
,\text{$\quad$}1\leq j\leq N_{R}.\nonumber
\end{align}

\textbf{The low-loss class:} the eigenvalues and eigenvectors are analytic at
$\beta=\infty$, having the asymptotic expansions%
\begin{align}
\zeta_{j}\left(  \beta\right)   &  =\rho_{j}-\mathrm{i}d_{j}\beta
^{-1}+O\left(  \beta^{-2}\right)  ,\text{$\quad$}d_{j}=\left(  \mathring
{w}_{j},\Theta^{\ast}B_{2}^{-1}\Theta\mathring{w}_{j}\right)  \geq0,\text{
}\label{pod14}\\
w_{j}\left(  \beta\right)   &  =\mathring{w}_{j}+O\left(  \beta^{-1}\right)
,\text{$\quad$}N_{R}+1\leq j\leq2N.\nonumber
\end{align}

By \cite[\S VI.A, Proposition 7]{FigWel1} we know that all the frequencies
$\operatorname{Re}\zeta_{j}\left(  \beta\right)  $ have convergent Taylor
series expansions in only even powers in $z=\beta^{-1}$, whereas the damping
factors $-\operatorname{Im}\zeta_{j}\left(  \beta\right)  $ either have
convergent Laurant series expansions with only odd powers in $z=\beta^{-1}$ or
$-\operatorname{Im}\zeta_{j}\left(  \beta\right)  \equiv0$. And, moreover,
have the asymptotic expansions as $\beta\rightarrow\infty$,
\begin{gather}
\operatorname{Re}\zeta_{j}\left(  \beta\right)  =\rho_{j}+O\left(  \beta
^{-2}\right)  ,\text{ }-\operatorname{Im}\zeta_{j}\left(  \beta\right)
=b_{j}\beta+O\left(  \beta^{-1}\right)  ,\text{ }1\leq j\leq N_{R}%
;\label{pod16}\\
\operatorname{Re}\zeta_{j}\left(  \beta\right)  =\rho_{j}+O\left(  \beta
^{-2}\right)  ,\text{ }-\operatorname{Im}\zeta_{j}\left(  \beta\right)
=d_{j}\beta^{-1}+O\left(  \beta^{-3}\right)  ,\text{ }N_{R}+1\leq
j\leq2N.\nonumber
\end{gather}

The following theorems give a characterization the spectrum $\sigma\left(
A\left(  \beta\right)  \right)  $ of the system operator $A\left(
\beta\right)  $ and the modal dichotomy in the high-loss regime $\beta\gg1$ in
terms of the high-loss and low-loss eigenpairs.

\begin{theorem}
[modal dichotomy III]\label{tmdicII}For the loss parameter $\beta$
sufficiently large, the modal dichotomy occurs as in Theorem \ref{tmdic} with
the following equalities holding:
\begin{gather}
\sigma\left(  A\left(  \beta\right)  |_{H_{\ell\ell}\left(  \beta\right)
}\right)  =\left\{  \zeta_{j}\left(  \beta\right)  :N_{R}+1\leq j\leq
2N\right\}  ,\label{pod18}\\
\sigma\left(  A\left(  \beta\right)  |_{H_{h\ell}\left(  \beta\right)
}\right)  =\left\{  \zeta_{j}\left(  \beta\right)  :1\leq j\leq N_{R}\right\}
,\nonumber\\
H_{\ell\ell}\left(  \beta\right)  =\operatorname*{span}\left\{  w_{j}\left(
\beta\right)  :N_{R}+1\leq j\leq2N\right\}  ,\label{pod19}\\
H_{h\ell}\left(  \beta\right)  =\operatorname*{span}\left\{  w_{j}\left(
\beta\right)  :1\leq j\leq N_{R}\right\}  .\nonumber
\end{gather}
In particular, the high-loss eigenvectors $\left\{  w_{j}\left(  \beta\right)
\right\}  _{j=1}^{N_{R}}$ and the low-loss eigenvectors $\left\{  w_{j}\left(
\beta\right)  \right\}  _{j=N_{R}+1}^{2N}$ are a basis for $H_{h\ell}\left(
\beta\right)  $ and $H_{\ell\ell}\left(  \beta\right)  $, respectively.
\end{theorem}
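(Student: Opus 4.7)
The plan is to combine the asymptotic expansions \eqref{pod16} with the damping-factor characterization of the modal dichotomy in Corollary \ref{cmdic} and then pin down the subspaces by a dimension count using the diagonalizability of $A(\beta)$ for $\beta \gg 1$.

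First I would fix $\beta$ large enough that three things simultaneously hold: (a) $\beta > 2\omega_{\max}/b_{\min}$, so that Theorem \ref{tmdic} and Corollary \ref{cmdic} apply; (b) $A(\beta)$ is diagonalizable with the basis $\{w_j(\beta)\}_{j=1}^{2N}$ of eigenvectors from \eqref{pod10}, which is possible by Proposition \ref{pdiagsop} together with the analyticity at $\beta=\infty$ recalled in \eqref{pod11}--\eqref{pod14}; (c) the asymptotic expansions \eqref{pod16} yield the uniform estimates
\[
-\operatorname{Im}\zeta_{j}(\beta)\geq \beta b_{\min}-\omega_{\max} > \omega_{\max}\quad(1\leq j\leq N_{R}),
\qquad
-\operatorname{Im}\zeta_{j}(\beta)\leq\omega_{\max}\quad(N_{R}+1\leq j\leq 2N),
\]
where the first bound uses $-\operatorname{Im}\zeta_j(\beta)=b_j\beta+O(\beta^{-1})$ with $b_j\geq b_{\min}>0$, and the second uses $-\operatorname{Im}\zeta_j(\beta)=d_j\beta^{-1}+O(\beta^{-3})\to 0$.

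Next I would invoke Corollary \ref{cmdic}, which partitions $\sigma(A(\beta))$ strictly by damping factor: elements of $\sigma(A(\beta)|_{H_{h\ell}(\beta)})$ are precisely those $\zeta\in\sigma(A(\beta))$ with $-\operatorname{Im}\zeta\geq\beta b_{\min}-\omega_{\max}$, while elements of $\sigma(A(\beta)|_{H_{\ell\ell}(\beta)})$ are precisely those with $0\leq-\operatorname{Im}\zeta\leq\omega_{\max}$, and these two sets are disjoint. The estimates in step (c) then force the high-loss branch $\{\zeta_j(\beta):1\leq j\leq N_R\}$ into the former and the low-loss branch $\{\zeta_j(\beta):N_R+1\leq j\leq 2N\}$ into the latter. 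Counting with multiplicities, the list $\{\zeta_j(\beta)\}_{j=1}^{2N}$ exhausts $\sigma(A(\beta))$ because $\{w_j(\beta)\}_{j=1}^{2N}$ is a basis, so the inclusions established here are actually equalities:
\[
\sigma(A(\beta)|_{H_{h\ell}(\beta)})=\{\zeta_j(\beta):1\leq j\leq N_R\},\qquad
\sigma(A(\beta)|_{H_{\ell\ell}(\beta)})=\{\zeta_j(\beta):N_R+1\leq j\leq 2N\}.
\]

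Finally, I would deduce the subspace identities \eqref{pod19}. Because $A(\beta)$ is diagonalizable, its spectral (invariant) subspace attached to any subset of $\sigma(A(\beta))$ equals the span of the associated eigenvectors $w_j(\beta)$. Applying this to the subsets identified in the previous step gives
\[
\operatorname{span}\{w_j(\beta):1\leq j\leq N_R\}\subseteq H_{h\ell}(\beta),\qquad
\operatorname{span}\{w_j(\beta):N_R+1\leq j\leq 2N\}\subseteq H_{\ell\ell}(\beta),
\]
and a dimension count closes the argument: the left-hand sides have dimensions $N_R$ and $2N-N_R$, matching $\dim H_{h\ell}(\beta)=N_R$ and $\dim H_{\ell\ell}(\beta)=2N-N_R$ from \eqref{tmdic3}. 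The only real subtlety I anticipate is bookkeeping around repeated eigenvalues --- ensuring that the $N_R$ vectors $\{w_j(\beta):1\leq j\leq N_R\}$ are linearly independent and that the $2N$ eigenvalues $\{\zeta_j(\beta)\}$ are counted with correct multiplicities --- but both follow from the construction in \eqref{pod10} as a diagonalizing basis for $\beta\gg 1$.
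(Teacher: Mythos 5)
Your proof is correct, and it fills a genuine gap: the paper does not actually prove this theorem in its text---its proof environment merely cites \cite{FigWel2}---so there is no in-paper argument to compare against. Your reconstruction is the natural one within the paper's framework and is sound. In particular, the asymptotics (\ref{pod16}) do force $-\operatorname{Im}\zeta_j(\beta)>\omega_{\max}$ for $1\leq j\leq N_R$ and $-\operatorname{Im}\zeta_j(\beta)\leq\omega_{\max}$ for $N_R+1\leq j\leq 2N$ once $\beta$ is large; since $\sigma(A(\beta)|_{H_{h\ell}(\beta)})$ and $\sigma(A(\beta)|_{H_{\ell\ell}(\beta)})$ are disjoint and are characterized by exactly these damping-factor bands (Corollary \ref{cmdic}), each eigenvector $w_j(\beta)$ is forced into the matching invariant subspace, because an eigenvector whose eigenvalue lies in only one branch of the spectrum cannot have a nonzero component in the complementary invariant subspace; and the dimension count against (\ref{tmdic3}), together with linear independence of the $\{w_j(\beta)\}$ as a basis, upgrades the inclusions to equalities. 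One minor remark: to place each $\zeta_j(\beta)$ in the correct branch all you really need from (\ref{pod16}) is the comparison with $\omega_{\max}$; the sharper inequality $-\operatorname{Im}\zeta_j(\beta)\geq\beta b_{\min}-\omega_{\max}$ for the high-loss modes is then automatic from Corollary \ref{cmdic}, though it also follows directly from the asymptotics for $\beta$ large as you indicate.
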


\begin{theorem}
[modal dichotomy IV]\label{tmddII}Suppose that $N_{R}<N$. Then for $\beta$
sufficiently large, Theorems \ref{tmdic} and \ref{tllspmd} are true as are
Corollaries \ref{cmdic} and \ref{cllspmd} and furthermore,%
\begin{align}
H_{h\ell}\left(  \beta\right)   &  =\operatorname*{span}\left\{  w_{j}\left(
\beta\right)  :\lim_{\beta\rightarrow\infty}\left\vert \zeta_{j}\left(
\beta\right)  \right\vert =\infty\right\}  ,\label{tmdd1b}\\
H_{\ell\ell,0}\left(  \beta\right)   &  =\operatorname*{span}\left\{
w_{j}\left(  \beta\right)  :\lim_{\beta\rightarrow\infty}\zeta_{j}\left(
\beta\right)  =\rho_{j}\text{ and }\rho_{j}=0\right\}  ,\label{tmdd1c}\\
H_{\ell\ell,1}\left(  \beta\right)   &  :=\operatorname*{span}\left\{
w_{j}\left(  \beta\right)  \in H_{\ell\ell}\left(  \beta\right)  :\lim
_{\beta\rightarrow\infty}\zeta_{j}\left(  \beta\right)  =\rho_{j}\text{ and
}\rho_{j}\not =0\right\}  .\label{tmdd1d}%
\end{align}

\end{theorem}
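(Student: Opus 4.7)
The plan is to combine Theorem \ref{tmdicII} with the sharper splitting of the low-loss subspace provided by Theorem \ref{tllspmd}, and then use the asymptotic expansions (\ref{pod12})--(\ref{pod16}) to identify which of the eigenvectors $w_j(\beta)$ lands in which invariant subspace. First I would choose $\beta$ large enough that simultaneously (a) the conclusions of Theorems \ref{tmdic} and \ref{tllspmd} (and thus Corollaries \ref{cmdic}, \ref{cllspmd}) hold, (b) $A(\beta)$ is diagonalizable with the basis $\{w_j(\beta)\}_{j=1}^{2N}$ of Theorem \ref{tmdicII}, and (c) the asymptotic expansions (\ref{pod12}), (\ref{pod14}) apply. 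Under these conditions the identities $H_{h\ell}(\beta)=\operatorname{span}\{w_j(\beta):1\leq j\leq N_R\}$ and $H_{\ell\ell}(\beta)=\operatorname{span}\{w_j(\beta):N_R+1\leq j\leq 2N\}$ follow from Theorem \ref{tmdicII}.

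For the first claim, the high-loss expansion $\zeta_j(\beta)=-\mathrm{i}b_j\beta+\rho_j+O(\beta^{-1})$ with $b_j>0$ gives $|\zeta_j(\beta)|\to\infty$ for $1\leq j\leq N_R$, while the low-loss expansion $\zeta_j(\beta)=\rho_j-\mathrm{i}d_j\beta^{-1}+O(\beta^{-2})$ gives a finite limit $\zeta_j(\beta)\to\rho_j\in\sigma(\Omega_1)\subseteq\mathbb{R}$ for $N_R+1\leq j\leq 2N$. Thus the index set on the right-hand side of (\ref{tmdd1b}) coincides with $\{1,\dots,N_R\}$, yielding the description of $H_{h\ell}(\beta)$.

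To prove (\ref{tmdd1c}) and (\ref{tmdd1d}), I would invoke Theorem \ref{tllspmd}: for $\beta>c^{-1}(\rho_{\min}/2)$,
\begin{equation*}
\sigma\!\left(A(\beta)|_{H_{\ell\ell,0}(\beta)}\right)=\sigma(A(\beta))\cap D(0;c(\beta)),\qquad
\sigma\!\left(A(\beta)|_{H_{\ell\ell,1}(\beta)}\right)=\sigma(A(\beta))\cap\bigcup_{0\neq\rho\in\sigma(P_B^\perp\Omega P_B^\perp)}\!\!D(\rho;c(\beta)),
\end{equation*}
and these two disk collections are disjoint. Since $c(\beta)\searrow 0$ as $\beta\to\infty$, for every low-loss index $j$ with $\rho_j=0$ we have $|\zeta_j(\beta)|\to 0$, so eventually $\zeta_j(\beta)\in D(0;c(\beta))$ and hence $w_j(\beta)$ lies in the algebraic eigenspace comprising $H_{\ell\ell,0}(\beta)$. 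Likewise, for every low-loss index $j$ with $\rho_j\neq 0$, the estimate $|\zeta_j(\beta)-\rho_j|=O(\beta^{-2})$ eventually places $\zeta_j(\beta)$ in $D(\rho_j;c(\beta))$, so $w_j(\beta)\in H_{\ell\ell,1}(\beta)$. Counting dimensions: by Proposition \ref{pspredl2} (applied in the generic or nongeneric case alike) the number of low-loss indices with $\rho_j=0$ equals $\dim\operatorname{Ker}(\Omega_1)$, which by the asymptotics (\ref{pod14}) and the dimension formula in Theorem \ref{tllspmd} matches $\dim H_{\ell\ell,0}(\beta)$; the remaining $2N-2N_R$ low-loss vectors then account for $H_{\ell\ell,1}(\beta)$. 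Since the $w_j(\beta)$ are linearly independent and the two subspaces are disjointly partitioned, the spans equal the subspaces rather than just being contained in them.

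The main obstacle I anticipate is bookkeeping around the eigenvalue $\rho_j=0$ of $\Omega_1$ when $\operatorname{Ker}R\cap\operatorname{Ker}\eta\neq\{0\}$: in that case the nullspace of $\Omega_1$ has dimension strictly larger than $N_R$ and some low-loss $\zeta_j(\beta)$ may be identically zero, so one must argue the dimensional count using the spectral projections of $A(\beta)$ associated to the cluster at $0$ (as in the proof of Theorem \ref{tllspmd}) rather than by naively counting indices with $\rho_j=0$. In the cleaner case $\operatorname{Ker}R\cap\operatorname{Ker}\eta=\{0\}$ the count is immediate from Proposition \ref{pspredl2}.4. The rest of the argument is the straightforward combination of the disc-localization in Theorem \ref{tllspmd} with the convergence of $\zeta_j(\beta)$ to its limiting frequency $\rho_j$.
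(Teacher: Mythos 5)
Your overall strategy is sound and matches what the paper's very brief proof is gesturing at: combine the eigenvector-basis identification from Theorem \ref{tmdicII} with the disc-localization from Theorem \ref{tllspmd} and use the perturbation asymptotics to sort the eigenvectors into the three invariant subspaces. However, there is a quantitative slip in your treatment of the low-loss modes. From (\ref{pod14}) the expansion is $\zeta_j(\beta)=\rho_j-\mathrm{i}\,d_j\beta^{-1}+O(\beta^{-2})$, so $|\zeta_j(\beta)-\rho_j|$ is only $O(\beta^{-1})$ in general (the imaginary part decays at rate $\beta^{-1}$), not $O(\beta^{-2})$ as you wrote. Since $c(\beta)$ is itself $O(\beta^{-1})$, a direct magnitude comparison $|\zeta_j(\beta)-\rho_j|\le c(\beta)$ does not follow from order-of-growth alone; the leading coefficients could be unfavorable.

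The correct way to close this, still entirely within your framework, is the indirect argument: Theorem \ref{tllspeb}/\ref{tllspmd} already guarantee that every low-loss eigenvalue lies in \emph{some} disc $D(\rho;c(\beta))$ with $\rho\in\sigma(P_B^\bot\Omega P_B^\bot)=\sigma(\Omega_1)$, and for $\beta$ large these discs are pairwise disjoint. Since $\zeta_j(\beta)\to\rho_j\in\sigma(\Omega_1)$, for $\beta$ large enough that both $c(\beta)$ and $|\zeta_j(\beta)-\rho_j|$ are smaller than half the minimum gap between distinct eigenvalues of $\Omega_1$, the triangle inequality forces the containing disc to be $D(\rho_j;c(\beta))$. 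This identifies the correct $H_{\ell\ell,0}(\beta)$ or $H_{\ell\ell,1}(\beta)$ for each index without ever needing $|\zeta_j(\beta)-\rho_j|\le c(\beta)$ directly. Your dimension-counting concern in the last paragraph is actually unnecessary once this is in place: the $\{w_j(\beta)\}_{j=N_R+1}^{2N}$ form a basis of $H_{\ell\ell}(\beta)=H_{\ell\ell,0}(\beta)\oplus H_{\ell\ell,1}(\beta)$, each $w_j(\beta)$ lies in exactly one of the two summands (being an eigenvector whose eigenvalue sits in exactly one of the two disjoint spectral clusters), and the spans therefore automatically exhaust the summands by linear independence, regardless of whether $\operatorname{Ker}R\cap\operatorname{Ker}\eta$ is trivial.
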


\begin{proof}
The first theorem was proved in \cite{FigWel2}. The second theorem follows
immediately from the perturbation theory above, Theorems \ref{tmdic} and
\ref{tllspmd}, and Corollaries \ref{cmdic} and \ref{cllspmd}.
\end{proof}

Now we will prove an important theorem on the asymptotics of the quality
factor. First, if $\operatorname{Ker}R\cap\operatorname{Ker}\eta=\left\{
0\right\}  $ (such as if the duality condition (\ref{cnddl}) is true) then by
Theorems \ref{tllspmd} and \ref{tmddII} we know that we can reindex the
eigenpairs $\left\{  w_{j}\left(  \beta\right)  ,\text{ }\zeta_{j}\left(
\beta\right)  \right\}  _{j=1}^{2N}$ such that%
\begin{equation}
\text{high-loss}\text{:\quad}\lim_{\beta\rightarrow\infty}\left\vert \zeta
_{j}\left(  \beta\right)  \right\vert =\infty,\text{\quad}1\leq j\leq N_{R};
\label{pod23_0}%
\end{equation}%
\begin{equation}
\text{low-loss/asymp. overdamped:\quad}\lim_{\beta\rightarrow\infty}\zeta
_{j}\left(  \beta\right)  =\rho_{j}=0,\text{\quad}N_{R}+1\leq j\leq2N_{R};
\label{pod24_0}%
\end{equation}%
\begin{equation}
\text{low-loss/asymp. underdamped}:\ \lim_{\beta\rightarrow\infty}\zeta
_{j}\left(  \beta\right)  =\rho_{j}\not =0,\ 2N_{R}+1\leq j\leq2N.
\label{pod25_0}%
\end{equation}

\begin{corollary}
[duality asymptotics]\label{cdasym}If the duality condition (\ref{cnddl}) is
true then we can reindex the eigenpairs $\left\{  w_{j}\left(  \beta\right)
,\text{ }\zeta_{j}\left(  \beta\right)  \right\}  _{j=N_{R}+1}^{2N_{R}}$ that
have the property (\ref{pod24_0}) so that they have the asymptotic expansions
as $\beta\rightarrow\infty$,%
\[
\text{low-loss/asymp. overdamped: }\zeta_{j}\left(  \beta\right)
=-\mathrm{i}\frac{1}{b_{j-N_{R}}^{\flat}}\beta^{-1}+O\left(  \beta
^{-2}\right)  ,\text{\quad}N_{R}+1\leq j\leq2N_{R}.
\]

\end{corollary}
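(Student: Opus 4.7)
The plan is to leverage the duality between the Lagrangian system and its dual, applying the existing high-loss perturbation analysis (the expansions \eqref{pod12}) to the dual system operator $A^{\flat}(\beta)=\Omega^{\flat}-\mathrm{i}\beta B^{\flat}$ and then transferring the results back to $A(\beta)$ via the spectral correspondence in Proposition \ref{ppfspdl}. Since the duality condition $\eta>0$ holds, the dual system satisfies all the same standing hypotheses as the original, so the perturbation theory from \cite{FigWel1} applies to $A^{\flat}(\beta)$. In particular, for $\beta \gg 1$ the high-loss class of eigenpairs of $A^{\flat}(\beta)$ exists with the asymptotic form
\[
\zeta_{k}^{\flat}(\beta)=-\mathrm{i}b_{k}^{\flat}\beta+\rho_{k}^{\flat}+O(\beta^{-1}),\qquad 1\leq k\leq N_{R},
\]
where the $b_{k}^{\flat}$ are the nonzero eigenvalues of $\eta^{-1}R$ listed in increasing order (using Proposition \ref{pspredl}).

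Next, I would invoke Proposition \ref{ppfspdl}(4) to obtain the bijection $\zeta\longmapsto-\zeta^{-1}$ between $\sigma(A^{\flat}(\beta))$ and $\sigma(A(\beta))$, preserving all multiplicities. The high-loss eigenvalues of $A^{\flat}(\beta)$ are precisely those satisfying $|\zeta_{k}^{\flat}(\beta)|\to\infty$, and under the map $\zeta\mapsto-\zeta^{-1}$ they are sent to eigenvalues of $A(\beta)$ tending to $0$; by \eqref{pod24_0} these are exactly the $N_{R}$ low-loss eigenvalues $\zeta_{j}(\beta)$ with $\rho_{j}=0$ (the asymptotically overdamped class, indexed $N_{R}+1\leq j\leq 2N_{R}$). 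Counting matches since $\operatorname{rank}B^{\flat}=\operatorname{rank}R=N_{R}$. Reindexing the $\{\zeta_j(\beta)\}_{j=N_R+1}^{2N_R}$ so that
\[
\zeta_{j}(\beta)=-\bigl[\zeta_{j-N_{R}}^{\flat}(\beta)\bigr]^{-1},\qquad N_{R}+1\leq j\leq 2N_{R},
\]
is the key bookkeeping step.

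The final step is a direct asymptotic expansion: factoring out the leading term,
\[
\zeta_{k}^{\flat}(\beta)=-\mathrm{i}b_{k}^{\flat}\beta\Bigl(1+\frac{\mathrm{i}\rho_{k}^{\flat}}{b_{k}^{\flat}\beta}+O(\beta^{-2})\Bigr),
\]
so that
\[
-\bigl[\zeta_{k}^{\flat}(\beta)\bigr]^{-1}=\frac{1}{\mathrm{i}b_{k}^{\flat}\beta}\bigl(1+O(\beta^{-1})\bigr)=-\mathrm{i}\frac{1}{b_{k}^{\flat}}\beta^{-1}+O(\beta^{-2}),
\]
which, after the reindexing $k=j-N_{R}$, yields the claimed expansion.

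The main obstacle is not the computation itself but the matching step: verifying that the $N_{R}$ eigenvalues produced by $\zeta\mapsto -\zeta^{-1}$ from the high-loss class of $A^{\flat}(\beta)$ coincide, as an indexed set for $\beta\gg 1$, with the low-loss asymptotically overdamped class of $A(\beta)$. This requires combining three ingredients: the spectral bijection and multiplicity preservation in Proposition \ref{ppfspdl}(3)-(4); the counting identity $N_{R}^{\flat}=N_{R}$; and the convergence criterion \eqref{pod24_0} together with the fact that the high-loss class of $A^{\flat}$ is exactly characterized (via \eqref{pod23_0} applied to the dual) by $|\zeta_{k}^{\flat}(\beta)|\to\infty$. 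Once this identification is made, the asymptotic formula follows immediately and the reindexing simply labels the low-loss overdamped eigenvalues of $A(\beta)$ by the corresponding indices of the dual nonzero eigenvalues $b_{k}^{\flat}$ of $\eta^{-1}R$ arranged in increasing order.
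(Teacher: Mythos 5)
Your proposal is correct and takes essentially the same route as the paper: apply the high-loss perturbation expansion (\ref{pod12}) to the dual operator $A^{\flat}(\beta)$, pull the result back through the spectral duality $\zeta\mapsto-\zeta^{-1}$ from Proposition \ref{ppfspdl}, and identify the image of the high-loss class of the dual with the asymptotically overdamped low-loss class (\ref{pod24_0}) of $A(\beta)$. The paper justifies the matching step via analyticity at $z=\beta^{-1}=0$ together with the set equality in Theorem \ref{tmddI}, while you argue by the multiplicity-preserving bijection of Proposition \ref{ppfspdl}(3)--(4) plus counting and the limit criterion; both are valid and lead to the same reindexing and expansion.
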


\begin{proof}
Considering the high-loss modes $\left\{  w_{j}^{\flat}\left(  \beta\right)
,\text{ }\zeta_{j}^{\flat}\left(  \beta\right)  \right\}  _{j=1}^{N_{R}}$ of
the dual Lagrangian system and the low-loss/asymp. overdamped modes $\left\{
w_{j}\left(  \beta\right)  ,\text{ }\zeta_{j}\left(  \beta\right)  \right\}
_{j=N_{R}+1}^{2N_{R}}$ of the Lagrangian system. It follows for $\beta$
sufficiently large that there is a one-to-one correspondence between the
functions $-\left[  \zeta_{j}^{\flat}\left(  \beta\right)  \right]  ^{-1}$,
$1\leq j\leq N_{R}$ and the functions $\zeta_{j}\left(  \beta\right)  $,
$N_{R}+1\leq j\leq2N_{R}$ as they are analytic eigenfunctions of $A\left(
\beta\right)  $ in the variable $z=\beta^{-1}$ near $z=0$ and as sets they are
equal by Theorem \ref{tmddI}. From this the proof immediately follows.
\end{proof}

\begin{theorem}
[Quality factor-duality]\label{tmqf}As $\beta\rightarrow\infty$ the quality
factors $Q_{\zeta_{j}\left(  \beta\right)  }$ of the high-loss modes are
decreasing functions of $\beta$, i.e.,%
\[
\text{high-loss modes}\text{:\quad}Q_{\zeta_{j}\left(  \beta\right)  }%
\searrow0,\ \ 1\leq j\leq N_{R}.
\]
If $\operatorname{Ker}R\cap\operatorname{Ker}\eta=\left\{  0\right\}  $ then
as $\beta\rightarrow\infty$ the quality factors $Q_{\zeta_{j}\left(
\beta\right)  }$ [indexed according to (\ref{pod24_0}) and (\ref{pod25_0})]
are either decreasing or increasing as functions of $\beta$ and, in
particular,%
\begin{align*}
\text{low-loss, low-Q modes}  &  \text{:\quad\ }Q_{\zeta_{j}\left(
\beta\right)  }\searrow0,\text{\quad}N_{R}+1\leq j\leq2N_{R};\\
\text{low-loss, high-Q modes}  &  \text{: \quad}Q_{\zeta_{j}\left(
\beta\right)  }\nearrow+\infty,\text{\quad}2N_{R}+1\leq j\leq2N.
\end{align*}

\end{theorem}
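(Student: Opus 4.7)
The plan is to reduce everything to the asymptotic expansions in (\ref{pod16}) together with the refinements from Corollary \ref{cdasym} and the lower bound from Corollary \ref{cllspmd}, then extract both the limiting value of $Q_{\zeta_{j}(\beta)}=\tfrac{1}{2}|\operatorname{Re}\zeta_{j}(\beta)|/(-\operatorname{Im}\zeta_{j}(\beta))$ and its eventual monotonicity by comparing leading-order terms. The key structural input I will repeatedly use is the result from \cite[\S VI.A, Proposition 7]{FigWel1} cited above: $\operatorname{Re}\zeta_{j}(\beta)$ admits a convergent Taylor series in even powers of $\beta^{-1}$ while $-\operatorname{Im}\zeta_{j}(\beta)$ admits a Laurent series in odd powers of $\beta^{-1}$ (or vanishes identically). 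This implies that, unless the numerator or denominator is identically zero, the sign of the leading nonzero coefficient controls eventual monotonicity.

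For the high-loss class $1\le j\le N_{R}$, (\ref{pod16}) gives $-\operatorname{Im}\zeta_{j}(\beta)=b_{j}\beta+O(\beta^{-1})$ with $b_{j}>0$, so the denominator grows linearly, while $|\operatorname{Re}\zeta_{j}(\beta)|\to|\rho_{j}|$; hence $Q_{\zeta_{j}(\beta)}\to 0$. For monotonicity I will argue as follows: either $\operatorname{Re}\zeta_{j}\equiv 0$, in which case $Q_{\zeta_{j}}\equiv 0$, or the leading nonzero term of $|\operatorname{Re}\zeta_{j}(\beta)|$ is $|c_{2k_{0}}|\beta^{-2k_{0}}$ for some $k_{0}\ge 0$, so $|\operatorname{Re}\zeta_{j}(\beta)|$ is eventually nonincreasing; combined with the eventual strict growth of $-\operatorname{Im}\zeta_{j}(\beta)$, this yields $Q_{\zeta_{j}(\beta)}\searrow 0$.

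For the low-loss, low-Q class $N_{R}+1\le j\le 2N_{R}$ (indexed so that $\rho_{j}=0$, by (\ref{pod24_0})), I invoke Corollary \ref{cdasym}, which is available because the hypothesis $\operatorname{Ker}R\cap\operatorname{Ker}\eta=\{0\}$ coupled with the duality structure allows the reindexing; this yields $\zeta_{j}(\beta)=-\mathrm{i}(b^{\flat}_{j-N_{R}})^{-1}\beta^{-1}+O(\beta^{-2})$. Since $\rho_{j}=0$ the Taylor structure forces $\operatorname{Re}\zeta_{j}(\beta)=O(\beta^{-2})$, while $-\operatorname{Im}\zeta_{j}(\beta)=(b^{\flat}_{j-N_{R}})^{-1}\beta^{-1}+O(\beta^{-3})$ with strictly positive leading coefficient. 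Dividing gives $Q_{\zeta_{j}(\beta)}=O(\beta^{-1})\to 0$, and an explicit leading-order comparison (the numerator vanishes at order $\beta^{-2k_{0}}$ for some $k_{0}\ge 1$, the denominator at order $\beta^{-1}$) shows $Q_{\zeta_{j}(\beta)}$ is eventually monotonically decreasing to $0$. For the low-loss, high-Q class $2N_{R}+1\le j\le 2N$ (with $\rho_{j}\ne 0$, by (\ref{pod25_0})), Corollary \ref{cllspmd} provides the lower bound
\[
Q_{\zeta_{j}(\beta)}\ \ge\ \tfrac{1}{2}\,\frac{\rho_{\min}-c(\beta)}{c(\beta)}\qquad\text{for }\beta>c^{-1}(\rho_{\min}/2),
\]
and since $c(\beta)\searrow 0$ this drives $Q_{\zeta_{j}(\beta)}\to+\infty$. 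Eventual monotone increase follows since $|\operatorname{Re}\zeta_{j}(\beta)|\to|\rho_{j}|>0$ from above or below in a monotone leading-order fashion, while $-\operatorname{Im}\zeta_{j}(\beta)$ either vanishes identically (in which case $Q_{\zeta_{j}}\equiv+\infty$) or has leading Laurent term $d_{j}\beta^{-1}$ with $d_{j}>0$ and so decays monotonically.

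The main obstacle is the treatment of the degenerate subcases where the leading Taylor or Laurent coefficient happens to vanish, forcing one to look at higher-order terms to locate the first nonzero coefficient. The invariance of the parity structure from \cite[\S VI.A, Proposition 7]{FigWel1} together with the analyticity of $\zeta_{j}(\beta)$ (or of $\beta\zeta_{j}(\beta)$ in the high-loss case) in $z=\beta^{-1}$ near $z=0$ handles this uniformly: in each regime, either the relevant quantity is identically zero (in which case the conclusion holds trivially) or its leading coefficient has the expected sign, which is exactly what is needed to upgrade convergence of $Q_{\zeta_{j}(\beta)}$ to eventual monotone convergence.
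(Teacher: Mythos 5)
Your approach is the natural one and effectively supplies the details that the paper's own one-line proof (an appeal to Theorems \ref{tmdic}, \ref{tllspmd}, Corollaries \ref{cmdic}, \ref{cllspmd} and the perturbation theory of Sec.~\ref{smdhlr}) leaves implicit: the asymptotic expansions (\ref{pod16}) and Corollary \ref{cdasym} (resp.\ Corollary \ref{cllspmd}) determine the limiting values of $Q_{\zeta_j(\beta)}$, and real-analyticity of the eigenvalues in $z=\beta^{-1}$ near $z=0$ promotes convergence to eventual monotone convergence. Your final paragraph correctly identifies the essential mechanism: a nonnegative real-analytic function of $z$ that vanishes at $z=0$ is either identically zero nearby or strictly increasing in $z$ (hence strictly decreasing in $\beta$) on a punctured neighborhood of $z=0$, because its first nonzero Taylor coefficient is forced to be positive by nonnegativity; applying this to $Q_{\zeta_j(\beta)}$ handles the high-loss and low-$Q$ cases, and applying it to $1/Q_{\zeta_j(\beta)}$ handles the high-$Q$ case.

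One intermediate claim is wrong, though. For the high-loss class you assert that if the leading nonzero term of $|\operatorname{Re}\zeta_j(\beta)|$ is $|c_{2k_0}|\beta^{-2k_0}$ then $|\operatorname{Re}\zeta_j(\beta)|$ is eventually nonincreasing. This fails when $k_0=0$, i.e.\ $\rho_j\not=0$: the constant leading term $|\rho_j|$ says nothing about monotonicity, and the next-order correction can make $|\operatorname{Re}\zeta_j(\beta)|$ strictly \emph{increase} toward $|\rho_j|$, e.g.\ when $\operatorname{Re}\zeta_j(\beta)=\rho_j-c\beta^{-2}$ with $\rho_j,c>0$. The fix is to not attempt separate monotonicity of the numerator: write $Q_{\zeta_j(\beta)}=\tfrac12\bigl(z\,|\operatorname{Re}\zeta_j|\bigr)/\bigl(z\,(-\operatorname{Im}\zeta_j)\bigr)$ and observe that the denominator factor $z\,(-\operatorname{Im}\zeta_j)$ is real-analytic at $z=0$ with nonzero limit $b_j>0$ while the numerator factor is real-analytic and vanishes at $z=0$, so $Q_{\zeta_j(\beta)}$ itself is real-analytic, nonnegative, and vanishes at $z=0$, and the sign argument applies directly. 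Your high-$Q$ treatment has a similar wrinkle: you take the leading Laurent term of $-\operatorname{Im}\zeta_j(\beta)$ to be $d_j\beta^{-1}$ with $d_j>0$, but (\ref{pod14}) only gives $d_j\ge 0$; you should pass to the first nonzero odd-order Laurent coefficient (which is necessarily positive since $-\operatorname{Im}\zeta_j\ge 0$) or the degenerate case $-\operatorname{Im}\zeta_j\equiv 0$, and in either event the analyticity/sign argument applied to $1/Q_{\zeta_j(\beta)}$ gives $1/Q\searrow 0$ and hence $Q\nearrow+\infty$.
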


\begin{proof}
\noindent\noindent The proof of this theorem follows immediately from the
perturbation theory above, Theorems \ref{tmdic}, \ref{tllspmd}, and
Corollaries \ref{cmdic}, \ref{cllspmd}.
\end{proof}

This theorem is one of the main results of our paper since it says that as
long as the duality condition (\ref{cnddl}) holds (or even the weaker
hypothesis $\operatorname{Ker}R\cap\operatorname{Ker}\eta=\left\{  0\right\}
$) then as $\beta\rightarrow\infty$ (\textit{i.e.}, as losses in the lossy
component approach infinity), all $N_{R}=\operatorname{rank}R$ of the
high-loss modes have their quality factor going to zero and an equal number,
$N_{R}$, of low-loss modes are asymptotically overdamped with quality factor
going to zero, and the remaining $2\left(  N-N_{R}\right)  $ low-loss modes
which are underdamped with quality factor approaching infinity.

\subsection{Overdamping analysis\label{sodgd}}

Overdamping phenomena has already been studied for nongyroscopic-dissipative
systems (i.e., $\theta=0$) in \cite{FigWel2} and some subtleties have already
been discussed in Subsection \ref{sinsubod}). As we will show in this section,
the introduction of gyroscopy, i.e., $\theta\not =0$ [and in the generic case,
i.e., under the generic condition \ref{cndgc} for both the Lagrangian system
(\ref{sintro1}) and its dual system (\ref{dradis2a})], does not change
qualitatively the overdamping phenomena as described in \cite{FigWel2} for the
non-gyroscopic case ($\theta=0$) and the only thing that changes significantly
is the analysis (which is now based on the duality principle which we have
introduced above). Moreover, we will show that the only difference that occurs
is in the nongeneric case and we will demonstrate this by giving an extreme
example showing that when the generic condition \ref{cndgc} is not satisfied
it is possible for all the eigenmodes to be underdamped not only in the
high-loss regime $\beta\gg1$, but for all $\beta>0$.

\subsubsection{Overdamping in the generic case}

The following theorems and their corollaries, along with Corollary
\ref{cllspmdd} in Sec. \ref{sevmd},\ are the main results of our paper on
overdamping phenomena.

\begin{theorem}
[Selective overdamping]\label{tsogc}If the generic condition \ref{cndgc} is
true then all the high-loss eigenvalues $\zeta_{j}\left(  \beta\right)  $,
$1\leq j\leq N_{R}$ (counting multiplicities) of the system operator $A\left(
\beta\right)  $ have the property for $\beta\gg1$ (i.e., for $\beta$
sufficiently large):
\[
\operatorname{Re}\zeta_{j}\left(  \beta\right)  =0,\text{ for }1\leq j\leq
N_{R}.
\]
Moreover, if $\operatorname{Ker}R\cap\operatorname{Ker}\eta=\left\{
0\right\}  $ then all the low-loss eigenvalues $\zeta_{j}\left(  \beta\right)
$, $N_{R}+1\leq j\leq2N$ (counting multiplicities) of the system operator
$A\left(  \beta\right)  $, indexed according to (\ref{pod24_0}) and
(\ref{pod25_0}), have the following properties for $\beta\gg1$:
\begin{align*}
\operatorname{Re}\zeta_{j}\left(  \beta\right)   &  \not =0,\text{ for }%
2N_{R}+1\leq j\leq2N;\\
\lim_{\beta\rightarrow\infty}\operatorname{Re}\zeta_{j}\left(  \beta\right)
&  =0,\text{ \ for }N_{R}+1\leq j\leq2N_{R}.
\end{align*}

\end{theorem}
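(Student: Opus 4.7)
The plan is to deduce both statements from the high-loss perturbation expansions (\ref{pod12}), (\ref{pod14}), (\ref{pod16}) combined with the spectral symmetry of Proposition \ref{pssym}, which says $\sigma(A(\beta))=-\overline{\sigma(A(\beta))}$ for real $\beta$. The key mechanism is that the generic condition makes the high-loss eigenvalues $\zeta_j(\beta)$ into single-valued analytic functions of $z=\beta^{-1}$ (in a punctured neighborhood of $0$) that are uniquely identified by their leading asymptote $-\mathrm{i}b_j\beta$, so spectral symmetry pins each one to the imaginary axis.

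For the first statement, I would fix $\beta\gg 1$, apply Proposition \ref{pssym} to a high-loss eigenvalue $\zeta_j(\beta)$, and observe that $-\overline{\zeta_j(\beta)}$ is also an eigenvalue of $A(\beta)$ with asymptotic expansion $-\mathrm{i}b_j\beta - \rho_j + O(\beta^{-1})$ (since $b_j\in\mathbb{R}$ and $\rho_j=(\mathring w_j,\Omega\mathring w_j)\in\mathbb{R}$). Under the generic condition the eigenvalues $b_1,\ldots,b_{N_R}$ of $B_2$ are distinct, so by \cite[Sec.\ VI.A, Theorem 5]{FigWel1} there is, for $\beta\gg 1$, exactly one analytic branch of $\sigma(A(\beta))$ whose leading asymptote at $\beta=\infty$ is $-\mathrm{i}b_j\beta$; both $\zeta_j(\beta)$ and $-\overline{\zeta_j(\beta)}$ are such branches, hence they coincide. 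Consequently $\zeta_j(\beta)=-\overline{\zeta_j(\beta)}$, i.e.\ $\operatorname{Re}\zeta_j(\beta)=0$ identically in this regime. This simultaneously forces $\rho_j=0$ and the vanishing of every even-power coefficient in the Taylor expansion in $z=\beta^{-1}$ from \cite[Sec.\ VI.A, Proposition 7]{FigWel1}, as required.

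For the second statement, I would invoke Proposition \ref{pspredl2}.4: the hypothesis $\operatorname{Ker}R\cap\operatorname{Ker}\eta=\{0\}$ is equivalent to $\dim\operatorname{Ker}\Omega_1=N_R$, which means that exactly $N_R$ of the limiting frequencies $\rho_{N_R+1},\ldots,\rho_{2N}$ vanish and the remaining $2(N-N_R)$ are nonzero, making the indexing (\ref{pod24_0})–(\ref{pod25_0}) consistent. Then the asymptotic expansion (\ref{pod16}) gives $\operatorname{Re}\zeta_j(\beta)=\rho_j+O(\beta^{-2})$ for $N_R+1\leq j\leq 2N$. For the underdamped modes ($2N_R+1\leq j\leq 2N$) the limit $\rho_j\neq 0$ yields $\operatorname{Re}\zeta_j(\beta)\neq 0$ once $\beta$ is large enough to absorb the $O(\beta^{-2})$ error; for the asymptotically overdamped modes ($N_R+1\leq j\leq 2N_R$) the limit $\rho_j=0$ gives $\lim_{\beta\to\infty}\operatorname{Re}\zeta_j(\beta)=0$ directly.

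The main obstacle is the first part: making rigorous the statement that the generic condition forces $\operatorname{Re}\zeta_j(\beta)$ to vanish \emph{identically} for $\beta\gg 1$, not merely to leading order. The subtlety is that one must verify that the analytic branches of $\sigma(A(\beta))$ near $\beta=\infty$ are indexed injectively by the simple eigenvalues $b_j$ of $B_2$, so that the involution $\zeta\mapsto-\overline\zeta$ on the spectrum acts as the identity on each high-loss branch rather than permuting branches with the same leading coefficient. This is exactly the role of the generic condition, and it is where the perturbation-theoretic setup from \cite[Sec.\ VI.A]{FigWel1} must be cited carefully; without simplicity of the $b_j$, the involution could interchange two branches sharing a leading coefficient and the real parts need not vanish (as Example \ref{sodngc} will show in the nongeneric regime).
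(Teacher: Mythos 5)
Your proposal is correct and follows essentially the same route as the paper: both proofs combine the spectral symmetry $\sigma(A(\beta))=-\overline{\sigma(A(\beta))}$ with the generic condition to force the involution $\zeta\mapsto-\overline\zeta$ to fix each high-loss analytic branch (you pin $-\overline{\zeta_j(\beta)}$ to the same branch as $\zeta_j(\beta)$ by comparing leading asymptotes $-\mathrm{i}b_j\beta$, while the paper first observes $\sigma(A(\beta)|_{H_{h\ell}(\beta)})$ is $-\overline{(\cdot)}$-invariant via Proposition \ref{pevbd} and Theorem \ref{tmdic} and then applies the simplicity of the $b_j$ to rule out a nontrivial permutation of branches), and both handle the low-loss part via Proposition \ref{pspredl2} and the asymptotics (\ref{pod16}).
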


\begin{corollary}
[Selective overdamping-duality]\label{csogc}If the duality condition
(\ref{cnddl}) is true and the generic condition \ref{cndgc} is true for both
the Lagrangian system (\ref{sintro1}) and its dual system (\ref{dradis2a})
then all of Theorem \ref{tsogc} is true and, in addition, for $\beta\gg1$,
\[
\operatorname{Re}\zeta_{j}\left(  \beta\right)  =0,\ \ \text{for }N_{R}+1\leq
j\leq2N_{R}.
\]

\end{corollary}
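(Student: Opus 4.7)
The plan is to deduce the new conclusion from Theorem \ref{tsogc} by applying it to the dual Lagrangian system (\ref{dradis2a}) and then translating the resulting overdamping statement back to the original system via the duality correspondence between eigenvalues.

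First, I would observe that under the duality condition $\eta>0$, the dual Lagrangian system (\ref{dradis2a}) itself satisfies all standing hypotheses: the matrices $\eta$ and $\alpha$ play the roles of $\alpha$ and $\eta$, and since $\alpha>0$ always, the duality condition is automatically satisfied for the dual system. Thus Theorem \ref{tsogc} applies to the dual system. Combined with the generic condition \ref{cndgc} being assumed for (\ref{dradis2a}), this yields, for $\beta\gg1$, that all $N_R$ high-loss eigenvalues of $A^\flat(\beta)$ are purely imaginary, i.e.,
\[
\operatorname{Re}\zeta_j^\flat(\beta)=0,\qquad 1\leq j\leq N_R,
\]
where $\{\zeta_j^\flat(\beta)\}_{j=1}^{N_R}$ are the high-loss eigenvalues of the dual system operator $A^\flat(\beta)$ (counting multiplicities).

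Next, I would invoke Proposition \ref{ppfspdl}, which gives the bijection $\zeta\mapsto -\zeta^{-1}$ between $\sigma(A(\beta))$ and $\sigma(A^\flat(\beta))$ preserving geometric, algebraic, and partial multiplicities. Combined with Corollary \ref{cdasym} and the indexing (\ref{pod23_0})--(\ref{pod25_0}), this correspondence identifies the $N_R$ low-loss/asymptotically overdamped eigenvalues $\zeta_j(\beta)$, $N_R+1\leq j\leq 2N_R$, of $A(\beta)$ (whose limits $\rho_j=0$) precisely with the $N_R$ high-loss eigenvalues $\zeta_k^\flat(\beta)$, $1\leq k\leq N_R$, of $A^\flat(\beta)$ (whose moduli tend to $\infty$), via the relation $\zeta_j(\beta)=-[\zeta_{j-N_R}^\flat(\beta)]^{-1}$ after an appropriate re-indexing. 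For $\beta\gg1$ this correspondence is canonical because the analytic eigenfunctions of $A(\beta)$ and $A^\flat(\beta)$ in the variable $z=\beta^{-1}$ near $z=0$ are uniquely determined.

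Finally, I would observe that if $w=\zeta_k^\flat(\beta)$ is purely imaginary and nonzero, then $-1/w$ is also purely imaginary: writing $w=\mathrm{i}c$ with $c\in\mathbb{R}\setminus\{0\}$, we have $-w^{-1}=\mathrm{i}c^{-1}$, so $\operatorname{Re}(-w^{-1})=0$. Applying this to each $k=1,\ldots,N_R$ for $\beta\gg 1$ gives $\operatorname{Re}\zeta_j(\beta)=0$ for $N_R+1\leq j\leq 2N_R$, which is the desired additional conclusion. The nonvanishing of $\zeta_k^\flat(\beta)$ for $\beta$ large follows from Lemma \ref{lpfsp} and the property $|\zeta_k^\flat(\beta)|\to\infty$. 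The main (though minor) obstacle is justifying that the re-indexing of low-loss modes used in Corollary \ref{cdasym} is compatible with the hypothesis that the generic condition holds for the dual system, so that Theorem \ref{tsogc} delivers exactly the $N_R$ purely imaginary high-loss eigenvalues of the dual to be matched with the $N_R$ modes $\{\zeta_j(\beta)\}_{j=N_R+1}^{2N_R}$; this is straightforward once one uses the bijective duality of spectra established in Proposition \ref{ppfspdl}.
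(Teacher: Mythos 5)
Your proof is correct and follows essentially the same route as the paper, which compresses the argument into a single sentence ("follows immediately from Theorem \ref{tsogc} and duality by appealing to Theorems \ref{tmddI}, \ref{tmddII}") while you spell out the same steps: apply Theorem \ref{tsogc} to the dual system, use the spectral correspondence $\zeta\mapsto -\zeta^{-1}$ (Proposition \ref{ppfspdl}, Corollary \ref{cdasym}) to match the dual's high-loss eigenvalues with the original system's low-loss eigenvalues indexed $N_R+1\leq j\leq 2N_R$, and note that the map $\zeta\mapsto -\zeta^{-1}$ preserves pure-imaginariness. Your worry about the re-indexing compatibility is resolved exactly as the paper does, via the analyticity near $z=\beta^{-1}=0$ together with the set-equality of spectra from Theorem \ref{tmddI}.
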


\begin{proof}
By Proposition \ref{pevbd} and Theorem \ref{tmdic} we know that for $\beta
\gg1$,
\[
\sigma\left(  A\left(  \beta\right)  |_{H_{h\ell}\left(  \beta\right)
}\right)  =-\overline{\sigma\left(  A\left(  \beta\right)  |_{H_{h\ell}\left(
\beta\right)  }\right)  }.
\]
It follows from this and Theorem \ref{tmdicII} that the high-loss eigenvalues
come in pairs%
\[
\zeta_{j}\left(  \beta\right)  ,-\text{ }\overline{\zeta_{j}\left(
\beta\right)  }\in\sigma\left(  A\left(  \beta\right)  |_{H_{h\ell}\left(
\beta\right)  }\right)
\]
for $1\leq j\leq N_{R}$ and so (since all $\zeta_{j}\left(  \beta\right)  $,
$-$ $\overline{\zeta_{j}\left(  \overline{\beta}\right)  }$ for $1\leq j\leq
N_{R}$ are meromorphic in $\beta^{-1}$) there must exist $j^{^{\prime}}$ with
$1\leq j^{^{\prime}}\leq N_{R}$ such that%
\[
-\text{ }\overline{\zeta_{j}\left(  \beta\right)  }=\zeta_{j^{^{\prime}}%
}\left(  \beta\right)
\]
for $\beta\gg1$. By (\ref{pod8}) and (\ref{pod12}), this implies $B_{2}$ has a
repeated eigenvalue $b_{j}$ unless $-\overline{\zeta_{j}\left(  \beta\right)
}=\zeta_{j}\left(  \beta\right)  $ for $\beta\gg1$. By hypothesis the generic
condition \ref{cndgc} holds so that we must have $-$ $\overline{\zeta
_{j}\left(  \beta\right)  }=\zeta_{j}\left(  \beta\right)  $ for $1\leq j\leq
N_{R}$ and for $\beta\gg1$. \ Therefore, for $\beta\gg1$ we have proven that
$\operatorname{Re}\zeta_{j}\left(  \beta\right)  =0$\ for $1\leq j\leq N_{R}$.
The proof of this theorem now follows immediately from this and Theorems
\ref{tllspmd} and \ref{tmddII}. The corollary follows immediately from Theorem
\ref{tsogc} and duality\ by appealing to Theorems \ref{tmddI}, \ref{tmddII}.
This completes the proof.
\end{proof}

\begin{theorem}
[Estimate of overdamped regime]\label{tbeta0}If the generic condition
\ref{cndgc} is true then the high-loss eigenvalues $\left\{  \zeta_{j}\left(
\beta\right)  \right\}  _{j=1}^{N_{R}}$ of the system operator $A\left(
\beta\right)  $ are meromorphic in $z=\beta^{-1}$ at $z=0$ which all converge
in a punctured disk of radius of $\beta_{0}^{-1}$, where
\[
\beta_{0}=\frac{2\omega_{\max}}{d},\qquad d:=\min_{0\leq i,j\leq N_{R},\text{
}i\not =j}\left\vert b_{i}-b_{j}\right\vert
\]
and $\sigma\left(  B\right)  =\left\{  b_{0},b_{1},\ldots,b_{N_{R}}\right\}  $
with $b_{0}=0$. Furthermore, their corresponding eigenprojections $\left\{
P_{j}\left(  \beta\right)  \right\}  _{j=1}^{N_{R}}$ are analytic in this disc
with $\dim\operatorname{Ran}P_{j}\left(  \beta\right)  =1$ for $1\leq j\leq
N_{R}$, in particular, the high-loss eigenvalues are simple eigenvalues of
$A\left(  \beta\right)  $. Moreover, if $\beta>\beta_{0}$ then
\[
\operatorname{Re}\zeta_{j}\left(  \beta\right)  =0,\text{\quad for }1\leq
j\leq N_{R}\text{.}%
\]

\end{theorem}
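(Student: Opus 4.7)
The plan is to recycle the analytic perturbation framework already set up in the proof of Theorem \ref{tllspeb} (equations (\ref{PertC})--(\ref{PertP})), but now applied to each nonzero eigenvalue $b_j$ of $B$ separately rather than to the $(0,0)$-group. Under the generic condition \ref{cndgc}, the spectrum $\sigma(B)=\{b_0,b_1,\ldots,b_{N_R}\}$ consists of $N_R+1$ distinct real numbers, so by the definition of $d$ each $b_j$ with $1\le j\le N_R$ is isolated from the rest of $\sigma(B)$ by distance at least $d$. Fix $1\le j\le N_R$ and let $\Gamma_j$ be the circle $\{\lambda:|\lambda-b_j|=d/2\}$; it encloses only the single eigenvalue $b_j$ of $B$, and this eigenvalue is simple (as an eigenvalue of $B$) since $B$ is block-diagonal with the simple eigenvalue $b_j$ of $B_2$ and the zero block. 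The Kato-type bound $\|R_0(\lambda)\|=\operatorname{dist}(\lambda,\sigma(B))^{-1}=2/d$ on $\Gamma_j$ together with $\|\Omega\|=\omega_{\max}$ shows that the resolvent $(\lambda\mathbf{1}-C(\varepsilon))^{-1}$ is analytic on a neighborhood of $\Gamma_j$ for all $|\varepsilon|<d/(2\omega_{\max})=\beta_0^{-1}$, so the contour-integral eigenprojection
\begin{equation*}
P_j(\varepsilon)=\frac{1}{2\pi\mathrm{i}}\oint_{\Gamma_j}(\lambda\mathbf{1}-C(\varepsilon))^{-1}\,d\lambda
\end{equation*}
is analytic in $\varepsilon$ on the disc $|\varepsilon|<\beta_0^{-1}$, with $P_j(0)$ the rank-one orthogonal projection onto the $b_j$-eigenspace of $B$. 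Since $\dim\operatorname{Ran}P_j(\varepsilon)$ is locally constant in $\varepsilon$, it remains equal to $1$ throughout the disc.

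The associated eigenvalue $\lambda_j(\varepsilon)=\operatorname{tr}(C(\varepsilon)P_j(\varepsilon))$ is therefore analytic on $|\varepsilon|<\beta_0^{-1}$ with $\lambda_j(0)=b_j\neq 0$. Substituting $\varepsilon=(-\mathrm{i}\beta)^{-1}$ so that $A(\beta)=\varepsilon^{-1}C(\varepsilon)$, the corresponding high-loss eigenvalue is
\begin{equation*}
\zeta_j(\beta)=\varepsilon^{-1}\lambda_j(\varepsilon)=-\mathrm{i}\beta\,\lambda_j((-\mathrm{i}\beta)^{-1}),
\end{equation*}
which is meromorphic in $z=\beta^{-1}$ on $|z|<\beta_0^{-1}$ with a simple pole at $z=0$ (the pole being simple because $\lambda_j(0)=b_j\neq 0$), and the associated eigenprojection $P_j(\beta)=P_j((-\mathrm{i}\beta)^{-1})$ is analytic there of constant rank $1$. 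This proves all the statements of the first two sentences of the theorem.

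For the overdamping statement, the plan is to combine the spectral symmetry of Proposition \ref{pssym} with the sharpened Gershgorin-style isolation one obtains from $\beta>\beta_0$. When $\beta>\beta_0$, the centers $-\mathrm{i}\beta b_i$ of the Gershgorin-type discs $D_i(\beta)$ of radius $\omega_{\max}$ satisfy $|-\mathrm{i}\beta b_i-(-\mathrm{i}\beta b_j)|=\beta|b_i-b_j|\ge\beta d>2\omega_{\max}$ for $i\neq j$, hence these discs are pairwise disjoint. By Proposition \ref{pevbd}.1 and the counting of eigenvalues inside each disconnected component (the standard continuous-deformation argument that the number of eigenvalues of $A(\beta)=-\mathrm{i}\beta(B+\varepsilon\Omega)$ in $D_j(\beta)$ equals the algebraic multiplicity of $-\mathrm{i}\beta b_j$ for $-\mathrm{i}\beta B$, which is $1$ by genericity), each disc $D_j(\beta)$, $1\le j\le N_R$, contains exactly one eigenvalue of $A(\beta)$, namely $\zeta_j(\beta)$, and it is simple. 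Since $b_j\in\mathbb{R}$ the center $-\mathrm{i}\beta b_j$ is purely imaginary, so $D_j(\beta)$ is invariant under $\zeta\mapsto-\overline{\zeta}$. By Proposition \ref{pssym}, $-\overline{\zeta_j(\beta)}\in\sigma(A(\beta))\cap D_j(\beta)=\{\zeta_j(\beta)\}$, forcing $\zeta_j(\beta)=-\overline{\zeta_j(\beta)}$, i.e., $\operatorname{Re}\zeta_j(\beta)=0$.

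The main obstacle I expect is the bookkeeping in the middle step: verifying that the contour-integral projection $P_j(\varepsilon)$ really does capture a single simple eigenvalue for the entire disc $|\varepsilon|<\beta_0^{-1}$ (and not a coalescing cluster) requires care, since the radius $d/2$ is exactly the distance to the nearest neighbor of $b_j$ in $\sigma(B)$. The resolvent bound $\|R_0(\lambda)\|=2/d$ on $\Gamma_j$ combined with $\|\varepsilon\Omega\|<\|\varepsilon\|\omega_{\max}<d/2$ keeps $\lambda\mathbf{1}-C(\varepsilon)$ invertible on $\Gamma_j$ by the Neumann series, which is exactly the threshold $|\varepsilon|<\beta_0^{-1}$; anything sharper would need a different contour. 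Everything else is routine application of tools already developed in the preceding sections.
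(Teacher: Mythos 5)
Your proposal is correct. The first half of your argument (analyticity of the eigenprojections $P_j(\beta)$ via contour integrals over circles $\Gamma_j$ of radius $d/2$, constancy of rank, meromorphicity of $\zeta_j(\beta)$ with a simple pole at $z=0$) is the same in substance as the paper's, except that you unpack the resolvent-bound estimate explicitly whereas the paper cites Baumg\"artel's radius-of-convergence theorem directly; both yield exactly the threshold $|\varepsilon|<d/(2\omega_{\max})=\beta_0^{-1}$. The overdamping step, however, takes a genuinely different route. The paper appeals to Theorem \ref{tsogc}, which only gives $\operatorname{Re}\zeta_j(\beta)=0$ for $\beta\gg1$, and implicitly uses analytic continuation (the meromorphic function $\zeta_j(\beta)+\overline{\zeta_j(\overline\beta)}$ vanishes on a ray, hence identically on $|z|<\beta_0^{-1}$) to push the conclusion down to all $\beta>\beta_0$. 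You instead observe that $\beta>\beta_0$ makes the Gershgorin discs $D_i(\beta)$ (radius $\omega_{\max}$, purely imaginary centers $-\mathrm{i}\beta b_i$) pairwise disjoint, trap exactly one eigenvalue in each $D_j(\beta)$ by the deformation-count argument, and then invoke the spectral symmetry $\sigma(A(\beta))=-\overline{\sigma(A(\beta))}$ together with the invariance of each disc under $\zeta\mapsto-\overline\zeta$ to force $\zeta_j(\beta)=-\overline{\zeta_j(\beta)}$. This is more self-contained: it gives the overdamping threshold directly from the disc separation, without routing through the asymptotic Theorem \ref{tsogc} or an identity-theorem step, at the cost of a slightly longer counting argument. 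Both approaches are valid and give the same $\beta_0$.
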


\begin{proof}
For the system operator%
\[
A\left(  \beta\right)  =\Omega-\mathrm{i}\beta B,
\]
making the substitution $\varepsilon=\left(  -\mathrm{i}\beta\right)  ^{-1}$
we have that
\[
\varepsilon A\left(  \mathrm{i}\varepsilon^{-1}\right)  =B+\varepsilon\Omega
\]
is an analytic operator in $\varepsilon\in%
\mathbb{C}
$ which is self-adjoint for real $\varepsilon$ in which $b_{1},\ldots
,b_{N_{R}}$ are all the nonzero eigenvalues of $B$ and by the generic
condition \ref{cnddl} they are all simple eigenvalues too. By \cite[pp.
324-326, \S 8.1.3, Theorem 1 \& 2]{Bau85}, for each $j\in\left\{  1,\ldots
N_{R}\right\}  $, there is a unique simple eigenvalue $\lambda_{j}\left(
\varepsilon\right)  $ with $\lambda_{j}\left(  0\right)  =b_{j}$ and
one-dimensional eigenprojection $Q_{j}\left(  \varepsilon\right)  $ which are
analytic in $\varepsilon$ near $\varepsilon=0$ with a radius of convergence
greater than or equal to $r_{j}:=\frac{d_{j}}{2\left\Vert \Omega\right\Vert }%
$, where%
\[
\omega_{\max}=\left\Vert \Omega\right\Vert ,\text{\qquad}d_{j}:=\min_{0\leq
i\leq N_{R},\text{ }i\not =j}\left\vert b_{i}-b_{j}\right\vert ,
\]
i.e., $d_{j}$ is the distance of $b_{j}$ to the rest of the spectrum of $B$.
We now define%
\[
d:=\min_{0\leq j\leq N_{R}}d_{j}=\min_{0\leq i,j\leq N_{R},\text{ }i\not =%
j}\left\vert b_{i}-b_{j}\right\vert .
\]
It follows that the follow eigenprojections are analytic%
\[
Q_{j}\left(  \varepsilon\right)  ,\text{\qquad}1\leq i,j\leq N_{R}%
,\text{\qquad\ }\left\vert \varepsilon\right\vert <\beta_{0}^{-1}%
,\text{\qquad}\beta_{0}:=\frac{2\left\Vert \Omega\right\Vert }{d},
\]
and%
\[
\dim\operatorname{Ran}Q_{j}\left(  \varepsilon\right)  =1,\text{\quad for
}1\leq j\leq N_{R},\text{\qquad}\left\vert \varepsilon\right\vert <\beta
_{0}^{-1}.
\]
The eigenprojection-eigenvalue pairs $\left\{  Q_{j}\left(  \varepsilon
\right)  ,\text{ }\lambda_{j}\left(  \beta\right)  \right\}  _{j=1}^{N_{R}}$
satisfy%
\[
\varepsilon A\left(  \mathrm{i}\varepsilon^{-1}\right)  Q_{j}\left(
\varepsilon\right)  =\left(  B+\varepsilon\Omega\right)  Q_{j}\left(
\varepsilon\right)  =\lambda_{j}\left(  \varepsilon\right)  Q_{j}\left(
\varepsilon\right)  ,\text{\quad}1\leq i,j\leq N_{R},\text{\quad}\left\vert
\varepsilon\right\vert <\beta_{0}^{-1}.
\]
Thus, making the substitution $\varepsilon=\left(  -\mathrm{i}\beta\right)
^{-1}$ and multiplying by $\varepsilon^{-1}$ yields%
\[
A\left(  \beta\right)  Q_{j}\left(  \left(  -\mathrm{i}\beta\right)
^{-1}\right)  =\left(  -\mathrm{i}\beta\right)  \lambda_{j}\left(  \left(
-\mathrm{i}\beta\right)  ^{-1}\right)  Q_{j}\left(  \left(  -\mathrm{i}%
\beta\right)  ^{-1}\right)  ,\text{\quad}1\leq i,j\leq N_{R},\text{\quad
}\left\vert \beta\right\vert >\beta_{0}.
\]
Therefore, defining
\[
\zeta_{j}\left(  \beta\right)  :=\left(  -\mathrm{i}\beta\right)  \lambda
_{j}\left(  \left(  -\mathrm{i}\beta\right)  ^{-1}\right)  ,\text{\quad}%
P_{j}\left(  \beta\right)  :=Q_{j}\left(  \left(  -\mathrm{i}\beta\right)
^{-1}\right)  ,\text{\quad}1\leq i,j\leq N_{R},\text{\quad}\left\vert
\beta\right\vert >\beta_{0}%
\]
the theorem now follows immediately from this for the high-loss eigenvalues
and their eigenprojections and from Theorem \ref{tsogc}. This completes the proof.
\end{proof}

\begin{corollary}
[Estimate of overdamped regime-duality]\label{cbeta0}If the duality condition
(\ref{cnddl}) and the generic condition \ref{cndgc} are true for both the
Lagrangian system (\ref{sintro1}) and its dual system (\ref{dradis2a}) then
Theorem \ref{tbeta0} is true and, for the low-loss eigenvalues of the system
operator $A\left(  \beta\right)  $ indexed according to (\ref{pod24_0}) and
(\ref{pod25_0}), the eigenvalues $\left\{  \zeta_{j}\left(  \beta\right)
\right\}  _{j=N_{R}+1}^{2N_{R}}$ are analytic in $z=\beta^{-1}$ at $z=0$ and
each converges in a punctured disk of radius of $\beta_{1}^{-1}$, where
\[
\beta_{1}=\max\left\{  \beta_{0},\frac{2\omega_{\max}^{\flat}}{d^{\flat}%
}\right\}  ,\text{\qquad}d^{\flat}:=\min_{0\leq i,j\leq N_{R},\text{ }%
i\not =j}\left\vert b_{i}^{\flat}-b_{j}^{\flat}\right\vert ,
\]
and $\sigma\left(  B^{\flat}\right)  =\left\{  b_{0}^{\flat},b_{1}^{\flat
},\ldots,b_{N_{R}}^{\flat}\right\}  $ with $b_{0}^{b}=0$. Furthermore, their
corresponding eigenprojections $\left\{  P_{j}\left(  \beta\right)  \right\}
_{N_{R}+1}^{2N_{R}}$ are analytic in this disc with $\dim\operatorname{Ran}%
P_{j}\left(  \beta\right)  =1$ for $N_{R}+1\leq j\leq2N_{R}$. In particular,
if $\beta>\beta_{1}$ then $\left\{  \zeta_{j}\left(  \beta\right)  \right\}
_{j=N_{R}+1}^{2N_{R}}$ are simple eigenvalues of $A\left(  \beta\right)  $
and
\[
\operatorname{Re}\zeta_{j}\left(  \beta\right)  =0,\text{\quad for }%
N_{R}+1\leq j\leq2N_{R}\text{.}%
\]

\end{corollary}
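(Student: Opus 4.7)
The plan is to deduce this corollary from Theorem \ref{tbeta0} applied twice: once to the original Lagrangian system (giving immediately that ``Theorem \ref{tbeta0} is true'') and once to the dual Lagrangian system (\ref{dradis2a}), and then transport the dual conclusions back to the original system via the spectral duality of Proposition \ref{ppfspdl}. Under the duality condition (\ref{cnddl}) the exchange $\alpha\leftrightarrow\eta$ preserves every structural hypothesis on the coefficient matrices, so the dual system also satisfies the duality condition and, by the corollary's hypothesis, the generic condition as well. Hence Theorem \ref{tbeta0} applied to the dual yields high-loss eigenvalues $\{\zeta_{j}^{\flat}(\beta)\}_{j=1}^{N_{R}}$ of $A^{\flat}(\beta)$ that are meromorphic in $z=\beta^{-1}$ at $z=0$ with radius of convergence at least $(2\omega_{\max}^{\flat}/d^{\flat})^{-1}$, one-dimensional analytic eigenprojections $\{P_{j}^{\flat}(\beta)\}_{j=1}^{N_{R}}$, and $\operatorname{Re}\zeta_{j}^{\flat}(\beta)=0$ for real $\beta>2\omega_{\max}^{\flat}/d^{\flat}$.

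Next I would identify the asserted branches of $A(\beta)$ with inverted dual branches. By Corollary \ref{cdasym}, the low-loss asymptotically overdamped eigenvalues can be reindexed so that
\[
\zeta_{j}(\beta)=-\bigl[\zeta_{j-N_{R}}^{\flat}(\beta)\bigr]^{-1},\qquad N_{R}+1\leq j\leq 2N_{R},
\]
in the high-loss regime, and Proposition \ref{ppfspdl}.3 guarantees this bijection preserves algebraic, geometric, and partial multiplicities, so simplicity of each $\zeta_{j-N_{R}}^{\flat}(\beta)$ as an eigenvalue of $A^{\flat}(\beta)$ transfers directly to simplicity of $\zeta_{j}(\beta)$ as an eigenvalue of $A(\beta)$. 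The asymptotic expansion $\zeta_{j}^{\flat}(\beta)=-\mathrm{i}b_{j}^{\flat}\beta+\rho_{j}^{\flat}+O(\beta^{-1})$ with $b_{j}^{\flat}>0$ shows that, as a function of $z=\beta^{-1}$, $\zeta_{j}^{\flat}$ has a simple pole at $z=0$ with nonzero residue; consequently $-1/\zeta_{j}^{\flat}$ is a genuinely analytic function of $z$ at $z=0$ vanishing there to first order, giving the ``analytic at $z=0$'' statement of the corollary. Taking the maximum $\beta_{1}=\max\{\beta_{0},2\omega_{\max}^{\flat}/d^{\flat}\}$ gives the common threshold past which both Theorem \ref{tbeta0} for the original system and its dual apply, hence the common convergence radius $\beta_{1}^{-1}$.

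For the eigenprojections, the explicit pairing in Proposition \ref{ppfspdl}.4 constructs an eigenvector of $A(\beta)$ at $\zeta_{j}(\beta)=-[\zeta_{j-N_{R}}^{\flat}(\beta)]^{-1}$ from the same quadratic-pencil null vector $q(\beta)$ that generates an eigenvector of $A^{\flat}(\beta)$ at $\zeta_{j-N_{R}}^{\flat}(\beta)$. Since the dual eigenprojection $P_{j-N_{R}}^{\flat}(\beta)$ is analytic and one-dimensional, one obtains an analytic one-dimensional line of eigenvectors of $A(\beta)$ by the explicit block formula, from which it follows that the corresponding Riesz projection $P_{j}(\beta)$ of $A(\beta)$ is itself analytic in $z$ with $\dim\operatorname{Ran}P_{j}(\beta)=1$. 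The vanishing of the real part is then immediate: for $\beta>\beta_{1}$ we have $\beta>2\omega_{\max}^{\flat}/d^{\flat}$, so by the dual version of Theorem \ref{tbeta0} each $\zeta_{j-N_{R}}^{\flat}(\beta)$ is a nonzero purely imaginary number, and the reciprocal (with sign) of a nonzero purely imaginary number is again purely imaginary, giving $\operatorname{Re}\zeta_{j}(\beta)=0$.

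The main obstacle I anticipate is the third paragraph, specifically the matching of \emph{analytic branches} rather than merely of \emph{spectra}: Proposition \ref{ppfspdl} on its own supplies only a set-theoretic bijection $\sigma(A(\beta))\leftrightarrow\sigma(A^{\flat}(\beta))$, and one must argue that the specific analytic branches indexed in (\ref{pod24_0})--(\ref{pod25_0}) pair up one-to-one with the dual high-loss branches indexed in Theorem \ref{tbeta0}. This is handled by appealing to the uniqueness of analytic continuation of the meromorphic eigenvalue germs at $z=0$, together with the asymptotic separation in (\ref{pod16}) between the classes (which forces the matching established in the proof of Corollary \ref{cdasym}), so that the full-disk analyticity and one-dimensionality of eigenprojections genuinely transfer to the indexed low-loss branches and not merely to some unlabeled permutation of them.
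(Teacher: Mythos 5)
Your proof is correct and follows essentially the same route as the paper's (the paper's own proof is a one-line remark deferring to Theorems \ref{tmddI} and \ref{tbeta0} plus duality). You apply Theorem \ref{tbeta0} to the dual system, pull the conclusions back through $-[\,\cdot\,]^{-1}$ via Proposition \ref{ppfspdl} and Corollary \ref{cdasym}, note the simple pole of $\zeta_j^\flat$ at $z=0$ gives analyticity of $\zeta_j=-1/\zeta_{j-N_R}^\flat$ there, transfer simplicity of eigenvalues and hence of eigenprojections through the multiplicity-preserving bijection, and observe that the negative reciprocal of a nonzero purely imaginary number is again purely imaginary. Your caveat about matching analytic branches (not merely the spectra at each fixed $\beta$) correctly identifies the only delicate point, and your resolution via uniqueness of analytic eigenvalue germs and the asymptotic separation (\ref{pod16}) mirrors precisely what the paper does in the proof of Corollary \ref{cdasym}. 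One minor point worth making explicit: the nonvanishing of $\zeta_{j-N_R}^\flat(\beta)$ throughout the disk $|z|<(2\omega_{\max}^\flat/d^\flat)^{-1}$ (needed so that $-1/\zeta_{j-N_R}^\flat$ is analytic, not merely meromorphic, there) comes from the fact that in the proof of Theorem \ref{tbeta0} the perturbed eigenvalue $\lambda_{j}^\flat(\varepsilon)$ remains within distance $d_j^\flat/2 \le b_j^\flat/2$ of $b_j^\flat>0$ on that disk, hence is bounded away from zero; alternatively Lemma \ref{lpfsp} gives nonvanishing for real $\beta$, but the full-disk version needs the perturbation bound.
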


\begin{proof}
The result follows immediately from Theorems \ref{tmddI} and \ref{tbeta0} by duality.
\end{proof}

\subsubsection{Overdamping in the nongeneric case\label{sodngc}}

If the generic condition \ref{cndgc} doesn't hold (i.e., the nongeneric case)
then as we will show one can build examples where no overdamping occurs in the
high-loss regime $\beta\gg1$ from which one can build mix cases.

\begin{example}
[no overdamping]\label{nodex}Take the $N\times N$ identity matrix $\alpha
=\eta=R=\mathbf{1}$ (so that the duality condition \ref{cnddl} is satisfied),
the loss parameter $\beta\geq0$, and any real $N\times N$ matrix $\theta$
satisfying $\theta^{\mathrm{T}}=-\theta$. Then one can find a $N\times N$
unitary matrix $M$ such that $\mathrm{i}\theta=M\operatorname{diag}\left\{
\lambda_{1},\ldots,\lambda_{N}\right\}  M^{-1}$, with $\sigma\left(
\mathrm{i}\theta\right)  =\left\{  \lambda_{1},\ldots,\lambda_{N}\right\}
\subseteq%
\mathbb{R}
$. Hence, the Lagrangian system (\ref{sintro1}) (which is it's own dual system
in this example) for these matrices is%
\begin{gather*}
0=\alpha\ddot{Q}+\left(  2\theta+\beta R\right)  \dot{Q}+\eta Q=\\
=M\operatorname{diag}\left\{  \partial_{t}^{2}+\left(  -2\mathrm{i}\lambda
_{1}+\beta\right)  \partial_{t}+1,\ldots,\partial_{t}^{2}+\left(
-2\mathrm{i}\lambda_{N}+\beta\right)  \partial_{t}+1\right\}  M^{-1}Q.
\end{gather*}
A calculation of the matrix $B$ and its spectrum for this example is%
\[
B=%
\begin{bmatrix}
\mathbf{1} & 0\\
0 & 0
\end{bmatrix}
,\text{\quad}\sigma\left(  B\right)  =\left\{  0,1\right\}  ,\text{\quad}%
b_{0}=0,1=b_{1}=\cdots=b_{N}.
\]
In particular, the generic condition \ref{cndgc} does not hold if $N>1$. We
now determine the eigenmodes of the system. The eigenmodes of this Lagrangian
system are $Q_{j,i}\left(  t\right)  =q_{j}e^{-\mathrm{i}\zeta_{j,i}t}$ where
$q_{j}=Me_{j}$ ($e_{j}$, $1\leq j\leq N$ are the standard orthonormal vectors
in $%
\mathbb{R}
^{N}$) and $\zeta_{j,i}$, $i=-,+$ are%
\[
\zeta_{j,\pm}=-\frac{2\lambda_{j}+\mathrm{i}\beta}{2}\pm\sqrt{\left(
\frac{2\lambda_{j}+\mathrm{i}\beta}{2}\right)  ^{2}+1}.
\]
Therefore, if $0\not \in \sigma\left(  \mathrm{i}\theta\right)  $ then
$\operatorname{Re}\zeta_{j,\pm}\not =0$ so that all eigenmodes are underdamped
for all $\beta>0$ (according to Definition \ref{defodm}). Now since
$N_{R}=\dim\operatorname{Ran}R=N$ then by Theorems \ref{tmdic} and
\ref{tmddI}, we can only have $\lim_{\beta\rightarrow\infty}\left\vert
\zeta_{j,\pm}\left(  \beta\right)  \right\vert =0$ or $\infty$ with an equal
number of each (counting multiplicities). It is easy to verify that%
\[
\zeta_{j,+}=-\zeta_{j,-}^{-1},\text{\quad}\lim_{\beta\rightarrow\infty}%
\beta^{-1}\zeta_{j,+}\left(  \beta\right)  =0,\text{\quad}\lim_{\beta
\rightarrow\infty}\beta^{-1}\zeta_{j,-}\left(  \beta\right)  =-\mathrm{i},
\]
for $1\leq j\leq N$ which implies%
\[
\lim_{\beta\rightarrow\infty}\left\vert \zeta_{j,-}\left(  \beta\right)
\right\vert =\infty,\text{\quad}\lim_{\beta\rightarrow\infty}\zeta
_{j,+}\left(  \beta\right)  =0,\text{\quad for }1\leq j\leq N.
\]
Thus, $\zeta_{j,-}\left(  \beta\right)  $, $1\leq j\leq N$ and $\zeta
_{j,+}\left(  \beta\right)  $, $1\leq j\leq N$ are the corresponding high-loss
and low-loss eigenvalues, respectively. This allows us to illustrate an
interesting difference between this example of a nongeneric case and the
theory developed for the generic case, namely, for the high-loss eigenvalues
\[
\lim_{\beta\rightarrow\infty}\operatorname{Re}\zeta_{j,-}\left(  \beta\right)
=-2\lambda_{j},\text{\quad for }1\leq j\leq N,
\]
which will never be zero if $0\not \in \sigma\left(  \mathrm{i}\theta\right)
$. This is quite a striking difference between the generic case where due to
overdamping (cf. Theorem \ref{tbeta0}) the real part of the high-loss
eigenvalues will be identically zero for $\beta\gg1$!
\end{example}

\textbf{Acknowledgments:} The research of A. Figotin was supported through
AFOSR MURI Grant FA9550-12-1-0489 administered through the University of New
Mexico. The research of A. Welters was supported by the AFOSR through the Air
Force's Young Investigator Research Program (YIP), under grant number
FA9550-15-1-0086. Both authors would like to thank Marcus Marinho for creating
the Figs. \ref{Fig_hldp}-\ref{Fig_udmq} in this paper when he was an
undergraduate (at Florida Institute of Technology and Pontificial Catholic
University) working with A. Welters.

\end{document}